\documentclass[11pt]{article}
\usepackage{style}
\usepackage{subcaption}
\usepackage{qcircuit}
\pdfoutput=1

\AtEveryBibitem{ %
    \clearfield{day}
    \clearfield{month}
    \clearfield{series}
    \clearfield{venue}
    \clearname{editor}
    \clearlist{publisher}
    \clearlist{location} %
    \clearfield{venue}
    \clearfield{issn}
    \clearfield{isbn}
    \clearfield{urldate}
    \clearfield{eventdate}
    \clearfield{pages}
    \clearfield{number}
    \clearfield{volume}
}

\newcommand{\Span}{\operatorname{span}}

\newcommand{\Useful}{\operatorname{Useful}}

\newcommand{\bools}{\{-1, 1\}^N}
\newcommand{\zo}{\{0,1\}}
\newcommand{\pmone}{\{-1,1\}}
\newcommand{\DDP}{\mathrm{DDP}}

\newcommand{\ev}{\operatorname{ev}}

\DeclareMathOperator{\supp}{supp}

\newcommand{\GHS}{\mathrm{GHS}}
\newcommand{\msg}{\mathrm{msg}}
\newcommand{\PFF}{\mathrm{PFF}}
\newcommand{\APFF}{\mathrm{APFF}}
\newcommand{\berr}{\mathbf{err}}

\newcommand{\Stab}{\mathrm{Stab}}

\usepackage{booktabs}
\usepackage{tabularx}
\usepackage{array}
\usepackage{ragged2e}
\usepackage{makecell}
\usepackage{amsmath}
\usepackage{amssymb}

\newcolumntype{L}{>{\RaggedRight\arraybackslash}X}

\newcommand{\Qoneway}{\mathsf{Q}^{1}}
\newcommand{\Roneway}{\mathsf{R}^{1}}
\newcommand{\QSMP}{\mathsf{Q}^{\parallel}}

\newcommand{\Rtwoway}{\mathsf{R}}

\providecommand{\poly}{\operatorname{poly}}

\title{Optimal and Verifiable Quantum Advantages\\in Communication Complexity}

\author{
Ryan Anselm\\
{\normalsize UT Austin}
\and
Michelle Ding\\
{\normalsize UT Austin}
\and
Dar Gilboa\\
{\normalsize Google Quantum AI}
\and
Sabee Grewal\footnote{Now a postdoc at IBM Quantum and Columbia University; work on this paper was completed while affiliated with Google Quantum AI and UT Austin.}\\
{\normalsize Google Quantum AI}\\
{\normalsize UT Austin}
}

\date{}

\begin{document}

\maketitle

\begin{abstract}
We establish optimal quantum--classical separations in communication complexity for search problems. 
We introduce a total search problem called \emph{Pelagic Fourier Fishing} and show that it admits an $n$-qubit quantum one-way protocol, whereas every randomized two-way protocol requires $\Omega(2^n)$ bits of communication. 
We then introduce a variant of this problem whose solutions can be verified in polynomial time. This variant also admits an $n$-qubit quantum one-way protocol, while every randomized one-way protocol requires $\Omega(2^n)$ bits of communication.

We also construct a family of efficiently verifiable total search problems achieving an $n$ versus $\Omega_d(n^d)$ separation between quantum one-way and randomized one-way communication for every fixed $d\ge 2$.
In the quantum protocol, Alice prepares her message using a single unitary from the $d$th level of the Clifford hierarchy, and Bob performs a Clifford measurement.
This separation is asymptotically optimal under this restriction on Alice's message. 
For $d=2$, Alice's message is a stabilizer state and Bob's measurement is Clifford, so the protocol uses no magic, yet achieves an optimal quadratic quantum advantage.

Finally, we discuss how these separations can be adapted to near-term quantum advantage experiments in which the demonstrated advantage is both \emph{unconditional} and \emph{efficiently verifiable}.
\end{abstract}

\newpage
\hypersetup{linktocpage}
\setcounter{tocdepth}{2}
\tableofcontents

\newpage

\section{Introduction}

Is it possible for $n$ qubits to do the work of exponentially many classical bits? 
Communication complexity provides one of the cleanest settings in which we can answer this question unconditionally.
Raz gave the first exponential separation between quantum and classical communication complexity~\cite{raz1999exponential}, and subsequent work has strengthened this result across a variety of communication models and problem types~\cite{BJK08-comm,GavinskyEtAl2007,klartag2010quantumonewaycommunicationexponentially,Montanaro_2019,goos2025quantum}.

Although exponential quantum advantages in communication have been known for decades, a natural question is still not fully understood:

\begin{center}
   \emph{How large of a quantum--classical separation is possible in communication complexity?}
\end{center}

\noindent
The answer depends on the type of problem being solved.
In general, any communication task admitting an $n$-qubit one-way protocol can be simulated classically up to constant additive increase in error using $O(2^n)$ bits of communication, so an $n$ versus $\Omega(2^n)$ separation is the largest possible.
Montanaro showed that this bound is tight for a \emph{sampling problem}~\cite{Montanaro_2019}, establishing that the behavior of an $n$-qubit message cannot in general be reproduced using substantially fewer than $2^n$ classical bits.
For a sampling problem, however, correctness is a property of the entire output distribution rather than of any individual output, so a single run of the protocol provides no certificate that the desired distribution was sampled correctly.

For partial Boolean functions, the largest possible separation is only $n$ versus $\Omega(2^{n/2})$.
Indeed, any $n$-qubit one-way protocol can be simulated by a randomized classical one-way protocol using $O(2^{n/2})$ bits of communication~\cite[Appendix A]{Montanaro_2019}.
This bound is known to be tight: variants of Hidden Matching achieve an $n$ versus $\Omega(2^{n/2})$ separation~\cite{GavinskyEtAl2007}.

\paragraph{Optimal separations for search problems.}
The focus of this work is communication advantages for \emph{search problems} (also called relational problems), where Alice and Bob receive inputs and must produce any output satisfying a prescribed relation on those inputs. 
The search problems we study are all \emph{total}, meaning that every pair of inputs admits at least one valid output.
Unlike sampling problems, the validity of an output can be determined on an instance-by-instance basis; unlike Boolean functions, however, there may be many valid outputs for a given input.
This additional flexibility raises a natural question: can search problems realize the full $n$ versus $2^n$ quantum--classical separation that is impossible for partial Boolean functions?

Prior to our work, the best known separation between quantum one-way and randomized one-way communication for a total search problem was the $n$ versus $\Omega(2^{n/2})$ separation given by Hidden Matching~\cite{BJK08-comm}.
Against randomized two-way communication, the best known separation was only $n$ versus $\Omega(2^{n/3})$, given by a partial function of Klartag and Regev~\cite{klartag2010quantumonewaycommunicationexponentially} (any partial function can be viewed as a total search problem by declaring every output valid on inputs outside the promise).
Thus, for roughly two decades, a wide gap had remained between the best known separations for total search problems and the general $O(2^n)$ classical simulation bound.
\emph{A priori}, it was unclear whether this gap reflected a fundamental limitation of search problems or merely a limitation of existing constructions. 

Our first result closes this gap, answering a question raised by Aaronson, Buhrman, and Kretschmer~\cite{ABK24}, and in fact proves the optimal lower bound even against randomized two-way communication.

\begin{theorem}
\label{thm:intro-two-way}
There is a total search problem that admits an $n$-qubit quantum one-way protocol with success probability at least $3/4$ on every input, while every randomized two-way protocol that succeeds with probability at least $2/3$ on average over uniformly random inputs requires $\Omega(2^n)$ bits of communication.
\end{theorem}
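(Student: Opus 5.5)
We will construct the problem, Pelagic Fourier Fishing, as a communication analogue of Fourier Fishing. Set $N = 2^n$. Alice receives a Boolean function $f:\{0,1\}^n\to\{-1,1\}$, given as a string in $\bools$, and Bob receives a second function $g$. The output is a string $z\in\{0,1\}^n$ whose Fourier weight on the combined input is large, say $|\widehat{fg}(z)|\ge \tau/\sqrt{N}$ for a suitable constant $\tau$. To make the problem total, every output is declared valid whenever the instance has too little heavy Fourier mass. We will tune the parameters so that uniformly random inputs lie in the non-degenerate regime with high probability.

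\textbf{Quantum protocol.} Alice sends the phase state $\frac{1}{\sqrt N}\sum_x f(x)|x\rangle$, which uses $n$ qubits. Bob applies the diagonal unitary $g$, then $H^{\otimes n}$, and measures. The outcome $z$ appears with probability $\widehat{fg}(z)^2$. The state Bob holds is a phase state for the function $fg$. For a uniformly random sign function, $\sqrt N\,\widehat{fg}(z)$ is approximately standard Gaussian. So the measurement returns a $z$ with $|\widehat{fg}(z)|\ge\tau/\sqrt N$ with probability about $\mathbb E[G^2\mathbf 1_{|G|\ge\tau}]$, which exceeds $3/4$ for small $\tau$. The threshold must hold on \emph{every} input, not just random ones. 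We will therefore put the success guarantee into the relation: an input is \emph{good} if the total weight $\sum_{z:|\widehat{fg}(z)|\ge \tau/\sqrt N}\widehat{fg}(z)^2\ge 3/4$, and on non-good inputs every output is accepted. Then the quantum protocol succeeds with probability at least $3/4$ on every input. A Chernoff or Hoeffding bound together with a concentration argument over the $N$ coefficients shows that random inputs are good with probability $1-o(1)$.

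\textbf{Classical lower bound.} We will show that any two-way protocol with $c$ bits of communication outputs a heavy $z$ with probability at most $1/2+O(c/N)$ on random inputs. By a Yao-type reduction it suffices to consider deterministic protocols, which partition the input space into at most $2^c$ rectangles $A\times B$, each labelled with an output $z$. The fraction of $z$ that are heavy for a random function is only $\Pr[|G|\ge\tau]\approx 1-\Theta(\tau)$. This is a problem, because trivial guessing already succeeds with probability near $1-\Theta(\tau)$. We will fix this by choosing the threshold in the opposite regime: a large $\tau$, so that heavy $z$ are rare (fraction $p=\Pr[|G|\ge\tau]$ is small), while their weight fraction $\mathbb E[G^2\mathbf 1_{|G|\ge\tau}]$ stays a constant. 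This is the Fourier Fishing tradeoff. We may then need to amplify, for instance by requiring $k$ outputs, or by using a weight-biased criterion so that the $3/4$ versus $2/3$ gap appears. Once that is settled, the core is a rectangle bound. For a fixed $z$, the character $\chi_z$ gives $\widehat{fg}(z)=\frac1N\langle f\chi_z, g\rangle$. Whether $z$ is heavy depends on the inner product $\langle f\chi_z,g\rangle$ between the two inputs. We will bound $\Pr[(f,g)\in A\times B,\ |\langle u,g\rangle|\ge\tau\sqrt N]$ using anticoncentration and discrepancy bounds for the inner product (Gap-Hamming) on a large rectangle. For a rectangle of density $2^{-c}$, the inner product restricted to $A\times B$ behaves like a Gaussian of variance $N(1+O(c/N))$, in the style of Sherstov's or Chakrabarti–Regev's Gap-Hamming bounds. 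Summing over rectangles then shows the success probability exceeds $p(1+O(c/N))$ plus negligible terms, so reaching the target success requires $c=\Omega(N)=\Omega(2^n)$.

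\textbf{Main obstacle.} The hardest step is the rectangle lemma: a tight, linear-in-$c$ bound on how much a rectangle can bias a single inner product toward a large deviation. A generic discrepancy bound gives only $\sqrt{c/N}$-type gains, so we will use a corruption or concentration argument exploiting that the event is a tail event of a linear form. Concretely, we will apply a hypercontractivity or entropy bound: for a set $A$ of density $2^{-c}$, the mass of $A$ in any direction is at most $O(c)$ by the level-1 inequality. This controls the tail shift and keeps the loss linear in $c/N$, which is what yields $\Omega(2^n)$ rather than $\Omega(2^{n/2})$.
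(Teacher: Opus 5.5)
There is a genuine gap in the choice of relation, not in the lower-bound technique. A one-sided threshold relation ``output $z$ with $|\widehat{fg}(z)|\ge\tau/\sqrt N$'' has randomized one-way complexity only $O(2^n/n)$. Your good-input patch does not change this, since uniform inputs are good with probability $1-o(1)$. So no choice of $\tau$, no amplification, and no sharper rectangle lemma can give $\Omega(2^n)$.

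Here is an explicit protocol. Fix a subspace $R\subseteq\{0,1\}^n$ with $|R|=c\approx KN/n$. Alice sends $f|_R$. Bob computes $s_z=N^{-1/2}\sum_{x\in R}f(x)g(x)\chi_z(x)$ for every $z$ and outputs a maximizer $z^*$ of $|s_z|$. Since $(fg)|_R$ is a uniformly random sign function on $c$ points, $\max_z|s_z|=\sqrt{c/N}\cdot\Theta(\sqrt{\log c})=\Theta(\sqrt K)$ with high probability. Conditioned on everything Bob used, $\sqrt N\,\widehat{fg}(z^*)-s_{z^*}$ is a normalized sum of $N-c$ fresh independent signs. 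Hence $\sqrt N|\widehat{fg}(z^*)|\ge\tau$ with probability tending to $1$ as $K$ grows, using $o(N)$ bits.

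The same example refutes your rectangle lemma. The level-$1$ inequality does give $\|\mathbb{E}[F\mid F\in A]\|_2^2=O(c)$. But Bob chooses $z$ as a function of $g$, so he takes the best of $N$ nearly independent directions. This inflates the $\sqrt{c/N}$ shift of a single direction to $\sqrt{c\log N/N}$. For a tail event that shift converts directly into success probability, so the loss is not linear in $c/N$. This $\log$ factor is exactly why Fourier Fishing sits at $\Theta(2^n/n)$, as the paper notes. Separately, $\mathbb{E}[G^2\mathbf{1}_{|G|\ge\tau}]$ is decreasing in $\tau$ and falls below $3/4$ near $\tau\approx 1.1$. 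So your ``large $\tau$'' regime, where heavy $z$ are rare yet carry $3/4$ of the mass, does not exist.

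The missing idea is to make large shifts useless by asking for a two-sided condition. The paper's Pelagic condition asks for $y$ such that coefficients of magnitude $\le|\widehat h(y)|$ and coefficients of magnitude $\ge|\widehat h(y)|$ each carry at least $1/8$ of the squared Fourier mass. This relation is total for every $h$, and by Parseval, Fourier sampling succeeds with probability at least $3/4$ on every input, with no patch needed. For uniformly random $h$, it forces the window $a\le\sqrt N|\widehat h(y)|\le 3a$ with $a=\sqrt{2/3}$, and $\sup_b\Pr[a\le|b+Z|\le 3a]<0.6$ for $Z\sim\mathcal{N}(0,1)$.

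The lower bound then only needs the following on each non-rare transcript rectangle. Such a rectangle has combined min-entropy deficiency at most $c+2n=o(N)$, where one $n$ accounts for Bob's choice of $y$. On it, the selected coefficient equals an arbitrary shift, plus an independent normalized sum of $N$ uniform signs, plus an error of variance $o(1)$. The paper proves this via entropy couplings of each side to a uniform string together with a sampling estimate. It never has to bound the size of the shift, which is precisely what your threshold-based approach cannot avoid.
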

Thus, the general classical simulation bound is tight even for total search problems and even against two-way classical communication. Moreover, the lower bound in \cref{thm:intro-two-way} holds on average under the uniform input distribution.
This separation is based on a total search problem we introduce called \emph{Pelagic Fourier Fishing}, which we explain in \cref{subsub:pff}.

As an immediate corollary, Pelagic Fourier Fishing also gives an optimal $1$ versus $\Omega(N)$ separation between quantum and randomized query complexity for a total search problem, where $N$ is the size of the oracle's truth table. 
Indeed, by the standard communication-to-query reduction, a $T$-query classical algorithm would yield an $O(T)$-bit two-way protocol for the distributed problem~\cite[Section~3.3]{kushilevitz1997communication}. 
This improves upon the previous $1$ versus $\Omega(N/\log N)$ separation for Fourier Fishing due to Aaronson and Chen~\cite{AaronsonChen2016}.

\paragraph{Efficiently verifiable quantum advantages.}
Beyond its theoretical importance, obtaining optimal advantages in communication complexity also has concrete applications to near-term quantum advantage experiments.
Recently, Kretschmer et al.~\cite{kretschmer2025demonstratingunconditionalseparationquantum} demonstrated an unconditional quantum advantage on a trapped-ion device: using 12 qubits, they solved a task that provably requires at least 62 bits of classical memory to reproduce.
Their experiment is based on a separation between quantum one-way and randomized one-way communication complexity. 
Thus, improved quantum--classical separations in communication complexity can translate directly into stronger experimental demonstrations of quantum space advantage.

This experimental perspective motivates additional requirements on the underlying communication problem.
In particular, it is desirable for the success of the quantum protocol to be \emph{efficiently verifiable}, meaning that a proposed solution can be checked in time polynomial in the number $n$ of communicated qubits, and by only querying polynomially-many bits of the inputs.
One drawback of the communication problem used by Kretschmer et al.~\cite{kretschmer2025demonstratingunconditionalseparationquantum} is that it is not efficiently verifiable in this sense.
By contrast, Hidden Matching has efficiently verifiable solutions, but its $n$ versus $\Omega(2^{n/2})$ separation is quantitatively too small to yield the same kind of unconditional advantage at the instance sizes accessible to current experiments~\cite{kretschmer2025demonstratingunconditionalseparationquantum}.
Can the optimal $n$ versus $2^n$ separation be achieved by a search problem whose solutions are efficiently verifiable?
Our second result answers this question affirmatively.

\begin{theorem}
\label{thm:intro-one-way}
There is a total search problem with efficiently verifiable solutions that admits an $n$-qubit quantum one-way protocol with success probability at least $3/4$ on every input, while every randomized one-way protocol that succeeds with probability at least $2/3$ on average under a uniform input distribution requires $\Omega(2^n)$ bits of communication.
\end{theorem}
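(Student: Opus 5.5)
We construct an efficiently verifiable variant of Pelagic Fourier Fishing and prove the two halves separately. Set $N = 2^n$. Alice receives a uniformly random Boolean function $f:\{0,1\}^n\to\{-1,1\}$, given as a truth table. Bob receives an $n$-qubit Clifford (or a random subset $S\subseteq\{0,1\}^n$ together with a structured basis; we commit to a random Clifford $C$ drawn from a small, efficiently describable family). The target relation asks Bob to output a string $z$ with $|\langle z| C|\psi_f\rangle|^2$ large, where $|\psi_f\rangle = 2^{-n/2}\sum_x f(x)|x\rangle$ is the phase state.

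This cannot be verified efficiently as stated. We therefore modify the relation so that validity depends only on polynomially many bits of $f$. In the $H^{\otimes n}$ case, $\langle z|H^{\otimes n}|\psi_f\rangle = \hat f(z)$, which depends on all of $f$. The idea is to replace this by a Hidden-Matching-style measurement in which Bob measures in a basis of states $(|x\rangle\pm|y\rangle)/\sqrt2$ that pair up inputs. Bob's outcome then reveals the parity $f(x)f(y)$ for the pair containing the measured element, and it is checkable with two queries.

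To reach $2^n$ rather than $2^{n/2}$, we require Bob to output many such constraints at once, generalizing from pairs to cosets of a hidden subspace. Concretely, Bob holds a random affine subspace decomposition, with cosets $a+V$ of dimension $k=\Theta(n)$. Measuring in the basis $\{|a+V\rangle\text{ Fourier-transformed on } V\}$ yields a coset $a$ and a character $w\in V^\ast$. The output $(a,w)$ is valid if
\[
\Bigl|\sum_{v\in V} f(a+v)(-1)^{w\cdot v}\Bigr| \ge \tau |V|^{1/2}.
\]
To keep verification polynomial, we take $|V| = \poly(n)$ but let the number of cosets scale with $N$.

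\textbf{Quantum upper bound.} Alice sends $|\psi_f\rangle$ using $n$ qubits. Bob applies the coset Fourier transform and measures. A direct calculation shows that Bob's output distribution over $(a,w)$ weights each pair proportionally to the squared coset Fourier coefficient. A Paley--Zygmund/anti-concentration argument then shows that, with probability at least $3/4$ for every $f$, the outcome meets the threshold for a suitable constant $\tau$. This uses only that the weights sum to $1$ and that there are boundedly many $w$ per coset.

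Totality holds because, by Parseval, every coset has some $w$ with a large coefficient. Verification computes one sum of $\poly(n)$ terms.

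\textbf{Classical lower bound.} This is the main obstacle. The plan is a one-way lower bound via a message-entropy or information argument on uniform inputs. Fix a deterministic message $m$ of $c$ bits, so that $f$ is uniform on a set $A_m$ of size $2^{N-c}$. We must show that when $c\ll N$, for a random Bob input, no single output $(a,w)$ is valid with probability $2/3$ over $f\in A_m$.

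We will bound the probability that a fixed output is valid using a level-$k$ inequality or hypercontractivity-style bound for large subsets of the hypercube. Coset-Fourier coefficients of $f$ restricted to $A_m$ have typical bias at most about $\sqrt{c/N}$ per coefficient averaged over cosets. Since Bob's decomposition is random and unknown to Alice, each candidate output corresponds to a random coset whose bias Alice's message cannot have targeted. The deficiency $c$ must then be spread over $\Theta(N/|V|)$ cosets, and a fixed $(a,w)$ exceeds the threshold only with probability $\exp(-\Omega(\tau^2))+O(c/N)$ averaged over Bob's choices.

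Choosing $\tau$ as a large enough constant drives the success probability below $2/3$ unless $c=\Omega(N)=\Omega(2^n)$. This requires balancing $\tau$ against the quantum success probability, which we expect to be the delicate step. Yao's principle then converts the deterministic bound to the randomized average-case statement.
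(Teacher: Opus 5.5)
\textbf{The relation you chose makes the $\Omega(2^n)$ bound false, for every constant $\tau$.} The problem is the one-sided acceptance rule $|\sum_{v\in V} f(a+v)(-1)^{w\cdot v}|\ge\tau|V|^{1/2}$, at least for any $|V|=K\ge n^2$ (the natural choice; the paper uses $K=\Theta(n^2)$). Here is a cheap classical protocol that beats it.
\begin{itemize}
\item Alice sends $f$ on a fixed set $R$ of $c=C\,2^n/n$ points.
\item Knowing $V$, Bob computes for each coset $U=a+V$ the known partial sum $s_U=K^{-1/2}\sum_{v:\,a+v\in R}f(a+v)$.
\item A typical coset contains about $CK/n$ revealed points, and the $s_U$ are independent across the $2^n/K$ disjoint cosets. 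A binomial tail bound gives $\Pr[s_U\ge b]\ge 2^{-b^2n/C}/\mathrm{poly}(n)$. So once $C$ is a large enough multiple of $b^2$, some coset has $s_U\ge b$ with high probability.
\item Bob outputs that $a$ with $w=0$. The unrevealed part of the coefficient is an independent normalized sum of nearly $K$ fresh signs. He therefore succeeds with probability about $\Pr[|Z+b|\ge\tau]$, which exceeds $2/3$ for a suitable constant $b$.
\end{itemize}
Thus $O_\tau(2^n/n)$ bits suffice.

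This is the same selection effect that limits plain Fourier Fishing to $\Theta(2^n/n)$, and it is what your sketch overlooks. You bound the success of ``a fixed $(a,w)$ averaged over Bob's choices.'' But Bob picks $(a,w)$ as a function of the message and of $V$, so he can maximize a shift over roughly $2^n$ coset--character pairs.

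The ``balancing'' you flag cannot be done. A larger $\tau$ only requires a larger constant $b$ in the protocol above. Meanwhile the quantum side forces $\tau$ to be small anyway: $\tau\le1/2$ gives worst-case success $\ge1-\tau^2\ge3/4$ by Parseval, and $\tau>1$ already breaks totality when $\dim V$ is even and $f$ is bent on every coset.

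\textbf{The missing idea is the paper's Pelagic condition} on the selected coset. An output $y$ is valid if the coefficients of magnitude at most $|\widehat f_U(y)|$, and separately those of magnitude at least $|\widehat f_U(y)|$, each carry at least $1/8$ of the squared Fourier mass. This keeps what you need:
\begin{itemize}
\item The problem is total, and the quantum protocol still succeeds with probability at least $3/4$ on every input, by the same Parseval argument.
\item Validity is checkable with $K$ queries.
\end{itemize}
It also removes the weakness above. On typical inputs every valid $y$ lies in a two-sided window $a\le\sqrt K|\widehat f_U(y)|\le3a$ with $a=\sqrt{2/3}$. Since $\sup_b\Pr[a\le|Z+b|\le3a]<0.6$, a large shift no longer helps Bob.

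The lower bound then writes $\langle F,\varphi_M(V)\rangle=\sqrt K\,\widehat f_U(y)$ as three pieces: an arbitrary message- and $V$-dependent shift, an independent normalized sum of $K$ signs, and an error whose second moment is $o(1)$ when $c=o(2^n)$. Three ingredients make this work:
\begin{itemize}
\item The Jaber--Kumar--Zuckerman max-load bound for linear hashing handles Bob's freedom to pick the most favorable coset. This requires $(n-\dim V)/K=o(1)$.
\item Revealing $o(2^n)$ fixed coordinates of $F$ removes most of its pairwise conditional covariances.
\item $F$ is coupled with a uniform string.
\end{itemize}

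Your quantum protocol and verifier essentially match the paper's. One inconsistency: ``cosets of dimension $\Theta(n)$'' contradicts $|V|=\mathrm{poly}(n)$, and efficient verification needs $\dim V=O(\log n)$.
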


Two further features of the problem in \cref{thm:intro-one-way} are that Bob's quantum computation consists only of a Clifford measurement and that his input is exponentially shorter than Alice's.
Aaronson~\cite{aaronson2007learnability} showed that, for any partial or total Boolean function, a one-way quantum protocol can be simulated by a randomized one-way protocol with only a multiplicative overhead proportional to the length of Bob's input.
Thus, when Bob's input is short, Boolean functions cannot exhibit the kind of large quantum--classical separation obtained here.
Alongside~\cite{li2024-classicallyaccessibleoracle,ABK24}, our result gives another example showing that relational problems can evade this limitation and support much larger separations even when Bob's input is short.

\paragraph{Resource-efficient quantum protocols.}
Our final result studies quantum--classical separations under restrictions on the complexity of the quantum protocol.
Suppose that Alice's $n$-qubit messages can each be prepared using at most $G$ gates from a fixed finite gate set of one- and two-qubit gates. A description of such a circuit requires only $O(G\log n)$ bits, because each gate is specified by its type and the qubits on which it acts. Alice could therefore send this circuit description classically, after which Bob could simulate the quantum protocol. 
Consequently, any such protocol has randomized classical one-way communication complexity at most $O(G\log n)$. 
In particular, achieving an $n$ versus $\Omega(K)$ quantum--classical separation requires $G=\Omega(K/\log n).$ 
Our final result exhibits a separation that saturates this bound.

\begin{theorem}
\label{thm:intro-ddp}
For every fixed $d\ge 2$, there is a total search problem with efficiently verifiable solutions that admits an exact $n$-qubit quantum one-way protocol, while every randomized one-way protocol that succeeds with probability at least $2/3$ on average over uniformly random inputs requires $\Omega_d(n^d)$ bits of communication.
Moreover, in the quantum protocol, Alice's message can be prepared using $O_d(n^d/\log n)$ gates, and Bob performs a Clifford measurement.
\end{theorem}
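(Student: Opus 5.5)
Alice's input will be a degree-$d$ polynomial $f:\mathbb{F}_2^n\to\mathbb{F}_2$, given by its $\binom{n}{\le d}=\Theta_d(n^d)$ coefficients (or, equivalently, by the truth table of a uniformly random degree-$d$ polynomial). Bob's input is a short string $y$, for instance a uniformly random affine shift or a linear functional. Alice sends the phase state $|\psi_f\rangle = 2^{-n/2}\sum_x (-1)^{f(x)}|x\rangle$. It can be prepared by applying Hadamards and then a diagonal unitary $(-1)^{f(x)}$. For $d=2$ this is a Clifford gate, so the state is a stabilizer state. For general $d$, the diagonal phase $(-1)^{f}$ with $\deg f\le d$ lies in the $d$th level of the Clifford hierarchy; it is a product of $C^{d-1}Z$-type gates, one per monomial. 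The number of monomials is $O_d(n^d)$. To reach $O_d(n^d/\log n)$ gates I would batch monomials sharing a common prefix using ancilla-free tricks, or simply accept the Lupanov-style counting bound for synthesizing diagonal phase polynomials.

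Bob's task is a Fourier-sampling relation based on the discrete derivative. He applies a Clifford unitary depending on $y$ and measures, obtaining an outcome that satisfies a low-degree constraint. Concretely, the plan is a "derivative" problem ($\DDP$): Bob outputs a pair $(x, v)$ satisfying the constraint $\Delta_{y}\cdots f(x)=\langle v,\cdot\rangle$ evaluated at a few points. Such a constraint can be checked by querying $\poly(n)$ entries of $f$, which gives efficient verifiability. Exactness comes from stabilizer measurements having deterministic stabilizer constraints. Totality is immediate because any $f,y$ admits such a pair.

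For the classical lower bound I would adapt whatever one-way lower-bound technique the paper uses for \cref{thm:intro-one-way}. The core fact is an information-theoretic counting argument. A message $m$ of $c$ bits leaves $f$ with min-entropy about $\binom{n}{\le d}-c$ on most messages. For Bob to succeed with probability $2/3$ on average over $y$, his outputs must correlate with about $\Omega(\binom{n}{\le d})$ independent linear functionals of the coefficient vector of $f$. The reason is that each correct output for a fresh $y$ certifies roughly one new linear equation over $\mathbb{F}_2$ on the coefficients, and different $y$ yield linearly independent equations. A hypercontractivity / KKL-type bound, or the "few linear forms are nearly unbiased under a high-entropy distribution" lemma, then shows that if $c\ll n^d$, a random constraint is essentially unbiased conditioned on $m$, so Bob succeeds with probability near the trivial guessing rate.

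The main obstacle is this lower bound. In particular, one must show that the set of constraints generated by valid outputs over random $y$ spans a space of dimension $\Omega_d(n^d)$, and that for a high-entropy distribution almost all of these constraints are close to uniform. I would first try the standard lemma that a distribution of deficiency $c$ has at most $O(c)$ linear forms with non-negligible bias. I would combine it with the fact that the relevant constraints on random $y$ are pairwise distinct linear forms in the coefficients; this is where the degree-$d$ structure (the $d$-fold derivative in direction $y$) matters. This would yield $c=\Omega_d(n^d)$.
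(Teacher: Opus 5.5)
\textbf{Main gap: the rank lemma.} Your skeleton for the lower bound is the right one: success $2/3$ forces many biased tests, the biased tests span a large space, and a large span forces a long message. But the step that carries all the weight is asserted, not proved, and neither tool you propose for it works.

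First, ``different $y$ yield linearly independent equations'' and ``the constraints are pairwise distinct linear forms'' cannot be the mechanism. Bob's test for direction $a$ is the functional $\Delta_{x_a,a}(q)=q(x_a)+q(x_a+a)$ on a space of dimension $\sum_{j\le d}\binom{n}{j}=\Theta_d(n^d)$, while there are $2^n-1$ directions. So these tests are massively dependent. Pairwise distinctness says nothing, since $2^c$ distinct forms fit inside a $c$-dimensional subspace.

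Second, the lemma ``deficiency $c$ implies at most $O(c)$ biased linear forms'' is false as stated: the uniform distribution on an affine subspace of codimension $c$ makes $2^c-1$ forms constant. The correct statement bounds the \emph{dimension of the span} of $\gamma$-biased forms by $c/(1-H_2(\frac{1+\gamma}{2}))=O(c/\gamma^2)$. The paper proves this by subadditivity of entropy over a basis of biased forms.

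Combined with that bound, what you actually need is a rank lemma. For every set $S$ of at least $\delta(2^n-1)$ directions and \emph{arbitrary} base points $x_a$, one needs $\dim\mathrm{span}\{\Delta_{x_a,a}:a\in S\}\ge\sum_{j=1}^{\min\{d,t\}}\binom{t}{j}$ with $t=\lfloor n/2\rfloor-O(1)$. The points must be arbitrary because Bob chooses $x_a$ after seeing $a$ and the message. This is also why ``a random constraint is nearly unbiased'' and KKL-type bounds do not address the issue. Your proposal has no argument for this lemma.

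The paper proves it in four steps.
\begin{enumerate}
\item View the tests as edges $\{x_a,x_a+a\}$ of a graph on $\mathbb{F}_2^n$. The edges are distinct because their differences are distinct.
\item Find a connected component on at least $2^t$ vertices. Either some vertex has degree at least $2^t$, giving a star, or, after a balanced orientation, a collision count shows that some linear $L:\mathbb{F}_2^n\to\mathbb{F}_2^m$ with $m\approx n/2$ yields a projected graph with more than $N^2/2+N$ edges ($N=2^m$). This forces a component with more than $2^t$ vertices.
\item Edge parities of a connected graph span all $\mathrm{ev}_z+\mathrm{ev}_{z_0}$ over its vertices. So their rank is the degree-$d$ Hilbert function of the vertex set minus one, which the Golovnev--Guo--Hatami--Nagargoje--Yan bound makes at least $\sum_{j=1}^{\min\{d,t\}}\binom{t}{j}$.
\item Pulling back along $q\mapsto q\circ L$ transfers this rank to the original tests.
\end{enumerate}
Together with an averaging step (a constant fraction of messages have a constant fraction of $\gamma$-biased directions), this gives $c=\Omega_d(n^d)$.

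\textbf{The problem and protocol are not pinned down.} The constraint ``$\Delta_y\cdots f(x)=\langle v,\cdot\rangle$'' is not a definition. A ``$d$-fold derivative in direction $y$'' carries no information over $\mathbb{F}_2$, since $D_yD_yf\equiv 0$.

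What works is the following. Bob receives $a\neq 0$ and must output $(x,b)$ with $b=D_aq(x)$. He measures in the basis $(|x\rangle\pm|x+a\rangle)/\sqrt2$, implemented by CNOTs controlled on a coordinate $j$ with $a_j=1$, followed by one Hadamard on qubit $j$. Exactness comes from the wrong-sign amplitude cancelling on each pair $\{x,x+a\}$. It does not come from the message being a stabilizer state, which is false for $d\ge 3$.

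\textbf{Gate count.} ``Batching monomials by a common prefix'' is the right idea, but it needs a concrete ingredient.
\begin{enumerate}
\item Write $q=\sum_{|S|\le d-2}x_Sq_S$ with each $q_S$ quadratic.
\item Implement each quadratic phase, which is a Clifford, with $O(n^2/\log n)$ gates via Aaronson--Gottesman.
\item Add the at most $d-2$ controls at $O_d(1)$ cost per gate.
\end{enumerate}
Summing over the $O_d(n^{d-2})$ sets $S$ gives $O_d(n^d/\log n)$. One multi-controlled $Z$ per monomial only gives $O_d(n^d)$, and an unspecified ``Lupanov-style'' bound is not a construction.
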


Since a classical description of Alice's preparation circuit uses $O_d(n^d)$ bits, the separation in \cref{thm:intro-ddp} is asymptotically optimal among protocols with this state-preparation complexity.
This restriction is also natural from an experimental perspective: any experiment demonstrating a communication complexity separation must physically implement the quantum protocol.
\Cref{thm:intro-ddp} therefore gives a tunable tradeoff between the size of the quantum--classical separation and the complexity of the required quantum circuit.

\Cref{thm:intro-ddp} is optimal from a second perspective, in terms of the Clifford hierarchy~\cite{gottesman1999demonstrating}.
In the quantum protocol, Alice's message is an $n$-qubit phase state of the form
\[
    \ket{\psi_f}
    =
    2^{-n/2}\sum_{x\in\F_2^n}(-1)^{f(x)}\ket{x},
\]
where $f:\F_2^n\to\F_2$ is a degree-at-most-$d$ polynomial.
Such a state can be prepared by applying a single diagonal unitary from the $d$th level $\calC_d$ of the Clifford hierarchy to $\ket{+}^{\otimes n}$, since any lower level of the Clifford hierarchy is also contained in $\calC_d$.
The separation in \cref{thm:intro-ddp} is asymptotically optimal among protocols in which Alice's message is prepared by a single unitary from $\calC_d$.

\begin{corollary}
\label{cor:intro-clifford-hierarchy}
Fix $d\ge 2$, and consider the search problem from \cref{thm:intro-ddp}.
Suppose Alice is restricted to sending an $m$-qubit state prepared by a single unitary from the $k$th level $\calC_k$ of the Clifford hierarchy.
Then, for every fixed $1 \le k \le d$, every one-way protocol that succeeds with probability at least $2/3$ over uniformly random inputs satisfies 
\[
    m=\Omega_{d,k}(n^{d/k}).
\]
In particular, the $n$-qubit $\calC_d$ protocol of \cref{thm:intro-ddp} is asymptotically optimal, while for every $k<d$, $n$ qubits do not suffice for all sufficiently large $n$.
\end{corollary}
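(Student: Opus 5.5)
Given a one-way protocol in which Alice sends an $m$-qubit state $U\ket{\phi}$ with $U\in\calC_k$, we will produce a classical one-way protocol whose cost is about the description length of $U$. \Cref{thm:intro-ddp} gives the lower bound $\Omega_d(n^d)$ on classical cost, and comparing the two yields the claim. The argument therefore has two parts: a counting bound on the size of $\calC_k$ (modulo phases), and the simulation step.

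\textbf{Step 1 (counting).} Prove by induction on $k$ that the number of elements of $\calC_k$ on $m$ qubits, up to global phase, is at most $2^{O_k(m^k)}$. For $k=1$ the Pauli group has $4^m$ elements. For the inductive step, $U\in\calC_k$ is determined up to phase by its conjugation action on the $2m$ generators $X_i,Z_i$, because a unitary commuting with every Pauli is a scalar. Each image $UX_iU^\dagger$ and $UZ_iU^\dagger$ lies in $\calC_{k-1}$. This gives a recursion $|\calC_k|\le |\calC_{k-1}|^{2m}$, which yields $2^{O(m^{k})}$ with only a crude constant. Strictly, the recursion gives $\log|\calC_k|\le 2m\log|\calC_{k-1}|$, so $O(m^k)$ holds with constant $2^k$, which suffices for fixed $k$.

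We should also check the required form of the starting state. If Alice's initial state is fixed, say $\ket{0^m}$, or is any fixed state, then only $U$ varies, and Alice can send its index using $O_k(m^k)$ bits. If instead the model allows an arbitrary input state followed by a single $\calC_k$ unitary, the statement presumably means that the state preparation is exactly $U$ applied to a fixed reference state. I would adopt that reading.

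\textbf{Step 2 (simulation).} The classical protocol works as follows. Alice sends the index of $U$ in a canonical enumeration. Bob, who has unbounded local computation, reconstructs the state $U\ket{0^m}$ and simulates his own quantum measurement with exactly the same output distribution. The success probability is unchanged, so the protocol still succeeds with probability $2/3$ on uniformly random inputs, and its cost is $O_k(m^k)$ bits. By \cref{thm:intro-ddp}, $O_k(m^k)\ge \Omega_d(n^d)$, so $m=\Omega_{d,k}(n^{d/k})$.

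For the final sentence of the corollary: when $k=d$ the bound matches the $n$-qubit protocol of \cref{thm:intro-ddp}, which gives optimality. When $k<d$ we get $m\ge c\,n^{d/k}>n$ for large $n$, so $n$ qubits do not suffice.

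The main obstacle is Step 1. We need a rigorous proof that the $\calC_k$ count is $2^{O_k(m^k)}$ rather than something larger. The inductive argument via conjugation images handles this, but it relies on an exact identification up to phase, which is the delicate point.
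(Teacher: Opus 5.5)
Your proposal follows the paper's proof. Lemma~\ref{lem:ch-simulation-bound} counts $\calC_k(m)$ modulo global phase via the conjugation images of $X_1,Z_1,\dots,X_m,Z_m$ and has Alice send the index of her unitary. Corollary~\ref{cor:ddp-tradeoff} then compares the resulting $O_k(m^k)$ classical cost with the $\Omega_d(n^d)$ lower bound, exactly as you do.

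The step that needs repair is the one you flagged. The images $UX_jU^\dagger$ and $UZ_jU^\dagger$ are actual operators, not phase classes. So your recursion $|\calC_k|\le|\calC_{k-1}|^{2m}$, counted modulo phase, does not follow from the injectivity argument. It is in fact false for $k=2$, $m=1$: the one-qubit Clifford group modulo phase has $24>4^2$ elements, since $U$ and $UP$ give the same phase classes of images for any Pauli $P$.

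The paper's fix is that each image is a Hermitian involution, so within its phase class only a sign $\pm$ is undetermined. This gives $a_k(m)\le(2a_{k-1}(m))^{2m}$, i.e.\ $\log_2 a_k(m)\le 2m\bigl(\log_2 a_{k-1}(m)+1\bigr)$, which still yields $O_k(m^k)$. The rest of your argument then goes through unchanged.
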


Equivalently, $n$ versus $\Theta_d(n^d)$ is the largest possible separation achievable using $n$-qubit messages prepared by a single unitary from $\calC_d$. Moreover, achieving this separation requires going all the way to level $d$: messages from any fixed lower level of the Clifford hierarchy require superlinear qubit complexity.

\paragraph{Quantum advantage without magic.}
Specializing \cref{thm:intro-ddp} to $d=2$, Alice's message is an $n$-qubit stabilizer state and Bob's measurement is Clifford, while the randomized classical one-way communication complexity is $\Omega(n^2)$.
This gives an optimal $n$ versus $\Theta(n^2)$ separation within this restricted model, since every $n$-qubit stabilizer state has an $O(n^2)$-bit classical description.
In particular, this answers a question raised by Kalai and Aaronson on Aaronson's blog, \emph{Shtetl-Optimized}~\cite{aaronson2025quantum}.

This special case is somewhat surprising. We are restricting the quantum protocol to use \emph{no magic}---that is, only classically simulable operations---yet we are still able to obtain a (quadratic) quantum advantage. Moreover, the evidence available before our work for whether such an advantage should be possible is murky at best. On the one hand, a superlinear communication advantage is \emph{impossible} when the quantum protocol is restricted to stabilizer-state messages over odd-prime-dimensional qudits and Clifford measurements.\footnote{The classical simulation here follows from the so-called nonnegative discrete-Wigner representation, which applies to odd-prime-dimensional stabilizer states and Clifford measurements, but not to qubit stabilizer states~\cite{gross2006hudson,veitch2012negative}.} On the other hand, Karanjai, Wallman, and Bartlett showed that any classical simulation of the $n$-qubit stabilizer subtheory within a broad ontological-model framework requires $\Omega(n^2)$ bits of memory~\cite{karanjai2018contextuality}.

Recent work of Girish, May, Parham, and Yuen~\cite{girish2026magic} gives several connections between magic and communication complexity, showing that in a number of settings, low magic implies low classical communication complexity. Our results exhibit a complementary phenomenon in one-way communication: a quantum advantage can persist even with no magic at all, while access to progressively higher levels of the Clifford hierarchy enables progressively larger optimal separations. Thus, while magic provides an important measure of quantum communication power, it is not itself necessary for a quantum communication advantage.

\subsection{Overview}
\label{subsec:overview}

We give a high-level overview of the total search problems underlying our separations in \cref{thm:intro-two-way,thm:intro-one-way,thm:intro-ddp}.
For reference, \cref{tab:qc-separations} summarizes our results alongside the most relevant prior work.

\begin{table}[t]
\centering
\caption{Selected quantum--classical communication separations. 
In each row, we define $n$ to be the worst-case number of qubits communicated by the quantum protocol that establishes the separation. 
The gate counts include message preparation and the implementation of input-dependent operations, and assume no free input oracles. 
$\QSMP$ denotes quantum communication complexity in the simultaneous message passing model.
A check mark in the last column indicates that the solution can be verified by a classical procedure using $\poly(n)$ time and input queries.
}
\label{tab:qc-separations}

\small
\setlength{\tabcolsep}{3pt}
\renewcommand{\arraystretch}{1.18}

\begin{tabularx}{\linewidth}{
    @{}
    >{\RaggedRight\arraybackslash}X
    c c c l c
    @{}
}
\toprule
\textbf{Problem}
& \textbf{Quantum}
& \textbf{Classical}
& \textbf{Gates}
& \textbf{Type}
& \textbf{Verif.}
\\
\midrule

Vector in Subspace~\cite{klartag2010quantumonewaycommunicationexponentially}
& $\Qoneway: n$
& $\Rtwoway: \Omega(2^{n/3})$
& $2^{\Theta(n)}$
& partial function
& --
\\
\addlinespace[3pt]

Hidden Matching~\cite{BJK08-comm}
& $\Qoneway: n$
& $\Roneway: \Theta(2^{n/2})$
& $2^{\Theta(n)}$
& total relation
& $\checkmark$
\\
\addlinespace[3pt]

Distributed Fourier Sampling~\cite{Montanaro_2019}
& $\Qoneway: n$
& $\Rtwoway: \Theta(2^n)$
& $2^{\Theta(n)}$
& sampling
& --
\\
\addlinespace[3pt]

Bipartite NullCodeword~\cite{goos2025quantum}
& $\QSMP: n$
& $\Rtwoway: 2^{n^{\Omega(1)}}$
& $\ge 2^{n^{\Omega(1)}}$
& total relation
& $\checkmark$
\\

\midrule

\textbf{Pelagic Fourier Fishing (Thm.~\ref{thm:intro-two-way})}
& $\Qoneway: n$
& $\Rtwoway: \Theta(2^n)$
& $2^{\Theta(n)}$
& total relation
& --
\\
\addlinespace[3pt]

\textbf{Affine Pelagic Fourier Fishing (Thm.~\ref{thm:intro-one-way})}
& $\Qoneway: n$
& $\Roneway: \Theta(2^n)$
& $2^{\Theta(n)}$
& total relation
& $\checkmark$
\\
\addlinespace[3pt]

\textbf{$\boldsymbol{\DDP_{n,d}}$ (Thm.~\ref{thm:intro-ddp})}
& $\Qoneway: n$
& $\Roneway: \Theta_d(n^d)$
& $O_d(n^d/\log n)$
& total relation
& $\checkmark$
\\

\bottomrule
\end{tabularx}
\end{table}

\subsubsection{Pelagic Fourier Fishing} \label{subsub:pff}

The problems underlying \cref{thm:intro-two-way,thm:intro-one-way} are based on the Fourier Fishing problem~\cite{Aaronson10BQPandPH}.
In the query complexity setting, Fourier Fishing asks, given oracle access to a Boolean function $f:\{0,1\}^n\to\{-1,1\}$, to find a Fourier coefficient of large magnitude.
The problem admits a particularly simple quantum algorithm: prepare the phase state
\[
2^{-n/2}\sum_{x\in \zo^n} f(x)\ket{x},
\]
apply $H^{\otimes n}$, and measure.
The resulting outcome $y$ is sampled with probability $\abs{\wh{f}(y)}^2$, so the quantum algorithm is naturally biased toward large Fourier coefficients.

A natural candidate for a quantum communication advantage is to adapt this problem to a distributed setting. 
Here, Alice gets a Boolean function $f\colon \zo^n \to \pmone$, Bob gets a Boolean function $g\colon \zo^n \to \pmone$, and Bob must find a large Fourier coefficient of $h \coloneqq fg$. 
The quantum protocol is simply for Alice to send the $n$-qubit phase-state encoding of $f$, after which Bob applies the diagonal unitary given by $\ket{x}\mapsto g(x)\ket{x}$, applies $H^{\otimes n}$, and measures.
This samples $y$ with probability $\abs{\wh{h}(y)}^2$.

\paragraph{Going beyond Fourier Fishing.} To obtain an optimal separation of $n$ vs. $\Theta(2^n)$, we introduce a variant of Fourier Fishing that we call \emph{Pelagic Fourier Fishing}.\footnote{Pelagic fish inhabit the open water column, away from the seafloor and shore. The name reflects that Pelagic Fourier Fishing likewise asks one to fish in the ``interior'' of the Fourier spectrum: for a coefficient whose magnitude is neither too small nor too large.}
Pelagic Fourier Fishing keeps the same distributed inputs $f$ and $g$, but changes what counts as a successful output.
Rather than asking Bob to find a Fourier coefficient above some fixed threshold, we ask him to find one whose magnitude is neither too small nor too large. 
Writing $h=fg$, Bob must output a $y\in\{0,1\}^n$ such that
\begin{equation} \label{eq:Pelagic-condition}
\sum_{z\,:\,\abs{\widehat h(z)}\le \abs{\widehat h(y)}}\abs{\wh{h}(z)}^2\ge \frac18,
\qquad
\sum_{z\,:\,|\wh h(z)|\ge |\widehat h(y)|}|\widehat h(z)|^2\ge \frac18.
\end{equation}
We call this the \emph{Pelagic condition}.
Bob must find a coefficient $\widehat h(y)$ whose magnitude lies away from both extremes of the spectrum: at least $1/8$ of the total squared Fourier mass should come from coefficients of magnitude at most $|\widehat h(y)|$, and at least $1/8$ should come from coefficients of magnitude at least $|\widehat h(y)|$. Informally, Bob is asked to find a coefficient ``in the bulk'' of the Fourier spectrum, weighted by squared Fourier mass. 
We note that the same quantum protocol for Fourier Fishing succeeds automatically with constant probability.
Indeed, the set of outcomes violating the lower Pelagic inequality carry less than $1/8$ of the total squared Fourier mass, and likewise for the upper Pelagic inequality.
Since the quantum protocol samples $y$ with probability $|\widehat h(y)|^2$, and by Parseval's identity
\[ \sum_{z\in \zo^n} |\wh{h}(z)|^2 = 1,\]
it therefore outputs a valid $y$ with probability at least $3/4$ for every Boolean function $h$.

In the non-distributed version of Pelagic Fourier Fishing, an algorithm is given oracle access to a Boolean function $h\colon \zo^n\to \pmone$ and must output a $y$ satisfying the Pelagic condition. Note that the two-way randomized communication complexity of the distributed version is at most twice the randomized query complexity of the non-distributed version, because Alice and Bob can simulate the query algorithm, fixing randomness at the start and then exchanging two bits, $f(x)$ and $g(x)$, whenever the query algorithm queries $h(x)$. Thus the same asymptotic randomized query lower bound immediately follows from a randomized two-way communication lower bound.

\paragraph{Classical hardness of Fourier Fishing for communication.} The difficulty with Fourier Fishing classically is that Alice's input alone gives no indication of where the Fourier mass of $h{=}fg$ will lie. Indeed, fix any $f$. As $g$ varies, $h$ ranges over all Boolean functions. Thus Alice's message must encode enough information about $f$ to remain useful for whichever $g$ Bob receives. This is particularly challenging because the Fourier coefficient $\wh{h}(y)$ depends with equal weight on every entry of the truth table of $h$.
Moreover, this difficulty is symmetric between the two players. The same argument applies with the roles of $f$ and $g$ reversed, so messages from Bob to Alice face the same basic obstacle. 
This symmetry is what makes the problem a natural candidate for hardness even against two-way classical communication.

The above intuition suggests that this distributed variant of Fourier Fishing may be hard even for two-way classical communication, but does not indicate the communication scale at which it becomes easy. In fact, its randomized two-way communication complexity is $\Theta(2^n/n)$; we next give an informal explanation for this scale. To see where the bound comes from, and why it falls short of optimal, it helps to view Fourier coefficients as rescaled inner products. Write $F,G\in\pmone^N$ for the truth tables of $f$ and $g$, and $\bchi_y\in\pmone^N$ for the truth table of $\chi_y(x) = (-1)^{x\cdot y}$. Then the rescaled Fourier coefficient of $h$ at $y$ is
\[\wh{H}_y\coloneqq\sqrt N\cdot \wh h(y)= 2^{-n/2}\sum_{x\in \zo^n} f(x)g(x)\chi_y(x) = \frac{\langle F,\,G\odot\bchi_y\rangle}{\sqrt N},\]
where $\odot$ denotes the entrywise product. With this normalization, suppose the distributed version of Fourier Fishing asks Bob to output some $y$ with $|\wh{H}_y|\ge1/2$. By Parseval, the quantum protocol succeeds with probability at least $3/4$ on every input. When $F$ and $G$ are uniformly random, each $\wh{H}_y$ is a $1/\sqrt{N}$-normalized sum of $N$ independent uniformly random signs in $\pmone$, which for sufficiently large $N$ is approximately a standard Gaussian distribution $\mathcal{N}(0,1)$ by the central limit theorem. A guess with no communication whatsoever thus succeeds with probability about $\Pr_{Z\sim \mathcal N(0,1)}[|Z|\ge1/2]\approx0.617$.

Suppose that Alice's and Bob's truth tables are $\bF,\bG\sim\mathrm{Unif}(\pmone^N)$, and first consider a deterministic one-way protocol in which Alice sends $c$ bits. After receiving her message, Bob's posterior belief about $\bF$ is that it is uniformly distributed over the set $A\subseteq\pmone^N$ of inputs consistent with that message, which we think of as having typical size about $2^{N-c}$, since each communicated bit can eliminate half of the possibilities. Crucially, Alice does not know $\bG$, so she cannot tailor $A$ to any of the vectors $\bG\odot\bchi_y$. Heuristically, when $c = o(N)$, Alice's message can shift the coefficients, but substantial Gaussian-like noise remains. The shift $s_y$ is the average value of $\wh{H}_y$ over $A$, which Bob can compute from Alice's message and his own input, while the variation of $\bF$ within $A$ contributes the variance.

How much can Alice's message shift each mean $s_y$? Consider a simple strategy, in which Alice sends the values of her truth table on a fixed set of $c$ coordinates agreed upon in advance. Then $A$ consists of all truth tables agreeing with these values, so the shift $s_y$ is exactly the contribution of the $c$ known terms in the sum defining $\wh{H}_y$, while the remaining $N-c$ terms make up the Gaussian-like noise. Since Alice does not know $\bG$, each vector $\bG\odot\bchi_y$ looks to her like a uniformly random sign vector. Hence each shift $s_y$ is distributed as a sum of $c$ independent uniformly random signs scaled by $1/\sqrt N$, which is approximately Gaussian with mean $0$ and standard deviation $\sqrt{c/N}$.

Bob, however, knows $\bG$, so he can choose the $y^*$ whose shift $s_{y^*}$ is largest in magnitude. The maximum of $N$ roughly independent Gaussians with standard deviation $\sigma$ is about $\sigma\sqrt{2\ln N}$ (see, e.g.,~\cite[Section~2.5]{boucheron2013concentration}), so the largest shift available to Bob is about $\sqrt{2c\ln N/N}$. Once $c$ is a sufficiently large constant multiple of $N/\ln N$, this shift is a large constant, and $|\wh{H}_{y^*}|\ge1$ with probability close to $1$. This calculation suggests why revealing raw coordinates can become useful around $c = O(N/\log N)$, so this version of Fourier Fishing cannot achieve an optimal separation. It also turns out that this simple strategy is asymptotically optimal.

The same qualitative picture holds for two-way protocols. Alice may now learn about $\bG$ from Bob's messages, but conditioned on the full $c$-bit transcript, $\bF$ and $\bG$ are independent and uniformly distributed over sets $A$ and $B$ with $|A|\cdot|B|\approx2^{2N-c}$ typically. Since $\bG$ is still uniform over a huge set, substantial Gaussian-like uncertainty remains, and the same heuristic picture applies.

\paragraph{Classical hardness of Pelagic Fourier Fishing for communication.}

We now show why the factor-of-$1/n$ sub-optimality is removed for Pelagic Fourier Fishing. For a typical random Boolean function $h\colon \zo^n\to \pmone$, the empirical distribution of the rescaled Fourier coefficients $\wh{H}_y=\sqrt N\,\wh h(y)$ is close to standard Gaussian. Since a normalized coefficient $\widehat H_y$ contributes $\widehat H_y^2/N$ to the squared Fourier mass, the fraction of squared Fourier mass carried by coefficients with $|\widehat H_y|\le t$ is typically close to $C(t) \coloneqq \mathbb E_{Z\sim\mathcal N(0,1)}[Z^2\mathbf 1\{|Z|\le t\}]$. Consequently, for a typical random $h$, every coefficient satisfying the Pelagic condition also satisfies a fixed \emph{window condition}, e.g. that $|\wh{H}(y)|$ must satisfy 
\[C_{\textrm{lower}} \leq |\wh{H}(y)| \leq C_{\textrm{upper}}\]
for some constants $0 < C_{\textrm{lower}} < C_{\textrm{upper}}$. For atypical inputs, however, such a window condition need not admit any correct output. For example, if $h$ is exactly a character $\chi_y$, then all of its Fourier mass sits on $y$, for which $|\wh{H}(y)| = \sqrt{N}$, exceeding any constant upper threshold. Therefore, we officially use the Pelagic condition, but it suffices to analyze a window condition for a $1-o(1)$ fraction of inputs and ignore the rest as too low probability to contribute significantly to the error over the uniform input distribution.

This two-sided window condition allows the lower bound to become an optimal $\Theta(2^n)$. The lower-bound argument shows that the selected coefficient can be represented, up to a vanishing error, as a standard Gaussian-like fluctuation plus an independent random shift. The shift may itself be random, so the resulting distribution need not be a single Gaussian. However, a two-sided window can be chosen so that
\[ \sup_{b\in \R} \Pr_{Z\sim \calN(0,1)}[C_{\textrm{lower}} \leq | Z + b | \leq C_{\textrm{upper}}] < 2/3. \]
Hence the same bound holds if $b$ is replaced by any independent random shift. Thus, unlike the one-sided threshold of Fourier Fishing, a large shift is no longer intrinsically useful. This difference forces the classical protocol to communicate $\Omega(2^n)$ bits in total.

\subsubsection{Affine Pelagic Fourier Fishing}

Checking whether a Fourier coefficient satisfies the Pelagic condition~\eqref{eq:Pelagic-condition} appears to require essentially complete knowledge of the Fourier spectrum of $h$, since a coefficient could be arbitrarily close to failing one of the Pelagic inequalities. Thus verification seems inherently expensive for our first problem. We therefore seek a variant in which correctness can be checked using only polynomially many queries to the inputs and polynomial computation time, while preserving the $n$ vs. $\Theta(2^n)$ quantum--classical separation. 

Inspired by the ``hidden partition'' paradigm initiated by Hidden Matching and generalized in subsequent work~\cite{BJK08-comm,doriguello2020exponential}, we give Bob a hidden structure that partitions Alice's input into many much smaller instances of Pelagic Fourier Fishing. Bob is allowed to solve any one of these instances, provided he also specifies which instance he solved. This ``hidden partition'' trick means that for verifying an answer, we can restrict our attention to one smaller instance of the problem, which depends on only a small fraction of Alice's input. Note that this trick creates an easy two-way protocol, so our lower bound will only hold against one-way communication from Alice to Bob.

Concretely, Bob receives a linear subspace $V \leq \F_2^n$. Its affine cosets partition $\F_2^n$, and on each coset we obtain a smaller instance of Pelagic Fourier Fishing by considering the restriction of $f$ to that coset and its Fourier spectrum over $V$. Choosing $V$ to have logarithmic dimension makes each coset polynomially-sized, so once Bob specifies a coset, the corresponding Pelagic condition can be checked efficiently by querying $f$ only on the $\poly(n)$ coordinates in that coset. We call this variant \emph{Affine Pelagic Fourier Fishing}.

Why should a partition of $\F_2^n$ into the affine cosets of a small random subspace $V$ preserve the classical hardness of the original problem? Intuitively, because Alice does not know $V$, she cannot tailor her message to any particular coset of $V$. If Alice employed the simple strategy of revealing some arbitrary $c$ coordinates of her truth table, then a random affine partition would give each coset only a small share of these coordinates on average. The nontrivial difficulty is that Bob may choose whichever coset contains the most revealed information, so we need to control even the most heavily loaded coset. Results of Alon et al.~\cite{alon1999linear}, recently sharpened and generalized by Jaber, Kumar, and Zuckerman~\cite{jaber25}, provide exactly this type of control for the maximum load of linear hash functions. In our setting, these results can be interpreted as bounds on how heavily an arbitrary fixed subset of $\F_2^n$ can intersect any coset of a random subspace. Thus, even after choosing the most favorable coset, Bob cannot concentrate the information revealed by Alice's message onto one small problem instance. This allows the classical hardness of Pelagic Fourier Fishing to survive the partition, ultimately yielding the same $c=\Omega(2^n)$ lower bound against randomized one-way communication.

\subsubsection{The Directional Derivative Problem} \label{subsub:DDP}

Our final family of separations is based on a restricted version of Hidden Matching~\cite{BJK08-comm}.
Alice receives a degree-at-most-$d$ multilinear polynomial $q:\F_2^n\to\F_2$,
and Bob receives a nonzero direction $a\in\F_2^n$.
Their goal is to jointly output the directional derivative of $q$ in the direction $a$ at any point $x \in \F_2^n$, which we denote by $D_aq(x)\coloneqq q(x+a)+q(x)$.
Equivalently, the direction $a$ defines the perfect matching
\[
M_a=\bigl\{\{x,x+a\}:x\in\F_2^n\bigr\},
\]
and Bob must output any edge of this matching together with the parity of $q$ on its two endpoints.
Thus, the Directional Derivative Problem can be viewed as Hidden Matching with two restrictions: Alice's input is constrained to be a low-degree polynomial, and Bob's matching is restricted to a subset of matchings that can be defined by an affine shift. 

The quantum protocol is again extremely simple.
Alice sends the low-degree phase state
\[
\ket{\psi_q}=2^{-n/2}\sum_{x\in\F_2^n}(-1)^{q(x)}\ket{x}.
\]
Given $a$, Bob measures in the basis consisting of states $(\ket{x}+(-1)^b\ket{x+a})/\sqrt2$. An outcome $(x,b)$ necessarily satisfies $b=q(x)+q(x+a)=D_aq(x)$, so the protocol is exact. Moreover, Bob's measurement is Clifford and can be implemented using at most $n-1$ $\mathrm{CNOT}$ gates and one Hadamard gate.

Classically, there is a simple $O_d(n^d)$-bit protocol: Alice sends the coefficients of $q$, after which Bob can compute $D_aq(x)$ for any $x$.
Our lower bound shows that this na\"ive protocol is essentially optimal.
One difficulty is that, for each direction $a$, Bob is free to choose whichever point $x$ makes the derivative $D_aq(x)$ easiest to predict from Alice's message.

To see how we overcome this, think of the coefficients of $q$ as $\Theta_d(n^d)$ independent bits.
After receiving Alice's message, Bob will not know $q$ exactly, but he knows that it belongs to some smaller set of possible polynomials that are consistent with Alice's message.
For a given direction $a$, if $D_aq(x)$ is nearly equally likely to be $0$ or $1$ over this set for every choice of $x$, then Bob cannot predict the derivative with much advantage over random guessing.
Thus, for Bob to succeed on many directions, Alice's message must make some derivative $D_aq(x)$ noticeably biased for many different $a$.

For fixed $x$ and $a$, each derivative $D_aq(x)$ is a parity of some subset of the coefficients of $q$.
We learn some information about this parity whenever the derivative $D_aq(x)$ is biased after conditioning on Alice's message. 
The main structural ingredient in our proof shows that biased derivatives each reveal information about \emph{different} parities. 
Specifically, we show that if Bob has biased derivatives for many directions, then among them there are $\Omega_d(n^d)$ derivatives that each reveal a genuinely new piece of information about $q$; it turns out that the parities must be linearly independent, so the value of none of these parities is determined by the others.
Thus Alice's message must reveal information about $\Omega_d(n^d)$ different degrees of freedom of $q$, forcing the message to have length $\Omega_d(n^d)$.

\subsection{Toward verifiable quantum information supremacy}
\label{subsec:toward}

As discussed above, separations in one-way communication complexity can be converted directly into quantum advantage experiments. This idea was proposed by Aaronson, Buhrman, and Kretschmer~\cite{ABK24}, who called the resulting notion \emph{quantum information supremacy}, and was recently realized experimentally by Kretschmer et al.~\cite{kretschmer2025demonstratingunconditionalseparationquantum}.

The basic idea is to reinterpret a one-way communication protocol as a computation carried out on a single device at two different points in time. Alice's input determines an $n$-qubit state $\ket{\psi_x}$, which is prepared on the device. Once this state has been prepared, Bob's input determines a measurement to be performed on it, and the resulting measurement outcome is taken as the output of the protocol. Thus, Alice's quantum message is simply the quantum state retained by the device between the state-preparation and measurement stages.

Under this reinterpretation, a lower bound on classical communication becomes a lower bound on classical memory. Indeed, any classical device that could reproduce the same behavior while retaining only $m$ bits of information between the two stages, under the usual one-way information boundary, would immediately give an $m$-bit classical one-way communication protocol: Alice sends the contents of the device's memory to Bob, who continues the computation using his input. Consequently, an $n$-qubit versus $M$-bit communication separation yields an unconditional experimental separation between $n$ qubits of quantum memory and $M$ bits of classical memory.

These experiments serve two key purposes. First, they provide operational benchmarks for quantum hardware. Morally, demonstrating a large $n$ versus $M$ advantage requires Alice's messages to be prepared with sufficiently high fidelity that the communication advantage survives experimental noise.
Thus, for an implementation of the intended protocol, a large quantum information supremacy advantage provides evidence that the hardware can prepare complex states with high fidelity, although formally certifying the intended state was prepared would further require a self-testing theorem.
Second, these experiments probe a more conceptual question: whether the exponentially large state space predicted by quantum mechanics is operationally accessible. If reproducing the behavior of an $n$-qubit experiment requires $M$ bits of classical memory, then any classical finite-state model reproducing that behavior must have at least $2^M$ distinguishable internal states. Thus, quantum information supremacy gives an unconditional way to test how large a classical state space is required to reproduce the observable behavior of an $n$-qubit system, providing direct evidence for the effective exponentiality of Hilbert space.

Kretschmer et al.~\cite{kretschmer2025demonstratingunconditionalseparationquantum} recently realized this proposal experimentally on a trapped-ion device, demonstrating an unconditional separation between $12$ qubits and $62$ classical bits.
As discussed above, one drawback is that the communication task underlying their experiment is not efficiently verifiable in our sense.
Two other natural goals are to reproduce such experiments on additional hardware platforms, including superconducting and neutral-atom devices, and to demonstrate substantially larger separations.
Our results provide several candidate communication problems for pursuing these directions.

Kretschmer et al.\ introduced a new communication task in part because the $n$ versus $\Omega(2^{n/2})$ separation from Hidden Matching~\cite{BJK08-comm} was not quantitatively strong enough for the advantage to appear at the small values of $n$ accessible experimentally.
The separation in \cref{thm:intro-one-way} removes this asymptotic bottleneck: it achieves the largest possible $n$ versus $\Omega(2^n)$ separation in the one-way model while retaining efficiently verifiable solutions.
It therefore provides a natural candidate for an unconditional and efficiently verifiable demonstration of quantum information supremacy.

For finite-size experiments, however, asymptotic separation is only part of the story: constant factors and circuit complexity are equally important.
This makes the Directional Derivative Problem underlying \cref{thm:intro-ddp} another potentially useful candidate.
Its solutions are efficiently verifiable, while the parameter $d$ provides a tunable tradeoff between the size of the classical lower bound and the complexity of Alice's state preparation.
For a particular noisy device, one can therefore choose $d$ to obtain the largest separation compatible with the circuits that can be implemented reliably. The separation in \cref{sec:ghs} offers yet another experimentally viable candidate.
As in the $d=2$ case of \cref{thm:intro-ddp}, Alice sends a stabilizer state, and Bob performs a Clifford measurement.
However, in \cref{sec:ghs}, Bob's quantum circuit consists only of Hadamard gates, whereas the Directional Derivative protocol also requires $\mathrm{CNOT}$ gates.
This simpler measurement may make the former problem attractive, because one-qubit gates typically are not the dominant source of noise.

Turning any of these proposals into an experiment will require a hardware-specific optimization.
For a given platform, one must first determine which of the communication problems developed here gives the best finite-size separation at experimentally accessible circuit depths and noise rates.
This will require numerical studies of the relevant success probabilities and classical bounds, careful optimization of constant factors, and possibly variants of our constructions tailored to particular hardware architectures.
We leave these questions---and the resulting experimental demonstrations---to future work.

\subsection{Open problems}

As a complement to the directions given in \cref{subsec:toward},   
we conclude with several other open questions suggested by our results.

\begin{enumerate}
    \item \textbf{Partial functions.}
Can the optimal $n$ versus $2^{n/2}$ separation for partial Boolean functions be achieved between quantum one-way and randomized two-way communication?
The best known separation is $n$ versus $\Omega(2^{n/3})$, given by the Vector in Subspace problem of Klartag and Regev~\cite{klartag2010quantumonewaycommunicationexponentially}.
Closing the gap between the exponents $1/3$ and $1/2$ remains open.

    \item \textbf{Weaker quantum models.}
Our first result achieves the maximal $n$ versus $2^n$ separation using one-way quantum communication.
Can a comparable exponential-scale advantage be achieved in still weaker quantum communication models?
For example, is there a total search problem with $O(n)$ quantum simultaneous-message complexity but $\Omega(2^n)$ randomized two-way communication complexity?
G\"o\"os, Gur, Jain, and Li~\cite{goos2025quantum} obtained an exponential separation between quantum simultaneous-message and randomized two-way communication for a total search problem, but their separation is quantitatively far from optimal.

    \item \textbf{Efficient verifiability.}
    Our first result (\cref{thm:intro-two-way}) is hard against randomized two-way protocols, while our second result (\cref{thm:intro-one-way}) is efficiently verifiable but only establishes hardness against randomized one-way protocols.
    Can these two properties be achieved simultaneously?
    In particular, is there an efficiently verifiable total search problem admitting an $n$-qubit quantum one-way protocol but requiring $\Omega(2^n)$ randomized two-way communication?

    \item \textbf{Total functions.}
    Can an exponential quantum--classical separation for a total Boolean function be achieved in the one-way model?
    Recent work has established exponential separations for total functions using quantum protocols with two or more messages~\cite{gavinsky2026quantumcommunicationcomplexitytotal,gall2026improvedseparationsquantumclassical}.
    It remains open whether such an advantage is possible with a single quantum message.
\end{enumerate}

\pdfbookmark[1]{Acknowledgments}{acknowledgments}
\paragraph*{Acknowledgments.}
We thank Scott Aaronson, Uma Girish, Siddhartha Jain, Olivier Lalonde, Jarrod McClean, Natalie Parham, and Ronak Ramachandran for helpful discussions. R.A. is supported by an NSF Graduate Research Fellowship. M.D. is supported by the Texas Quantum Initiative Graduate Research Fellowship and U.S. Department of Energy through a quantum computing program at the Pacific Northwest National Laboratory.

\pdfbookmark[1]{Tool and computational resource disclosure}{tool-disclosure}
\paragraph*{Tool and computational resource disclosure.}
This project began without generative-AI assistance, and several of its core results were obtained before the authors used AI tools in the research process. In particular, the authors designed the Directional Derivative Problem $\DDP_{n,d}$ and proved the $d=2$ case of \cref{thm:intro-ddp} without AI assistance. They also independently obtained an $n$ versus $\widetilde{\Omega}(2^n)$ separation between quantum one-way and randomized one-way communication using Fourier Fishing, as well as the $n$ versus $\Theta(n^2)$ separation in \cref{sec:ghs}. 

All improvements on these results involved AI assistance.
In particular, at a later stage of the project, the authors used ChatGPT 5.6 and ChatGPT 6 Astra iteratively to explore improvements to these human-generated results. The authors supplied the models with their existing constructions and proofs. The models generalized the $d=2$ Directional Derivative lower bound to arbitrary fixed $d$ in \cref{thm:intro-ddp}, and designed a variant of the Fourier Fishing problem (Pelagic Fourier Fishing) that removed a logarithmic loss from the classical lower bound. The models were also used for literature search, checking arguments, and revising portions of the exposition. The authors independently verified the mathematical arguments incorporated into the paper, revised and organized the resulting proofs and exposition, and take full responsibility for the correctness, attribution, and presentation of the work.

\section{Preliminaries}

\subsection{Communication complexity}
We refer to \cite{kushilevitz-nisan-textbook} for background on
communication complexity. A two-party search problem is specified
by a relation $R\subseteq X\times Y\times Z$ over finite sets.
Alice receives $x\in X$, Bob receives $y\in Y$, and Bob must
output a solution $z\in R(x,y)$, where $R(x,y)\coloneqq\{z\in Z:(x,y,z)\in R\}$.
The relation is \emph{total} if $R(x,y)\neq\varnothing$ for every
$(x,y)\in X\times Y$.

We write $\Rtwoway_\varepsilon(P)$, $\Roneway_\varepsilon(P)$, and $\Qoneway_\varepsilon(P)$ for the minimum costs of randomized classical two-way, randomized classical one-way, and quantum one-way protocols for a problem $P$, respectively, that succeed with probability at least $1-\varepsilon$ on every input. An omitted error subscript means $\varepsilon=1/3$. In all communication models we consider, the players have access to shared randomness but no prior entanglement.

\subsection{Fourier analysis} \label{sub:fourier-analysis}
We first recall standard Boolean function Fourier-analytic conventions, following \cite{odonnell2014boolean}. For $f:\zo^n\to\R$ and $y\in\zo^n$, define
\[\widehat f(y)\coloneqq \E_{x\in\zo^n}[f(x)\chi_y(x)]=\frac{1}{2^n}\sum_{x\in\zo^n}f(x)\chi_y(x),\]
where $\chi_y(x)\coloneqq(-1)^{y\cdot x}$. We freely identify a frequency $y\in\F_2^n$ with the subset
\[S_y\coloneqq\{i\in[n]:y_i=1\}\subseteq[n].\]
Thus Fourier coefficients indexed by subsets $S\subseteq[n]$ are the same coefficients under the identification $y\leftrightarrow S_y$.

More generally, for a finite-dimensional vector space $V$ over $\F_2$, let $V^*$ denote its dual space. For $y\in V^*$, write
\[\chi_y(v)\coloneqq(-1)^{y(v)}.\]
If $U=a+V$ is a coset of $V$ and $f:\F_2^n\to\R$, define the Fourier coefficient of $f$ on $U$ at $y\in V^*$ by
\[\widehat f_U(y)\coloneqq\frac{1}{|V|}\sum_{v\in V}f(a+v)\chi_y(v).\]
Replacing $a$ by $a+w$ for $w\in V$ multiplies $\widehat f_U(y)$ by $\chi_y(w)\in\{\pm1\}$, so $|\widehat f_U(y)|$ depends only on $U$ and $y$.

We will repeatedly use Parseval's identity. For $f:\zo^n\to\R$,
\[\sum_{y\in\zo^n}|\widehat f(y)|^2=\E_{x\in\zo^n}[|f(x)|^2]=\frac{1}{2^n}\sum_{x\in\zo^n}|f(x)|^2.\]
More generally, for a coset $U=a+V$ and $f:U\to\R$,
\[\sum_{y\in V^*}|\widehat f_U(y)|^2=\frac{1}{|V|}\sum_{x\in U}|f(x)|^2.\]
In particular, if $f$ is $\pmone$-valued, then
\[\sum_{y\in\zo^n}|\widehat f(y)|^2=1,\qquad \textrm{and}\qquad \sum_{y\in V^*}|\widehat f_U(y)|^2=1\]
on every coset $U=a+V$.\\

\subsection{Information theory}

We first recall some standard information-theoretic notions. Let $\bX$ be a random variable on a finite set $\calX$, with probability mass function $p_{\bX}(x)\coloneqq\Pr[\bX=x]$. Its \emph{entropy} is
\[H(\bX)\coloneqq\sum_{x\in\calX}p_{\bX}(x)\log_2\!\left(\frac{1}{p_{\bX}(x)}\right).\]
For a Bernoulli random variable with success probability $p$ (using the convention $0\log 0=0$), the \emph{binary entropy function} is
\[H_2(p)\coloneqq-p\log_2p-(1-p)\log_2(1-p).\]
For jointly distributed random variables $\bX,\bY$, the \emph{conditional entropy} of $\bX$ given $\bY$ is
\[H(\bX\mid\bY)\coloneqq\sum_y\Pr[\bY=y]H(\bX\mid\bY=y).\]
The \emph{mutual information} between $\bX$ and $\bY$ is
\[I(\bX;\bY)\coloneqq H(\bX)-H(\bX\mid\bY)=H(\bY)-H(\bY\mid\bX),\]
and, for jointly distributed $\bX,\bY,\bZ$, the \emph{conditional mutual information} is
\[I(\bX;\bY\mid\bZ)\coloneqq H(\bX\mid\bZ)-H(\bX\mid\bY,\bZ).\]
We will also use the chain rule
\[H(\bX,\bY\mid\bZ)=H(\bY\mid\bZ)+H(\bX\mid\bY,\bZ),\]
as well as the nonnegativity of mutual information and conditional mutual information. We use the following result about mutual information.

\begin{lemma}
\label{lem:cov-mutual-information}
For jointly distributed random variables $\bX,\bY\in\pmone$,
\[\Cov(\bX,\bY)^2\leq 2I(\bX;\bY).\]
Consequently, for any jointly distributed $\bX,\bY \in \pmone$, and any finite random variable $\bZ$,
\[\E\big[\Cov(\bX,\bY\mid\bZ)^2\big]\leq 2I(\bX;\bY\mid\bZ).\]
\end{lemma}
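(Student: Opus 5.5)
We prove the unconditional inequality first and then average it over $\bZ$ to get the conditional version. The natural route is Pinsker's inequality, after rewriting mutual information as a KL divergence:
\[
I(\bX;\bY)=D\bigl(P_{\bX\bY}\,\big\|\,P_{\bX}\otimes P_{\bY}\bigr).
\]

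\textbf{Step 1: express the covariance as an expectation difference.} Since $\bX\bY$ is a function of the pair $(\bX,\bY)$,
\[
\Cov(\bX,\bY)=\E_{P_{\bX\bY}}[\bX\bY]-\E_{P_{\bX}\otimes P_{\bY}}[\bX\bY].
\]
The test function $\phi(x,y)=xy$ takes values in $\pmone$. For any two distributions $P,Q$,
\[
\bigl|\E_P\phi-\E_Q\phi\bigr|\le 2\,\mathrm{TV}(P,Q)=\|P-Q\|_1 .
\]
Hence $|\Cov(\bX,\bY)|\le \|P_{\bX\bY}-P_{\bX}\otimes P_{\bY}\|_1$.

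\textbf{Step 2: apply Pinsker.} Pinsker's inequality in the form $\|P-Q\|_1^2\le 2\ln 2\cdot D(P\|Q)$, with $D$ measured in bits, gives
\[
\Cov(\bX,\bY)^2\le 2\ln 2\cdot I(\bX;\bY)\le 2I(\bX;\bY),
\]
because $\ln 2<1$.

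The one point to watch is the units convention. The paper's $H$ uses $\log_2$, which is why the constant $2\ln 2$ appears. This constant is smaller than $2$, so the claimed bound holds with room to spare; nothing else in the argument is delicate.

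\textbf{Step 3: the conditional version.} Fix a value $z$ with $\Pr[\bZ=z]>0$. Apply the unconditional bound to the conditional joint law of $(\bX,\bY)$ given $\bZ=z$:
\[
\Cov(\bX,\bY\mid\bZ=z)^2\le 2\,I(\bX;\bY\mid\bZ=z).
\]
Now average over $z$ with weights $\Pr[\bZ=z]$. This uses the standard identity $I(\bX;\bY\mid\bZ)=\sum_z\Pr[\bZ=z]\,I(\bX;\bY\mid\bZ=z)$, which follows directly from the definition of conditional entropy given in the preliminaries. The result is $\E\bigl[\Cov(\bX,\bY\mid\bZ)^2\bigr]\le 2I(\bX;\bY\mid\bZ)$.
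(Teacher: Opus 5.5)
Your proof is correct, and it takes a different route from the paper's.

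\textbf{Your route.} You use the global identity $I(\bX;\bY)=D(P_{\bX\bY}\,\|\,P_{\bX}\otimes P_{\bY})$. You bound $|\Cov(\bX,\bY)|$ by the $\ell_1$ distance between the joint and product laws via the test function $xy$, and then apply Pinsker in bits.

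\textbf{The paper's route.} The paper avoids Pinsker and uses strong concavity of binary entropy. Setting $h(b)=H_2((1+b)/2)$, it writes $I(\bX;\bY)=h(\E\bX)-\E_{\bY}\,h(\E[\bX\mid\bY])$. It notes $h''\le -1$, so $h(b)+b^2/2$ is concave. Jensen then gives the intermediate bound $I(\bX;\bY)\ge\frac12\Var(\E[\bX\mid\bY])$. Finally, $\Cov(\bX,\bY)=\Cov(\E[\bX\mid\bY],\bY)$, and Cauchy--Schwarz with $\Var(\bY)\le 1$ finishes.

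\textbf{Comparison.} Your argument is shorter once Pinsker is taken as a black box. It also applies verbatim to any $\bX,\bY$ with $|\bX\bY|\le 1$, for instance $[-1,1]$-valued variables. The paper's argument is self-contained, needing only a one-variable second-derivative computation. It uses the $\pm1$ structure only for $\bX$, so any finite $\bY$ with variance at most one works. It also exposes the bound on $\Var(\E[\bX\mid\bY])$ along the way. Your constant $2\ln 2$ is the sharp one, as the weakly correlated case shows. The paper's route recovers the same constant if one keeps the exact bound $h''(b)=-1/(\ln 2\,(1-b^2))\le -1/\ln 2$ instead of rounding it to $-1$. Your conditional step, applying the bound for each value of $\bZ$ and averaging, is exactly what the paper does.
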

\begin{proof}
Let $h(b)=H_2((1+b)/2)$. Since $h''(b)\leq-1$, concavity of $h(b)+b^2/2$ gives
\[I(\bX;\bY)\geq\frac12\Var(\E[\bX\mid\bY]).\]
By Cauchy--Schwarz,
\[\Cov(\bX,\bY)^2=\Cov(\E[\bX\mid\bY],\bY)^2\leq\Var(\E[\bX\mid\bY])\Var(\bY)\leq\Var(\E[\bX\mid\bY]).\]
The conditional statement follows by applying the first inequality for each value of $\bZ$ and averaging. \end{proof}
The \emph{min-entropy} of $\bX$ is
\[H_\infty(\bX)\coloneqq-\log_2\!\left(\max_{x\in\calX}p_{\bX}(x)\right).\]
Note that $H(\bX)\geq H_\infty(\bX)$. We will primarily use these notions for random variables on the Boolean hypercube $\pmone^s$.

\begin{definition}[Min-entropy deficiency]
\label{def:min-entropy-def}
For a $\pmone^s$-valued random variable $\bX$, define its \emph{min-entropy deficiency} by
\[D_\infty(\bX)\coloneqq\log_2\left(2^s\max_{x\in\pmone^s}\Pr[\bX=x]\right)=s-H_\infty(\bX).\]
In particular,
\[0\leq s-H(\bX)\leq D_\infty(\bX)\leq s.\]
\end{definition}

\subsection{Couplings}

We develop coupling tools in \Cref{subsec:shared-lemmas} and apply them in \Cref{sec:PFF,sec:APFF}.

\begin{definition}[Coupling]
\label{def:coupling}
Let $\mu$ and $\nu$ be probability distributions on $\calX$ and $\calY$, respectively. A \emph{coupling} of $\mu$ and $\nu$ is a joint distribution $\pi$ on $\calX\times\calY$ whose marginals are $\mu$ and $\nu$. Equivalently, if $(\bX,\bY)\sim\pi$, then $\bX\sim\mu$ and $\bY\sim\nu$.
\end{definition}

A coupling lets us place two random variables with fixed individual distributions on the same joint probability space, while choosing their dependence in a useful way. In particular, although the marginal laws of $\bX$ and $\bY$ are fixed, they need not be independent. In this paper, we will often couple a structured or high-entropy random variable $\bW\in\pmone^s$ with a perfectly uniform random variable $\bU\sim\mathrm{Unif}(\pmone^s)$ so that $\bW$ and $\bU$ are typically close. This allows us to compare $\bW$ to the much simpler uniform distribution while preserving the exact marginal distribution of $\bW$.\\

\subsection{Linear hashing}

We will use the following optimal expected maximum-load theorem for linear hash functions due to Jaber, Kumar, and Zuckerman~\cite{jaber25} for proving the randomized one-way lower bound in \Cref{subsec:apff-lb}. 

\begin{theorem}[Theorem 2 of \cite{jaber25}]
\label{thm:jkz-linear-hashing}
Let $n\geq\ell\geq1$, put $N=2^n$ and $B=2^{n-\ell}$, and sample a uniformly random linear map $\bL:\F_2^n\to\F_2^{n-\ell}$. For every nonempty set $D\subseteq\F_2^n$ of size $m$,
\[\E_{\bL}\max_{y\in\F_2^{n-\ell}}|D\cap\bL^{-1}(y)|\leq16\,\operatorname{opt}(m,B),\]
where
\[\operatorname{opt}(m,B)\coloneqq\begin{cases}\displaystyle\frac{\log_2B}{\log_2(B\log_2B/m)},&m\leq\frac12B\log_2B,\\[8pt]\displaystyle\frac{m}{B},&m>\frac12B\log_2B.\end{cases}\]
\end{theorem}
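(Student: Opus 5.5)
Since this is Theorem~2 of~\cite{jaber25} and we only invoke it as a black box, the paper needs no proof of it. Were I to reprove it, I would use a high-moment method that exploits the linear structure of $\bL$.

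\textbf{Reduction to moments.} For $y\in\F_2^{n-\ell}$ let $X_y\coloneqq|D\cap\bL^{-1}(y)|$. For any integer $k\ge1$,
\[
\E_{\bL}\max_y X_y\le\Bigl(\E_{\bL}\sum_y X_y^k\Bigr)^{1/k},
\]
and $\sum_y X_y^k$ counts the $k$-tuples $(x_1,\dots,x_k)\in D^k$ that are all mapped by $\bL$ to a common bucket. That happens exactly when every difference $x_i-x_1$ lies in $\ker\bL$. For a uniformly random linear map, if these differences span a space of dimension $r$, the probability is exactly $B^{-r}$. So
\[
\E_{\bL}\sum_y X_y^k=\sum_{r\ge0}T_r\,B^{-r},
\]
where $T_r$ is the number of $k$-tuples from $D$ whose affine span has dimension $r$.

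\textbf{Counting tuples by rank.} I would bound $T_r$ by first choosing an ordered affinely independent ``basis'' $x_1$ and $r$ further points, which gives at most $m^{r+1}$ choices times $\binom{k}{r}$ placements. The remaining $k-r-1$ points must each lie in $D\cap(\text{affine span})$, a set of size at most $\min(m,2^r)$. This gives
\[
\E\sum_y X_y^k\le\sum_{r=0}^{k-1}\binom{k}{r}m^{r+1}\min(m,2^r)^{k-r-1}B^{-r}.
\]
In the dense regime $m>\tfrac12 B\log_2 B$ I would take $k\approx\log_2 B$. The top term $r=k-1$ is $\approx B(m/B)^k$, whose $k$-th root is $B^{1/k}\,m/B=O(m/B)$, and I would check that lower-rank terms are dominated using $m/B\gtrsim\log B$. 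In the sparse regime I would take $k\approx\log_2 B/\log_2(B\log_2B/m)$, which should produce the $\operatorname{opt}(m,B)$ bound after balancing.

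\textbf{Main obstacle.} The step I expect to fail as written is controlling the degenerate, low-rank tuples. The crude bound $\min(m,2^r)^{k-r-1}$ is far too lossy: with $k\approx\log B$, the factor $2^{r(k-r-1)}$ nearly cancels $B^{-r}$ and leaves $m^{(r+1)/k}$, which is too large. To fix this I would replace the trivial bound on $|D\cap W|$ for an affine subspace $W$ by a recursive argument. Tuples whose span is an $r$-dimensional $W$ contribute essentially the $(k-r-1)$-th moment of the load of $D\cap W$ under a random map restricted to $W$, and this can be bounded by induction on dimension, or by first conditioning on a partial hash $\F_2^n\to\F_2^{j}$ and iterating as in Alon et al.~\cite{alon1999linear}. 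The extra $\log$ factors must then be absorbed into the constant $16$. Getting this recursion to close with only a constant loss, rather than the $\log\log$ losses of~\cite{alon1999linear}, is the genuinely hard part, and it is where I would defer to the argument of~\cite{jaber25}.
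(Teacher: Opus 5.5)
Citing Theorem~2 of \cite{jaber25} as a black box is exactly what the paper does. It gives no proof of this statement and uses it only through \Cref{cor:jkz-subspace-intersection}, so that part of your proposal is all that is needed.

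Your optional reproof sketch, however, fails earlier than you think, and the repair you propose cannot rescue it. The count $T_r\le\binom kr m^{r+1}\min(m,2^r)^{k-r-1}$ is not the lossy step. If $D$ is a linear subspace of dimension $w$, then for $r\le\min(w,k-1)$ there really are at least a constant times $m^{r+1}2^{r(k-r-1)}$ tuples with $r$-dimensional affine span, so the bound is tight up to the factor $\binom kr$. What fails is the first step $\E_{\bL}\max_yX_y\le(\E_{\bL}\sum_yX_y^k)^{1/k}$. Loads under linear hashing are heavy-tailed: a $j$-dimensional kernel on the span of $D$ occurs with probability only about $2^{-j^2}$, but it multiplies loads by $2^j$.

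Concretely, take $D$ a subspace of dimension $w=\log_2B$, so $m=B$ and $\operatorname{opt}(m,B)=\log_2B/\log_2\log_2B$. Every nonempty fiber $D\cap\bL^{-1}(y)$ is a coset of $\ker(\bL|_D)$, and $\bL|_D$ is a uniform $w\times w$ matrix. So $J\coloneqq\dim\ker(\bL|_D)$ satisfies $\Pr[J=j]=\Theta(2^{-j^2})$. Hence $\E\max_yX_y=\E\,2^J=O(1)$. On the other hand, $\E\sum_yX_y^k\ge\E\,2^{kJ}\gtrsim2^{kj-j^2}$ for each $j\le w$, and H\"older gives $\sum_yX_y^k\ge B^{1-k}m^k=B$. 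Taking $j\approx\min(k/2,w)$,
\[
\Bigl(\E\sum_yX_y^k\Bigr)^{1/k}\gtrsim\max\bigl(2^{w/k},\,2^{k/4}\bigr)\ge2^{\sqrt{\log_2B}/2}\qquad\text{for every }k\le2w,
\]
and larger $k$ is worse still. So no choice of $k$, and no sharper count of low-rank tuples (recursive or otherwise), can give $O(\operatorname{opt}(m,B))$.

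The same obstruction appears in the dense regime. Take $\dim D=\log_2B+s$ with $s=\lceil\log_2\log_2B\rceil$ (when $\ell\ge s$); now the kernel excess has probability $\Theta(2^{-j(j+s)})$. The moment bound then overshoots $m/B$ by a factor $2^{\Omega(\sqrt{\log B})}$, even though the true expected maximum is $O(m/B)$. So your claim that lower-rank terms are dominated there is false.

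Your sparse-regime choice $k=\log_2B/\log_2(B\log_2B/m)$ fails even for unstructured $D$. H\"older alone gives $(\E\sum_yX_y^k)^{1/k}\ge B^{1/k}m/B=\log_2B$, which exceeds $16\operatorname{opt}(m,B)$ once $B\log_2B/m>2^{16}$.

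A proof along these lines would have to abandon or truncate the reduction to $\E\sum_yX_y^k$. For example, it could handle separately the rare event that $\bL$ has an unusually large kernel on a structured part of $D$. Refining the count of $T_r$ will not help.
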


In particular, we refer to the following corollary, which expresses the theorem in terms of cosets of a uniformly random subspace.

\begin{corollary}[Corollary of \Cref{thm:jkz-linear-hashing}]
\label{cor:jkz-subspace-intersection}
Let $1\leq\ell<n$, let $N=2^n$ and $K=2^\ell$, and let $\bV$ be a uniformly random $\ell$-dimensional subspace of $\F_2^n$. Then, for every fixed set $D\subseteq\F_2^n$,
\[\E_{\bV}\max_{a\in\F_2^n}\frac{|D\cap(a+\bV)|}{K}\leq32\left(\frac{|D|}{N}+\frac{n-\ell}{K}\right).\]
\end{corollary}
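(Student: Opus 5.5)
To derive \Cref{cor:jkz-subspace-intersection} from \Cref{thm:jkz-linear-hashing}, I will realize the random subspace as the kernel of a random linear map and then bound $\operatorname{opt}$ by a sum of two simple terms.

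\textbf{Step 1: reduce to kernels.} Take a uniformly random linear map $\bL:\F_2^n\to\F_2^{n-\ell}$, so that $B=2^{n-\ell}$ and $N/B=K$. Condition on the event $E$ that $\bL$ is surjective. Then $\ker\bL$ is an $\ell$-dimensional subspace, and by $GL_n(\F_2)$-invariance of the law of $\bL$ it is uniformly distributed among such subspaces. Its cosets $a+\ker\bL$ are exactly the fibers $\bL^{-1}(y)$. Therefore
\[
\E_{\bV}\max_a|D\cap(a+\bV)| \;=\; \E\big[\max_y|D\cap\bL^{-1}(y)|\,\big|\,E\big] \;\le\; \frac{\E_{\bL}\max_y|D\cap\bL^{-1}(y)|}{\Pr[E]}.
\]
Surjectivity means the $n-\ell$ rows are linearly independent, so
\[
\Pr[E]=\prod_{i=0}^{n-\ell-1}(1-2^{i-n})\ \ge\ \prod_{j\ge \ell+1}(1-2^{-j}).
\]
For $\ell\ge1$ this is at least $\prod_{j\ge2}(1-2^{-j})\approx 0.577>1/2$. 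So the loss in this step is a factor of at most $2$, turning $16$ into $32$. If $D=\varnothing$, the claim is trivial.

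\textbf{Step 2: bound $\operatorname{opt}$.} It remains to show
\[
\operatorname{opt}(m,B)/K\le m/N+(n-\ell)/K,
\qquad\text{i.e.}\qquad
\operatorname{opt}(m,B)\le m/B+\log_2 B.
\]
In the case $m>\tfrac12 B\log_2B$ we have $\operatorname{opt}=m/B$, and the bound is immediate. In the case $m\le\tfrac12B\log_2B$ we have
\[
\operatorname{opt}=\frac{\log_2 B}{\log_2(B\log_2B/m)},
\qquad\text{with}\qquad B\log_2 B/m\ge 2,
\]
so the denominator is at least $1$ and $\operatorname{opt}\le\log_2B$. The one edge case is $B=1$, i.e.\ $\ell=n$, which is excluded since $\ell<n$. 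When $B=2$ we have $\log_2B=1$ and the formula still makes sense.

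Combining the two steps gives
\[
\E\max_a |D\cap(a+\bV)|/K\le 32\,(|D|/N+(n-\ell)/K).
\]
The only mildly delicate point is Step 1: checking that conditioning on surjectivity yields exactly the uniform subspace, and lower-bounding $\Pr[E]$ by $1/2$ uniformly in $n,\ell$.
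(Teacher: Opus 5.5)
Your proof is correct and follows the paper's argument essentially step for step: you realize the random subspace as the kernel of a uniformly random linear map conditioned on surjectivity, pay a factor of $2$ using $\Pr[E]\geq 1/2$, and split into the two cases of $\operatorname{opt}(m,B)$ exactly as the paper does. The only cosmetic difference is the bound on $\Pr[E]$: the paper uses the union bound $1-\sum_i 2^{i-n} > 1-2^{-\ell}\geq 1/2$, whereas you compare with the infinite product $\prod_{j\geq 2}(1-2^{-j})\approx 0.577$.
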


\begin{proof}
The claim is trivial for $D=\varnothing$, so let $m=|D|>0$ and set $B=N/K=2^{n-\ell}$. Sample $\bL:\F_2^n\to\F_2^{n-\ell}$ uniformly from all linear maps, and let $\mathcal E$ be the event that $\bL$ is surjective. Since $\bL$ is represented by a uniformly random $(n-\ell)\times n$ matrix over $\F_2$,
\[\Pr[\mathcal E]=\prod_{i=0}^{n-\ell-1}(1-2^{i-n})\geq1-\sum_{i=0}^{n-\ell-1}2^{i-n}>1-2^{-\ell}\geq\frac12.\]

Conditioned on $\mathcal E$, the kernel $\bV=\ker\bL$ is a uniformly random $\ell$-dimensional subspace, and each fiber $\bL^{-1}(y)$ is a coset of $\bV$. Therefore, using $\Pr[\mathcal E]\geq1/2$ and \Cref{thm:jkz-linear-hashing},
\[\E_{\bV}\max_{a\in\F_2^n}|D\cap(a+\bV)|=\E_{\bL}\left[\max_{y\in\F_2^{n-\ell}}|D\cap\bL^{-1}(y)|\,\middle|\,\mathcal E\right]\leq2\E_{\bL}\max_y|D\cap\bL^{-1}(y)|\leq32\,\operatorname{opt}(m,B).\]

If $m>\frac12B\log_2B$, then $\operatorname{opt}(m,B)/K=m/N$. Otherwise, $\log_2(B\log_2B/m)\geq1$, so $\operatorname{opt}(m,B)/K\leq\log_2B/K=(n-\ell)/K$. Thus in either case
\[\frac{\operatorname{opt}(m,B)}{K}\leq\frac{m}{N}+\frac{n-\ell}{K}.\]
Dividing the preceding bound by $K$ proves the result.
\end{proof}

\section{Classical simulation and lower bound tools} \label{subsec:shared-lemmas}

In this section, we explain how the classical simulation protocol for one-way quantum communication works in more detail, and build up a foundation of techniques that will be used to prove the lower bounds of \Cref{thm:intro-two-way,thm:intro-one-way}.

\subsection{The classical simulation protocol}

We first note that for any problem which admits an $n$-qubit one-way protocol, the bounded-error classical communication complexity is at most $O(2^n)$ due to a classical simulation protocol that uses an $\eps$-net, as previously observed by Montanaro~\cite{Montanaro_2019}. Alice and Bob agree on an indexed set of pure states $\{\ket{\psi_i}\}_i$ such that for all $n$-qubit pure states $\ket{\psi}$, $\lVert\, \ket{\psi}\bra{\psi} - \ket{\psi_i}\bra{\psi_i}\, \rVert_1 \leq \eps$ for some $i$. Alice sends the index $i^*$ of the closest state in the $\eps$-net to the state that would have been sent in the quantum protocol. 

It is known that there are $\eps$-nets of size $(5/\eps)^{2^{n+1}}$ for the space of $n$-qubit pure states with respect to trace norm (see e.g., \cite{hayden2004randomizing}, Lemma II.4). Hence, for any fixed $\eps>0$, the index can be transmitted using $O(2^n)$ bits. Then it suffices for Bob to classically simulate his measurement on $\ket{\psi_{i^*}}$. For a mixed state message $\rho=\sum_j p_j|\psi_j\rangle\langle\psi_j|$, Alice samples $j$ according to $p_j$ and sends the index of a state approximating $|\psi_j\rangle$. By convexity of trace norm, the resulting mixture is within $\eps$ of $\rho$, so the same simulation bound applies.

This classical simulation argument helps us define what it means for a
quantum--classical separation in communication to be optimal. Since
$\lVert\, \ket{\psi}\bra{\psi} - \ket{\psi_{i^*}}\bra{\psi_{i^*}}\, \rVert_1 \leq \eps$,
the distribution of any measurement outcome changes by at most $\eps/2$ in
total variation distance, so the simulation loses at most $\eps/2$ in success
probability. Hence for every problem $P$, every $\delta \geq 0$, and every $0 < \eps < 1$,
\begin{equation}
    \label{eq:communication-simulation}
    \Rtwoway_{\delta + \eps/2}(P) \;\leq\; \Roneway_{\delta + \eps/2}(P) \;\leq\; \left\lceil 2^{\Qoneway_{\delta}(P) + 1}\log_2(5/\eps) \right\rceil.
\end{equation}
In other words, for fixed error tolerances $\delta, \eps$, the classical communication cost of a problem is at most two to the power of its one-way quantum communication cost, up to a constant factor. We therefore say that a problem exhibits an \emph{optimal} quantum--classical separation if it admits a one-way quantum protocol using $n$ qubits, while every classical protocol requires $\Theta(2^n)$ bits of communication. Note that this simulation argument does not apply to interactive quantum communication protocols.

\subsection{Entropy, sampling, and coupling tools}
The following lemma shows how to couple an arbitrary random variable $\bW$ over $\pmone^s$ to a uniform random variable $\bU$ over $\pmone^s$, and gives an upper bound on the expected squared distance between the coupled random variables in terms of $H(\bW)$, the entropy of $\bW$. It can be seen as a uniform and Boolean-cube specialization of Marton's transportation inequality (see e.g., Theorem 8.2 of~\cite{boucheron2013concentration}). We give an elementary proof of the version we will need below.
\begin{lemma}
\label{lem:coupling-by-entropy}
For every random variable $\bW$ over $\pmone^s$, there exists a coupling $(\bW, \bU)$ of $\bW$ with  $\bU \sim \mathrm{Unif}(\pmone^s)$ such that
\[\E \lVert \bW-\bU\rVert_2^2 \leq 4\sqrt{s(s-H(\bW))}.  \]
\end{lemma}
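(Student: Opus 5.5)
Since $\bW,\bU\in\pmone^s$, each coordinate disagreement contributes exactly $4$ to the squared distance, so $\lVert\bW-\bU\rVert_2^2=4\,d_H(\bW,\bU)$, where $d_H$ is Hamming distance. It therefore suffices to build a coupling with
\[\E\, d_H(\bW,\bU)\le\sqrt{s\,(s-H(\bW))}.\]
I would build it coordinate by coordinate, using the chain rule for entropy and a sequential (Knothe--Rosenblatt style) maximal coupling.

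\textbf{The construction.} Write $\bW=(\bW_1,\dots,\bW_s)$. Sample the pairs $(\bW_i,\bU_i)$ in order $i=1,\dots,s$. Given the past $\bW_{<i}$, let $p_i=\Pr[\bW_i=1\mid\bW_{<i}]$. Couple $\bW_i$ with a fresh uniform bit $\bU_i$ by the maximal coupling of $\mathrm{Bern}(p_i)$ and $\mathrm{Bern}(1/2)$, using fresh randomness at each step.

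\textbf{Checking the marginals.} The $\bW$-marginal is correct, because each $\bW_i$ is drawn from its true conditional law given $\bW_{<i}$. Each $\bU_i$ is uniform conditionally on the entire past $(\bW_{<i},\bU_{<i})$. Hence the $\bU_i$ are i.i.d.\ uniform, and $\bU\sim\mathrm{Unif}(\pmone^s)$.

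\textbf{Bounding the disagreements.} Under the maximal coupling,
\[\Pr[\bW_i\ne\bU_i\mid \bW_{<i}]=|p_i-\tfrac12|.\]
Let $\delta_i=1-H_2(p_i)$. Pinsker's inequality, applied to $D(\mathrm{Bern}(p)\,\|\,\mathrm{Bern}(1/2))=1-H_2(p)$ in bits, gives $2(p-\tfrac12)^2\cdot\log_2 e\le 1-H_2(p)$. In particular $|p_i-\tfrac12|\le\sqrt{\delta_i/2}\le\sqrt{\delta_i}$. Summing over $i$, then applying Jensen and Cauchy--Schwarz,
\[\E\, d_H(\bW,\bU)\le\sum_i\E\sqrt{\delta_i}\le\sqrt{s\sum_i\E\,\delta_i}.\]
By the chain rule,
\[\sum_i\E\,\delta_i=s-\sum_iH(\bW_i\mid\bW_{<i})=s-H(\bW).\]
Multiplying by $4$ gives $\E\lVert\bW-\bU\rVert_2^2\le4\sqrt{s(s-H(\bW))}$, which is the claim (with room to spare in the constant).

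\textbf{Main obstacle.} The only delicate step is confirming that the sequential coupling produces an exactly uniform $\bU$. Fresh randomness at each step, together with the fact that each step's $\bU_i$-marginal is $\mathrm{Bern}(1/2)$ regardless of the conditioning, settles this. The remaining inequalities are the standard Pinsker and Cauchy--Schwarz estimates.
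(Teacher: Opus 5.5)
Your proposal is correct and matches the paper's proof: the same coordinate-by-coordinate maximal coupling of $\mathrm{Bern}(p_i)$ with $\mathrm{Bern}(1/2)$, giving disagreement probability $|p_i-\tfrac12|$, then the bound $1-H_2(p)\ge 2(p-\tfrac12)^2$ (which you obtain from Pinsker), the chain rule, and Cauchy--Schwarz. The only difference is cosmetic: you take square roots coordinatewise before applying Jensen and Cauchy--Schwarz, whereas the paper applies Cauchy--Schwarz directly to $\sum_i \E|\bp_i-\tfrac12|$.
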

\begin{proof}
    For $p \in [0,1]$, we can use the joint sampling rule shown in the table below to sample from $\pmone\times \pmone$ where one marginal distribution satisfies $\Pr[\bW = 1] = p$ and the other is uniform.
    \begin{center}
    \renewcommand{\arraystretch}{1.25}
    \begin{tabular}{c|cc|c}
     & $1$ & $-1$ & row sum\\ \hline
    $1$ & $\min(p,1/2)$ & $\max(p-1/2,0)$ & $p$\\
    $-1$ & $\max(1/2-p,0)$ & $\min(1-p,1/2)$ & $1-p$\\ \hline
    column sum & $1/2$ & $1/2$ & $1$
    \end{tabular}
    \end{center}
    We use this rule to give a procedure for sampling from a coupling $(\bW,\bU)$ coordinate-by-coordinate. Let $p_1 = \Pr[\bW_1 =1]$ and for $2\leq i\leq s$ let $p_i(w_1,\dots, w_{i-1}) = \Pr[\bW_i = 1 | \bW_1=w_1, \dots, \bW_{i-1}=w_{i-1}]$. We first sample $(w_1,u_1)$ using the above rule with $p=p_1$, then sample $(w_2,u_2)$ using $p = p_2(w_1)$, etc., until sampling $(w_s,u_s)$ using $p = p_s(w_1,\dots, w_{s-1})$. We check that both marginals are correct to verify that this is a coupling. Let $w = w_1\dots w_s$ and $u = u_1\dots u_s$. Looking at the column sums, $u_i = 1$ occurs with probability $1/2$ for each $i$, independent of prior samples, so each $u\in \pmone^s$ occurs with probability $2^{-s}$. Looking at the row sums, $w_i = 1$ occurs with probability $p_i(w_1,\dots, w_{i-1})$. By the chain rule, each $w\in \pmone^s$ occurs with probability \[
    \prod_{i=1}^s \Pr[\bW_i = w_i \mid \bW_1 = w_1,\dots,\bW_{i-1} = w_{i-1}] = \Pr[\bW = w].
    \]
    Write $\bp_i := p_i(\bW_1,\dots,\bW_{i-1})$ for the random variable of the $i$th coordinate's conditional bias. Conditioned on $\bW_1,\dots,\bW_{i-1}$, the pair $(\bW_i,\bU_i)$ is sampled from the table with
    $p = \bp_i$, and $\bW_i\neq\bU_i$ occurs exactly when an off-diagonal entry is sampled. Hence
    \[\Pr[\bW_i \neq \bU_i \mid \bW_1,\dots,\bW_{i-1}] = \max(\bp_i -1/2,0) + \max(1/2-\bp_i,0) = \lvert \bp_i - 1/2\rvert,\]
    and taking an expectation gives $\Pr[\bW_i\neq\bU_i] = \E\,\lvert \bp_i - 1/2\rvert$. Since $\bW_i,\bU_i\in\pmone$, we have $(\bW_i-\bU_i)^2 = 4\cdot\mathbbm{1}[\bW_i\neq\bU_i]$. Therefore 
    \[\E \lVert \bW - \bU\rVert_2^2 = \E \left[\sum_{i=1}^s (\bW_i - \bU_i)^2\right] = 4\sum_{i=1}^s \Pr[\bW_i\neq\bU_i] = 4\sum_{i=1}^s \E\,\lvert \bp_i - 1/2\rvert.\]
    Recall the chain rule for entropy:
    \[ H(\bW) = \sum_{i=1}^s H(\bW_i \mid \bW_1,\dots,\bW_{i-1}) = \sum_{i=1}^s \E\, H_2(\bp_i)\]
    where $H_2$ denotes binary entropy and note that $s - H(\bW) = s - \sum_{i=1}^s\E\, H_2(\bp_i) =  \sum_{i=1}^s \E[1 - H_2(\bp_i)].$ Now using the elementary fact that for $p\in [0,1]$,
    \[ 1 - H_2(p) \geq 2(p-1/2)^2,\]
    we obtain that 
    \begin{equation}
        s - H(\bW) \geq 2\sum_{i=1}^s \E[(\bp_i-1/2)^2].
        \label{eq:coupling-entropy}
    \end{equation}
    Finally, we have that
    \begin{align*}
        \E \lVert \bW - \bU\rVert_2^2 &= 4\sum_{i=1}^s \E\lvert \bp_i - 1/2\rvert\\
        &\leq 4\sqrt{s \sum_{i=1}^s \E [(\bp_i - 1/2)^2]}\\
        &\leq 4 \sqrt{\frac{s(s-H(\bW))}{2}}\\
        &\leq 4\sqrt{s(s-H(\bW))}
    \end{align*}
    where the second line is due to Cauchy--Schwarz and the third line uses (\ref{eq:coupling-entropy}).    
\end{proof}

The next two lemmas concern independently and uniformly sampled indices $\bI_1,\dots, \bI_k \in [N]$, and an associated reveal of the values of a random variable $\bW \in \pmone^N$ at those indices. The first lemma uses the revealed values to estimate the inner product between $\bW$ and another vector, and the second lemma builds upon \Cref{lem:coupling-by-entropy} to construct a coupling between $\bW$ and a uniform random variable that remains uniform even after conditioning on the revealed values of $\bW$.

\begin{lemma} \label{lem:sampling-estimate}
Let $\bW$ be a random variable over $\pmone^N$. Independently sample $\bI_1,\dots, \bI_k$ uniformly from $[N]$ with replacement. For any fixed $v \in \mathbb{R}^N$, let
\[ \wh{\bL}(v) := \frac{1}{k}\sum_{j=1}^k \wh{\bL}_j(v)  \quad \textrm{where}\quad \wh{\bL}_j(v)  =Nv_{\bI_j}\bW_{\bI_j}.\]
Then
\begin{equation}
    \label{eq:sampling-estimate}
    \E (\wh{\bL}(v) - \langle v,\bW\rangle)^2 \leq \frac{N}{k}\lVert v\rVert_2^2.
\end{equation}
\end{lemma}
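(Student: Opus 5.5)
The plan is a standard unbiased-estimator variance computation, carried out conditionally on $\bW$. Fix an outcome $w\in\pmone^N$ of $\bW$. The indices $\bI_1,\dots,\bI_k$ are sampled independently of $\bW$, so conditioned on $\bW=w$ the summands $\wh{\bL}_j(v)=Nv_{\bI_j}w_{\bI_j}$ are i.i.d. We first check unbiasedness:
\[\E[\wh{\bL}_j(v)\mid\bW=w]=\sum_{i=1}^N\frac1N\cdot Nv_iw_i=\langle v,w\rangle.\]
Hence $\wh{\bL}(v)$ is an average of $k$ i.i.d. unbiased estimators of $\langle v,w\rangle$. Its conditional mean squared error therefore equals its conditional variance, which is $\frac1k\Var(\wh{\bL}_1(v)\mid\bW=w)$.

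Next we bound the single-sample variance by the second moment. Since $w_i^2=1$,
\[\Var(\wh{\bL}_1(v)\mid\bW=w)\le\E[\wh{\bL}_1(v)^2\mid\bW=w]=\sum_{i=1}^N\frac1N N^2v_i^2w_i^2=N\lVert v\rVert_2^2.\]
This gives $\E[(\wh{\bL}(v)-\langle v,\bW\rangle)^2\mid\bW=w]\le \frac Nk\lVert v\rVert_2^2$ for every $w$. The bound does not depend on $w$, so averaging over the distribution of $\bW$ yields \eqref{eq:sampling-estimate}.

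No real obstacle arises. The only point requiring care is that the conditional independence of the $\bI_j$ from $\bW$ is what lets the cross terms $\E[(\wh{\bL}_j-\langle v,w\rangle)(\wh{\bL}_{j'}-\langle v,w\rangle)\mid \bW=w]$ vanish for $j\ne j'$. This needs no assumption on the distribution of $\bW$, because we condition on it before computing anything.
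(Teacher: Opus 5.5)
Your proof is correct and follows essentially the same argument as the paper. Both condition on $\bW=w$, verify unbiasedness, bound the single-sample variance by the second moment $N\lVert v\rVert_2^2$, use independence of the samples given $\bW=w$ to divide by $k$, and then average over $w$.
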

\begin{proof}
    Fix $w\in\pmone^N$. For each $j\in[k]$,
    \[\E[\wh{\bL}_j(v)\mid \bW=w]
    =\sum_{i=1}^N\frac1N Nv_iw_i
    =\langle v,w\rangle,
    \qquad
    \E[\wh{\bL}_j(v)^2\mid \bW=w]
    =\sum_{i=1}^N\frac1N(Nv_iw_i)^2
    =N\|v\|_2^2.
    \]
    Hence $\E[\wh{\bL}(v)\mid\bW=w]=\langle v,w\rangle$ and
    \[
    \Var[\wh{\bL}_j(v)\mid\bW=w]
    =N\|v\|_2^2-\langle v,w\rangle^2
    \le N\|v\|_2^2.
    \]
    Since the $\wh{\bL}_j(v)$ are independent conditioned on $\bW=w$,
    \[
    \begin{aligned}
    \E\!\left[(\wh{\bL}(v)-\langle v,\bW\rangle)^2\mid\bW=w\right]
    &=\Var[\wh{\bL}(v)\mid\bW=w] \\
    &=\frac1{k^2}\sum_{j=1}^k\Var[\wh{\bL}_j(v)\mid\bW=w]
    \le \frac Nk\|v\|_2^2.
    \end{aligned}
    \]
    Averaging over $w$ proves~\eqref{eq:sampling-estimate}.
\end{proof}

\begin{lemma} \label{lem:coupling-after-reveal}
Let $\bW$ be a random variable over $\pmone^N$. Independently sample $\bI_1,\dots,\bI_k$ uniformly from $[N]$ with replacement, and let
\[ \bR=(\bI_1,\bW_{\bI_1},\dots,\bI_k,\bW_{\bI_k}), \qquad \bI=\{\bI_1,\dots,\bI_k\}.\]
Then there exists a coupling between $\bW$ and an auxiliary random variable $\bU$ preserving the joint law of $(\bW,\bR)$. Specifically, for every $R$ in the support of $\bR$,
\[ \bU \mid (\bR=R)\sim \mathrm{Unif}(\pmone^N), \]
and
\[ \E\sum_{i\notin\bI}(\bW_i-\bU_i)^2 \le 4\sqrt{ND_\infty(\bW)}, \qquad \E\|\bW-\bU\|_2^2 \le 4\sqrt{ND_\infty(\bW)}+2k. \]
\end{lemma}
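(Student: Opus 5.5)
The plan is to build the coupling conditionally on the reveal. Condition on $\bR=R$. The coordinates in $I=\{I_1,\dots,I_k\}$ then have their $\bW$-values fixed, so $\bW\mid(\bR=R)$ is supported on a subcube. On $I$, take $\bU_I$ to be fresh uniform bits, independent of everything. On the complement $[N]\setminus I$, apply \Cref{lem:coupling-by-entropy} to $\bW_{[N]\setminus I}\mid(\bR=R)$, which is a random variable on $\pmone^{N-|I|}$. This gives a uniform $\bU_{[N]\setminus I}$ coupled to it. Gluing the two pieces gives $\bU\mid(\bR=R)\sim\mathrm{Unif}(\pmone^N)$, and by construction the joint law of $(\bW,\bR)$ is unchanged. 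The bound $\E\|\bW-\bU\|_2^2\le(\text{first bound})+2k$ then follows immediately, since each of the at most $k$ coordinates in $I$ contributes $\E(\bW_i-\bU_i)^2=2$ when $\bU_i$ is an independent uniform sign.

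The main work is the first bound. \Cref{lem:coupling-by-entropy} gives
\[
\E\Big[\sum_{i\notin I}(\bW_i-\bU_i)^2\,\Big|\,\bR=R\Big]\le 4\sqrt{s_R\,(s_R-H(\bW_{\bar I}\mid \bR=R))},
\qquad s_R=N-|I|\le N.
\]
Averaging over $R$ and using concavity of the square root (Jensen), it suffices to show
\[
\E_R\big[s_R-H(\bW_{\bar I}\mid\bR=R)\big]\le D_\infty(\bW).
\]

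To prove this, I would use the chain rule, together with the fact that the indices are independent of $\bW$. We have $H(\bW\mid \bI_1,\dots,\bI_k)=H(\bW)$. Also $H(\bW\mid \bI_{1..k})=H(\bW_I\mid \bI_{1..k})+H(\bW_{\bar I}\mid \bR)$, because conditioning on the indices together with $\bW_I$ is the same as conditioning on $\bR$. Since $H(\bW_I\mid\bI_{1..k})\le \E|I|$, this gives
\[
\E[s_{\bR}]-H(\bW_{\bar I}\mid\bR)\le \E[N-|I|]-H(\bW)+\E|I|=N-H(\bW)\le D_\infty(\bW),
\]
by \Cref{def:min-entropy-def}. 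The left side is exactly the averaged deficiency we need to control. Combining this with Jensen gives
\[
\E\sum_{i\notin\bI}(\bW_i-\bU_i)^2\le 4\sqrt{N\,D_\infty(\bW)}.
\]

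The step I expect to be delicate is the measurability and independence bookkeeping: checking that $\bU$ is uniform conditioned on the entire $\bR$, including the revealed values, and not merely on the indices. This holds because, for each $R$, both the $\bU_I$ part and the $\bU_{\bar I}$ part are uniform independently of $R$.
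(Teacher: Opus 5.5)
Your proposal is correct and follows the paper's proof essentially step for step. Conditional on each reveal $R$, you apply \Cref{lem:coupling-by-entropy} to the unrevealed coordinates and pad the revealed ones with fresh uniform signs. You bound the averaged deficiency by $N-H(\bW)\le D_\infty(\bW)$ using the chain rule and $H(\bW_{\bI_1},\dots,\bW_{\bI_k}\mid \bI_1,\dots,\bI_k)\le \E|\bI|$, then finish with concavity of the square root (using $s_R\le N$) and the $2\E|\bI|\le 2k$ contribution of the revealed coordinates.
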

\begin{proof}
    We view $\bR$ as a ``reveal'' of the values of $\bW$ at indices $\bI_1,\dots, \bI_k$. For a fixed reveal $R$, let $I(R) \subseteq [N]$ be the set of distinct indices appearing in $R$. Conditional on a reveal $\bR =R$, $\bW$ will have $s_R := N - |I(R)|$ unrevealed coordinates, while the coordinates in $I(R)$ are fixed. Let $\bW'$ be the random variable taking values in $\pmone^{s_R}$ obtained by restricting $\bW \mid (\bR =R)$ to its $s_R$ unrevealed coordinates. Since the revealed coordinates of $\bW$ are fixed, $H(\bW') = H(\bW \mid \bR =R)$.
    Now for each fixed $R$ we apply \Cref{lem:coupling-by-entropy} to $\bW'$. Then there exists a coupling of $\bW' \in \pmone^{s_R}$ with $\bU' \sim \mathrm{Unif}(\pmone^{s_R})$, denoted $(\bW', \bU')$, such that
    \begin{equation} \label{eq:coupling-after-reveal-1}
        \E \lVert \bW' - \bU' \rVert_2^2 \leq 4\sqrt{s_R(s_R-H(\bW'))} = 4\sqrt{s_R(s_R - H(\bW \mid \bR =R))}. 
    \end{equation}

    Index the coordinates of $\bW'$ and $\bU'$ by $[N]\setminus I(R)$. Extend $\bU'$ to a random string $\bU \in \pmone^N$ by setting $\bU_i = \bU'_i$ for $i \in [N]\setminus I(R)$, and sampling each $\bU_i$ for $i \in I(R)$ uniformly from $\pmone$, independently of each other and of $(\bW', \bU')$. Since $\bU'$ is uniform on $\pmone^{s_R}$, it follows that for all $R$ in the support of $\bR$,
    \[ \bU \mid (\bR = R) \sim \mathrm{Unif}(\pmone^N), \]
    as desired. Now, using the entropy chain rule, observe that
    \begin{align}
    H(\bW) &= H(\bW\mid\bI_1,\dots,\bI_k) \notag\\
    &= H(\bW_{\bI_1},\dots,\bW_{\bI_k}\mid\bI_1,\dots,\bI_k) + H(\bW\mid\bW_{\bI_1},\dots,\bW_{\bI_k},\bI_1,\dots,\bI_k) \notag\\
    &= H(\bW_{\bI_1},\dots,\bW_{\bI_k}\mid\bI_1,\dots,\bI_k) + H(\bW\mid\bR) \notag\\
    &\leq \E|\bI|+H(\bW\mid\bR). \label{eq:reveal-entropy}
    \end{align}
    since $\bW$ is independent of the sampled indices and, for each sampling $\bI_1,\dots, \bI_k$, the observed values have at most $2^{|\bI|}$ possible values so their conditional entropy is at most $|\bI|$. Averaging over the indices gives the upper bound $\E|\bI|$. Consequently,
    \begin{align}
    \sum_R\Pr[\bR=R](s_R-H(\bW\mid\bR=R))
    &= N-\E|\bI|-H(\bW\mid\bR) \notag\\
    &\leq N-H(\bW) \notag\\
    &\leq D_\infty(\bW). \label{eq:average-reveal-deficiency}
    \end{align}
    where the first line uses that $\E |\bI| = \sum_R \Pr[\bR = R]|I(R)|$ and $s_R = N - |I(R)|$, the second line uses \eqref{eq:reveal-entropy}, and the last line follows from Definition~\ref{def:min-entropy-def}.
    Given $\bR = R$, the unrevealed coordinates are exactly $[N] \setminus I(R)$, and on these coordinates $(\bW, \bU)$ is the coupled pair $(\bW', \bU')$. Hence
    \begin{align*}
        \E\sum_{i\notin\bI}(\bW_i-\bU_i)^2
        &= \sum_R \Pr[\bR=R]\; \E\Bigl[\sum_{i \notin I(R)} (\bW_i - \bU_i)^2 \Bigm| \bR = R\Bigr]
        = \sum_R \Pr[\bR=R]\; \E\lVert \bW' - \bU' \rVert_2^2 \\
        &\leq 4\sum_R\Pr[\bR=R]\sqrt{s_R(s_R-H(\bW\mid\bR=R))}\\
        &\leq 4\sqrt{N\sum_R\Pr[\bR=R](s_R-H(\bW\mid\bR=R))}\\
        &\leq 4\sqrt{ND_\infty(\bW)},
    \end{align*}
    where the first inequality is \eqref{eq:coupling-after-reveal-1}, the second uses $s_R \leq N$ and concavity of the square root, and the third is \eqref{eq:average-reveal-deficiency}.
    This gives the first coupling bound. Finally, conditional on $\bR=R$, each $\bW_i$ with $i\in I(R)$ is fixed, while $\bU_i$ is a uniform sign. Thus each revealed coordinate has expected squared discrepancy $2$, so
    \[ \E\sum_{i\in\bI}(\bW_i-\bU_i)^2 = 2\E|\bI| \leq 2k. \]
    Adding the contributions from the revealed and unrevealed coordinates yields
    \[ \E\|\bW-\bU\|_2^2 \leq 4\sqrt{ND_\infty(\bW)}+2k. \]
\end{proof}

The next lemma shows that a uniform random variable $\bU\sim\mathrm{Unif}(\pmone^N)$, which retains most of its entropy after conditioning on some jointly distributed random variable $\bW$, can be made to have small total conditional covariance by revealing a small, deterministically chosen subset of its coordinates. 

\begin{lemma}
\label{lem:entropy-budget-conditioning}
Let $\bU\sim\mathrm{Unif}(\pmone^N)$, and let $\bW$ be any finite random variable jointly distributed with $\bU$ such that
\[H(\bU\mid\bW)\geq N-c.\]
For every positive integer $k$, there exists a fixed set $I\subseteq[N]$ with $|I|\leq k$ such that, writing $\bU_I = (\bU_i)_{i\in I}$ and  $\bZ=(\bW,\bU_I)$,
\[\E \Big[ \sum_{\substack{i,j\in[N]\\ i\neq j}}\Cov(\bU_i,\bU_j\mid\bZ)^2 \Big] \leq \frac{2Nc}{k}.\]
Moreover, $H(\bU\mid\bZ)\geq N-c-k$.
\end{lemma}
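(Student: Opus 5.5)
The plan is a potential-function (total-correlation) argument: reveal coordinates one at a time and charge each revealed coordinate against the pairwise dependence it removes. Pairwise covariances are then converted to mutual information via \Cref{lem:cov-mutual-information}, which gives
\[
\E\big[\Cov(\bU_i,\bU_j\mid\bZ)^2\big]\le 2I(\bU_i;\bU_j\mid\bZ).
\]
So it suffices to find a set $I$ with $|I|\le k$ such that
\[
\sum_{i\neq j} I(\bU_i;\bU_j\mid \bW,\bU_I)\le \frac{Nc}{k}.
\]
For a set $S\subseteq[N]$, write $\bZ_S=(\bW,\bU_S)$ and $\bar S=[N]\setminus S$. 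Define the conditional total correlation
\[
\Phi(S)\coloneqq\sum_{i\in\bar S}H(\bU_i\mid\bZ_S)-H(\bU_{\bar S}\mid\bZ_S).
\]
By subadditivity of entropy, $\Phi(S)\ge0$. At $S=\varnothing$ we have $\Phi(\varnothing)\le N-(N-c)=c$, since each $H(\bU_i\mid\bW)\le1$.

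The key computation is the one-step decrease $\Phi(S)-\Phi(S\cup\{j\})$ for $j\in\bar S$. Expand $H(\bU_{\bar S}\mid\bZ_S)=H(\bU_j\mid\bZ_S)+H(\bU_{\bar S\setminus\{j\}}\mid\bZ_S,\bU_j)$ by the chain rule. The $H(\bU_j\mid\bZ_S)$ terms cancel, leaving
\[
\Phi(S)-\Phi(S\cup\{j\})=\sum_{i\in\bar S\setminus\{j\}} I(\bU_i;\bU_j\mid\bZ_S).
\]
Now take $j$ uniform in $\bar S$. Since $|\bar S|\le N$, the expected decrease is at least
\[
\frac1N\sum_{\substack{i\ne j\\ i,j\in\bar S}}I(\bU_i;\bU_j\mid\bZ_S).
\]
Pairs with $i$ or $j$ in $S$ contribute nothing to the target sum: conditioned on $\bZ_S$ those coordinates are fixed, so the corresponding covariances and mutual informations vanish. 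Hence the expected decrease is at least $\frac1N\sum_{i\ne j}I(\bU_i;\bU_j\mid\bZ_S)$, with the sum over all ordered pairs in $[N]$.

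Next, build a random sequence $\varnothing=S_0\subset S_1\subset\dots\subset S_k$, each step adding a uniform fresh index. The potential telescopes and stays nonnegative, so
\[
\sum_{t=0}^{k-1}\E\Big[\sum_{i\ne j}I(\bU_i;\bU_j\mid\bZ_{S_t})\Big]\le N\,\Phi(\varnothing)\le Nc.
\]
Therefore some $t<k$ and some realization $S_t=I$ with $|I|\le k$ satisfies $\sum_{i\ne j}I(\bU_i;\bU_j\mid\bZ_I)\le Nc/k$. Because $I$ is fixed, $\bZ=(\bW,\bU_I)$ is a finite random variable. Applying \Cref{lem:cov-mutual-information} pair by pair then gives the claimed bound $2Nc/k$.

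Finally, the entropy claim: $H(\bU\mid\bZ)=H(\bU\mid\bW)-H(\bU_I\mid\bW)\ge N-c-|I|\ge N-c-k$. The main obstacle is getting the telescoping identity right, including the bookkeeping that conditioning on revealed coordinates kills the cross terms. Once that is settled, the rest is averaging.
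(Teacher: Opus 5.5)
Your proof is correct and is essentially the paper's argument. Your $\Phi(S)$ is the paper's potential $\Gamma(J)$: the paper sums over all $i\in[N]$, but revealed coordinates contribute zero. Your one-step identity $\Phi(S)-\Phi(S\cup\{j\})=\sum_{i\ne j}I(\bU_i;\bU_j\mid\bZ_S)$ is the same computation.

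The only difference is how $I$ is chosen. You average over a uniformly random ordering of $k$ fresh indices and then fix a good realization. The paper instead runs a greedy procedure: whenever the covariance bound fails, it adds a coordinate that decreases $\Gamma$ by more than $c/k$. The greedy version also handles the trivial case $k>N$ (just take $I=[N]$), in which your random ordering of $k$ distinct indices does not exist.
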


\begin{proof}
For an arbitrary fixed set $J\subseteq[N]$, write $\bZ_J=(\bW,\bU_J)$ and define
\[\Gamma(J):=\left(\sum_{i=1}^N H(\bU_i\mid\bZ_J)\right)-H(\bU\mid\bZ_J).\]
By subadditivity of entropy, $\Gamma(J)\geq0$. Moreover,
\[\Gamma(\varnothing)\leq N-H(\bU\mid\bW)\leq c,\]
since $H((\bW,\bU_{\varnothing})) = H(\bW)$ and each coordinate $\bU_i$ has conditional entropy $H(\bU_i \mid \bW) \leq 1$.

We first compute how much $\Gamma$ decreases when one additional coordinate is revealed.
\[
\begin{aligned}
\Gamma(J)-\Gamma(J\cup\{j\})
&=\sum_{i=1}^N\Big(H(\bU_i\mid\bZ_J)-H(\bU_i\mid\bZ_J,\bU_j)\Big)-\Big(H(\bU\mid\bZ_J)-H(\bU\mid\bZ_J,\bU_j)\Big) \\
&=\left(\sum_{i=1}^N I(\bU_i;\bU_j\mid\bZ_J)\right)-H(\bU_j\mid\bZ_J) \\
&=\sum_{i:\,i\neq j} I(\bU_i;\bU_j\mid\bZ_J) \\
&\geq \frac12\sum_{i:\,i\neq j}\E\left[\Cov(\bU_i,\bU_j\mid\bZ_J)^2\right],
\end{aligned}
\]
where the second equality uses the definition of conditional mutual information and the chain rule, the third uses $I(\bU_j;\bU_j\mid\bZ_J)=H(\bU_j\mid\bZ_J)$, and the final inequality uses \Cref{lem:cov-mutual-information}. 
Summing over $j\in[N]$ gives
\[\sum_{j=1}^N\big(\Gamma(J)-\Gamma(J\cup\{j\})\big)\geq\frac12\E\sum_{\substack{i,j\in[N]\\ i\neq j}}\Cov(\bU_i,\bU_j\mid\bZ_J)^2.\]
Suppose
\[\E\sum_{\substack{i,j\in[N]\\ i\neq j}}\Cov(\bU_i,\bU_j\mid\bZ_J)^2>\frac{2Nc}{k}.\]
Then
\[\sum_{j=1}^N\big(\Gamma(J)-\Gamma(J\cup\{j\})\big)>\frac{Nc}{k}.\]
Hence, by averaging over the $N$ coordinates, there exists some $j\in[N]$ such that $\Gamma(J)-\Gamma(J\cup\{j\})>\frac{c}{k}$.
Moreover, such a $j$ cannot already lie in the set $J$, since if $j\in J$ then $\bU_j$ is already determined by $\bZ_J = (\bW, \bU_J)$, so $\Gamma(J\cup\{j\})=\Gamma(J)$.

Starting from $J=\varnothing$, repeatedly add such a coordinate whenever the covariance bound fails. This process cannot continue for $k$ steps: otherwise $\Gamma$ would decrease by more than $c$, contradicting $\Gamma(\varnothing)\leq c$ and $\Gamma(J)\geq0$. Hence it stops at some fixed set $I$ with $|I|\leq k$ satisfying
\[\E\sum_{i\neq j}\Cov(\bU_i,\bU_j\mid\bW,\bU_I)^2\leq\frac{2Nc}{k}.\]
Finally, revealing $|I|$ signs can decrease conditional entropy by at most $|I|$, so
\[H(\bU\mid\bW,\bU_I)\geq H(\bU\mid\bW)-|I|\geq N-c-k.\]
At each step, the next coordinate is chosen using only the deterministic quantities $\Gamma(J)-\Gamma(J\cup\{j\})$, which are determined by the joint distribution of $(\bU,\bW)$. Thus the resulting set $I$ is deterministically chosen.
\end{proof}

\subsection{Gaussian window tools}

We will use three numerical estimates for a standard Gaussian. These estimates follow by direct evaluation and elementary optimization of the standard Gaussian density and cumulative distribution function.

\begin{fact}[Gaussian estimates]\label{fact:gaussian}
Let $Z \sim \calN(0,1)$ be a standard Gaussian random variable, and set $a := \sqrt{2/3}$ and $p_* := 0.6 < 2/3$. Then
\[
    \E\bigl[Z^2 \mathbbm{1}_{\{|Z| \leq a\}}\bigr] < \frac{1}{8}, \qquad
    \E\bigl[Z^2 \mathbbm{1}_{\{|Z| > 3a\}}\bigr] < \frac{1}{8}, \qquad
    \sup_{b \in \R} \Pr\bigl[a \leq |b + Z| \leq 3a\bigr] < p_*.
\]
\end{fact}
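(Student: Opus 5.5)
All three estimates are one-dimensional Gaussian computations. I would reduce each to closed forms in $\Phi$ (the standard normal CDF) and $\varphi$ (its density), then evaluate numerically with explicit error margins. The margins are small for the first and third claims, so the plan is to keep three or four significant digits throughout.

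\emph{First two estimates.} Integration by parts ($\int z^2\varphi = -z\varphi + \int\varphi$) gives
\[
\E\bigl[Z^2\mathbbm{1}_{\{|Z|\le t\}}\bigr]=2\Phi(t)-1-2t\varphi(t),\qquad
\E\bigl[Z^2\mathbbm{1}_{\{|Z|> t\}}\bigr]=2\bigl(1-\Phi(t)\bigr)+2t\varphi(t).
\]
\begin{itemize}
\item At $t=a=\sqrt{2/3}\approx 0.8165$: $\varphi(a)=e^{-1/3}/\sqrt{2\pi}\approx0.2858$ and $\Phi(a)\approx0.7929$. The first quantity is then about $0.5858-0.4667\approx0.119<1/8$.
\item At $t=3a=\sqrt6\approx2.449$: $\varphi(3a)=e^{-3}/\sqrt{2\pi}\approx0.01986$ and $1-\Phi(3a)\approx0.00716$. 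The second quantity is then about $0.0143+0.0973\approx0.112<1/8$.
\end{itemize}
Here $\varphi(a)$ and $\varphi(3a)$ are exact closed forms. Only the $\Phi$ values need rigorous bounds, for example from a standard table or a convergent series with a remainder estimate. I expect the first estimate to be the most delicate of the two, with a margin of about $0.006$.

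\emph{Third estimate.} Let
\[
g(b)=\Pr[a\le|b+Z|\le 3a]=\Pr[Z\in[a-b,3a-b]]+\Pr[Z\in[-3a-b,-a-b]].
\]
Since $g(-b)=g(b)$ by symmetry of $Z$, it suffices to take $b\ge0$. Each of the two intervals has length $2a$. The Gaussian mass of any interval of length $2a$ is at most that of the centered one, namely $2\Phi(a)-1\approx0.586$. Both intervals have the same length $2a$ and the two summands in $g$ are bounded separately:
\begin{itemize}
\item The first summand is at most $2\Phi(a)-1\approx0.586$.
\item For $b\ge0$, the second summand is at most $\Pr[Z\le -a-b]=1-\Phi(a+b)$.
\end{itemize}
Hence
\[
g(b)\le 2\Phi(a)-1+1-\Phi(a+b)<0.6\quad\text{once } a+b\ge 2.2,
\]
which covers all $b\ge 1.4$. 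Note that $b\approx 2a\approx1.63$, where the first interval is centered, lies in this range; there the value is about $0.593$.

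On the compact range $b\in[0,1.4]$, I would use $|g'(b)|\le 4\sup\varphi<1.6$. It then suffices to evaluate $g$ on a grid of step $h$ and check that every grid value is below $0.6-0.8h$. The values are well below $0.6$ on this range (for instance $g(0)=2(\Phi(3a)-\Phi(a))\approx0.40$), so a coarse grid such as $h=0.05$ should already suffice.

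The main obstacle is the supremum over $b$ in the third estimate, since it is the only claim not reducible to a single closed-form evaluation. The splitting above handles it by the interval-length bound for large $b$ and a Lipschitz-controlled grid check for small $b$.
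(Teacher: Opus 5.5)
Your plan matches the paper's, which supports this fact only by saying the bounds follow from direct evaluation and elementary optimization of the Gaussian density and CDF. Your closed forms $2\Phi(t)-1-2t\varphi(t)$ and $2(1-\Phi(t))+2t\varphi(t)$ are correct, and so are the values $\approx 0.119$ and $\approx 0.112$. The interval-length argument for $b\ge 1.4$ also works; the true supremum of $g$ is about $0.593$, near $b\approx 1.59$.

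One step fails as written: the grid check on $[0,1.4]$. The values of $g$ there are not well below $0.6$. $g$ rises from $g(0)\approx 0.400$ to $g(1.2)\approx 0.565$ and $g(1.4)\approx 0.587$.

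\begin{itemize}
\item With step $h=0.05$, your acceptance threshold is $0.6-0.8h=0.56$. This fails at the grid points $1.2,1.25,\dots,1.4$.
\item A sharper Lipschitz constant is available: pair the four density terms of $g'$ as two differences to get $|g'(b)|\le 2\sup\varphi<0.8$. Even then, $h=0.05$ gives threshold $0.58$, which still fails near $b=1.4$.
\item The fix is just a finer grid. Use $h=0.01$ with your constant (threshold $0.592$), or $h=0.025$ with the sharper one (threshold $0.59$). Either certifies the range, since the largest grid value is about $0.587$.
\end{itemize}

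Also note that the large-$b$ bound is tight. At $a+b=2.2$ it gives $0.5858+0.0139\approx 0.5997$, a margin of only about $3\times 10^{-4}$. So you need $\Phi(a)$ and $\Phi(2.2)$ to four decimal places there, not three significant digits.
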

The next lemma shows that the third estimate in Fact~\ref{fact:gaussian} holds even when $Z$ is replaced by a normalized sum of $s$ independent uniform signs, for all sufficiently large $s$. The sum may be shifted by an independent random offset and perturbed by a small error term.

\begin{lemma}\label{lem:independent-noise}
For each positive integer $s$, let $\bT = \bA + \bS + \berr$, where $\bA$ and $\bS$ are independent,
$\bS$ is distributed as $s^{-1/2}\sum_{i=1}^s \boldsymbol{\xi}_i$ for
independent uniform signs $\boldsymbol{\xi}_i \in \pmone$, and $\E[\berr^2] = o(1)$
as $s \to \infty$. The error term $\berr$ may depend on $\bA$ and $\bS$.
Then for all sufficiently large $s$,
\[
    \Pr\bigl[a \leq |\bT| \leq 3a\bigr] \leq p_*.
\]
\end{lemma}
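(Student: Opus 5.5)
We will prove the bound by a three-step comparison: first remove $\berr$, then replace $\bS$ by a Gaussian, and finally average the shifted-Gaussian estimate of Fact~\ref{fact:gaussian} over $\bA$. The only delicate point is that the target interval $[a,3a]$ has sharp endpoints. We therefore work with slightly fattened windows and use the strict gap $p_*-\sup_b\Pr[a\le|b+Z|\le 3a]>0$ to absorb every error term.

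\textbf{Step 1 (fix a margin).} Let $q:=\sup_{b}\Pr[a\le|b+Z|\le3a]<p_*$ and set $\eta:=(p_*-q)/4>0$. By continuity of the Gaussian density, there is a $\delta>0$ such that
\[
\sup_b\Pr[a-2\delta\le|b+Z|\le3a+2\delta]\le q+\eta .
\]
This holds because enlarging the window adds two intervals of total length at most $8\delta$. Hence, uniformly in $b$, the probability increases by at most $8\delta/\sqrt{2\pi}$.

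\textbf{Step 2 (remove $\berr$).} By Chebyshev/Markov,
\[
\Pr[|\berr|>\delta]\le \E[\berr^2]/\delta^2=o(1).
\]
On the complementary event, $a\le|\bT|\le3a$ implies $a-\delta\le|\bA+\bS|\le3a+\delta$. Therefore
\[
\Pr[a\le|\bT|\le3a]\le\Pr[a-\delta\le|\bA+\bS|\le3a+\delta]+o(1).
\]
This step is insensitive to how $\berr$ depends on $\bA$ and $\bS$.

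\textbf{Step 3 (Gaussian approximation uniformly in the shift).} Condition on $\bA=b$. By independence, $\bS$ still has its original law, so it suffices to bound
\[
\sup_b\Pr[a-\delta\le|b+\bS|\le3a+\delta].
\]
By the Berry--Esseen theorem for Rademacher sums, the CDF of $\bS$ is within $C/\sqrt s$ of $\Phi$ in sup-norm. The event $\{a-\delta\le|b+\bS|\le3a+\delta\}$ is a union of two intervals for $\bS$, so its probability differs from the corresponding Gaussian probability by at most $4C/\sqrt s$, uniformly in $b$. Combining with Step 1 gives the bound $q+\eta+4C/\sqrt s$.

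If one prefers to avoid quoting Berry--Esseen, the CLT alone suffices. Convergence in distribution to a continuous limit implies uniform convergence of CDFs (Pólya's theorem), which yields the same uniformity in $b$ with an $o(1)$ error.

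\textbf{Step 4 (conclude).} Averaging over $\bA$ and combining Steps 2 and 3 gives
\[
\Pr[a\le|\bT|\le3a]\le q+\eta+o(1).
\]
For all sufficiently large $s$, the $o(1)$ term is below $\eta$, so the right-hand side is at most $q+2\eta<p_*$.

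The main obstacle is the uniformity in the shift $b$ in Step 3. A random, possibly large, $\bA$ rules out a pointwise CLT, and only a sup-norm CDF bound makes the averaging over $\bA$ legitimate. Pólya's theorem, or Berry--Esseen, supplies exactly this uniformity.
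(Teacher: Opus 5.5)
Your proposal is correct and follows essentially the same route as the paper: you use the strict gap $q<p_*$ to fix a margin $\delta$, discard $\berr$ via Markov on the event $|\berr|>\delta$, and bound $\Pr[a-\delta\le|b+\bS|\le 3a+\delta]$ uniformly in the shift $b$ by Berry--Esseen applied to a union of two intervals, before averaging over $\bA$. One cosmetic slip: fattening the window by $2\delta$ adds four intervals of length $2\delta$, not two, although your total-length bound of $8\delta$ and the resulting $8\delta/\sqrt{2\pi}$ estimate are correct.
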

\begin{proof}
Let $Z \sim \calN(0,1)$ and $q := \sup_{b \in \R}\Pr[a \leq |b + Z| \leq 3a]$,
so $q < p_*$ by Fact~\ref{fact:gaussian}. Fix a small constant $\delta > 0$ with
$q + 4\delta < p_*$.

By Markov's inequality, $\Pr[|\berr| > \delta] \leq \E[\berr^2]/\delta^2 = o(1)$,
and when $|\berr| \leq \delta$ the event $a \leq |\bT| \leq 3a$ implies
$a - \delta \leq |\bA + \bS| \leq 3a + \delta$. Conditioning on $\bA = b$, which
is independent of $\bS$, it therefore suffices to bound
$\Pr[a - \delta \leq |b + \bS| \leq 3a + \delta]$ uniformly in $b$. This event
is a union of at most two intervals, so by the Berry--Esseen theorem its probability is
within $O(s^{-1/2})$ of the corresponding Gaussian probability
$\Pr[a - \delta \leq |b + Z| \leq 3a + \delta]$. Since the density of $Z$ is at
most $1$, that Gaussian probability is at most $q + 4\delta$. Altogether,
\[
    \Pr[a \leq |\bT| \leq 3a] \leq q + 4\delta + O(s^{-1/2}) + o(1),
\]
which is at most $p_*$ for all sufficiently large $s$.
\end{proof}

We now give a lemma that relates the Pelagic condition of \emph{Pelagic Fourier Fishing}, which seeks a coefficient with roughly balanced total squared Fourier mass above and below its magnitude, to a simpler condition based on a fixed window of accepted magnitudes. We do not use this simpler window condition in \emph{Pelagic Fourier Fishing} itself, since for some inputs there would be no coefficient that falls into the window, preventing the problem from being total.

\begin{lemma}\label{lem:magnitude-to-fourier-mass}
Let $s=2^t$, and consider $t\to \infty$. Let $(\bW,\bU)$ be coupled random variables over $\pmone^s$ such that $\bU\sim\mathrm{Unif}(\pmone^s)$ and
\[ \frac{1}{s}\E\|\bW-\bU\|_2^2=o(1). \]
Regard $\bW$ as the truth table of a Boolean function $w:\zo^t\to\pmone$. Then with probability $1-o(1)$, every $y\in\zo^t$ satisfying the Pelagic condition:
\[ \sum_{z:|\wh{w}(z)|\leq|\wh{w}(y)|}|\wh{w}(z)|^2\geq\frac18,\qquad \sum_{z:|\wh{w}(z)|\geq|\wh{w}(y)|}|\wh{w}(z)|^2\geq\frac18 \]
also satisfies the window condition:
\[ \frac{a}{\sqrt{s}}\leq|\wh{w}(y)|\leq\frac{3a}{\sqrt{s}}\]
where $a = \sqrt{2/3}$.
\end{lemma}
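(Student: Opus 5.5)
Write $\wh{W}_y := \sqrt{s}\,\wh{w}(y)$ and $\wh{U}_y := \sqrt{s}\,\wh{u}(y)$ for the rescaled coefficients, so $\wh{W}_y = \langle \bW,\bchi_y\rangle/\sqrt s$. The squared Fourier mass carried by frequency $y$ is $\wh{W}_y^2/s$. By Parseval, $\wh{W}-\wh{U}$ is the image of $\bW-\bU$ under an orthogonal map. Hence
\[\sum_y(\wh{W}_y-\wh{U}_y)^2=\|\bW-\bU\|_2^2,\]
and the hypothesis gives $\frac1s\sum_y(\wh W_y-\wh U_y)^2=o(1)$ in expectation. By Markov, with probability $1-o(1)$ this empirical $\ell_2$ discrepancy is at most some $\eta=o(1)$.

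\textbf{Step 1: the uniform case.} I would show that for $\bU$ uniform, the empirical measure $\mu_U := \frac1s\sum_y \delta_{\wh U_y}$ has its truncated second moments close to the Gaussian ones. Concretely, I want, with probability $1-o(1)$, for each fixed $t$ in a small finite grid (say $t\in\{a,a+\delta,3a-\delta,3a\}$):
\[\frac1s\sum_y \wh U_y^2\mathbbm 1\{|\wh U_y|\le t\}=\E[Z^2\mathbbm 1\{|Z|\le t\}]\pm o(1).\]
The plan is a first-and-second-moment computation.
\begin{itemize}
\item Each $\wh U_y$ is a normalized sum of $s$ independent uniform signs, so Berry--Esseen gives convergence of the mean of $\wh U_y^2\mathbbm 1\{|\wh U_y|\le t\}$ to the Gaussian value. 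The function is bounded for fixed $t$, so weak convergence suffices.
\item For $y\ne y'$, the pair $(\wh U_y,\wh U_{y'})$ is jointly a normalized sum of i.i.d.\ vectors $(\xi_x\chi_y(x),\xi_x\chi_{y'}(x))$. These have identity covariance, so by the multivariate CLT the pair is asymptotically an independent Gaussian pair. The covariance of the truncated squares is therefore $o(1)$ uniformly in $y\ne y'$.
\item Chebyshev then gives concentration of the empirical average.
\end{itemize}
A union bound over the finite grid finishes the step.

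\textbf{Step 2: transfer to $\bW$.} I would transfer Step 1 to $\bW$ via the $\ell_2$ bound. Work on the good event where both $\frac1s\sum_y(\wh W_y-\wh U_y)^2\le\eta$ and Step 1 hold. Since $\wh W_y^2\mathbbm 1\{|\wh W_y|\le t\}$ is not Lipschitz in $\wh W_y$, I would sandwich it between continuous piecewise-linear-in-$x^2$ cutoffs. These are Lipschitz on bounded ranges, and above any bounded range the mass is controlled separately. Specifically, the total $\frac1s\sum\wh W_y^2=1$ exactly, and $\frac1s\sum\wh U_y^2 = 1$ as well. So the mass of $\mu_W$ below $t$ is at least the mass of $\mu_U$ below $t-\delta$, minus $O(\sqrt\eta)$ by Cauchy--Schwarz. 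The analogous upper bound holds with $t+\delta$.

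Combining with Fact~\ref{fact:gaussian}, and noting that the two strict inequalities there leave slack, I choose $\delta$ small enough that
\[\E[Z^2\mathbbm 1\{|Z|\le a+\delta\}]<1/8 \quad\text{and}\quad \E[Z^2\mathbbm 1\{|Z|>3a-\delta\}]<1/8.\]
This yields, with probability $1-o(1)$:
\[\sum_{z:|\wh W_z|\le a'}|\wh w(z)|^2<\tfrac18,\qquad \sum_{z:|\wh W_z|\ge 3a''}|\wh w(z)|^2<\tfrac18,\]
for thresholds $a'>a$ and $3a''<3a$ slightly perturbed inward.

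\textbf{Step 3: conclude.} Suppose $y$ satisfies the Pelagic condition but $|\wh W_y|<a$. Then $\{z:|\wh w(z)|\le|\wh w(y)|\}\subseteq\{z:|\wh W_z|\le a'\}$, so the lower sum is below $1/8$, a contradiction. Symmetrically, if $|\wh W_y|>3a$, the upper sum is below $1/8$. Hence $a/\sqrt s\le|\wh w(y)|\le 3a/\sqrt s$.

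The main obstacle is Step 1, specifically the pairwise near-independence of $(\wh U_y,\wh U_{y'})$ uniformly over all $\binom{s}{2}$ pairs. I would first try the bivariate Berry--Esseen or CLT bound, noting that every such pair has the same structure: the vectors $(\chi_y(x),\chi_{y'}(x))$ take each of the four sign patterns equally often. So the pair's law is the same for all $y\neq y'$, which gives the needed uniformity for free.
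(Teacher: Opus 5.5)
Your proof is correct and has the same skeleton as the paper's. Both arguments first show, by a first/second-moment computation and Chebyshev, that the truncated squared mass $\frac1s\sum_y \widehat U_y^2\mathbf{1}\{|\widehat U_y|\le r\}$ of the uniform function concentrates at $\mathbb{E}[Z^2\mathbf{1}\{|Z|\le r\}]$. Both then transfer this to $\bW$ via Parseval and Markov applied to $\frac1s\|\bW-\bU\|_2^2$, and finish with the same contradiction.

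The genuine difference is in the transfer step. The paper calls $y$ bad if $|\widehat w(y)-\widehat u(y)|>\varepsilon|\widehat w(y)|$. Bad frequencies carry mass at most $\varepsilon^{-2}\sum_y(\widehat w(y)-\widehat u(y))^2=o(1)$. On good frequencies, $|\widehat w(y)|$ and $|\widehat u(y)|$ agree up to a factor $1\pm\varepsilon$. This gives $L_a(w)\le o(1)+(1-\varepsilon)^{-2}L_{(1+\varepsilon)a}(u)$ and the analogous upper-tail bound directly, so both tails are handled symmetrically and multiplicatively.

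You instead smooth the truncation with a compactly supported Lipschitz cutoff and apply Cauchy--Schwarz, which gives an additive $O_\delta(\sqrt\eta)$ error. Since this only controls bounded truncations, you have to reach the upper tail through the complement, using $\frac1s\sum_y\widehat W_y^2=\frac1s\sum_y\widehat U_y^2=1$. That identity is valid because both vectors are $\pm1$-valued. In either version, the strict inequalities of Fact~\ref{fact:gaussian} and the continuity of $r\mapsto\mathbb{E}[Z^2\mathbf{1}\{|Z|\le r\}]$ absorb the perturbation.

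There is one small inaccuracy in Step 1. The pair $(\chi_y(x),\chi_{y'}(x))$ does not take all four sign patterns when $y=0$ or $y'=0$. Also, the summands $(\xi_x\chi_y(x),\xi_x\chi_{y'}(x))$ are independent but not identically distributed. The conclusion you need still holds. By symmetry of the signs, $(\widehat U_y,\widehat U_{y'})$ has the law of $s^{-1/2}\bigl(\sum_x\xi_x,\sum_x\xi_x\chi_{y+y'}(x)\bigr)$. This depends only on the set $\{\chi_{y+y'}=1\}$, which has size exactly $s/2$, so the pair law is the same for every $y\neq y'$. This is exactly the paper's device: it writes the pair as $\bigl((V+V')/\sqrt2,(V-V')/\sqrt2\bigr)$ for two independent normalized half-sums $V,V'$ and invokes rotation invariance of the standard Gaussian. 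That route also spares you a bivariate CLT for non-identically distributed summands.
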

\begin{proof}
For a magnitude cutoff $r > 0$, let $q_r(x) := x^2\,\mathbbm{1}_{\{|x| \leq r\}}$, let $Z \sim \calN(0,1)$ be a standard Gaussian, and let $\mu_r := \E[q_r(Z)]$. For a function $v: \zo^t \to \pmone$, let
\[ L_r(v) := \sum_{y : |\wh{v}(y)| \leq r/\sqrt{s}} \wh{v}(y)^2 \]
be the squared Fourier mass of $v$ at or below level $r/\sqrt{s}$. By Parseval, the mass strictly above this level is $1 - L_r(v)$.

The proof has two parts. First, we show that for the uniformly random function $u: \zo^t \to \pmone$ with truth table $\bU$, the quantity $L_r(u)$ is concentrated around its Gaussian value $\mu_r$ for each fixed $r>0$. Second, we use the coupling to show that $w$ and $u$ are close in Fourier $\ell_2$ distance, and deduce that $L_a(w) < 1/8$ and $1 - L_{3a}(w) < 1/8$ for $a = \sqrt{2/3}$ with probability $1 - o(1)$. This implies the lemma. Write
\[G(y) := \sqrt{s}\,\wh{u}(y) = \frac{1}{\sqrt{s}}\sum_{x\in \zo^t}u(x)\chi_y(x)\] 
for the rescaled Fourier coefficients of the uniformly random function $u$, so that $L_r(u) = \frac{1}{s}\sum_y q_r(G(y))$. We show that for each fixed $r > 0$, $|L_r(u) - \mu_r| = o(1)$ with probability $1 - o(1)$, by bounding the mean and variance of $L_r(u)$. Each $G(y)$ is distributed as $s^{-1/2}\sum_{i=1}^s \boldsymbol{\xi}_i$ for independent uniform signs $\boldsymbol{\xi}_i \in \pmone$ due to each value of $u(x)$ being independently uniform in $\pmone$. By the central limit theorem, $G(y)$ converges in distribution to $Z$. Since $q_r$ is bounded and its only discontinuities are at $\pm r$, which have Gaussian probability zero, $\E[q_r(G(y))]=\mu_r+o(1)$. Hence $\E[L_r(u)]= \E [\frac{1}{s}\sum_y q_r(G(y))] = \mu_r+o(1)$.

For the variance, fix distinct labels $y \neq z$. The characters $\chi_y$ and $\chi_z$ agree on a set $A = \{ x\in \zo^t: \chi_y(x)=\chi_z(x)\}$ containing exactly half of $\zo^t$. Let $\bV$ and $\bV'$ be the sums of $u(x)\chi_y(x)$ over $x \in A$ and over $x \in \zo^t \setminus A$ respectively, each divided by $\sqrt{s/2}$. These are independent normalized sums of $s/2$ uniform signs, and since $\chi_z = \chi_y$ on $A$ and $\chi_z = -\chi_y$ off $A$,
\[ \bigl(G(y), G(z)\bigr) = \frac{1}{\sqrt{2}}\bigl(\bV + \bV',\ \bV - \bV'\bigr). \]
Let $Z_1, Z_2$ be independent standard Gaussians. By independence and the central limit theorem, $(\bV, \bV')$ converges in distribution to $(Z_1, Z_2)$. Observe that $(Z_1, Z_2)$ has joint density $\frac{1}{2\pi}e^{-(z_1^2 + z_2^2)/2}$, which depends only on $z_1^2 + z_2^2$, and the linear map $(z_1, z_2) \mapsto \bigl(\frac{z_1 + z_2}{\sqrt{2}}, \frac{z_1 - z_2}{\sqrt{2}}\bigr)$ preserves $z_1^2 + z_2^2$ and has determinant of absolute value $1$, so it leaves this density unchanged. Therefore $(G(y),G(z)) = \frac{1}{\sqrt{2}}(\bV + \bV', \bV - \bV')$ also converges in distribution to $(Z_1, Z_2)$. The function $(v,v')\mapsto q_r(v)q_r(v')$ is bounded, and its discontinuities lie on the lines $v=\pm r$ or $v'=\pm r$, which have probability zero under the limiting Gaussian law. Its expectations therefore converge. So for $y\neq z$,
\begin{equation} \label{eq:asymptotic-indep-qrG}
\E\bigl[q_r(G(y))\,q_r(G(z))\bigr] = \E[q_r(Z_1)]\,\E[q_r(Z_2)] + o(1) = \mu_r^2 + o(1). \end{equation}
The joint law of $(G(y),G(z))$ is identical for every distinct pair $y,z$, since the two independent half-sums always have the same laws. Thus the $o(1)$ term in~\eqref{eq:asymptotic-indep-qrG} is uniform over these pairs. Expanding the square in $\E[L_r(u)^2]$, we see that
\[ \E[L_r(u)^2] = \frac{1}{s^2}\sum_{y \in \zo^t}\sum_{z\in \zo^t} \E\bigl[q_r(G(y))\,q_r(G(z))\bigr] \leq \frac{r^4}{s} + \mu_r^2 + o(1) \]
where the inequality follows from bounding each of the $s$ diagonal terms by noting that $q_r(x) \leq r^2$, and bounding each of the $s^2-s$ off-diagonal terms by~\eqref{eq:asymptotic-indep-qrG}.
Hence $\mathrm{Var}[L_r(u)] = \E[L_r(u)^2] - \E[L_r(u)]^2 \leq r^4/s + o(1) = o(1)$, and Chebyshev's inequality gives $|L_r(u) - \mu_r| = o(1)$ with probability $1 - o(1)$. This completes the first part.

By Parseval's identity applied to $w-u$,
\[ \sum_{y\in\zo^t}(\wh{w}(y)-\wh{u}(y))^2 = \frac1s\|\bW-\bU\|_2^2. \]
Taking expectations and using the lemma's coupling assumption,
\[ \E\sum_{y\in\zo^t}(\wh{w}(y)-\wh{u}(y))^2= \frac1s\E \|\bW-\bU\|_2^2 = o(1). \] 
We can then use Markov's inequality to obtain that the event $\sum_{y} (\wh{w}(y) - \wh{u}(y))^2 = o(1)$ occurs with probability $1 - o(1)$. Fix a constant $0<\eps<1$, to be chosen below, and work on this high-probability event. Call a $y$ \emph{bad} if 
\[ | \wh{w}(y) - \wh{u}(y) | > \eps | \wh{w}(y)| \]
and \emph{good} otherwise. The total squared Fourier mass of bad coefficients is
\[ \sum_{y \textrm{ bad}} |\wh{w}(y)|^2 \leq \frac{1}{\eps^2} \sum_{y \textrm{ bad}} | \wh{w}(y) - \wh{u}(y)|^2 \leq \frac{1}{\eps^2} \sum_{y \in \zo^t} | \wh{w}(y) - \wh{u}(y)|^2 = o(1). \]
For every good $y$, $(1-\eps)|\wh{w}(y)| \leq | \wh{u}(y)| \leq (1+\eps) |\wh{w}(y)|$. Hence for $a = \sqrt{2/3}$,
\begin{align*} 
    L_a(w) = \sum_{y:\, |\wh{w}(y)| \leq a/\sqrt{s}} \wh{w}(y)^2
    &\leq o(1)+\sum_{\substack{y\text{ good}:\\ |\wh{w}(y)|\leq a/\sqrt{s}}}\wh{w}(y)^2\\
    &\leq o(1)+\frac{1}{(1-\eps)^2}\sum_{\substack{y\text{ good}:\\ |\wh{w}(y)|\leq a/\sqrt{s}}}\wh{u}(y)^2\\
    &\leq o(1)+\frac{1}{(1-\eps)^2}L_{(1+\eps)a}(u)
\end{align*}
where the first inequality uses that bad $y$ have total squared Fourier mass $o(1)$, the second that for good $y$, $|\wh{w}(y)| \leq \frac{|\wh{u}(y)|}{1-\eps}$, and the third that for good $y$, if $|\wh{w}(y)| \leq a/\sqrt{s}$, then $|\wh{u}(y)| \leq (1+\eps)a/\sqrt{s}$. Similarly, we can show that
\begin{align*}
    1-L_{3a}(w) 
    &\leq o(1)+\frac{1}{(1-\eps)^2}\bigl(1-L_{(1-\eps)3a}(u)\bigr).
\end{align*}
By Fact~\ref{fact:gaussian}, $\mu_a<1/8$ and $1-\mu_{3a}<1/8$. We now choose $\eps>0$ sufficiently small such that
\[ \frac{\mu_{(1+\eps)a}}{(1-\eps)^2}<\frac18,\qquad \frac{1-\mu_{(1-\eps)3a}}{(1-\eps)^2}<\frac18. \]
By the first part of the proof, $L_{(1+\eps)a}(u)=\mu_{(1+\eps)a}+o(1)$ and $L_{(1-\eps)3a}(u)=\mu_{(1-\eps)3a}+o(1)$ with probability $1-o(1)$. The two preceding inequalities imply that, with probability $1-o(1)$,
\[ L_a(w) = \sum_{y:\, |\wh{w}(y)| \leq a/\sqrt{s}} \wh{w}(y)^2 <\frac18,\qquad 1-L_{3a}(w) = \sum_{y:\, |\wh{w}(y)| > 3a/\sqrt{s}} \wh{w}(y)^2<\frac18. \]
Therefore, if $|\wh{w}(y)|\le a/\sqrt{s}$, the first Pelagic Fourier Fishing inequality fails, while if $|\wh{w}(y)|>3a/\sqrt{s}$, the second fails. Thus every successful $y$ satisfies
\[ \frac{a}{\sqrt{s}} \leq |\wh{w}(y)|\le\frac{3a}{\sqrt{s}}, \]
which proves the lemma.
\end{proof}

\section{Pelagic Fourier Fishing} \label{sec:PFF}
Our first problem, which we call \emph{Pelagic Fourier Fishing}, is a distributed variant of the \emph{Fourier Fishing} problem introduced by Aaronson~\cite{Aaronson10BQPandPH} and further studied by Aaronson and Chen~\cite{AaronsonChen2016}. Conceptually, this variant asks for the players to ``fish'' for a Fourier coefficient that is neither too small nor too large in magnitude, rather than asking for a Fourier coefficient with magnitude above a certain threshold as in \emph{Fourier Fishing}.  

Rather than fixing upper and lower thresholds in advance, we define success relative to the Fourier spectrum of the input. For a coefficient to be acceptable, the total squared Fourier mass of coefficients with lesser than or equal magnitude and the total squared Fourier mass of coefficients with greater than or equal magnitude must both be sufficiently large. In other words, it asks for an approximate median of the Fourier spectrum weighted by squared Fourier mass. We call this the \emph{Pelagic condition}. 

We establish the definitions of the Fourier-analytic notation we use in Section~\ref{sub:fourier-analysis}.
\begin{tcolorbox}[problembox]
\begin{problem}[Pelagic Fourier Fishing, $\PFF_n$]
\label{problem:FM}
Let $N=2^n$. Alice receives a function $f: \zo^n \to \pmone$ and Bob receives a function $g: \zo^n \to \pmone$, each given as a truth table of length $N$. Let $h(x) = f(x)g(x)$ for $x\in \zo^n$. Bob must output a $y \in \zo^n$ satisfying
\begin{equation}
    \sum_{z: |\widehat{h}(z)|\leq |\widehat{h}(y)|} |\widehat{h}(z)|^2 \geq \frac{1}{8}, \qquad \sum_{z: |\widehat{h}(z)|\geq |\widehat{h}(y)|} |\widehat{h}(z)|^2 \geq \frac{1}{8}.
\end{equation}
\end{problem}
\end{tcolorbox}

Note that sending Alice's entire truth table gives an exact $2^n$-bit classical one-way protocol. Our first result is that \emph{Pelagic Fourier Fishing} has an optimal separation between one-way quantum communication complexity and two-way randomized communication complexity:

\begin{theorem}[Restatement of \Cref{thm:intro-two-way}]
\label{thm:PFF-combined}
For all $n\in \mathbb{N}$, $\PFF_n$ admits an $n$-qubit quantum one-way communication protocol that succeeds with probability at least $3/4$ on every input. Meanwhile, for all sufficiently large $n$, every randomized public-coin (two-way) communication protocol for $\PFF_n$ with success probability at least $2/3$ on average under a uniform input distribution communicates $\Omega(2^n)$ bits. In particular,
\[ \Qoneway(\PFF_n) \leq n \qquad \textrm{and} \qquad \Rtwoway(\PFF_n) = \Theta(2^n).\]
\end{theorem}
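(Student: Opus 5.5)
The plan has three parts: the quantum upper bound, the trivial classical upper bound, and the $\Omega(2^n)$ randomized two-way lower bound, which carries all the difficulty.

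\textbf{Upper bounds.} For the quantum protocol, Alice sends $2^{-n/2}\sum_x f(x)\ket{x}$. Bob applies $\ket{x}\mapsto g(x)\ket{x}$, then $H^{\otimes n}$, and measures, so he obtains $y$ with probability $|\widehat h(y)|^2$. Let $\mathcal B_{\mathrm{lo}}$ be the set of $y$ violating the first Pelagic inequality. Order the frequencies by $|\widehat h|$. Then $\mathcal B_{\mathrm{lo}}$ is a prefix of this order whose total squared mass is $<1/8$, and symmetrically for the set $\mathcal B_{\mathrm{hi}}$ violating the second inequality. By Parseval the success probability is therefore $>3/4$. Classically, sending Alice's truth table gives $\Rtwoway(\PFF_n)\le 2^n$.

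\textbf{Reduction for the lower bound.} By Yao's principle, fix the public coins to get a deterministic $c$-bit protocol with error $\le 1/3$ on uniform $(\bF,\bG)$, and assume $c\le\delta N$ for a small constant $\delta$. The window reduction is easy here: $\bh=\bF\odot\bG$ is exactly uniform. So \Cref{lem:magnitude-to-fourier-mass}, applied with $\bW=\bU=\bh$, shows that with probability $1-o(1)$ every valid output satisfies $a\le|\widehat H_y|\le 3a$, where $\widehat H_y=\sqrt N\,\widehat h(y)$. It thus suffices to show $\Pr[a\le|\widehat H_{\by}|\le 3a]\le p_*+o(1)<2/3$, where $\by$ is the output.

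Condition on the transcript $\tau$. Then $\bF,\bG$ are independent and uniform on a rectangle $A_\tau\times B_\tau$, and $\by=y(\tau)$. Transcripts with $|A_\tau||B_\tau|<2^{2N-c-\omega(1)}$ have total probability $o(1)$. On the rest, $D_\infty(\bF)+D_\infty(\bG)\le c+o(N)$ and $H(\bG\mid\tau)\ge N-c-o(N)$.

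\textbf{Decomposing $\widehat H_{\by}$.} Apply \Cref{lem:entropy-budget-conditioning} to $\bG$, revealing a fixed set $I$ of size $k=\eta N$ with $\eta$ a small constant. After conditioning on $\bZ=(\tau,\bG_I)$, the remaining coordinates of $\bG$ have total squared pairwise covariance $\le 2Nc/k=O(\delta/\eta)\,N$. Let $\bm=\E[\bG\mid\bZ]$. On the $\bF$ side, use \Cref{lem:coupling-after-reveal} with $k'=\eta N$ random reveals to couple $\bF$ with a $\bU$ that is uniform conditional on the reveal $\bR$, and hence independent of $(\bG,\bR,\tau)$. Split
\[
\widehat H_{\by}=\frac{\langle \bU,\bG\odot\chi_{\by}\rangle}{\sqrt N}+\frac{\langle \bF-\bU,\bm\odot\chi_{\by}\rangle}{\sqrt N}+\frac{\langle \bF-\bU,(\bG-\bm)\odot\chi_{\by}\rangle}{\sqrt N}.
\]
The first term, conditioned on everything except $\bU$, is exactly a normalized sum of $N$ independent uniform signs. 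This is the $\bS$ of \Cref{lem:independent-noise}. The second term is handled by \Cref{lem:sampling-estimate}: the estimate $\widehat{\bL}$ computed from the revealed coordinates of $\bF$, minus $\langle\bU,\bm\chi_{\by}\rangle$, becomes the shift $\bA$. This is legitimate only if $\bA$ depends on $\bU$ through nothing; I would fix this by treating the part $\langle\bU,\bm\chi_{\by}\rangle$ jointly with the first term, or by arguing it is absorbed into $\bS$ up to an $o(1)$-variance error. The sampling error contributes variance $\le (N/k')\|\bm\|^2/N=O(1/\eta)$ times $1/N$-scale factors that I need to check become $o(1)$. The third term is the error $\berr$. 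Its conditional second moment, given $\bF$ and $\bU$, is $\frac1N\sum_{i,j}\Delta_i\Delta_j\Cov(\bG_i,\bG_j\mid\bZ)$ with $\Delta=\bF-\bU$. The diagonal part is $\le\|\Delta\|^2/N\le 4\sqrt{D_\infty(\bF)/N}+2\eta$, which is small. The off-diagonal part is bounded by Cauchy--Schwarz using the covariance budget from \Cref{lem:entropy-budget-conditioning}.

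\textbf{Main obstacle and conclusion.} The main obstacle is controlling these error terms simultaneously. The delicate point is the interaction between $\bF-\bU$ and the dependence of $\by$ on $\tau$: $\by$ is chosen by the protocol, not fixed in advance. It is also delicate to make every error genuinely $o(1)$, or at most an arbitrarily small constant in terms of $\delta,\eta$, rather than merely $O(1)$. I would first try to take $\delta\ll\eta^2\ll1$, so that all error variances are at most a small constant $\epsilon$. I would then rerun the proof of \Cref{lem:independent-noise} with a constant $\epsilon$ instead of $o(1)$, which costs only $O(\sqrt\epsilon)$ since the slack $p_*-q$ is a fixed constant. Granting this, \Cref{lem:independent-noise} gives success probability $\le p_*+O(\sqrt\epsilon)+o(1)<2/3$, a contradiction. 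Hence $c=\Omega(N)=\Omega(2^n)$.
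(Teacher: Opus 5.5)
Your upper bounds and the reduction to the window event via \Cref{lem:magnitude-to-fourier-mass} are fine. The decomposition step, however, has a genuine gap: two of your error terms are not small, and no choice of the constants $\delta,\eta$ repairs them.

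\emph{First failing term.} You estimate $\langle\bF,\mathbf m\odot\chi_y\rangle/\sqrt N$, where $\mathbf m=\E[\bG\mid\bZ]$ is your conditional mean vector, from $k'$ random coordinates of $\bF$. By \Cref{lem:sampling-estimate} the error is of order $\|\mathbf m\|_2^2/k'$, and this bound is essentially tight. But $\mathbf m_i=\bG_i=\pm1$ for every $i$ in the revealed set $I$, so $\|\mathbf m\|_2^2\ge|I|$. Since $|I|$ can equal $k=k'$, this error can be $\Omega(1)$; your own $O(1/\eta)$ bound already signals this.

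\emph{Second failing term.} This one is more serious. The off-diagonal part of your third term is $\frac1N\sum_{i\neq j}\Delta_i\Delta_j\chi_y(i)\chi_y(j)\Cov(\bG_i,\bG_j\mid\bZ)$, with $\Delta=\bF-\bU$ and $\|\Delta\|_2^2=\Omega(k')$. Cauchy--Schwarz against the covariance budget of \Cref{lem:entropy-budget-conditioning} only gives a bound of order $\frac1N\|\Delta\|_2^2\,(2Nc/k)^{1/2}$. With $k'=\eta' N$, $c=\delta N$, $k=\eta N$, this is of order $\eta'\sqrt{\delta N/\eta}$, which grows like $\sqrt N$. In the APFF proof the same Cauchy--Schwarz step succeeds only because the weights satisfy $\sum_{z\neq z'}g_m(z,z')^2\le 3/N$ (\Cref{lem:coset-character-correlations}), a gain that comes from averaging over the hidden subspace. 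Your weights $\Delta_i\Delta_j/N$ instead have $\sum_{i\neq j}(\Delta_i\Delta_j/N)^2=\Theta(1)$.

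Separately, in a two-way protocol Bob's output depends on $G$ and not only on $\tau$. Writing $y=y(\tau)$ therefore needs the remark that Bob may append his $n$-bit answer to the transcript.

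\emph{The missing idea.} For $\PFF_n$ no covariance control is needed at all.
\begin{itemize}
\item The paper absorbs Bob's choice into his input. The map $V_\tau(G)=G\odot\bchi_{y_\tau(G)}$ has at most $N$ preimages per value, so $D_\infty(V_\tau(\bG_\tau))\le D_\infty(\bG_\tau)+n$.
\item The selected coefficient is then $\langle\bF_\tau,V_\tau(\bG_\tau)\rangle/N$: an inner product of two independent vectors with total deficiency at most $c+2n=o(N)$.
\item \Cref{lem:inner-prod-decomp} couples $\bX=\bF_\tau$ to a uniform $\bU$ with $\E\|\bX-\bU\|_2^2=o(N)$. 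It then reveals $k$ random coordinates of the \emph{other} vector $\bY$, with $\E\|\bX-\bU\|_2^2\ll k\ll N$.
\item It takes $\bA=\widehat{\bL}(\bX-\bU)/\sqrt N$ and $\bS=\langle\bU,\bU'\rangle/\sqrt N$, where $\bU'$ is uniform given the reveal (\Cref{lem:coupling-after-reveal}).
\end{itemize}
Each error term pairs a small difference vector with one of two things. Either it is paired with a sampled estimate whose sample size dominates that vector's squared norm, giving $\E\|\bX-\bU\|_2^2/k=o(1)$. Or it is paired with the independent uniform $\bU$, giving $\E\|\bY-\bU'\|_2^2/N=o(1)$.

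Your split does the opposite: it samples against the large vector $\mathbf m$, and it pairs $\bF-\bU$ with the correlated vector $\bG-\mathbf m$. That is exactly why its errors do not vanish.
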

\subsection{Quantum upper bound}
We now show that $\PFF_n$ admits a simple quantum one-way communication protocol using $n = \log_2 N$ qubits that succeeds with probability at least $3/4$ on all inputs. Alice prepares and sends the $n$-qubit binary phase state 
    \[ \ket{\psi_f} = \frac{1}{\sqrt{N}}\sum_{x \in \zo^n} f(x)\ket{x}.\]
Bob applies the diagonal unitary $U_g$ which maps $\ket{x} \mapsto g(x)\ket{x}$, then applies Hadamard gates to all $n$ qubits and measures in the computational basis. The sequence of transformations is
\begin{align*}
    \ket{\psi_f} =\frac{1}{\sqrt{N}}\sum_{x \in \zo^n} f(x)\ket{x} &\overset{U_g}{\longmapsto} \frac{1}{\sqrt{N}}\sum_{x \in \zo^n} f(x)g(x)\ket{x} \\
    &\overset{H^{\otimes n}}{\longmapsto} \frac{1}{N} \sum_{y \in \zo^n} \sum_{x \in \zo^n} f(x)g(x) (-1)^{x \cdot y} \ket{y}\\
    &= \sum_{y \in \zo^n} \widehat{h}(y)\ket{y}
\end{align*} 
Measuring in the computational basis gives $y$ with probability $| \widehat{h}(y)|^2$. Consider the total squared Fourier mass of all $y\in \zo^n$ which satisfy the \emph{Pelagic condition} of \emph{Pelagic Fourier Fishing}, which we restate here:
\[ \sum_{z: |\widehat{h}(z)|\leq |\widehat{h}(y)|} |\widehat{h}(z)|^2 \geq \frac{1}{8}, \qquad \sum_{z: |\widehat{h}(z)|\geq |\widehat{h}(y)|} |\widehat{h}(z)|^2 \geq \frac{1}{8}.\]
Let $L\subseteq \zo^n$ be the set of $y$ violating the first inequality, and let $y^*\in L$ maximize $|\wh{h}(y)|$ over $L$ (if $L$ is nonempty; otherwise $L$ has zero squared Fourier mass). Every $y\in L$ has $|\wh{h}(y)| \leq |\wh{h}(y^*)|$, so
\[ \sum_{y\in L} \lvert \wh{h}(y)\rvert^2 \leq \sum_{z: |\wh{h}(z)| \leq |\wh{h}(y^*)|} \lvert \wh{h}(z)\rvert^2 < \frac{1}{8}.\]
Symmetrically, we can argue that the set of $y$ violating the second inequality also has total squared Fourier mass at most $1/8$. Parseval's identity says that for all $h: \zo^n \to \pmone$,
\[ \sum_{y \in \zo^n} |\widehat{h}(y)|^2 = 1.\]
The total squared Fourier mass of all $y$ which satisfy both inequalities is therefore at least $1 - 1/8 - 1/8 = 3/4$. Sampling $y$ according to its squared Fourier mass therefore fishes a correct output with probability at least $3/4$.

\subsection{Randomized two-way lower bound}
\label{subsec:pff-lb}
In this section we prove the $\Omega(2^n)$-bit randomized two-way communication lower bound in \Cref{thm:PFF-combined}. 

\subsubsection{Key technical lemma}
The key technical lemma needed for this lower bound is stated as follows:

\begin{lemma}[High-entropy inner-product decomposition] \label{lem:inner-prod-decomp}
Let $\bX$ and $\bY$ be independent random variables over $\pmone^N$ such that their combined min-entropy deficiencies satisfy the following condition:
\[ D_\infty(\bX) + D_\infty(\bY) = o(N).\]
Then we can decompose the normalized inner product of $\bX$ and $\bY$ as the sum of three random variables:
\[ \frac{\langle \bX, \bY \rangle }{\sqrt{N}} = \bA + \bS + \textbf{err},\]
where $\bA$ and $\bS$ are independent, $\bS$ is distributed as $N^{-1/2}\sum_{i=1}^N \boldsymbol{\xi}_i$ for independent uniform signs $\boldsymbol{\xi}_i \in \pmone$, and $\E[\berr^2] = o(1)$. 
\end{lemma}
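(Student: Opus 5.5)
The plan is to reduce the problem to the uniform case by coupling, use the reveal and decorrelation lemmas of \Cref{subsec:shared-lemmas} to control correlations, and put every term that depends on revealed or conditioning information into the shift $\bA$. Throughout, the basic tool is that a uniform vector independent of everything else kills cross terms in second moments.

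First, apply \Cref{lem:coupling-by-entropy} separately to $\bX$ and to $\bY$, using independent randomness. This gives couplings $(\bX,\bV)$ and $(\bY,\bU)$ with $\bU,\bV\sim\mathrm{Unif}(\pmone^N)$ and $(\bX,\bV)$ independent of $(\bY,\bU)$. Since $N-H(\cdot)\le D_\infty(\cdot)=o(N)$, both $\E\|\bX-\bV\|_2^2$ and $\E\|\bY-\bU\|_2^2$ are $o(N)$. Write $\ba=\bX-\bV$ and $\bb=\bY-\bU$. Then
\[
\langle\bX,\bY\rangle=\langle\bV,\bU\rangle+\langle\ba,\bU\rangle+\langle\bV,\bb\rangle+\langle\ba,\bb\rangle .
\]
The first term, divided by $\sqrt N$, is exactly distributed as $N^{-1/2}\sum_i\boldsymbol{\xi}_i$. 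The two middle terms are harmless: $\bU$ is uniform and independent of $\ba$, so $\E\langle\ba,\bU\rangle^2=\E\|\ba\|_2^2=o(N)$, and symmetrically for $\langle\bV,\bb\rangle$. After dividing by $\sqrt N$, both go into $\berr$.

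The difficulty is the bilinear term $\langle\ba,\bb\rangle$. Cauchy--Schwarz only gives $\|\ba\|\,\|\bb\|$, which is far too weak. By independence,
\[
\E\langle\ba,\bb\rangle^2=\operatorname{tr}(M_aM_b),
\]
where $M_a,M_b$ are the second-moment matrices of $\ba$ and $\bb$. The diagonal part is at most $4\,\E\|\ba\|_2^2=o(N)$, which is fine. The off-diagonal part requires the correlations of one side to be small. To arrange this, I would not couple naively. I would first apply \Cref{lem:entropy-budget-conditioning}, or alternatively the random reveal of \Cref{lem:coupling-after-reveal}, to one side, say $\bY$: condition on a set $I$ of $k=o(N)$ revealed coordinates, chosen so that $\E\sum_{i\ne j}\Cov(\bY_i,\bY_j\mid\bZ)^2$ is small.

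Two consequences follow from the reveal. The contribution $N^{-1/2}\sum_{i\in I}\bX_i\bY_i$ of the revealed coordinates, together with the conditional mean vector $\E[\bY\mid\bZ]$ paired against $\bX$, becomes the shift $\bA$. It is admissible as $\bA$ provided the uniform surrogate used for $\bS$ is kept independent of it, which is exactly what \Cref{lem:coupling-after-reveal} guarantees: $\bU\mid\bR$ is uniform. The fluctuation part is then nearly uncorrelated, so its off-diagonal interaction with $\bX$'s second-moment matrix can be bounded in Frobenius norm.

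The main obstacle is making this quantitative. The off-diagonal interaction with $M_a$ is bounded by $\|M_a\|_F\cdot\big(\E\sum_{i\ne j}\Cov^2\big)^{1/2}$. The bound $\|M_a\|_F\le\operatorname{tr}M_a=o(N)$ is too lossy against the $\sqrt{2Nc/k}$ guaranteed by \Cref{lem:entropy-budget-conditioning}. I would therefore perform the decorrelating reveal on both sides: after it, the $\bX$-side fluctuations are also weakly correlated, so $\|M_a\|_F^2$ is $O(N)$ plus an $o(N^2/k)$-type term rather than $o(N)^2$. I would then tune $k$ between $c=o(N)$ and $N$, for instance $k=\sqrt{cN}$, so that both the revealed-coordinate losses $O(k/N)$ in $\berr$ and the covariance terms vanish. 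If the bookkeeping still fails, the fallback is to absorb the entire conditional-mean interaction $\langle\E[\bX\mid\cdot],\E[\bY\mid\cdot]\rangle$ into $\bA$ and estimate only the centered-times-centered term.
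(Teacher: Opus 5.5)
Your four-term expansion, your treatment of the two middle terms, and your diagnosis that the bilinear term is the crux are all correct. After decorrelating both sides, the centered-times-centered piece can in fact be bounded by $o(N)$. The gap is the interaction between one side's conditional mean and the other side's unrevealed fluctuations, which your proposal never controls.

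\Cref{lem:entropy-budget-conditioning} only guarantees $\E\sum_{i\neq j}\Cov(\bY_i,\bY_j\mid\bZ)^2\le 2Nc/k$. That residual covariance may be sign-coherent on exactly the coordinates where $\bX$ is structured. Here is a concrete instance:
\begin{itemize}
\item Take $d=N^{0.8}$, a block $B$ with $|B|=d$, and your parameter $k$ with $d\ll k\le N$.
\item Let $\bX$ be constant on $B$ (common value $\bX_j$) and uniform elsewhere.
\item Let $\bY$ have $d$ frozen coordinates outside $B$. On $B$, let its coordinates be i.i.d.\ with bias $\epsilon\boldsymbol{\sigma}$ for a hidden uniform sign $\boldsymbol{\sigma}$, where $\epsilon^4=N/(dk)$.
\end{itemize}
Both min-entropy deficiencies are $O(d)=o(N)$. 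Any admissible $c$ for $\bY$ is at least $d$. Also $\sum_{i\neq j}\Cov(\bY_i,\bY_j)^2=d(d-1)\epsilon^4<Nd/k$, so the decorrelating reveal is empty and $\E[\bY\mid\bZ]$ vanishes on $B$.

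Yet $\langle\bX,\bY\rangle$ contains $\bX_j\sum_{i\in B}\bY_i\approx\bX_j\boldsymbol{\sigma}\epsilon d$, with $\epsilon d=d^{3/4}(N/k)^{1/4}\ge N^{0.6}\gg\sqrt N$. This term is not in your $\bA=\langle\bX,\E[\bY\mid\bZ]\rangle/\sqrt N$. It also cannot be absorbed by $\bS+\berr$, whose second moment is at most $2+o(1)$.

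The other choices fail too:
\begin{itemize}
\item The mirror choice $\bA=\langle\E[\bX\mid\cdot],\bY\rangle/\sqrt N$ repairs this instance but fails on the mirror instance, with the roles of $\bX$ and $\bY$ swapped.
\item Your fallback $\bA=\langle\E[\bX\mid\cdot],\E[\bY\mid\cdot]\rangle/\sqrt N$ fails on both instances.
\item Putting both cross terms into $\bA$ makes $\bA$ a function of both full inputs. Then nothing makes it independent of $\bS=\langle\bV,\bU\rangle/\sqrt N$.
\end{itemize}

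The missing idea is the unbiased sampling estimator of \Cref{lem:sampling-estimate}, which bounds the bilinear term without any decorrelation. The paper's argument runs as follows:
\begin{itemize}
\item It couples only $\bX$ with a uniform $\bU$ via \Cref{lem:coupling-by-entropy}. It chooses $k$ with $\E\|\bX-\bU\|_2^2=o(k)$ and $k=o(N)$, and reveals $\bY$ at $k$ independent uniformly random indices.
\item It sets $\bA:=\wh\bL(\bX-\bU)/\sqrt N$. This estimates $\langle\bX-\bU,\bY\rangle$ using the full discrepancy $\bX-\bU$ but only the sampled coordinates of $\bY$. Its error has second moment at most $\frac Nk\E\|\bX-\bU\|_2^2$, however $\bY$ is correlated.
\item It sets $\bS:=\langle\bU,\bU'\rangle/\sqrt N$, with $\bU'$ the surrogate from \Cref{lem:coupling-after-reveal}. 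Since $\bU'$ is uniform given the sample $\bR$ and $(\bY,\bR,\bU')$ is independent of $(\bX,\bU)$, $\bS$ is independent of $\bA$.
\item The remainder $\langle\bU,\bY-\bU'\rangle/\sqrt N$ has second moment $\E\|\bY-\bU'\|_2^2/N=o(1)$, because $\bU$ is uniform and independent of $(\bY,\bU')$.
\end{itemize}
In the instance above, $\bX-\bU$ equals $2\bX_j$ on a random subset of about half of $B$. So $\bA$ picks up the shift $\bX_j\boldsymbol{\sigma}\epsilon d$ directly from random samples of $\bY$ on $B$. Random sampling estimates the one linear functional of $\bY$ that matters, instead of trying to make all of $\bY$'s covariances small.
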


\begin{proof}
    We apply the coupling construction of \Cref{lem:coupling-by-entropy} to $\bX$ to obtain a coupling $(\bX,\bU)$ between $\bX$ and $\bU \sim \mathrm{Unif}(\pmone^N)$ such that, taking an expectation over the joint distribution of the coupling $(\bX,\bU)$, we have
    \[ \E \| \bX - \bU \|_2^2 \leq 4\sqrt{N(N-H(\bX))}.\]
    By the min-entropy deficiency condition, we have that $N - H(\bX) \leq D_\infty(\bX) = o(N)$, hence 
    \begin{equation} \E \| \bX - \bU \|_2^2 = o(N). \end{equation}
    So when sampling from $(\bX,\bU)$, $\bX = \bU + (\bX - \bU)$ where $\bU$ is perfectly uniform and $\bX - \bU$ has small expected squared norm. Now set an integer parameter
    \[ k = \Bigl\lceil\sqrt{N\cdot \left(1+ \E\| \bX - \bU \|_2^2\right)} \ \Bigr\rceil.\]
    This choice of $k$ is asymptotically at an intermediate scale between $\E\| \bX - \bU \|_2^2$ and $N$: we have that $\E\| \bX - \bU \|_2^2 = o(k)$ and $k = o(N)$.

    Independently of the first coupling $(\bX,\bU)$, sample $k$ indices $\bI_1,\dots,\bI_k \in [N]$ independently and uniformly at random and reveal the values of $\bY$ at those indices. The information revealed is captured by a random variable $\bR = (\bI_1, \bY_{\bI_1}, \dots, \bI_k, \bY_{\bI_k})$. Then \Cref{lem:coupling-after-reveal} asserts that there exists a coupling between $\bY$ and an auxiliary random variable $\bU'$ such that for every fixed $R$ in the support of $\bR$,
    \begin{equation}\label{eq:uniformity} \bU' \mid (\bR = R) \sim \mathrm{Unif}(\pmone^N)\end{equation}
    and
    \[ \E \| \bY - \bU'\|_2^2 \leq 4\sqrt{ND_\infty(\bY)} + 2k.\]
    But by the min-entropy deficiency condition, $D_\infty(\bY) = o(N)$ and $k = o(N)$, so
    \begin{equation}\label{eq:Y-U'-small} \E \| \bY - \bU'\|_2^2 = o(N). \end{equation}
    Also define 
    \[\wh\bL(v) := \frac{1}{k}\sum_{j=1}^k \wh\bL_j(v)\quad \textrm{where}\quad \wh\bL_j(v) = Nv_{\bI_j}\bY_{\bI_j}\]
    for all $v\in \R^N$ as in \Cref{lem:sampling-estimate}.
    Now we define $\bA$, $\bS$, and $\berr$, the random variables in the decomposition:
    \begin{equation}
        \bA := \frac{\wh\bL(\bX - \bU)}{\sqrt{N}}, 
        \qquad \bS := \frac{\langle \bU, \bU'\rangle}{\sqrt{N}},
        \qquad \berr := \underbrace{\frac{\langle \bX -\bU, \bY\rangle - \wh\bL(\bX-\bU)}{\sqrt{N}}}_{\berr_1}  + \underbrace{\frac{\langle \bU, \bY - \bU'\rangle}{\sqrt{N}}}_{\berr_2}.
    \end{equation}
    One can verify that
    \[ \bA + \bS + \berr = \frac{\langle \bX,\bY \rangle}{\sqrt{N}}\]
    using the bilinearity of the inner product $\langle \cdot, \cdot \rangle$. Now observe that if we condition on $\bX$, $\bU$, and $\bR$, then $\bA$ is completely determined, while $\bU'$ is distributed uniformly over $\pmone^N$ due to the property \eqref{eq:uniformity} from its construction. Therefore $\langle \bU, \bU'\rangle$ is a sum of independent and uniform signs, and $\bS$ is distributed as $N^{-1/2}\sum_{i=1}^N \boldsymbol{\xi}_i$ for independent uniform signs $\boldsymbol{\xi}_i \in \pmone$. Thus, $
    \bS$ is independent of $\bX,\bU,\bR$, and hence independent of $\bA$ as well.

    Finally, we show that $\E[\berr^2] = o(1)$ by bounding the two terms in $\berr$ separately. By construction, $\bX-\bU$ is independent of $(\bY,\bI_1,\dots,\bI_k)$, so conditioned on $(\bX-\bU)=v$, \Cref{lem:sampling-estimate} gives
    \[ \E[(\langle v,\bY\rangle-\wh\bL(v))^2\mid(\bX-\bU)=v]\leq \frac{N}{k}\|v\|_2^2. \]
    Therefore
    \[ \E[\berr_1^2]\leq \frac{1}{k}\E\|\bX-\bU\|_2^2=o(1), \]
    where the last equality follows from our choice of $k$ such that $\E\|\bX-\bU\|_2^2 = o(k)$.

    For $\berr_2$, the random variable $\bU$ is uniform and independent of $(\bY,\bU')$. For every fixed $v\in\R^N$,
    \[\E_{\bU}\langle \bU,v\rangle^2= \E_{\bU}\sum_{i,j}\bU_i\bU_jv_iv_j = \sum_i v_i^2 = \|v\|_2^2, \]
    since $\E[\bU_i\bU_j]=0$ for $i\neq j$. Conditioning on $\bY-\bU'$ therefore gives
    \[ \E[\berr_2^2] = \frac1N\E\|\bY-\bU'\|_2^2 = o(1). \]
    due to \eqref{eq:Y-U'-small}. Finally, using $(x+y)^2\leq 2x^2+2y^2$,
    \[ \E[\berr^2]\leq 2\E[\berr_1^2]+2\E[\berr_2^2]=o(1), \]
    which completes the proof.
\end{proof}

\subsubsection{Proof of \cref{thm:PFF-combined}}

We now complete the proof of the randomized two-way communication lower bound of~\Cref{thm:PFF-combined}.

\begin{proof}[Proof of the randomized lower bound in~\Cref{thm:PFF-combined}]
    Let $N=2^n$. For functions $f,g:\zo^n\to\pmone$, write $F,G\in\bools$ for their truth tables. Also write $\bchi_y\in\bools$ for the truth table of $\chi_y(x)=(-1)^{x\cdot y}$, and let $h=fg$ denote the pointwise product of the two functions.
    
    By Yao's principle~\cite{yao1983lower}, it suffices to show that all deterministic protocols using $c=o(N)$ bits of communication succeed with probability less than $2/3$ under a hard input distribution, for all sufficiently large $n$. Consider the distribution where Alice's and Bob's truth tables $\bF$ and $\bG$ are independent and uniform on $\bools$. Fix an arbitrary deterministic $c$-bit two-way communication protocol with $c=o(N)$. We will prove that on this distribution it succeeds with probability less than $2/3$, which implies the lower bound.
    
    The protocol has at most $2^c$ transcript leaves. For each nonempty leaf $\tau$, the inputs producing $\tau$ form a combinatorial rectangle
    \[ A_\tau\times B_\tau\subseteq\bools\times\bools. \]
    Let $\alpha_\tau:=|A_\tau|/2^N$ and $\beta_\tau:=|B_\tau|/2^N$, so the probability of transcript $\tau$ (on the uniform input distribution) is $\alpha_\tau\beta_\tau$. Starting from the uniform input distribution and conditioning on seeing a transcript leaf $\tau$, Alice's and Bob's input distributions are independent, and conditionally distributed as
    \[ \bF_\tau\sim\mathrm{Unif}(A_\tau),\qquad \bG_\tau\sim\mathrm{Unif}(B_\tau). \]
    Call a transcript $\tau$ \emph{rare} if $\alpha_\tau\beta_\tau<2^{-c}/N$. There are at most $2^c$ transcripts, so rare transcripts have total probability at most $1/N$. For every non-rare transcript, $\alpha_\tau\beta_\tau \geq 2^{-c}/N = 2^{-(c+n)}$, so
    \[ D_\infty(\bF_\tau)+D_\infty(\bG_\tau)=\log_2\frac1{\alpha_\tau\beta_\tau}\leq c+n. \]
    Since he is deterministic, Bob's output on seeing transcript $\tau$ is determined by a function of his input $y_\tau:B_\tau\to\zo^n$. Define
    \[ V_\tau(G):=G\odot\bchi_{y_\tau(G)}, \]
    where $\odot$ denotes the entrywise product. The random variables $\bF_\tau$ and $V_\tau(\bG_\tau)$ are independent, since $V_\tau(\bG_\tau)$ is a function of $\bG_\tau$. We next show that applying this function increases the min-entropy deficiency by at most $n$. That is,
    \[ D_\infty(V_\tau(\bG_\tau))\leq D_\infty(\bG_\tau)+n. \]
    Fix $v\in\bools$. If $V_\tau(G)=v$, then $G=v\odot\bchi_{y_\tau(G)}$. Each possible output label in the range of $y_\tau:B_\tau \to \zo^n$ determines at most one such $G$, and there are only $2^n=N$ output labels. Thus $v$ has at most $N$ pre-images under $V_\tau$. Since $\bG_\tau$ is uniform on $B_\tau$,
    \[ \max_v\Pr[V_\tau(\bG_\tau)=v]\leq\frac{N}{|B_\tau|}. \]
    By the definition of min-entropy deficiency, this gives
    \[ D_\infty(V_\tau(\bG_\tau))\leq\log_2\frac{2^N N}{|B_\tau|}=D_\infty(\bG_\tau)+n. \]
    Consequently, every non-rare transcript satisfies
    \[ D_\infty(\bF_\tau)+D_\infty(V_\tau(\bG_\tau))\leq c+2n=o(N). \]
    Let $a=\sqrt{2/3}$, and let $\mathcal{W}$ be the event that Bob's output $y$ satisfies the ``window condition'':
    \[ a\leq\sqrt N\,|\wh h(y)|\leq3a. \]
    For inputs $F,G$ producing transcript $\tau$, the Fourier coefficient selected by Bob can be written as
    \[ \wh h(y_\tau(G))=\frac1{N}\sum_{x\in\zo^n}f(x)g(x)\chi_{y_\tau(G)}(x)=\frac{\langle F,V_\tau(G)\rangle}{N}. \]
    Therefore
    \[ \Pr[\mathcal{W}\mid\tau]=\Pr\left[a\leq\frac{|\langle\bF_\tau,V_\tau(\bG_\tau)\rangle|}{\sqrt N}\leq3a\right]. \]
    
    Choose a non-rare transcript $\tau_*$ maximizing $\Pr[\mathcal{W}\mid\tau]$ among non-rare transcripts. Since rare transcripts have total probability at most $1/N$,
    \[ \Pr[\mathcal{W}]\leq\frac1N+\Pr[\mathcal{W}\mid\tau_*]. \]
    The independent pair $\bF_{\tau_*},V_{\tau_*}(\bG_{\tau_*})$ has combined min-entropy deficiency at most $c+2n=o(N)$. Applying \Cref{lem:inner-prod-decomp}, we obtain a decomposition
    \[ \bT:=\frac{\langle\bF_{\tau_*},V_{\tau_*}(\bG_{\tau_*})\rangle}{\sqrt N}=\bA+\bS+\berr, \]
    where $\bA$ and $\bS$ are independent, $\bS$ is distributed as $N^{-1/2}\sum_{i=1}^N\boldsymbol{\xi}_i$ for independent uniform signs $\boldsymbol{\xi}_i\in\pmone$, and $\E[\berr^2]=o(1)$. \Cref{lem:independent-noise}, applied with $s=N$, therefore gives, for all sufficiently large $n$,
    \[ \Pr[\mathcal{W}\mid\tau_*]=\Pr[a\leq|\bT|\leq3a]\leq p_*<\frac23. \]
    It follows that $\Pr[\mathcal{W}]\leq p_*+1/N$.
    
    Finally, we relate the window condition to the Pelagic condition. Under our hard input distribution, $\bF$ and $\bG$ are independent uniform truth tables, and hence $\bF\odot \bG$ is itself a uniform truth table. We may therefore apply \Cref{lem:magnitude-to-fourier-mass} in the special case with $s=N$ and $\bW=\bU=\bF\odot\bG$. So, except with probability $o(1)$ over the inputs, every valid $\PFF_n$ output satisfies the window condition. Hence
    \[ \Pr[\text{the protocol succeeds}]\leq\Pr[\mathcal{W}]+o(1)\leq p_*+\frac1N+o(1)<\frac23 \]
    for all sufficiently large $n$. This proves the randomized two-way communication lower bound.
\end{proof}

\section{Affine Pelagic Fourier Fishing} \label{sec:APFF}
Our second problem, which we call \emph{Affine Pelagic Fourier Fishing}, builds on the idea of our first problem by adding structure to Bob's input that makes the problem efficiently verifiable while still preserving classical hardness, because the structure is hidden to Alice. This motif is inspired by the Hidden Matching problem~\cite{BJK08-comm}, though the nature of the structure we introduce is quite different. \emph{Affine Pelagic Fourier Fishing} attains an optimal $n$ versus $\Theta(2^n)$ quantum--classical separation in the one-way communication setting. The key modification is that we partition the input space $\zo^n$ of Alice's function into cosets of a linear subspace, breaking the problem into many smaller problem instances on affine subspaces, where we ask for a correct answer to any one of them. 

A key ingredient in our analysis is a line of work on the expected maximum load of linear hash functions initiated by Alon et al.~\cite{alon1999linear} and recently sharpened and generalized by Jaber, Kumar, and Zuckerman~\cite{jaber25}. We use the latter result, stated here as \Cref{thm:jkz-linear-hashing}, by interpreting it as a bound on how heavily an arbitrary fixed set can intersect the cosets of a random subspace in expectation.

We establish the definitions of the Fourier-analytic notation we use in Section~\ref{sub:fourier-analysis}.
\begin{tcolorbox}[problembox]
\begin{problem}[Affine Pelagic Fourier Fishing, $\APFF_n$]
\label{problem:AFM}
Let $N = 2^n$, and for $n \geq 4$, set $\ell = \lceil 2\log_2 n\rceil$ and $K = 2^\ell$, so that $\ell \leq n$ and $n^2 \leq K < 2n^2$. \\

Alice receives a function $f: \zo^n \to \pmone$ given as a truth table of length $N$. Bob receives an $\ell$-dimensional subspace $V \leq \zo^n$, specified by a publicly fixed canonical basis $b_1(V), \dots, b_\ell(V)$. Bob must output a string $a \in \zo^n$, specifying a coset $U = V + a$, and a string $y \in \zo^\ell$, identified with the linear map in $V^*$ sending $b_i \mapsto y_i$. The output is correct if
\begin{equation}
    \sum_{z \in V^*:\, |\widehat{f}_U(z)| \leq |\widehat{f}_U(y)|} |\widehat{f}_U(z)|^2 \geq \frac{1}{8},
    \qquad
    \sum_{z \in V^*:\, |\widehat{f}_U(z)| \geq |\widehat{f}_U(y)|} |\widehat{f}_U(z)|^2 \geq \frac{1}{8}.
\end{equation}
\end{problem}
\end{tcolorbox}

We pick $\ell = \lceil 2\log_2n\rceil$ such that the coset size is $K = 2^\ell = \Theta(n^2)$ as a convenient scale that simultaneously allows for efficient verification, while also being large enough for the proof of the lower bound to go through.

\begin{theorem}[Restatement of \Cref{thm:intro-one-way}]
\label{thm:APFF-combined}
For all $n \geq 4$, $\APFF_n$ admits an $n$-qubit quantum one-way communication protocol that succeeds with probability at least $3/4$ on every input. Meanwhile, for all sufficiently large $n$, every randomized public-coin one-way communication protocol for $\APFF_n$ with success probability at least $2/3$ on average when Alice's truth table and Bob's subspace are sampled independently and uniformly at random communicates $\Omega(2^n)$ bits. In particular,
\[ \Qoneway(\APFF_n) \leq n \qquad \textrm{and} \qquad \Roneway(\APFF_n) = \Theta(2^n).\]
Furthermore, answers to $\APFF_n$ are efficiently verifiable. Specifically, there exists a deterministic algorithm that, given a basis of $V$, a candidate answer $(a,y)$, and query access to $f$, decides whether $(a,y)$ is a valid answer using $O(n^2)$ queries to $f$ and $\poly(n)$ time.
\end{theorem}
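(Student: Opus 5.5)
The proof has three parts: the quantum protocol, efficient verification, and the classical lower bound. The last is where the work lies.

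\textbf{Quantum protocol.} Alice sends $\ket{\psi_f}$ exactly as in the $\PFF_n$ protocol. Bob completes $b_1(V),\dots,b_\ell(V)$ to a basis of $\F_2^n$ by adding vectors $c_1,\dots,c_{n-\ell}$. He applies the invertible linear map $x\mapsto(\text{coordinates of }x)$ as a CNOT circuit, so that $\ket{a+\sum_i v_ib_i}\mapsto\ket{a'}\ket{v}$, where $a'$ labels the coset. He then measures the coset register in the computational basis, applies $H^{\otimes \ell}$ to the $V$ register, and measures $y$. Conditioned on coset $U$, the $V$-register is the phase state of $f|_U$, so $y$ is sampled with probability $|\wh f_U(y)|^2$. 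The $\PFF$ argument (Parseval on the coset plus the mass-$1/8$ bounds on each failure set) then gives success probability at least $3/4$ for every $U$, hence overall.

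\textbf{Verification.} The verifier queries $f$ on the $K<2n^2$ points of $U$, computes all $\wh f_U(z)$ by a fast Walsh--Hadamard transform in $O(K\ell)$ time, and checks the two sums directly. Together with the trivial $N$-bit protocol for the upper bound, this settles everything except the lower bound.

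\textbf{Lower bound.} By Yao's principle, fix a deterministic one-way protocol with $c=o(N)$ bits and uniformly random $\bF$ and $\bV$. The steps are as follows.

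First, discard rare messages: those whose set $A_m$ has $|A_m|/2^N<2^{-c}/N$. For every remaining message $m$, $\bF_m\sim\mathrm{Unif}(A_m)$ has $D_\infty(\bF_m)\le c+n$.

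Second, Bob's output $(a,y)$ is a function of $(m,V)$, and $\bV$ is independent of $\bF_m$. The restriction $\bF_m|_U$ to the chosen coset has length $K$, and I need its deficiency to be $o(K)$. Min-entropy deficiency does not restrict well, so I would pass to a decomposition. Apply \Cref{lem:entropy-budget-conditioning} to the coupled uniform $\bU$ from \Cref{lem:coupling-by-entropy}, or more simply use a coordinate-wise chain-rule argument, to write $\bF_m=\bU+(\bF_m-\bU)$ with $\E\|\bF_m-\bU\|_2^2=O(\sqrt{Nc})=o(N)$. The discrepancy lives on a "bad" set $D$ of coordinates with $|D|=o(N)$, after truncating by Markov's inequality. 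The main obstacle is that this set is random and may correlate with Bob's choice of coset.

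Third, handle that obstacle. For each fixed realization of the discrepancy, $D$ is independent of $\bV$, so \Cref{cor:jkz-subspace-intersection} gives
\[
\E_{\bV}\max_a \frac{|D\cap(a+\bV)|}{K}\le 32\left(\frac{|D|}{N}+\frac{n}{K}\right)=o(1),
\]
using $K\ge n^2$. So even the worst coset carries a vanishing fraction of discrepancy, $\frac1K\|(\bF_m-\bU)|_U\|^2=o(1)$. Getting this step right requires care in several places. It needs a bounded-entries discrepancy set rather than a squared norm. It needs the conditional entropy deficiency of the message to split into per-coordinate biases, via the coupling's $\Pr[\bW_i\ne\bU_i]=\E|\bp_i-1/2|$. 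And it needs the fact that $\bU|_U$ stays uniform, since the coupling's $\bU$ is independent of $\bV$ and the coset is chosen from $(m,V)$ alone. I would attempt a per-coordinate Markov bound to place the coupling disagreements into a fixed set $D$ with expected size $o(N)$.

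Fourth, conclude. With $\bW=\bF_m|_U$ and a uniform $\bU|_U$ at distance $o(K)$, \Cref{lem:magnitude-to-fourier-mass} with $s=K\to\infty$ shows that valid answers satisfy the window condition $a\le\sqrt K|\wh f_U(y)|\le 3a$ with probability $1-o(1)$. Then $\sqrt K\,\wh f_U(y)$ equals $\langle \bU|_U,\bchi_y\rangle/\sqrt K$, which is a normalized sum of $K$ uniform signs, plus an error with $\E[\berr^2]=o(1)$. Here one must ensure $y$ does not depend on $\bU|_U$. I would handle this as in \Cref{lem:inner-prod-decomp}, absorbing the dependence through the at most $K$ choices of $y$, which costs only $\ell$ bits of deficiency. \Cref{lem:independent-noise} with $s=K$ then bounds the success probability by $p_*+o(1)<2/3$.
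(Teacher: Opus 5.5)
You have the quantum protocol, the verifier, and the reduction from the Pelagic condition to the window condition right, and they essentially match the paper. That reduction is your per-message coupling, then \Cref{cor:jkz-subspace-intersection} applied conditionally on the disagreement set, then \Cref{lem:magnitude-to-fourier-mass} on the selected coset with $s=K$. The gap is in the final step, where you bound the probability of the window event. You write $\sqrt K\,\wh f_U(y)=\langle\bU|_U,\bchi_y\rangle/\sqrt K+\berr$ with $\E[\berr^2]=o(1)$, where necessarily $\berr=\langle(\bF_m-\bU)|_U,\bchi_y\rangle/\sqrt K$. Nothing you have established gives this. From $\E\|(\bF_m-\bU)|_U\|_2^2=o(K)$, Cauchy--Schwarz only yields $\E[\berr^2]=o(K)$, a factor of $K$ too weak.

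The claim is in fact false, because Bob chooses $(U,y)$ knowing $m$, and $m$ tells him the conditional mean and correlations of $\bF_m-\bU$. Suppose Alice sends $f$ on a fixed set $R$ of $c=N/\log\log n$ points. Your coupling makes $\bU$ independent of $\bF$ on $R$. The chosen coset contains about $cK/N$ points of $R$ whose values Bob knows. By picking the best of the $K$ characters, he can make $\frac1{\sqrt K}\sum_{z\in U\cap R}f(z)\chi_y(z-a)$ of order $\sqrt{(c/N)\ln K}$, which is unbounded since $\ln K=\Theta(\log n)$. All of this sits inside your $\berr$.

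Your proposed fix of ``absorbing the dependence through the $K$ choices of $y$ at a cost of $\ell$ bits'' addresses a non-issue. Here $y$ is a function of $(m,V)$, and $\bU$ is already independent of $\bV$ given $m$, so $\langle\bU|_U,\bchi_y\rangle/\sqrt K$ is exactly a normalized sum of $K$ signs. The real trouble is the correlation between $\bF_m-\bU$ and $\bchi_y$ that is mediated by $m$.

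What is missing is a decomposition with a shift term. You need $\bA+\bS+\berr$ with $\bA$ possibly large but independent of $\bS$; \Cref{lem:independent-noise} allows this, and the two-sided window is designed to tolerate it. This is \Cref{lem:apff-decomposition}, and it carries most of the work. The paper proceeds as follows.
\begin{itemize}
\item It reveals a fixed set $I$ of $k=\lceil\sqrt{N(c+1)}\rceil$ coordinates via \Cref{lem:entropy-budget-conditioning}, so that given $\bZ=(\bM,\bF_I)$ one has $\E\sum_{z\neq z'}\Cov(\bF_z,\bF_{z'}\mid\bZ)^2\le 2Nc/k$.
\item It couples $\bF\mid\bZ$ with a uniform $\bG$, and sets $\bA=\langle\E[\bF\mid\bZ],\varphi_{\bM}(\bV)\rangle$ and $\bS=\langle\bG,\varphi_{\bM}(\bV)\rangle$.
\item It writes $\E[\berr^2]=\E\sum_{z,z'}g_{\bM}(z,z')\,\E[\be_z\be_{z'}\mid\bZ]$. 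The diagonal terms are bounded by \Cref{lem:selected-coset-disagreement}, i.e.\ the linear-hashing bound you already use.
\item The off-diagonal terms are bounded by Cauchy--Schwarz against $\sum_{z\neq z'}g_m(z,z')^2\le 3/N$. This is \Cref{lem:coset-character-correlations}, proved by counting $|\bV\cap\bV'|$ for independent random subspaces.
\end{itemize}

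Subtracting only the per-message mean would not suffice. Take a message that reveals $\bF$ on $c$ coordinates up to one hidden global sign. It has zero conditional mean, but its covariances have squares summing to about $c^2$. The off-diagonal bound then becomes $\Theta(c/\sqrt N)$, which is unbounded once $c\gg\sqrt N$. Revealing the sign coordinate is what moves this structure into $\bA$.
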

Essentially, the verification algorithm for $\APFF_n$ queries $f$ on all of $U = V+a$, using $K < 2n^2$ queries. It then computes all $K$ coefficients $\wh{f}_U(z)$ for $z\in V^*$ and checks whether the two inequalities in the definition of $\APFF_n$ hold. All the Fourier coefficients on $V$ can be computed from the queried values of $f$ by applying the Walsh--Hadamard transform in $\tilde{O}(n^2)$ time.

\subsection{Quantum upper bound}

We now show that $\APFF_n$ admits a quantum one-way communication protocol using $n = \log_2 N$ qubits that succeeds with probability at least $3/4$ on all inputs. Additionally, we observe that Bob is efficient in this protocol, requiring only $O(n^2/\log n)$ Clifford gates. Alice prepares and sends the $n$-qubit binary phase state
\[ \ket{\psi_f} = \frac{1}{\sqrt{N}}\sum_{x \in \zo^n} f(x)\ket{x}.\]
Bob is given a basis $b_1, \dots, b_\ell$ of $V$, which he extends to a basis $b_1,\dots, b_n$ of $\zo^n$. Let $M\colon \zo^\ell \times \zo^{n-\ell} \to \zo^n$ be the invertible linear map
\[
    M(c, d) = \sum_{i=1}^{\ell} c_i b_i + \sum_{j=1}^{n-\ell} d_j b_{\ell+j}.
\]
Bob applies the permutation unitary $\Pi_{M^{-1}}$, which maps $\ket{x} \mapsto \ket{M^{-1}(x)}$, to $\ket{\psi_f}$, obtaining
\[
    \frac{1}{\sqrt{N}} \sum_{d \in \zo^{n-\ell}} \sum_{c \in \zo^\ell} f(M(c,d))\ket{c}\ket{d}.
\]
Note that $\Pi_{M^{-1}}$ can be implemented with $O(n^2/\log n)$ CNOT gates~\cite{patel08-synthesis}. Bob then measures the second register, which holds $n-\ell$ qubits. Each outcome $d\in \zo^{n-\ell}$ occurs with probability $K/N = 1/2^{n-\ell}$, and it identifies the coset $U = V + a$ with $a = M(0, d)$. The post-measurement state of the first register is
\[
    \frac{1}{\sqrt{K}} \sum_{c \in \zo^\ell} f\Big(a + \sum_{i=1}^{\ell} c_i b_i\Big)\ket{c}.
\]
Finally, Bob applies $H^{\otimes \ell}$ to the first register. Writing each $y \in V^*$ as the string $(y(b_1), \dots, y(b_\ell)) \in \zo^\ell$, we have $\chi_y\big(\sum_i c_i b_i\big) = (-1)^{c \cdot y}$, so the resulting state is
\[
    \sum_{y \in V^*} \Big(\frac{1}{K} \sum_{c \in \zo^\ell}
    f\Big(a + \sum_{i=1}^{\ell} c_i b_i\Big)(-1)^{c \cdot y}\Big)\ket{y}
    = \sum_{y \in V^*} \Big(\frac{1}{K} \sum_{v \in V} f(v + a)\chi_y(v)\Big)\ket{y}
    = \sum_{y \in V^*} \widehat{f}_U(y)\ket{y}.
\]
Measuring in the computational basis gives $y$ with probability $|\widehat{f}_U(y)|^2$, and Bob outputs $a$ and $y$. Since $f$ takes values in $\pmone$, Parseval's identity gives that, for every coset $U$ of $V$,
\[
    \sum_{y \in V^*} |\widehat{f}_U(y)|^2 = 1.
\]
By the same argument as for $\PFF_n$, the measured $y$ is a valid answer for $U$ with
probability at least $3/4$, for whichever coset $U$ was measured.

\subsection{Randomized one-way lower bound}
\label{subsec:apff-lb}

In this section we prove the randomized one-way communication lower bound in \Cref{thm:APFF-combined}. We first briefly restate some information from~\Cref{sub:fourier-analysis} and define some new notation that we will use.\\

Let $N=2^n$, $\ell=\lceil2\log_2 n\rceil$, and $K=2^\ell=\Theta(n^2)$. This choice ensures that $(n-\ell)/K = o(1)$ and $K^2/N = o(1)$, which will be necessary in the proof. We identify $\zo^n$ with $\F_2^n$ and use addition modulo $2$ throughout. Coordinates of length-$N$ vectors will sometimes be indexed by $\F_2^n$, using a fixed identification with $[N]$.

An $\ell$-dimensional subspace $V\leq\F_2^n$ has $N/K$ cosets, each of size $K$. Write $V^*$ for its dual space and $\chi_y(v)=(-1)^{y(v)}$ for $y\in V^*$. For a Boolean function $f:\F_2^n\to\pmone$ and a coset $U=a+V$ with a chosen representative $a$, recall that the Fourier coefficient of $f$ on $U$ at $y$ is
\[\widehat f_U(y):=\frac1K\sum_{v\in V}f(a+v)\chi_y(v).\]
Using the same representative, define the \emph{coset character vector} $\varphi_{U,y}\in\R^N$ by
\begin{equation}\label{eq:coset-character-vec}(\varphi_{U,y})_z:=\begin{cases}\displaystyle \frac1{\sqrt K}\chi_y(z-a), & z\in U,\\[6pt]0, & z\notin U.\end{cases}\end{equation}
In particular, $|(\varphi_{U,y})_z|=K^{-1/2}\mathbbm{1}[z\in U]$ for every $z\in\F_2^n$. If $F\in \pmone^N$ is the truth table of a function $f\colon \zo^n \to \pmone$, then
\[\|\varphi_{U,y}\|_2=1\qquad\textrm{and}\qquad \langle F,\varphi_{U,y}\rangle=\sqrt K\,\widehat f_U(y).\]

Fix an integer $c$. For each possible message $m\in\zo^c$, suppose Bob's output after receiving a subspace $V$ as input is specified by a coset representative $a_m(V)\in\F_2^n$ and a character $y_m(V)\in V^*$. Write
\[U_m(V):=a_m(V)+V,\qquad \varphi_m(V):=\varphi_{U_m(V),y_m(V)}.\]
Thus, for each fixed message $m$, the functions $a_m(\cdot)$ and $y_m(\cdot)$ describe Bob's output as a function of his input subspace $V$. Throughout this section, $\bV$ denotes a uniformly random $\ell$-dimensional subspace of $\F_2^n$, and for each $m$ and $z,z'\in\F_2^n$ we write
\begin{equation} \label{eq:apff-gmzz}
g_m(z,z'):=\E_{\bV}[\varphi_m(\bV)_z\varphi_m(\bV)_{z'}]\end{equation}
for the correlation between coordinates $z$ and $z'$ of the selected coset character vector $\varphi_m(V)$. The lemmas below hold for any collection of output functions $a_m(\cdot),y_m(\cdot)$.

\subsubsection{Statement of \cref{lem:coset-character-correlations,lem:selected-coset-disagreement,lem:apff-decomposition}}

First, we state a lemma which bounds the sum of all $g_m(z,z')^2$ for distinct $z\neq z'$.

\begin{lemma}\label{lem:coset-character-correlations}
    For every $m\in\zo^c$ and all sufficiently large $n$,
    \[\sum_{z\neq z'}g_m(z,z')^2\leq\frac3N.\]
\end{lemma}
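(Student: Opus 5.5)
The approach is to expand the square as an average over two independent copies of the random subspace, and then use sign cancellation in the inner product of two coset character vectors. Let $\bV,\bV'$ be independent uniformly random $\ell$-dimensional subspaces, and write $\varphi=\varphi_m(\bV)$, $\varphi'=\varphi_m(\bV')$, $U=U_m(\bV)$, $U'=U_m(\bV')$. By \eqref{eq:apff-gmzz},
\[
\sum_{z\neq z'}g_m(z,z')^2=\E_{\bV,\bV'}\Big[\sum_{z\neq z'}\varphi_z\varphi_{z'}\varphi'_z\varphi'_{z'}\Big]=\E_{\bV,\bV'}\Big[\langle\varphi,\varphi'\rangle^2-\sum_z\varphi_z^2\varphi_z'^2\Big].
\]
The naive bound $|g_m(z,z')|\le K^{-1}\Pr[z-z'\in\bV]$ only gives $O(K/N)$, so the cancellation of signs has to be used. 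The point is that the inner sum is supported on pairs $z,z'\in U\cap U'$.

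\textbf{Step 1: structure of $U\cap U'$.} For fixed subspaces $V,V'$, the set $U\cap U'$ is either empty or a coset of $W:=V\cap V'$. Write $j=\dim W$ and $X(V,V'):=\langle\varphi,\varphi'\rangle^2-\sum_z\varphi_z^2\varphi_z'^2$.
\begin{itemize}
\item If $U\cap U'=\varnothing$, then $X=0$.
\item If $j=0$, then $U\cap U'$ is a single point $z_0$. Both terms of $X$ equal $\varphi_{z_0}^2\varphi_{z_0}'^2=K^{-2}$, so $X=0$ exactly. This is where the diagonal subtraction is essential.
\item If $j\ge1$, then $\langle\varphi,\varphi'\rangle=K^{-1}\sum_{z\in U\cap U'}\pm1$. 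Hence $X\le\langle\varphi,\varphi'\rangle^2\le|W|^2/K^2$.
\end{itemize}
Since this holds for whatever representatives and characters Bob chooses, we get
\[
\sum_{z\neq z'}g_m(z,z')^2\le K^{-2}\,\E\big[|\bV\cap\bV'|^2\,\mathbbm{1}[\bV\cap\bV'\neq\{0\}]\big]\le K^{-2}\,\E\big[|\bV\cap\bV'|^2-1\big].
\]

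\textbf{Step 2: second-moment computation.} Write $|\bV\cap\bV'|^2-1$ as a count of ordered pairs $(w,w')\in(\bV\cap\bV')^2$ that are not both zero. For a fixed nonzero $w$, $\Pr[w\in\bV]=(K-1)/(N-1)$, and $\bV,\bV'$ are independent. There are three kinds of pairs:
\begin{itemize}
\item pairs with exactly one coordinate zero contribute $2(N-1)\big(\tfrac{K-1}{N-1}\big)^2$;
\item pairs with $w=w'\neq0$ contribute $(N-1)\big(\tfrac{K-1}{N-1}\big)^2$;
\item pairs of distinct nonzero $w,w'$ satisfy $\Pr[w,w'\in\bV]=\frac{(K-1)(K-2)}{(N-1)(N-2)}$, so they contribute at most $(N-1)(N-2)\big(\tfrac{(K-1)(K-2)}{(N-1)(N-2)}\big)^2\le K^4/N^2$.
\end{itemize}
Dividing by $K^2$ gives
\[
\sum_{z\ne z'}g_m(z,z')^2\le\frac{3(K-1)^2}{K^2(N-1)}+\frac{K^2}{N^2}.
\]
Since $K^2/N=o(1)$ and $K=\Theta(n^2)$, the second term is $o(1/N)$. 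More precisely, $\frac{3(K-1)^2}{K^2(N-1)}\le\frac3N(1-\frac1K)$ for large $n$, and the slack of order $\frac{3}{KN}$ absorbs $K^2/N^2$ because $K^3=o(N)$. This yields the claimed bound $3/N$ for all sufficiently large $n$.

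\textbf{Main obstacle.} The step I expect to need the most care is Step 1: one must check that the bound holds for every choice of Bob's output functions $a_m(\cdot),y_m(\cdot)$. In particular, whether $U\cap U'$ is empty depends on $a_m$, which we cannot control. So the argument must rely only on the facts that $X$ vanishes identically when $\dim(V\cap V')=0$, and is bounded by $|V\cap V'|^2/K^2$ otherwise. Both are purely structural and independent of $a_m$ and $y_m$. The remaining work is the routine counting in Step 2.
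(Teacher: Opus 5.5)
\textbf{There is a genuine error in Step 1.} The last inequality there,
\[
K^{-2}\,\E\bigl[|\bV\cap\bV'|^2\,\mathbbm{1}[\bV\cap\bV'\neq\{0\}]\bigr]\le K^{-2}\,\E\bigl[|\bV\cap\bV'|^2-1\bigr],
\]
is false. Write $W=\bV\cap\bV'$. The two integrands agree when $W=\{0\}$. Whenever $W\neq\{0\}$, however, the left one is $|W|^2$ and the right one is $|W|^2-1$. In expectation the left side exceeds the right side by exactly $\Pr[W\neq\{0\}]\ge\frac{(K-1)^2}{N-1}-\frac{K^4}{2N^2}$. After dividing by $K^2$, this discrepancy is $(1-o(1))/N$, the same order as the target.

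In fact, with $T=|W|-1$, your valid intermediate bound equals $K^{-2}\E\bigl[T^2+2T+\mathbbm{1}[T\ge1]\bigr]=(4+o(1))/N$. So the argument as written only proves $\sum_{z\neq z'}g_m(z,z')^2\le(4+o(1))/N$. That would still suffice for the Cauchy--Schwarz steps in the proof of \Cref{lem:apff-decomposition}, but it does not give the stated constant $3/N$.

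\textbf{The fix is to keep the diagonal term.} The loss comes from discarding it when $\dim W\ge1$. If $U\cap U'\neq\varnothing$, it is a coset of $W$, so $\sum_z\varphi_z^2\varphi_z'^2=|W|/K^2$ and hence $X\le(|W|^2-|W|)/K^2$. The same bound holds when $U\cap U'=\varnothing$ or $|W|=1$, since then $X=0$. Therefore
\[
\sum_{z\neq z'}g_m(z,z')^2\le K^{-2}\E\bigl[|W|(|W|-1)\bigr],
\]
which is exactly the quantity the paper reaches. Your Step 2 count, with the $w=w'\neq0$ class removed, then gives
\[
\frac{2(K-1)^2}{K^2(N-1)}+\frac{K^2}{N^2}\le\frac2N+\frac{K^2}{N^2}\le\frac3N
\]
for large $n$.

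\textbf{No sign cancellation is needed.} Contrary to your remark, the paper simply applies the triangle inequality termwise to the same two-copy expansion. This gives $g_m(z,z')^2\le K^{-2}\Pr\bigl[z,z'\in U(\bV)\cap U(\bV')\bigr]$; the paper then sums over ordered pairs of distinct points and uses $|U(\bV)\cap U(\bV')|\le|\bV\cap\bV'|$. Your own Step 1 likewise only uses $|\langle\varphi,\varphi'\rangle|\le|W|/K$. What actually matters is excluding the diagonal, which the restriction $z\neq z'$ already does.
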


We defer the proof of this lemma until the end of this section. Next, we prove a consequence of \Cref{cor:jkz-subspace-intersection} which shows that two globally close truth tables remain close, on average, even on a coset selected as a function of Bob's input. This step is where the results of~\cite{alon1999linear, jaber25} on the expected maximum load of linear hash functions get used for our lower bound.

\begin{lemma}\label{lem:selected-coset-disagreement}
Let $(\bX,\bY,\bM)$ be jointly distributed random variables with $\bX,\bY\in\pmone^N$ and $\bM\in\zo^c$, and suppose that $(\bX,\bY,\bM)$ is independent of $\bV$, a uniformly random $\ell$-dimensional subspace. Then
\[\E\left[\frac1K\sum_{z\in U_{\bM}(\bV)}(\bX_z-\bY_z)^2\right]\leq32\frac{\E\|\bX-\bY\|_2^2}{N}+128\frac{n-\ell}{K}.\]
In particular, if $\E\|\bX-\bY\|_2^2=o(N)$, then
\[\E\left[\frac1K\sum_{z\in U_{\bM}(\bV)}(\bX_z-\bY_z)^2\right]=o(1).\]
\end{lemma}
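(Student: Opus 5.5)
We will reduce to \Cref{cor:jkz-subspace-intersection} by conditioning on $(\bX,\bY,\bM)$ and splitting disagreement into a large-set part and a sparse part. Since $(\bX,\bY,\bM)$ is independent of $\bV$, we fix a realization $(X,Y,m)$ and bound $\E_{\bV}\bigl[\frac1K\sum_{z\in U_m(\bV)}(X_z-Y_z)^2\bigr]$. Here $(X_z-Y_z)^2=4\cdot\mathbbm{1}[z\in D]$, where $D:=\{z: X_z\neq Y_z\}$ is a fixed set once we condition. The selected coset $U_m(\bV)$ is some coset $a_m(\bV)+\bV$ of $\bV$, so we can bound it by the worst coset:
\[
\frac1K\sum_{z\in U_m(\bV)}(X_z-Y_z)^2=\frac{4|D\cap U_m(\bV)|}{K}\leq 4\max_{a\in\F_2^n}\frac{|D\cap(a+\bV)|}{K}.
\]
This step is where Bob's adaptive choice of coset is absorbed. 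Because $D$ is fixed (independent of $\bV$), \Cref{cor:jkz-subspace-intersection} applies. It gives the conditional bound $4\cdot32\bigl(|D|/N+(n-\ell)/K\bigr)$.

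Next, we use $|D|=\|X-Y\|_2^2/4$. Then $4\cdot32|D|/N=32\|X-Y\|_2^2/N$, and the second term is $128(n-\ell)/K$. Averaging over $(\bX,\bY,\bM)$, via the tower property and independence from $\bV$, yields exactly the stated inequality.

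For the ``in particular'' clause, the first term is $o(1)$ by hypothesis. The second term is $o(1)$ because $K\geq n^2$ gives $(n-\ell)/K\leq 1/n$. One point to check is that \Cref{cor:jkz-subspace-intersection} requires $1\le\ell<n$, which holds for $n\ge 4$ and large $n$. The main potential obstacle is the legitimacy of applying the corollary with $D$ random: it depends on $(\bX,\bY)$ but not on $\bV$, which is why we condition first. There is also no dependence issue from $\bM$, since $a_m(\cdot)$ is a deterministic function of $V$ that is dominated by the maximum over all cosets.
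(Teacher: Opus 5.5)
Your proof is correct and is essentially the paper's argument: you condition on $(\bX,\bY,\bM)$ so the disagreement set $D$ is fixed, dominate the selected coset by the worst coset of $\bV$, apply \Cref{cor:jkz-subspace-intersection}, and average using $|D|=\|\bX-\bY\|_2^2/4$. Two cosmetic points: the ``large-set/sparse'' splitting announced in your first sentence never appears and isn't needed, and $\ell<n$ actually fails for $n\in\{4,5,6\}$, although in those cases $\bV=\F_2^n$ is the only coset and the bound holds trivially.
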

\begin{proof}
Condition on $\bX,\bY,\bM$, and write $\bM=m$. Then the \emph{disagreement set} of $\bX$ and $\bY$,
\[
D:=\{z\in\F_2^n:\bX_z\neq\bY_z\},
\]
is fixed, while $\bV$ remains a uniformly random $\ell$-dimensional subspace. Since $U_m(V)$ is one of the cosets of $V$ for every $V$, \Cref{cor:jkz-subspace-intersection} gives
\[
\E_{\bV}\left[\frac{|D\cap U_m(\bV)|}{K}\right]\leq\E_{\bV}\left[\max_a\frac{|D\cap(a+\bV)|}{K}\right]\leq32\left(\frac{|D|}{N}+\frac{n-\ell}{K}\right).
\]
Since $(\bX_z-\bY_z)^2=4\mathbbm{1}[z\in D]$, for every realization of $\bV$,
\[
\frac1K\sum_{z\in U_m(\bV)}(\bX_z-\bY_z)^2=\frac{4|D\cap U_m(\bV)|}{K}.
\]
Averaging the preceding bound also over $(\bX,\bY,\bM)$ therefore yields
\[
\E\left[\frac1K\sum_{z\in U_{\bM}(\bV)}(\bX_z-\bY_z)^2\right]\leq128\left(\frac{\E|D|}{N}+\frac{n-\ell}{K}\right)=32\frac{\E\|\bX-\bY\|_2^2}{N}+128\frac{n-\ell}{K}.
\]
The final claim follows from $K=\Theta(n^2)$, so $(n-\ell)/K=o(1)$.
\end{proof}

Next, we state the key technical lemma of this section. This lemma is analogous to \Cref{lem:inner-prod-decomp} for Pelagic Fourier Fishing, but requires much more work to prove due to the affine setting.

\begin{lemma}[Coset character inner-product decomposition]\label{lem:apff-decomposition}
    Let $\bF\sim\mathrm{Unif}(\pmone^N)$ and let $\bM$ be a $\zo^c$-valued random variable jointly distributed with $\bF$, where $c=o(N)$. Suppose that $(\bF,\bM)$ is independent of $\bV$. Then
    \[\langle\bF,\varphi_{\bM}(\bV)\rangle=\bA+\bS+\berr,\]
    where $\bA$ and $\bS$ are independent, $\bS$ is distributed as $\frac1{\sqrt K}\sum_{i=1}^K\bxi_i$
    for independent uniform signs $\bxi_i\in\pmone$, and $\E[\berr^2]=o(1)$.
\end{lemma}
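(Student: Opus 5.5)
The plan is to follow the template of \Cref{lem:inner-prod-decomp}, with three changes. Bob's input $\bV$ is now independent of everything Alice knows. The selected vector $\varphi_{\bM}(\bV)$ is supported on a single coset of size $K$. And the $o(N)$ closeness bounds must be transferred to that coset using \Cref{lem:selected-coset-disagreement} and \Cref{lem:coset-character-correlations}. Since $c=o(N)$, we have $H(\bF\mid\bM)\ge N-c$.

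First, apply \Cref{lem:entropy-budget-conditioning} with $\bU=\bF$, $\bW=\bM$ and $k=\lceil\sqrt{cN}\,\rceil+1$. This $k$ satisfies $c/k=o(1)$ and $k=o(N)$. It yields a fixed set $I$ with $|I|\le k$ such that $\bZ=(\bM,\bF_I)$ satisfies $\E\sum_{z\ne z'}\Cov(\bF_z,\bF_{z'}\mid\bZ)^2\le 2Nc/k$ and $H(\bF\mid\bZ)\ge N-c-k=N-o(N)$. I then take
\[
\bA:=\langle \E[\bF\mid\bZ],\varphi_{\bM}(\bV)\rangle,
\]
which is a function of $(\bZ,\bV)$. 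It remains to show that, conditioned on $(\bZ,\bV)$, the fluctuation $\langle\bF-\E[\bF\mid\bZ],\varphi_{\bM}(\bV)\rangle$ is $\bS+\berr$ with $\bS$ an exact normalized sum of $K$ independent uniform signs.

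To build $\bS$, I apply \Cref{lem:coupling-by-entropy} separately for each value of $\bZ$. This couples $\bF\mid\bZ$ with $\bU'$ that is uniform given $\bZ$. Since $\bV$ is independent of $(\bF,\bM)$, I can also take $\bU'$ uniform given $(\bZ,\bV)$. Averaging via concavity, as in \Cref{lem:coupling-after-reveal}, gives $\E\|\bF-\bU'\|_2^2\le 4\sqrt{N(c+k)}=o(N)$. Set $\bS:=\langle\bU',\varphi_{\bM}(\bV)\rangle$. Because $\varphi_{\bM}(\bV)$ has exactly $K$ entries equal to $\pm K^{-1/2}$ and $\bU'$ is uniform given $(\bZ,\bV)$, $\bS$ has exactly the required law and is independent of $(\bZ,\bV)$, hence of $\bA$.

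The error is then $\berr=\langle \bG-\E[\bG\mid\bZ],\varphi_{\bM}(\bV)\rangle$, where $\bG:=\bF-\bU'$ and we use $\E[\bU'\mid\bZ]=0$. Its second moment, averaged over the independent $\bV$, equals
\[
\E\sum_{z,z'}\Cov(\bG_z,\bG_{z'}\mid\bZ)\,g_{\bM}(z,z').
\]
I split this into diagonal and off-diagonal terms.

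The diagonal part is at most $\E\frac1K\sum_{z\in U_{\bM}(\bV)}\bG_z^2$. This is $o(1)$ by \Cref{lem:selected-coset-disagreement}, applied with $\bX=\bF$ and $\bY=\bU'$, which is legitimate because $(\bF,\bU',\bM)$ is independent of $\bV$.

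For the off-diagonal part, Cauchy--Schwarz together with \Cref{lem:coset-character-correlations} gives a bound of
\[
\Bigl(\E\sum_{z\ne z'}\Cov(\bG_z,\bG_{z'}\mid\bZ)^2\Bigr)^{1/2}\sqrt{3/N}.
\]
So it suffices to show $\sum_{z\ne z'}\Cov(\bG_z,\bG_{z'}\mid\bZ)^2=o(N)$ in expectation.

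This is the step I expect to be the main obstacle. Expanding the covariance of $\bG$ gives $\Cov(\bF)$ minus the two cross-covariances plus $\Cov(\bU')$, whose off-diagonal entries vanish. The $\Cov(\bF\mid\bZ)$ piece contributes $2Nc/k=o(N)$ by the choice of $k$. The cross terms $\Cov(\bF_z,\bU'_{z'}\mid\bZ)$ are not obviously controlled. I would first try to exploit the sequential structure of the coupling in \Cref{lem:coupling-by-entropy}: coordinate $\bU'_{z'}$ depends on $\bF$ only through $\bF_{z'}$ and the conditional bias $\bp_{z'}$ computed from earlier coordinates. That should let me bound the cross-covariances by a combination of $\Cov(\bF_z,\bF_{z'}\mid\bZ)$ and $\E|\bp_{z'}-1/2|$.

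If that fails, I would use a cruder route. Replace the coordinatewise covariance expansion by $\Cov(\bG_z,\bG_{z'})^2\le$ products of conditional variances restricted to disagreement coordinates. Then use that the disagreement set has size $o(N)$ on average, so its pairs contribute $o(N)\cdot o(N)/N$ after weighting by $g_{\bM}$. Here the bound $|g_m(z,z')|\le\Pr[z,z'\in U_m(\bV)]/K$ and the averaging of pair collisions over the random $\bV$ (with $K^2/N=o(1)$) would be used.
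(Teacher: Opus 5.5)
Everything up to the off-diagonal term matches the paper's proof step for step. You use the same entropy-budget conditioning at scale $k\approx\sqrt{cN}$ and the same $\bA$ and $\bS$ (your $\bU'$ is the paper's $\bG$). The expansion of $\E[\berr^2]$ against $g_{\bM}(z,z')$ and the diagonal bound via \Cref{lem:selected-coset-disagreement} are also the paper's. So is the reduction via \Cref{lem:coset-character-correlations} to showing $\E\sum_{z\neq z'}\Cov(\bG_z,\bG_{z'}\mid\bZ)^2=o(N)$.

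The gap is exactly the step you flag: you never bound the cross terms $\Cov(\bF_z,\bU'_{z'}\mid\bZ)$, and neither suggested route does it. There are $N^2$ ordered pairs but only $o(N)$ total is allowed, so a per-pair bound that does not decay in $z$ costs a factor $N$ when summed over $z$. This is what happens with the sequential-coupling idea. Writing $\E[\bU'_{z'}\mid\bF,\bZ]=\bF_{z'}+\mathbf{r}_{z'}$ with $|\mathbf{r}_{z'}|=O(|\bp_{z'}-1/2|)$ gives $\Cov(\bF_z,\bU'_{z'}\mid\bZ)=\Cov(\bF_z,\bF_{z'}\mid\bZ)+\Cov(\bF_z,\mathbf{r}_{z'}\mid\bZ)$. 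But bounding the second term by $\E[\mathbf{r}_{z'}^2]^{1/2}$ and summing over all $z$ yields $O(N(c+k))$, not $o(N)$.

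The fallback is worse. Bounding $\Cov(\bG_z,\bG_{z'}\mid\bZ)^2$ by a product of conditional second moments gives a total of order $\Delta^2$, where $\Delta=\E\|\bF-\bU'\|_2^2$. The only available bound $\Delta\le4\sqrt{N(c+k)}$ is of order at least $N^{3/4}$, so this is far larger than $N$. Weighting instead by $|g_m(z,z')|\le\Pr[z,z'\in U_m(\bV)]/K$ discards the sign cancellation in $g_m$. It leaves a quantity of order $K(\Delta/N)^2$, which the available estimates make $o(1)$ only when $c$ is polylogarithmically smaller than $N$. That is exactly the loss this lemma is meant to avoid.

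The missing idea is a Bessel inequality along each row. For $z'\neq z$, uniformity of $\bU'$ given $\bZ$ gives $\E[\bU'_{z'}\mid\bZ]=\E[\bU'_z\bU'_{z'}\mid\bZ]=0$, hence
\[
\Cov(\bF_z,\bU'_{z'}\mid\bZ)=\E[\bF_z\bU'_{z'}\mid\bZ]=\E\bigl[(\bF_z-\bU'_z)\bU'_{z'}\bigm|\bZ\bigr].
\]
Conditional on $\bZ$, the family $\{\bU'_{z'}\}_{z'\neq z}$ is orthonormal in $L^2$, so
\[
\sum_{z'\neq z}\Cov(\bF_z,\bU'_{z'}\mid\bZ)^2\le\E\bigl[(\bF_z-\bU'_z)^2\bigm|\bZ\bigr].
\]
Summing over $z$ and averaging over $\bZ$ bounds the cross-term total by $\Delta=o(N)$, and the transposed term is identical. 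The paper proves exactly this, phrased as Cauchy--Schwarz between $\bF_z-\bU'_z$ and $\sum_{z'\neq z}B_{z,z'}\bU'_{z'}$, where $B_{z,z'}=\E[\bF_z\bU'_{z'}\mid\bZ]$. With it, your Cauchy--Schwarz step gives an off-diagonal contribution of $O(\sqrt{c/k}+\sqrt{\Delta/N})=o(1)$, and the argument closes.
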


\subsubsection{Proof of \cref{thm:APFF-combined}}

We defer the proof of the key technical lemma stated above and now prove the randomized one-way communication lower bound, assuming \Cref{lem:apff-decomposition} for now.

\begin{proof}[Proof of the randomized lower bound in \Cref{thm:APFF-combined}]
By Yao's principle~\cite{yao1983lower}, it suffices to show that all deterministic protocols using $c=o(N)$ bits of communication succeed with probability less than $2/3$ under a hard input distribution, for all sufficiently large $n$. 

For every $\ell$-dimensional subspace $V\leq\F_2^n$, use the publicly fixed canonical basis $b_1(V),\dots,b_\ell(V) \in \F_2^n$. Consider the input distribution where Alice's truth table $\bF$ is uniformly random on $\pmone^N$ and Bob independently receives a uniformly random $\ell$-dimensional subspace $\bV\leq\F_2^n$, presented by the basis $b_1(\bV),\dots,b_\ell(\bV)$. Now fix an arbitrary deterministic $c$-bit one-way communication protocol with $c=o(N)$. We will show that on this distribution it succeeds with probability less than $2/3$, which implies the lower bound.

Let $\msg:\pmone^N\to\zo^c$ denote Alice's message function, and write $\bM:=\msg(\bF)$. Upon receiving message $m$ and subspace $V$, Bob outputs a coset representative $a_m(V)\in\F_2^n$ and a string $r_m(V)\in\zo^\ell$. Using the fixed ordered basis $b_1(V),\ldots,b_\ell(V)$ of $V$, we identify $r_m(V) \in \zo^\ell$ with the character $y_m(V)\in V^*$ defined by
\[y_m(V)(b_i(V))=(r_m(V))_i\qquad\textrm{for each }i\in[\ell].\]
We then use the notation $U_m(V)$ and $\varphi_m(V)$ introduced at the beginning of the section for the corresponding selected coset and coset character vector.

Since $(\bF,\bM)$ is independent of $\bV$, \Cref{lem:apff-decomposition} gives that
\[ \langle\bF,\varphi_{\bM}(\bV)\rangle=\bA+\bS+\berr,\]
where $\bA$ and $\bS$ are independent, $\bS$ is distributed as a normalized sum of $K$ independent uniform signs, and $\E[\berr^2]=o(1)$. \Cref{lem:independent-noise}, applied with $s=K$, yields
\begin{equation}\label{eq:apff-window-bound}
    \Pr[a\leq|\langle\bF,\varphi_{\bM}(\bV)\rangle|\leq3a]\leq p_*
\end{equation}
for all sufficiently large $n$, where $a=\sqrt{2/3}$ and $p_*<2/3$ are the constants from Fact~\ref{fact:gaussian}.

\medskip
Next, we relate this window condition to the Pelagic condition. Since $\bM$ is a deterministic function of $\bF$ and $H(\bM)\leq c$,
\[H(\bF\mid\bM)=N-H(\bM)\geq N-c.\]
Now construct a separate coupling by applying \Cref{lem:coupling-by-entropy} to $\bF\mid(\bM=m)$ for each $m$ in the support of $\bM$. This preserves the law of $(\bF,\bM)$ and gives a random variable $\bG$, with $(\bF,\bM,\bG)$ independent of $\bV$, such that
\[\bG\mid(\bM=m)\sim\mathrm{Unif}(\pmone^N)\]
and
\[
\E\bigl[\|\bF-\bG\|_2^2\bigm|\bM=m\bigr]\leq4\sqrt{N\bigl(N-H(\bF\mid\bM=m)\bigr)}\]
for every $m$ in the support of $\bM$. Averaging this over all $m$ and applying Jensen's inequality, we get
\begin{equation}\label{eq:apff-main-FG} \E\|\bF-\bG\|_2^2\leq4\sqrt{N\bigl(N-H(\bF\mid\bM)\bigr)}\leq4\sqrt{Nc}=o(N).\end{equation}

Since $(\bF,\bM,\bG)$ is independent of $\bV$, \Cref{lem:selected-coset-disagreement} together with~\eqref{eq:apff-main-FG} gives
\begin{equation}\label{eq:apff-main-jkz-bound}
\E\left[\frac1K\sum_{z\in U_{\bM}(\bV)}(\bF_z-\bG_z)^2\right]\leq32\frac{\E\|\bF-\bG\|_2^2}{N}+128\frac{n-\ell}{K}=o(1).
\end{equation}
\medskip

For each message $m$ and subspace $V$, define the affine bijection
\[\psi_{m,V}:\F_2^\ell\to U_m(V),\qquad \psi_{m,V}(t):=a_m(V)+\sum_{i=1}^{\ell}t_i b_i(V).\]
Thus $\psi_{m,V}$ simply uses the fixed basis of $V$ to label the $K$ points of the coset $U_m(V)$ by strings $t\in\F_2^\ell$.

Using the random pair $(\bM,\bV)$, let $\bF',\bG'\in\pmone^K$ denote the restrictions of $\bF,\bG$ to Bob's selected coset under this labeling:
\[\bF'_t:=\bF_{\psi_{\bM,\bV}(t)},\qquad \bG'_t:=\bG_{\psi_{\bM,\bV}(t)}\qquad(t\in\F_2^\ell).\]
Conditional on $(\bM,\bV)$, the map $\psi_{\bM,\bV}$ is fixed, while $\bG$ is uniform on $\pmone^N$. Hence $\bG'\sim\mathrm{Unif}(\pmone^K)$. Moreover,~\eqref{eq:apff-main-jkz-bound} gives
\[\frac1K\E\|\bF'-\bG'\|_2^2=o(1)\]
since $\bF_z - \bG_z$ for $z\in U_m(V)$ are exactly the terms summed over in $\| \bF' - \bG'\|_2^2$ by the bijection.

Let $f':\F_2^\ell\to\pmone$ be the Boolean function with truth table $\bF'$. For any $r\in\F_2^\ell$, let $y_r\in\bV^*$ be the character determined by
\[y_r(b_i(\bV))=r_i\qquad\textrm{for each }i\in[\ell].\]
Then, by the definition of $\psi_{\bM,\bV}$,
\[
\widehat f'(r)
=\frac1K\sum_{t\in\F_2^\ell}\bF'_t(-1)^{t\cdot r}
=\widehat f_{U_{\bM}(\bV)}(y_r).
\]
In particular, for Bob's actual output string $r_{\bM}(\bV)$,
\[\widehat f'\bigl(r_{\bM}(\bV)\bigr)=\widehat f_{U_{\bM}(\bV)}\bigl(y_{\bM}(\bV)\bigr),\]
and therefore
\[\left|\langle\bF,\varphi_{\bM}(\bV)\rangle\right|
=\sqrt K\,\left|\widehat f'\bigl(r_{\bM}(\bV)\bigr)\right|.\]

We may now apply \Cref{lem:magnitude-to-fourier-mass} to the coupling $(\bF',\bG')$ with $s=K$. Except with probability $o(1)$, every string $r\in\F_2^\ell$ satisfying the Pelagic condition on the selected coset obeys
\[\frac{a}{\sqrt K}\leq|\widehat f'(r)|\leq\frac{3a}{\sqrt K}.\]
Hence, except with probability $o(1)$, if Bob's output is valid then
\[a\leq\left|\langle\bF,\varphi_{\bM}(\bV)\rangle\right|\leq3a.\]
Combining this with the window bound in \eqref{eq:apff-window-bound} proved above,
\[
\Pr[\text{the protocol succeeds}]
\leq
\Pr\!\left[a\leq\left|\langle\bF,\varphi_{\bM}(\bV)\rangle\right|\leq3a\right]+o(1)
\leq p_*+o(1)<\frac23
\]
for all sufficiently large $n$. This proves the claimed randomized one-way communication lower bound.
\end{proof}

\subsubsection{Proof of the key technical lemma}

We now prove the key technical step, \Cref{lem:apff-decomposition}, assuming \Cref{lem:coset-character-correlations}, whose proof is deferred until the end of this subsection. Recall that \cref{lem:coset-character-correlations} shows that the inner product $\langle \bF, \varphi_{\bM}(\bV)\rangle$ admits a decomposition into $\bA + \bS + \berr$ where $\bS$ is approximately Gaussian, $\bA$ and $\bS$ are independent, and $\E[\berr^2] = o(1)$.

\begin{proof}[Proof of \Cref{lem:apff-decomposition}]
Set an integer parameter
\[k:=\Bigl\lceil\sqrt{N(c+1)}\,\Bigr\rceil.\]
This choice of $k$ is asymptotically at an intermediate scale between $c$ and $N$: we have that $c=o(k)$ and $k=o(N)$.

We first use \Cref{lem:entropy-budget-conditioning} to reveal a fixed set of at most $k$ coordinates of $\bF$ in order to remove most of its pairwise correlations. Since $\bM$ takes at most $2^c$ values, $H(\bF\mid\bM)\geq H(\bF)-H(\bM)\geq N-c$. Applying \Cref{lem:entropy-budget-conditioning} to $\bF$ with $\bM$ jointly distributed with it gives a fixed set $I\subseteq\F_2^n$ with $|I|\leq k$ such that, writing $\bZ:=(\bM,\bF_I)$ where $\bF_I = (\bF_i)_{i\in I}$,
\begin{equation}\label{eq:apff-cov}\E\sum_{z\neq z'}\Cov(\bF_z,\bF_{z'}\mid\bZ)^2\leq\frac{2Nc}{k}\qquad\textrm{and}\qquad H(\bF\mid\bZ)\geq N-c-k.\end{equation}
Note that $\bM$ is determined by $\bZ$. We will eventually use these properties to bound $\E[\berr^2]$.

\medskip
Next, for every fixed $Z$ in the support of $\bZ$, we apply the coupling construction of \Cref{lem:coupling-by-entropy} to $\bF\mid(\bZ=Z)$ to obtain a coupling between $\bF\mid(\bZ=Z)$ and $\bG\sim\mathrm{Unif}(\pmone^N)$ such that
\[\E\bigl[\|\bF-\bG\|_2^2\bigm|\bZ=Z\bigr]\leq4\sqrt{N\bigl(N-H(\bF\mid\bZ=Z)\bigr)}.\]
We construct this coupling using only $(\bF,\bZ)$ and auxiliary randomness, so $(\bF,\bZ,\bG)$ remains independent of $\bV$. Additionally, we have that for every $Z$ in the support of $\bZ$, 
\begin{equation}\label{eq:apff-uniformity}\bG\mid(\bZ=Z)\sim\mathrm{Unif}(\pmone^N).\end{equation}
Averaging the conditional bound over $\bZ$, applying Jensen's inequality, and using $H(\bF\mid\bZ)=\E_{\bZ}\,H(\bF\mid\bZ=\bZ)$, we obtain that
\[\E\|\bF-\bG\|_2^2\leq4\sqrt{N\bigl(N-H(\bF\mid\bZ)\bigr)}.\]
By~\eqref{eq:apff-cov}, we then have that $N-H(\bF\mid\bZ)\leq c+k=o(N)$, hence
\begin{equation}\label{eq:apff-Delta}\Delta:=\E\|\bF-\bG\|_2^2=o(N).\end{equation}
Note that we have obtained $\bF=\bG+(\bF-\bG)$, where $\bG$ is perfectly uniform even conditional on $\bZ$, and $\bF-\bG$ has small expected squared norm.

Since $(\bF,\bZ,\bG)$, and hence $(\bF,\bG,\bM)$, is independent of $\bV$, \Cref{lem:selected-coset-disagreement} together with~\eqref{eq:apff-Delta} gives
\begin{equation}\label{eq:apff-eta}
\eta:=\E\left[\frac1K\sum_{z\in U_{\bM}(\bV)}(\bF_z-\bG_z)^2\right]\leq32\frac{\Delta}{N}+128\frac{n-\ell}{K}=o(1).
\end{equation}

Now we consider the random variable $\bF - \bG$. Recall from \eqref{eq:apff-Delta} that $\Delta =\E\| \bF - \bG\|_2^2 = o(N)$. Define
\[\mu(\bZ):=\E[\bF\mid\bZ]\qquad\textrm{and}\qquad\be:=(\bF-\bG) -\mu(\bZ).\]
By the fact that $\bG \mid (\bZ = Z)$ is a uniform random variable \eqref{eq:apff-uniformity}, $\E[\bG\mid\bZ] = 0$. So $\mu(\bZ) = \E[\bF \mid \bZ] = \E[\bF-\bG \mid \bZ]$. Therefore, $\be = (\bF - \bG) - \E[\bF-\bG \mid \bZ]$. 

\medskip
Now let us define $\bA$, $\bS$, and $\berr$, the random variables we get from the decomposition:
\begin{equation}\label{eq:apff-decomp-def}\bA:=\langle\mu(\bZ),\varphi_{\bM}(\bV)\rangle,\qquad\bS:=\langle\bG,\varphi_{\bM}(\bV)\rangle,\qquad\berr:=\langle\be,\varphi_{\bM}(\bV)\rangle.\end{equation}
One can verify that
\[\bA+\bS+\berr=\langle\bF,\varphi_{\bM}(\bV)\rangle\]
using that $\bF=\mu(\bZ)+\bG+\be$. Observe that if we condition on $\bZ=(\bM,\bF_I)$ and $\bV$, then $\bA$ is completely determined, where $\varphi_{\bM}(\bV)$ is a vector with exactly $K$ nonzero entries, each equal to $\pm1/\sqrt K$. Meanwhile, $\bG$ is distributed uniformly over $\pmone^N$ due to~\eqref{eq:apff-uniformity}. Therefore $\sqrt K\,\bS$ is a sum of $K$ independent uniform signs, and $\bS$ is therefore distributed as $K^{-1/2}\sum_{i=1}^K\bxi_i$ for independent uniform signs $\bxi_i\in\pmone$. Since this conditional law does not depend on the values of $\bZ$ and $\bV$, $\bS$ is independent of $(\bZ,\bV)$, and hence independent of $\bA$ as well. Thus we have shown two of the conditions of our lemma.
\medskip

Finally, we must show that $\E[\berr^2]=o(1)$. Expanding the square in the definition of $\berr$ gives
\[\berr^2= \langle \be, \varphi_{\bM}(\bV)\rangle^2 =\sum_{z,z'}\varphi_{\bM}(\bV)_z\,\varphi_{\bM}(\bV)_{z'}\,\be_z\be_{z'},\]
where the sum ranges over all ordered pairs $(z,z')\in\F_2^n\times\F_2^n$. We compute the conditional expectation of each term given $\bZ$. Conditional on $\bZ$, the message $\bM=m$ is fixed, so the factor $\varphi_m(\bV)_z\varphi_m(\bV)_{z'}$ depends only on $\bV$. $\be=\bF-\bG-\mu(\bZ)$ is a function of $(\bF,\bZ,\bG)$, so the factor $\be_z\be_{z'}$ depends only on $(\bF,\bG)$ after conditioning on $\bZ$. 
Conditional on $\bZ$, the subspace $\bV$ remains uniform and is independent of $(\bF,\bG)$. The conditional expectation therefore factors:
\[\E\bigl[\varphi_{\bM}(\bV)_z\,\varphi_{\bM}(\bV)_{z'}\,\be_z\be_{z'}\bigm|\bZ\bigr]=\E\bigl[\varphi_{\bM}(\bV)_z\,\varphi_{\bM}(\bV)_{z'}\bigm|\bZ\bigr]\,\E\bigl[\be_z\be_{z'}\bigm|\bZ\bigr]=g_{\bM}(z,z')\,\E\bigl[\be_z\be_{z'}\bigm|\bZ\bigr].\]
Summing over $(z,z') \in \F_2^n \times \F_2^n$ and averaging over $\bZ$ gives
\begin{equation}\label{eq:apff-err-expansion}\E[\berr^2]=\E_{\bZ}\sum_{z,z' \in \F_2^n}g_{\bM}(z,z')\,\E[\be_z\be_{z'}\mid\bZ].\end{equation}
We now bound the diagonal and off-diagonal terms of the double summation in~\eqref{eq:apff-err-expansion} separately.
\medskip

\textbf{Diagonals.} For the diagonal terms ($z = z'$), note that $\be_z=(\bF_z-\bG_z)-\E[\bF_z-\bG_z\mid\bZ]$ is the conditionally centered version of $\bF_z-\bG_z$. So
\begin{equation}\label{eq:apff-e_z-variance}
\E[\be_z^2\mid\bZ]=\Var(\bF_z-\bG_z\mid\bZ)\leq\E[(\bF_z-\bG_z)^2\mid\bZ].\end{equation}
For a fixed $m$, since $|\varphi_m(V)_z|=K^{-1/2}\mathbbm{1}[z\in U_m(V)]$, the diagonal's weights in \eqref{eq:apff-err-expansion} are
\begin{equation} \label{eq:apff-gmzz-diag}
    g_m(z,z)=\E_{\bV}\bigl[\varphi_m(\bV)_z^2\bigr]=\frac1K\Pr_{\bV}[z\in U_m(\bV)]\geq0.
\end{equation}
Because these weights are nonnegative, we may apply~\eqref{eq:apff-e_z-variance} term-by-term to the diagonal terms in~\eqref{eq:apff-err-expansion}. Thus the diagonal contribution is at most
\begin{align} \notag
&\E_{\bZ}\sum_{z\in\F_2^n}g_{\bM}(z,z)\,\E[(\bF_z-\bG_z)^2\mid\bZ]\\ \notag
&\qquad=\E_{\bZ,\bV}\left[\frac1K\sum_{z\in\F_2^n}\mathbbm{1}[z\in U_{\bM}(\bV)]\,\E[(\bF_z-\bG_z)^2\mid\bZ]\right]\\
&\qquad=\E\left[\frac1K\sum_{z\in U_{\bM}(\bV)}(\bF_z-\bG_z)^2\right]
=\eta=o(1). \label{eq:apff-diag-bound}
\end{align}
The first equality uses \eqref{eq:apff-gmzz-diag}. For the second, conditioned on $\bZ$, the random subspace $\bV$ is independent of $(\bF,\bG)$. Thus the conditional expected squared error at each coordinate can be averaged over the independently selected coset, giving exactly the expected squared error on that coset. The final equality is~\eqref{eq:apff-eta}.
\medskip

\textbf{Off-Diagonals.}
For $z\neq z'$, write $C_{z,z'}:=\Cov(\bF_z,\bF_{z'}\mid\bZ)$ and $B_{z,z'}:=\E[\bF_z\bG_{z'}\mid\bZ]$. Recall that $\mu(\bZ) = \E[\bF \mid \bZ]$. Since $\bG\mid\bZ$ is uniform (and hence $\E[\bG \mid \bZ] = 0$), expanding $\be_z=(\bF_z-\mu_z(\bZ))-\bG_z$ gives
\[\E[\be_z\be_{z'}\mid\bZ]=C_{z,z'}-B_{z,z'}-B_{z',z}.\]
By \Cref{lem:coset-character-correlations} and Cauchy--Schwarz, the $C$ contribution is at most
\[\left|\E_{\bZ}\sum_{z\neq z'}g_{\bM}(z,z')C_{z,z'}\right|\leq\sqrt{\frac3N}\left(\E_{\bZ}\sum_{z\neq z'}C_{z,z'}^2\right)^{1/2}\leq\sqrt{\frac{6c}{k}}=o(1),\]
where the second inequality uses~\eqref{eq:apff-cov}.

For the $B$ contributions, recall that for $z\neq z'$, $\E[\bG_z\bG_{z'}\mid\bZ]=0$, and hence
\[B_{z,z'}=\E[(\bF_z-\bG_z)\bG_{z'}\mid\bZ].\]
Fix $z$ and define
\[H_z:=\sum_{z':\,z'\neq z}B_{z,z'}\bG_{z'}.\]
Conditioned on $\bZ$, the coefficients $B_{z,z'}$ are fixed and the coordinates of $\bG$ are independent uniform signs. Therefore, $\E[H_z^2\mid\bZ]=\sum_{z':\,z'\neq z}B_{z,z'}^2$.
Moreover, by the definition of $B_{z,z'}$, $\E[(\bF_z-\bG_z)H_z\mid\bZ]=\sum_{z':\,z'\neq z}B_{z,z'}^2$.
Therefore, by conditional Cauchy--Schwarz,
\[\left(\sum_{z':\,z'\neq z}B_{z,z'}^2\right)^2=\E[(\bF_z-\bG_z)H_z\mid\bZ]^2\leq\E[(\bF_z-\bG_z)^2\mid\bZ]\,\E[H_z^2\mid\bZ]=\E[(\bF_z-\bG_z)^2\mid\bZ]\left(\sum_{z':\,z'\neq z}B_{z,z'}^2\right),\]
and hence
\[\sum_{z':\,z'\neq z}B_{z,z'}^2\leq\E[(\bF_z-\bG_z)^2\mid\bZ].\]
Summing over $z$ and averaging over $\bZ$,
\[\E_{\bZ}\sum_{\substack{z,z'\\z'\neq z}}B_{z,z'}^2\leq\E\|\bF-\bG\|_2^2=\Delta=o(N),\]
where the last equality and bound are from~\eqref{eq:apff-Delta}. 
By \Cref{lem:coset-character-correlations} and Cauchy--Schwarz, the $B$ contribution is at most
\[\left|\E_{\bZ}\sum_{z\neq z'}g_{\bM}(z,z')B_{z,z'}\right|\leq\sqrt{\frac3N}\left(\E_{\bZ}\sum_{z\neq z'}B_{z,z'}^2\right)^{1/2}\leq\sqrt{\frac{3\Delta}{N}}=o(1),\]
where the second inequality uses the preceding bound.
The same bound holds for the transposed term $B_{z',z}$, since summing over all ordered pairs $z\neq z'$ is unchanged by swapping $z$ and $z'$.

Combining these two mixed terms with the covariance contribution bounded above by $\sqrt{6c/k}$ and the diagonal contribution bounded by $\eta$, we obtain
\[\E[\berr^2]\leq\eta+\sqrt{\frac{6c}{k}}+2\sqrt{\frac{3\Delta}{N}}=o(1).\]
This completes the proof.
\end{proof}

\begin{proof}[Proof of \Cref{lem:coset-character-correlations}]
Fix $m\in\zo^c$, and abbreviate $U(V):=U_m(V)$ and $\varphi(V):=\varphi_m(V)$. Let $\bV'$ be an independent copy of $\bV$. For $z\neq z'$, independence gives
\[g_m(z,z')^2=\E_{\bV,\bV'}[\varphi(\bV)_z\varphi(\bV)_{z'}\varphi(\bV')_z\varphi(\bV')_{z'}].\]
Since $g_m(z,z')^2\geq0$, the triangle inequality gives
\[g_m(z,z')^2\leq\E_{\bV,\bV'}\bigl[\bigl|\varphi(\bV)_z\varphi(\bV)_{z'}\varphi(\bV')_z\varphi(\bV')_{z'}\bigr|\bigr].\]
Since $|\varphi(V)_z|=K^{-1/2}\mathbbm{1}[z\in U(V)]$ by~\eqref{eq:coset-character-vec}, the quantity inside the expectation is
\[\bigl|\varphi(\bV)_z\varphi(\bV)_{z'}\varphi(\bV')_z\varphi(\bV')_{z'}\bigr|=\frac{1}{K^2}\mathbbm{1}\bigl[z,z'\in U(\bV)\cap U(\bV')\bigr].\]
Therefore, summing over all ordered pairs $z\neq z'$,
\[\sum_{z\neq z'}g_m(z,z')^2\leq\frac{1}{K^2}\E_{\bV,\bV'}\left[\sum_{z\neq z'}\mathbbm{1}\bigl[z,z'\in U(\bV)\cap U(\bV')\bigr]\right]=\frac{1}{K^2}\E_{\bV,\bV'}\bigl[|U(\bV)\cap U(\bV')|\bigl(|U(\bV)\cap U(\bV')|-1\bigr)\bigr],\]
where the equality comes from counting the ordered pairs of distinct elements of $U(\bV)\cap U(\bV')$. The intersection of a coset of $\bV$ with a coset of $\bV'$ is either empty or a coset of $\bV\cap\bV'$, so $|U(\bV)\cap U(\bV')|\leq|\bV\cap\bV'|$. Since $t\mapsto t(t-1)$ is nondecreasing on the nonnegative integers,
\[\sum_{z\neq z'}g_m(z,z')^2\leq\frac{1}{K^2}\E_{\bV,\bV'}\bigl[|\bV\cap\bV'|\bigl(|\bV\cap\bV'|-1\bigr)\bigr].\]

We bound the expectation in this expression by conditioning on $\bV=V$, where $V\leq\F_2^n$ is any fixed $\ell$-dimensional subspace. Then $|V\cap\bV'|(|V\cap\bV'|-1)$ counts the ordered pairs $(z,z')$ of distinct points such that $z,z'\in V\cap\bV'$. Since $V$ is fixed, we may equivalently count ordered pairs of distinct points in $V$ and ask, for each pair, for the probability that both points also lie in $\bV'$.

First consider ordered pairs containing $0$. These are of the form $(0,z)$ or $(z,0)$ with $z\in V\setminus\{0\}$, so there are exactly $2(K-1)$ many. Since $0\in\bV'$ automatically, such a pair lies in $\bV'$ precisely when its nonzero point $z$ lies in $\bV'$. For every fixed nonzero $z$,
\[\Pr[z\in\bV']=\frac{K-1}{N-1}\leq\frac{K}{N}.\]

Now consider ordered pairs $(z,z')$ with $z,z'\in V\setminus\{0\}$ and $z\neq z'$. There are exactly $(K-1)(K-2)$ such pairs. Over $\F_2$, two distinct nonzero vectors are linearly independent. Hence, for any fixed such pair,
\[\Pr[z,z'\in\bV']=\frac{K-1}{N-1}\cdot\frac{K-2}{N-2}=\frac{(K-1)(K-2)}{(N-1)(N-2)}\leq\left(\frac{K}{N}\right)^2.\]
Therefore,
\[\E_{\bV'}\bigl[|V\cap\bV'|\bigl(|V\cap\bV'|-1\bigr)\bigr]\leq2(K-1)\frac{K}{N}+(K-1)(K-2)\left(\frac{K}{N}\right)^2\leq\frac{2K^2}{N}+\frac{K^4}{N^2}.\]
Finally,
\[\sum_{z\neq z'}g_m(z,z')^2\leq\frac{1}{K^2}\E_{\bV,\bV'}\bigl[|\bV\cap\bV'|\bigl(|\bV\cap\bV'|-1\bigr)\bigr]\leq\frac{2}{N}+\frac{K^2}{N^2}\leq\frac{3}{N},\]
where the last inequality holds for all sufficiently large $n$ since $K^2=o(N)$.
\end{proof}

\section{The Directional Derivative Problem}
\label{sec:DDP}

\cref{sec:PFF,sec:APFF} achieve the largest possible quantum--classical communication separation with unrestricted local computation. In this section, we ask how large of a separation can be achieved when the complexity of the quantum protocol is restricted. We answer this question through a family of total search problems called the \emph{Directional Derivative Problem}, denoted $\DDP_{n,d}$.

The first proofs we present are of the quantum and classical communication complexity of $\DDP_{n,d}$. Subsequently, we show that this result is optimal under two different regimes: (1) we introduce the Clifford hierarchy message model and show that $\DDP_{n,d}$ attains the largest possible separation for $n$-qubit level-$d$ messages. (2) we show that Alice’s phase state can be prepared using $O_d(n^d/\log n)$ one- and two-qubit gates and that this state-preparation complexity is asymptotically optimal. 

We begin by defining the communication problem that exhibits the optimal separation.
For a function $f:\F_2^n\to \F_2$ and a direction $a\in\F_2^n$, define the directional derivative of $f$ in direction $a$ by $D_a f(x) = f(x+a)+f(x)$. 
For integers $n,d\ge 1$, let $V_{n,d} \coloneqq \calP_{\le d}^0(\mathbb F_2^n)$ denote the set of zero-constant-term multilinear polynomials $q:\F_2^n\to\F_2$ of degree at most $d$. We may restrict to polynomials without a constant term without loss of generality, since adding a constant to $q$ won't change any derivative $D_aq(x)$. 

\begin{tcolorbox}[problembox]
\begin{problem}[Directional Derivative Problem, $\DDP_{n,d}$]
\label{problem:DDP}
Alice receives a polynomial $q\in V_{n,d}$, and Bob receives a nonzero direction $a\in \F_2^n\setminus\{0\}$. 
After receiving a message from Alice, Bob must output a pair $(x,b)\in \F_2^n\times \F_2$ such that $b = D_a q(x).$
\end{problem}
\end{tcolorbox}

Equivalently, $\DDP_{n,d}$ asks Bob to find any edge of the perfect matching $M_a = \bigl\{\{x,x+a\}: x\in\mathbb F_2^n\bigr\}$ together with the parity of $q$ on the two endpoints of this edge.
Thus, $\DDP_{n,d}$ is a restricted version of the Hidden Matching problem of Bar-Yossef, Jayram, and Kerenidis~\cite{BJK08-comm}. There are two restrictions. First, Alice's input is not arbitrary: it is constrained to be a degree-at-most-$d$ polynomial over $\F_2$. Second, Bob's matching is not arbitrary: it is one of the affine-shift matchings $M_a$, indexed by nonzero directions $a\in\F_2^n$. 

The main result of this section is the following:
\begin{theorem}[Restatement of \Cref{thm:intro-ddp}]\label{thm:ddp-separation} 
For every $d\ge 2$ and $n\ge d$, there is an exact quantum one-way communication protocol for $\DDP_{n,d}$ using an $n$-qubit level-$d$ message and a Clifford measurement. By contrast, under the uniform input distribution, every randomized classical one-way protocol for $\DDP_{n,d}$ with average success probability at least $2/3$ uses $\Omega_d(n^d)$ bits of communication. There is also an exact classical one-way protocol with $O_d(n^d)$ bits of communication. Moreover, Alice’s message can be prepared using $O^d(n^d/ \log n)$ one- and two-qubit gates, and this upper bound is asymptotically optimal up to a factor depending on $d$.
\end{theorem}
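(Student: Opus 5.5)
We will prove the four parts of \Cref{thm:ddp-separation} separately, in increasing order of difficulty.

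\textbf{Quantum protocol.} Alice sends $\ket{\psi_q}=2^{-n/2}\sum_x(-1)^{q(x)}\ket{x}$, which is $U_q\ket{+}^{\otimes n}$ with $U_q=\sum_x(-1)^{q(x)}\ketbra{x}{x}$. This diagonal unitary is a product of $Z$, $CZ$, and multi-controlled-$Z$ gates on at most $d$ qubits, so it lies in $\calC_d$. Bob is given $a\neq 0$. He picks a pivot $i$ with $a_i=1$ and applies CNOTs from qubit $i$ to every $j\neq i$ with $a_j=1$. This maps $\ket{x}$ and $\ket{x+a}$ to two strings that differ only in bit $i$. He then applies $H$ to qubit $i$ and measures. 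The measured pair of states $\{\ket{x},\ket{x+a}\}$ carries amplitudes $(-1)^{q(x)}$ and $(-1)^{q(x+a)}$, so the Hadamard outcome on qubit $i$ is deterministically $D_aq(x)$. From the other outcomes Bob recovers $x$ (or $x+a$, which is equally valid), so the protocol is exact.

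\textbf{Classical upper bound and gate counts.} For the classical protocol, Alice sends the $\dim V_{n,d}=\sum_{j=1}^d\binom nj=O_d(n^d)$ coefficients of $q$.

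For state preparation, we write $U_q=\prod_{S:\,c_S=1}C^{|S|-1}Z_S$. A direct synthesis costs $O_d(n^d)$ gates. To save the $\log n$ factor, we will use a Patel--Markov--Hayes style blocking~\cite{patel08-synthesis}. We partition $[n]$ into blocks of size about $\epsilon\log n$. For each choice of $d-1$ blocks, we compute the relevant products into ancillas and apply the phases, sharing work across all $2^{O(\log n)}$ patterns so that each gate handles $\Omega(\log n)$ monomials on average.

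For the matching lower bound, we count. There are at most $(O(n^2))^G$ circuits of $G$ gates from a finite gate set, while distinct $q\in V_{n,d}$ give orthogonal-ish distinguishable states; more precisely, distinct $q$ give distinct states. Hence $(Cn^2)^G\ge 2^{\dim V_{n,d}}$, which gives $G=\Omega_d(n^d/\log n)$.

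\textbf{Classical lower bound (main obstacle).} By Yao's principle we fix a deterministic $c$-bit message $m(q)$, with $q$ uniform on $V_{n,d}$ and $a$ uniform. For each message $m$ and direction $a$, Bob picks a point $x_{m,a}$. His success probability is $\tfrac12+\tfrac12\E_m|\mathrm{bias}|$, where the bias is that of the $\F_2$-linear functional $\lambda_{x,a}(q)=D_aq(x)$ conditioned on the message $m$.

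\begin{itemize}
\item \emph{Step 1.} By Markov's inequality, for a constant fraction of messages $m$ with $|q^{-1}(m)|\ge 2^{\dim V-c-O(1)}$, at least a constant fraction $\delta$ of directions $a$ have bias at least $\delta$.
\item \emph{Step 2.} We use the standard fact that for a uniform distribution on a set $A\subseteq\F_2^D$ of density $2^{-c}$, the number of linearly independent functionals with bias at least $\delta$ is $O(c/\delta^2)$. This follows from a Level-$k$/Chang-type inequality, or an entropy argument akin to \Cref{lem:cov-mutual-information} applied to a basis.
\item \emph{Step 3 (the key algebraic claim).} For any set $T$ of $\Omega(2^n)$ directions, the functionals $\{\lambda_{x_a,a}:a\in T\}$, with arbitrary points $x_a$, span a space of dimension $\Omega_d(n^d)$.
\end{itemize}

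Combining the three steps gives $c=\Omega_d(n^d)$.

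For Step 3 we first change coordinates: $D_aq(x)=\sum_S c_S\big(\prod_{i\in S}(x_i+a_i)+\prod_{i\in S}x_i\big)$. Translating $q$ by $x_a$ does not change degree, so the problem reduces to showing that the functionals $q\mapsto D_aq(x_a)$ are mostly independent. The natural approach is to restrict to the top-degree part. Modulo lower-degree coefficients, $\lambda_{x,a}$ contains the degree-$d$ coefficient vector $\big(\prod_{i\in S}(x_i+a_i)-\prod_{i\in S}x_i\big)_{|S|=d}$. These vectors are evaluations of degree-$d$ multilinear monomials at the points $x+a$ and $x$.

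The plan is to show that a subspace of codimension $r$ in $V_{n,d}^*$ can contain few such functionals. Dually, if all of them lie in the span of $R=o(n^d)$ functionals, then there is a subspace $W\le V_{n,d}$ of codimension $R$ on which every $q\in W$ satisfies $D_aq(x_a)=0$ for all $a\in T$. We will then argue that a random $q\in W$ has $D_aq$ nonconstant for most $a$, and use degree/Schwartz--Zippel-type counting of the pairs $(a,x)$ with $D_aq(x)=0$ to reach a contradiction when $|T|\ge\delta 2^n$.

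I expect this step to be the main difficulty. Because Bob chooses the points $x_a$ adversarially, the choice may correlate with $W$, so the counting must be uniform over all choices of $x_a$. This may require induction on $d$ via repeated derivatives $D_{a_1}\cdots D_{a_{d-1}}q$, each of which is affine.
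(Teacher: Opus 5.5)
Your Steps 1 and 2 match the paper's averaging over good messages and its \cref{lem:biased-tests}. Your Step 3 is precisely the paper's directional rank lemma (\cref{lem:directional-rank}). However, Step 3 carries the entire difficulty of the lower bound, and you only assert it; the route you sketch would not prove it. After dualizing, the hypothesis is that every $q$ in a subspace $W\le V_{n,d}$ of codimension $R=o(n^d)$ satisfies $q(x_a)=q(x_a+a)$ for all $a\in T$ \emph{simultaneously}. Counting zeros of $D_aq$ for a single random $q\in W$ cannot contradict this. A nonconstant $D_aq$ has degree at most $d-1$, so it vanishes on at least a $2^{-(d-1)}$ fraction of $\F_2^n$, and for any one polynomial a suitable point $x_a$ exists. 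The obstruction lies only in the joint condition over all of $W$, which is the rank statement itself. Using the joint condition locally does not suffice either. A Hilbert-function bound shows that every class of points $W$ cannot separate has size $2^{o(n)}$, but class sizes alone do not stop these classes from realizing $\delta 2^n$ distinct differences $a$. What is missing is a global mechanism that glues many edges $\{x_a,x_a+a\}$ into one large connected vertex set, together with a lower bound on the span of evaluation functionals over a large set.

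The paper supplies both, with $t=\lfloor n/2\rfloor-\ell$ for a constant $\ell$ depending on $\delta$. View the edges as a graph on $\F_2^n$; they are distinct, since $a$ is the sum of the endpoints. If some vertex has degree at least $2^t$, its star is a connected set of more than $2^t$ points. Otherwise, orient the edges so that in- and out-degrees differ by at most one, and apply a uniformly random linear map $L:\F_2^n\to\F_2^m$ with $m=t+1$. Distinct edges sharing a head or a tail collide with probability $2^{-m}$, and all other pairs with probability $2^{-2m}$. The degree bound then gives $\E Z_L\le|T|(3/2+|T|2^{-2m})$ for the collision count. Cauchy--Schwarz yields more than $2^{2m-1}+2^m$ distinct projected ordered edges on $2^m$ vertices, hence a connected component with more than $2^t$ vertices. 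Summing edge functionals along paths produces $\ev_{z_0}+\ev_z$ for every $z$ in the component. The Hilbert-function bound of Golovnev, Guo, Hatami, Nagargoje, and Yan then gives rank at least $\sum_{j=1}^{\min\{d,t\}}\binom tj$. Since $\Delta_{Lx,La}=T_L^*\Delta_{x,a}$ for the pullback $T_Lq=q\circ L$, this rank transfers back to the original functionals.

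There are two smaller points. First, your circuit-counting argument for the gate lower bound is valid and more direct than the paper's, which derives it from the communication lower bound. Your ancilla-based blocking sketch for the $O_d(n^d/\log n)$ upper bound, however, is too vague to verify. The paper instead writes $q=\sum_{|S|\le d-2}x_Sq_S$ with each $q_S$ quadratic, implements each phase unitary $W_{q_S}$ exactly by an $O(n^2/\log n)$-gate Aaronson--Gottesman Clifford circuit, and controls it on $x_S$ at $O_d(1)$ cost per gate. Second, $\calC_d$ is not closed under multiplication for $d\ge3$. Concluding that a product of multi-controlled $Z$ gates lies in $\calC_d$ therefore needs the (true) fact that diagonal level-$d$ gates form a group. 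The paper avoids this by proving $W_q\in\calC_d$ directly by induction from $W_qX^aZ^bW_q^\dagger=X^aZ^bW_{D_aq}$.
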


The rest of this section is devoted to proving \cref{thm:ddp-separation}. 
In \cref{lemma:ddp-quantum}, we prove the quantum upper bound. In \cref{thm:general-directional-lower-bound}, we prove the classical lower bound, and in \cref{cor:fixed-degree} we show that this is tight.  Finally, \cref{sec:ddp-gate-bound} gives the matching gate bounds. Together, this implies \cref{thm:ddp-separation}.

\subsection{Quantum upper bound}

We show that there is an exact one-way quantum protocol for $\DDP_{n,d}$ for all $n,d \ge 1$ using an $n$-qubit message. In particular, for $d=2$, Alice's message is a stabilizer state. The protocol follows from the Hidden Matching protocol~\cite{BJK08-comm}. We include the details for completeness.

\begin{lemma}
\label{lemma:ddp-quantum}
   For every $1 \le d \le n$, there is an exact quantum one-way communication protocol for the Directional Derivative Problem $(\DDP_{n,d})$ in which Alice's message is an $n$-qubit phase state and Bob's measurement is a Clifford circuit.  
\end{lemma}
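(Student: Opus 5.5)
The plan is to give the protocol explicitly and check that it is exact. Alice sends the phase state
\[
\ket{\psi_q}=2^{-n/2}\sum_{x\in\F_2^n}(-1)^{q(x)}\ket{x}.
\]
Bob, holding a nonzero $a$, applies a CNOT-only circuit implementing an invertible linear map $L$ on $\F_2^n$ with $La=e_1$. To build it, pick a pivot $i$ with $a_i=1$ and swap it to position $1$; then, for each $j\neq 1$ with $a_j=1$, apply CNOT with control $1$ and target $j$. Each CNOT acts on basis states as $\ket{x}\mapsto\ket{Ex}$ for an elementary matrix $E$. The composition sends $\ket{x}\mapsto\ket{Lx}$, and one checks that $La=e_1$, since the CNOTs clear every other coordinate of $a$. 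After the swap relabeling, this uses at most $n-1$ CNOTs. Next Bob applies $H$ to qubit $1$, measures all qubits in the computational basis, and obtains an outcome $(b,w)\in\F_2\times\F_2^{n-1}$.

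For correctness, I group the terms of $\ket{\psi_q}$ into the pairs $\{x,x+a\}$. After applying $L$, the pair $\{x,x+a\}$ becomes $\{Lx,Lx+e_1\}$, so writing $Lx=(0,w)$ the state is
\[
2^{-n/2}\sum_{w}\bigl((-1)^{q(x_w)}\ket{0}+(-1)^{q(x_w+a)}\ket{1}\bigr)\ket{w},
\]
where $x_w=L^{-1}(0,w)$. Applying $H$ to the first qubit gives the amplitude on $\ket{b}\ket{w}$:
\[
2^{-(n+1)/2}(-1)^{q(x_w)}\bigl(1+(-1)^{b+D_aq(x_w)}\bigr).
\]
This vanishes unless $b=D_aq(x_w)$. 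Bob therefore outputs $(x_w,b)$, which is correct with probability $1$. Equivalently, Bob has measured in the basis $(\ket{x}+(-1)^b\ket{x+a})/\sqrt2$.

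It remains to note the resource claims. The measurement circuit consists of CNOTs, (relabeling) swaps, and one Hadamard, so it is Clifford. The message is an $n$-qubit phase state of $q$, and it is exactly $\ket{+}^{\otimes n}$ acted on by the diagonal unitary $\ket{x}\mapsto(-1)^{q(x)}\ket{x}$.

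I do not expect a real obstacle; the only bookkeeping is verifying that $La=e_1$ and that the output $x_w$ is computed classically from $w$ via $L^{-1}$, both of which are immediate.
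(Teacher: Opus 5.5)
Your proposal is correct and matches the paper's proof. The paper also uses a fan-out of CNOTs controlled on a pivot coordinate $j$ with $a_j=1$ (it takes the first such $j$, so no swap or relabeling is needed), followed by a single Hadamard on that qubit, which realizes the measurement in the basis $(\ket{x}+(-1)^b\ket{x+a})/\sqrt2$; the same amplitude computation then shows that only the outcome with $b=D_aq(x)$ survives. The only difference is presentational: the paper defines the measurement basis first and then verifies the circuit, whereas you start from the circuit.
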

\begin{proof}
Alice sends the $n$-qubit phase state
\[
\ket{\psi_q} = 2^{-n/2}\sum_{x\in \F_2^n} (-1)^{q(x)}\ket{x}.
\]

Let us now state Bob's protocol and prove its correctness. 
Bob receives $\ket{\psi_q}$, together with the nonzero direction $a \in \F_2^n$. 
Let $j \in [n]$ be the first index such that $a_j = 1$. 
Let $R_a = \{x\in\F_2^n\mid x_j=0\}$, where $a$ implicitly defines $j$.
For each $x\in R_a$ and $b\in\F_2$, define
\[
\ket{\phi_{x}^{b}}
= \frac{\ket{x}+(-1)^b\ket{x+a}}{\sqrt 2}.
\]
Since $a_j=1$, every edge $\{x,x+a\}$ contains exactly one endpoint whose $j$-th coordinate is $0$. 
Thus the collection $\calB_a
= \left\{ \ket{\phi_x^b} : x\in R_a,\ b\in\F_2 \right\}$
is an orthonormal basis of the $2^n$-dimensional Hilbert space.

The protocol is as follows: Bob measures $\ket{\psi_q}$ in the basis $\calB_a$. Suppose the outcome is $\ket{\phi_{x}^{b}}$. 
Bob outputs $(x, b)$. 

It remains to prove correctness. Fix a point $x \in \R_a$ and $b\in \F_2$. 
Upon measurement, the amplitude of outcome $\ket{\phi_{x}^{b}}$ is
\[
\braket{\phi_{x}^{b}}{\psi_q} 
= 2^{-n/2}\cdot \frac{1}{\sqrt 2}
\left(
(-1)^{q(x)}+(-1)^b(-1)^{q(x+a)}
\right).
\]
If $b=q(x)+q(x+a)$, then the two terms inside the parentheses are equal, so $ \abs{\braket{\phi_{x}^{b}}{\psi_q}}^2 = 2^{1-n}$.
If $b\neq q(x)+q(x+a)$, then the two terms cancel, and therefore
$\braket{ \phi_{x}^{b}}{\psi_q} = 0.$
Thus, for every $x$, the only possible outcome associated with that edge is the one with $b=q(x)+q(x+a)$.
Therefore, Bob's output bit is always correct. 

We conclude by describing an explicit circuit that implements Bob's measurement in the basis $\calB_a$. We show that the implementation is correct and is a Clifford circuit.
Let
\[
C_a = \prod_{\substack{i\neq j\\ a_i=1}}
\operatorname{CNOT}_{ji},
\]
where the product may be taken in any order, since these CNOT gates have the same control qubit and distinct target qubits. Bob applies $C_a$, then applies a Hadamard gate to qubit $j$, and finally measures all qubits in the computational basis.

We verify that this circuit implements the desired measurement. Let $x\in R_a$, so that $x_j=0$. Then $C_a\ket{x}=\ket{x}$.
On the other hand, the $j$-th bit of $x+a$ is $1$. 
Therefore the CNOT gates flip precisely those coordinates $i\neq j$ for which $a_i=1$, and hence $C_a\ket{x+a} = \ket{x+e_j}$.
It follows that
\[
C_a\ket{\phi_x^b} = \frac{\ket{x}+(-1)^b\ket{x+e_j}}{\sqrt 2}.
\]
Since $x_j=0$, this state factors as
\[
\ket{x_{\neq j}} \otimes \frac{\ket{0}+(-1)^b\ket{1}}{\sqrt 2}
\]
on the $j$-th qubit and the remaining $n-1$ qubits. Applying $H_j$ gives
\[
H_j C_a\ket{\phi_x^b} = \ket{x+b e_j}.
\]
Thus the unitary $H_jC_a$ maps the basis $\calB_a$ to the
computational basis. Therefore measuring in the basis $\calB_a$ is equivalent to applying the Clifford circuit $H_jC_a$ and then measuring in the computational basis.

Concretely, if the final computational-basis outcome is $y\in\F_2^n$, then
Bob sets $b=y_j$ and $x=y+y_j e_j$.
Then $x_j=0$, and the outcome $y$ corresponds to the basis vector
$\ket{\phi_x^b}$. Bob outputs $(x,b)$.

The circuit uses exactly $|a|-1$ CNOT gates and one Hadamard gate (where $|a|$ denotes the Hamming weight of $a$). 
In particular, Bob's measurement is a Clifford measurement of size $O(n)$.
\end{proof}

\subsection{Randomized one-way lower bound}\label{sec:ddp-classical-lower-bound}
\begin{theorem}[Classical lower bound at every degree]
\label{thm:general-directional-lower-bound}
Let $n\ge22$ and $1\le d\le n$. Choose Alice's polynomial to be uniform on $V_{n,d}$ and Bob's direction to be uniform on $\F_2^n\setminus\{0\}$, independently of Alice's input. Then every randomized classical one-way protocol for $\DDP_{n,d}$ with average success probability at least $2/3$ and communication cost $c$ satisfies
\[
    c\ge \kappa \sum_{j=1}^{\min\{d,t\}}\binom{t}{j}, \qquad t=\left\lfloor\frac n2\right\rfloor-4,
\]
where
\[
    \kappa=\frac5{41}\left(1-H_2\left(\frac{11}{18}\right)\right)>0.
\]
Numerically, $\kappa\approx 0.00438$.
\end{theorem}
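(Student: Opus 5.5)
We apply Yao's principle and fix a deterministic $c$-bit one-way protocol. Let $\bM=\msg(\bq)$ be Alice's message, and let Bob's output on message $m$ and direction $a$ be $(x_m(a),b_m(a))$. Identify $q\in V_{n,d}$ with its coefficient vector in $\F_2^{D}$, where $D=\dim V_{n,d}$; under the uniform distribution these are $D$ independent uniform bits.

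The basic observation is that for fixed $(a,x)$ the derivative $D_aq(x)$ is a linear functional of the coefficients:
\[
\ell_{a,x}(q)=\sum_{S}q_S\bigl((x+a)^S+x^S\bigr),
\]
where $S$ ranges over nonempty sets of size at most $d$. Write $p_{m,a}$ for the conditional success probability of Bob given $\bM=m$ and direction $a$. This is at most $\max_b\Pr[\ell_{a,x_m(a)}(\bq)=b\mid\bM=m]$.

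\textbf{Step 1 (averaging to good pairs).} Since $\E_{m,a}[p_{m,a}]\ge 2/3$, a Markov-type argument (applied to $1-p_{m,a}$) shows the following. With probability bounded below over $\bM$, a constant fraction of directions $a$ have $p_{m,a}\ge 11/18$. I expect the constants $11/18$ and $5/41$ in $\kappa$ to come out of exactly this two-level averaging: $5/41$ is the guaranteed mass of good (message, direction) pairs, and $11/18$ is the success threshold defining ``good.'' Fix such a message $m$, and let $A_m$ be its set of good directions.

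\textbf{Step 2 (entropy accounting).} Suppose $\ell_1,\dots,\ell_r$ are linearly independent functionals, each satisfying $\Pr[\ell_i(\bq)=b_i\mid\bM=m]\ge 11/18$. Extend them to a basis of the dual and change coordinates. Subadditivity then gives
\[
H(\bq\mid\bM=m)\le D-r\bigl(1-H_2(11/18)\bigr).
\]
Averaging over $m$ and using $H(\bq\mid\bM)\ge D-c$ yields $c\ge(\text{mass of good messages})\cdot r\cdot(1-H_2(11/18))$. It therefore remains to show that
\[
r:=\operatorname{rank}\{\ell_{a,x_m(a)}:a\in A_m\}\ \ge\ \sum_{j=1}^{\min\{d,t\}}\binom{t}{j}.
\]
This must hold for \emph{every} choice of points $x_m(\cdot)$ and every set $A_m$ of constant density.

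\textbf{Step 3 (the rank bound; main obstacle).} This is where the real work lies, because Bob chooses $x_m(a)$ adversarially. I would argue greedily. Suppose the span $W$ has dimension $r<\sum_{j\le d}\binom tj$. Then I want to show that only a small fraction of directions $a$ can have $\ell_{a,x(a)}\in W$, which contradicts the density of $A_m$.

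To make this tractable, I plan to split the coordinates into two halves of size about $n/2$. Fix the part of $a$ on one half, and restrict attention to monomials supported in a set $T$ of $t=\lfloor n/2\rfloor-4$ coordinates from the other half. On $T$, the functional $\ell_{a,x}$ restricted to these monomials is the vector $u_{x+a}+u_x$, where $u_y=(y^S)_{S\subseteq T,\ 1\le|S|\le d}$.

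The functional lies in $W$ only if it is annihilated by $W^\perp$, which is nontrivial. Concretely, there is a nonzero polynomial $P$ of degree at most $d$ in the variables of $T$ with $D_aP(x(a))=0$ on the projection. A single such $P$ does not rule out many $a$. So the intended argument is a counting one: each new independent functional roughly halves the set of directions still consistent with $W$, or equivalently a dimension-$r$ subspace can capture at most about $2^{n}\cdot 2^{r}/2^{\sum_j\binom tj}$ of the directions once the free half of $a$ is averaged over. The slack of $4$ in $t$ absorbs the constant density loss.

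If this direct count fails, my fallback is to restrict the hard distribution to directions of a special form, for example $a$ supported on $T$ plus a fixed pattern. Under that restriction, the vectors $u_{x+a}+u_x$ over $a\in\F_2^T$ with $x$ fixed on $T$ span the full space of monomials on $T$ of degree at most $d$, by Möbius inversion. I would then show that adversarial choices of $x$ lose only a constant factor.
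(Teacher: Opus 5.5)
Your Steps 1 and 2 are correct and match the paper's framework. Success probability $11/18$ corresponds to bias $\gamma=2/9$, and a $1/7$ fraction of (message, direction) pairs clear it. The constant $5/41=(1/7-1/42)/(1-1/42)$ is then the mass of \emph{messages} having at least a $\delta=1/42$ fraction of such directions, not the mass of good pairs. The entropy bound for linearly independent biased functionals supplies the factor $1-H_2(11/18)$.

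The genuine gap is Step 3. It carries all of the difficulty, and you give no argument for it. Moreover, your quantitative heuristic is false as literally stated. Take $d=2$, $x(a)=0$, and let $a$ range over a subspace $A_0$ of dimension $t-2$. Since $q(0)=0$, we have $\ell_{a,0}=\ev_a$, and $W=\Span\{\ev_a:a\in A_0\}$ has dimension $\sum_{j=1}^{2}\binom{t-2}{j}=\sum_{j=1}^{2}\binom tj-(2t-1)$. Your bound would then allow at most $2^{n-2t+1}\le 2^{10}$ captured directions, yet $W$ captures all $2^{t-2}-1$ nonzero points of $A_0$.

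Fixing $a$ on one half is also structurally too lossy. Fixing $\lfloor n/2\rfloor$ coordinates leaves fibers of at most $2^{\lceil n/2\rceil}\le 2^{t+5}$ directions. If $S$ has density about $1/42$ in every fiber, each fiber yields at most about $2^{t+5}/42<2^t-1$ functionals. But for $d\ge t$, which the theorem covers, the target rank is $2^t-1$.

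The fallback is not available either. The theorem fixes Bob's direction to be uniform on $\F_2^n\setminus\{0\}$, and directions of a special form are a vanishing fraction of these, so average success $2/3$ tells you nothing about them. And the claim that adversarial $x$ ``loses only a constant factor'' is exactly the statement that needs proof.

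The missing idea (Lemma~\ref{lem:directional-rank}) is to regard the tests as edges $\{x_a,x_a+a\}$ of a graph on $\F_2^n$ and to use connectivity instead of counting. The edges are distinct, since $u+v$ recovers $a$. Summing edge functionals along paths in a component with vertex set $C$ produces $\ev_{z_0}+\ev_z$ for every $z\in C$. So the rank is the degree-$d$ Hilbert function of $C$ minus one, which by \cite{GGHNY24} is at least $\sum_{j=1}^{\min\{d,t\}}\binom tj$ once $|C|\ge 2^t$. Once a large component exists, where Bob places $x_a$ no longer matters.

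To find a large component, there are two cases.
\begin{itemize}
\item Some vertex has degree at least $2^t$, giving a star.
\item Otherwise, project by a random linear map $L:\F_2^n\to\F_2^m$ with $m=\lfloor n/2\rfloor-3$ and $N=2^m$. After a balanced orientation, ordered pairs of distinct edges with a common tail or a common head number at most $(D-1)|S|<|S|N/2$. Hence the expected collision count is at most $|S|(3/2+|S|/N^2)$. Cauchy--Schwarz with $|S|\ge\delta(2^n-1)$ and $\delta>3/2^7$ then gives more than $N^2/2+N$ distinct projected edges, hence a component with more than $N/2=2^t$ vertices.
\end{itemize}
The pullback $q\mapsto q\circ L$ sends $\Delta_{x_a,a}$ to $\Delta_{Lx_a,La}$, so the projected rank lower-bounds the original one. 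This is where $t=\lfloor n/2\rfloor-4$ and the admissible density $1/42$ come from. Note that it uses all $\delta 2^n$ directions at once rather than a single fiber.
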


\cref{thm:general-directional-lower-bound} gives a separation in two regimes: one where $d$ is fixed, and one where $d$ grows with $n$. In the first regime, treating $d$ as constant for sufficiently large $n$ shows that the minimum communication cost is at least $\kappa\sum_{j=1}^d \binom{t}{j} = \Theta_d(n^d)$. Hence any classical randomized one-way protocol for $\DDP_{n,d}$ requires $\Omega_d(n^d)$ bits of communication. In the second regime where $t \le d \le n$, the binomial sum is $2^t-1$, giving an $\Omega(2^{n/2})$ lower bound.

In particular, note that an $n$ versus $2^{n/2}$ Hidden Matching-scale separation can be achieved even with polynomials of degree $d \ge t$ (rather than the full $d = n$) and a restricted set of perfect matchings.

\begin{proof}[\textbf{Proof of \cref{thm:general-directional-lower-bound}}]
For completeness, we first prove the parameterized lower bound, then recover the numerical values in \cref{thm:general-directional-lower-bound} through substitution. Fix an integer $\ell\ge2$, a bias threshold $\gamma\in(0,1/3)$, and a density threshold $\delta$ satisfying $\frac{3}{2^{2\ell-1}}<\delta<\mu_\gamma$ where $\mu_\gamma\coloneqq\frac{1/3-\gamma}{1-\gamma}.$ Set $t \coloneqq \left\lfloor\frac n2\right\rfloor-\ell$. 

Let Alice's input $q$ be uniformly distributed over $V_{n,d}$, and let Bob's input $a$ be uniformly distributed over $\F_2^n\setminus\{0\}$, independent of $q$. It suffices to prove the result for an arbitrary deterministic $c$-bit protocol. Indeed, if a randomized protocol succeeds with probability at least $2/3$ averaged over both the input and the protocol's random coins, then there exists a fixing of the random coins for which the resulting deterministic protocol still succeeds with probability at least $2/3$.

Fix some deterministic $c$-bit protocol $\Pi$. Let $M$ be the random variable denoting Alice's message and define $H(M)$ to be the Shannon entropy of $M$. Note that since a $c$-bit message takes at most $2^c$ values, $H(M) \le c$. Under protocol $\Pi$, Alice's message is a function of her input $q$ which we denote by $\msg: V_{n,d} \rightarrow \F_2^c$. Given a message $m \in \F_2^c$ and nonzero direction $a$, Bob outputs a fixed pair $(x_{m,a}, b_{m,a})\in\F_2^n\times\F_2$. For each possible message $m$, define the message class
\[A_m \coloneqq \set{q\in V_{n,d} \mid \msg(q)=m}.\]

For a fixed message $m$, define the \emph{bias} of Bob's derivative test 
\[\beta_{m,a}\coloneqq \Abs{ \E_{q\sim A_m}\left[(-1)^{D_aq(x_{m,a})}\right]}.\] 
Note that $0\le\beta_{m,a}\le1$. If $\beta_{m,a}=0$, then the derivative bit $D_aq(x_{m,a})$ is uniform over $\F_2$ so it is perfectly unpredictable. On the other hand, if $\beta_{m,a}=1$, then the derivative bit is constant on $A_m$ so it is perfectly predictable.

For each message $m$, define the set of \emph{useful directions}
\[
    \Useful_\gamma(A_m) \coloneqq \left\{a \in \F_2^n\setminus \{0\} \mid \beta_{m,a} \geq \gamma\right\}.
\]
In other words, we say that $a$ is \emph{useful} for $A_m$ if the parity of Bob's edge $\set{x_{m,a}, x_{m,a} + a}$ has bias $\ge \gamma$ over the polynomials in $A_m$. Let $\alpha_m=\abs{A_m}/\abs{V_{n,d}}$. Since Alice's input is uniform on $V_{n,d}$, $\Pr[M = m] = \Pr[q \in A_m] = \abs{A_m}/\abs{V_{n,d}} = \alpha_m$. The entropy of Alice's message is therefore
\[
    H(M) = \sum_m \alpha_m \log_2 \frac1{\alpha_m} = \E_M\left[\log_2\frac{\abs{V_{n,d}}}{\abs{A_M}}\right].
\]
Call a message $m$ \emph{good} if the fraction of useful directions in $A_m$ is at least $\delta$. By the law of total expectation, $\E_M\left[\log_2\frac{\abs{V_{n,d}}}{\abs{A_M}}\right] = \Pr[M \text{ is good}]\cdot     \E_{M}\left[\log_2\frac{\abs{V_{n,d}}}{\abs{A_M}} \,\middle\vert\,  M \text{ is good}\right] + 
    \Pr[M \text{ is not good}]\cdot    
    \E_{M}\left[\log_2\frac{\abs{V_{n,d}}}{\abs{A_M}} \,\middle\vert\, M  \text{ is not good}\right]$. Since $A_m \subseteq V_{n,d}$ implies $\log_2\frac{\abs{V_{n,d}}}{\abs{A_M}}\ge0$,
\begin{equation}\label{eq:entropy-bound}
    \E_M\left[\log_2\frac{\abs{V_{n,d}}}{\abs{A_M}}\right]\geq  \Pr[M \text{ is good}]\cdot     \E_{M}\left[\log_2\frac{\abs{V_{n,d}}}{\abs{A_M}} \,\middle\vert\,  M \text{ is good}\right].
\end{equation}
We may therefore focus on proving a lower bound for good messages only. For $a,x,z \in F_2^n$, define the single-point linear functional $\ev_z:V_{n,d}\to\F_2$ by $\ev_z(q) \coloneqq q(z)$, and the two-point linear functional $\Delta_{x,a}:V_{n,d}\to\F_2$ by $\Delta_{x,a}(q)\coloneqq \ev_x(q) + \ev_{x+a}(q) = D_a q(x).$ We call $\ev_z$ the evaluation functional and $\Delta_{x,a}$ the derivative functional. Since every good message $m$ satisfies $\abs{\Useful_\gamma(A_m)}\ge\delta(2^n-1)$, the main technical lemma (\cref{lem:directional-rank}) gives
\[
    \dim\Span\set{\Delta_{x_{m,a},a}:a\in \Useful_\gamma(A_m)} \ge \sum_{j=1}^{\min\{d,t\}}\binom tj.
\]
Choose a basis of this span from the displayed family.  Every basis functional has bias at least $\gamma$ under the uniform distribution on $A_m$.  By \cref{lem:biased-tests}, every good message $m$ satisfies
\begin{equation}
\label{eq:good-cell-entropy-deficit}
    \log_2\frac{\abs{V_{n,d}}}{\abs{A_m}}
    \ge
    \left(1-H_2\left(\frac{1+\gamma}{2}\right)\right)\sum_{j=1}^{\min\{d,t\}}\binom tj.
\end{equation}

Fact \ref{fact:good-message-probability} states that $\Pr_M[M\text{ is good}]\ge\frac{\mu_\gamma-\delta}{1-\delta}$. Applying this and \cref{eq:good-cell-entropy-deficit} to \cref{eq:entropy-bound} gives the parameterized lower bound:
\begin{equation}\label{eq:parameterized-lower-bound}
    H(M)\ge\frac{\mu_\gamma-\delta}{1-\delta}\left(1-H_2\left(\frac{1+\gamma}{2}\right)\right)\sum_{j=1}^{\min\{d,t\}}\binom tj.
\end{equation}
We now substitute into the parameters. Take
\[
\ell=4,\:\:\:\:
\delta=\frac1{42},
\:\:\: \text{ and }\:\:\:
\gamma=\frac29.
\]
For these parameters, \cref{lem:directional-rank} gives $N = 2^{\lfloor n/2\rfloor-3}$, so $n \ge 22$ implies $N \ge 256$. The lower bound in \cref{eq:large-projection} requires $N \ge 254$, so this choice is valid. Moreover,
\[
    (1+\gamma)/2=11/18, \:\:\: \frac{3}{2^{2\ell-1}} = \frac3{128} < \frac1{42}, \:\:\:\: \mu_\gamma = \frac{1/3-2/9}{1-2/9} = \frac17, \:\:\: \text{ and }\:\:\: \frac{\mu_\gamma-\delta}{1-\delta} = \frac{1/7-1/42}{1-1/42} = \frac5{41}.
\]
Substituting these values into \cref{eq:parameterized-lower-bound} proves \cref{thm:general-directional-lower-bound}.
\end{proof}

\begin{lemma}[Directional rank lemma]
\label{lem:directional-rank}

Fix an integer $\ell\ge2$, a bias threshold $\gamma\in(0,1/3)$, and a density threshold $\delta$ satisfying $\frac{3}{2^{2\ell-1}}<\delta<\mu_\gamma$ where $\mu_\gamma\coloneqq\frac{1/3-\gamma}{1-\gamma}.$ Set $t \coloneqq \left\lfloor\frac n2\right\rfloor-\ell$, as well as $1 \le d \le n$ and $S \subseteq \F_2^n \setminus \set 0$. For each direction $a \in S$, choose an arbitrary point $x_a\in\F_2^n$. 
For sufficiently large $n$, if $\abs{S}\ge\delta(2^n-1)$, then
\[
    \dim\Span\set{\Delta_{x_a,a}:a\in S} \ge \sum_{j=1}^{\min\{d,t\}}\binom tj.
\]
\end{lemma}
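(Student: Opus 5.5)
The plan is to exhibit a large subspace of $V_{n,d}$ on which the functionals $\Delta_{x_a,a}$, $a\in S$, act injectively. I will pull back polynomials from a smaller space via a linear map. For a linear surjection $L:\F_2^n\to\F_2^t$, let $W_L=\{p\circ L : p\in \calP^0_{\le \min\{d,t\}}(\F_2^t)\}$. Composition with a linear map preserves degree and the zero constant term, so $W_L\subseteq V_{n,d}$ and $\dim W_L=\sum_{j=1}^{\min\{d,t\}}\binom tj$. Moreover $D_a(p\circ L)(x)=D_{La}p(Lx)$.

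The pairs $(u,w)$ with $u\neq 0$ and $w\in\{0,u\}$ are useful because $D_u p(0)=p(u)$ and $D_u p(u)=p(u)$ (using $p(0)=0$). Therefore, if for every nonzero $u\in\F_2^t$ there is some $a\in S$ with $La=u$ and $Lx_a\in\{0,u\}$, then the restrictions of $\{\Delta_{x_a,a}\}$ to $W_L$ contain every evaluation functional $p\mapsto p(u)$. These evaluations span the dual of $W_L$, which gives the required dimension bound. So the task reduces to choosing $L$ (possibly together with a translate) so that the image set $T_L=\{(La,Lx_a):a\in S\}$ hits the required pairs.

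For the choice of $L$, I would take $L$ uniformly random, or random subject to a convenient normal form, and count. For each $a\in S$ with $x_a\notin\{0,a\}$, the pair $(La,Lx_a)$ is uniform on $\F_2^t\times\F_2^t$. Hence the expected number of $a\in S$ landing on a given target $(u,0)$ is about $|S|2^{-2t}\ge \delta\, 2^{n-2t}\approx\delta\,2^{2\ell}$, which by hypothesis exceeds a constant. The threshold $\delta>3/2^{2\ell-1}$ matches this scale. The degenerate directions with $x_a\in\{0,a\}$ are easier, since then $Lx_a\in\{0,La\}$ automatically.

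For distinct $a,a'$ the events are (essentially) pairwise independent when the four vectors involved are linearly independent; the $O(2^n)$ dependent pairs contribute negligibly. So a second-moment (Chebyshev) bound shows that each target is missed with probability at most about $1/(\delta 2^{2\ell-1})<1/3$. Consequently some fixed $L$ covers a set $C\subseteq\F_2^t\setminus\{0\}$ of density more than $2/3$.

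I expect the main obstacle to be converting this density guarantee into full rank, since a union bound over all $2^t$ targets is far too weak. The approach I would try first is to exploit the freedom in the construction. I would replace the target $w=0$ by an arbitrary common offset $w_0$ (using $D_u p(w_0)=D_u(p(\cdot+w_0))(0)$) and also vary the coordinates chosen for $L$. The aim is to show that the functionals $p\mapsto D_up(w_0)$, $u\in C$, still span the full dual of $\calP^0_{\le\min\{d,t\}}(\F_2^t)$.

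Concretely, suppose some nonzero $p$ were annihilated by all of them. Then $D_up(w_0)=0$ for a $>2/3$ fraction of $u$, that is, $p(w_0+u)=p(w_0)$ for most $u$. Averaging this over the offsets $w_0$ and the random $L$ should contradict the fact that a nonzero zero-constant polynomial in $t$ variables is not almost-constant across all these restrictions. If that averaging fails, I would fall back on choosing $t$ slightly smaller, which is the reason for the $-\ell$ slack in $t=\lfloor n/2\rfloor-\ell$. I would then use a greedy or inductive-on-degree argument showing that a covered set of density $>2/3$ in every coset of a subcube already contains an interpolating set for degree $\le\min\{d,t\}$ polynomials.
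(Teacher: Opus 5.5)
Your pullback setup is correct and is the same device the paper uses (its map $T_L$), but the proposal stops short of a proof exactly where the content lies, and the route you chose cannot be closed as stated. Projecting to exactly $t$ dimensions forces you to get \emph{full} rank on $W_L$. A covered set $C$ of density $>2/3$ does not give that. Evaluations on $C$ span the dual of $\mathcal{P}^0_{\le\min\{d,t\}}(\mathbb{F}_2^t)$ only if no nonzero such polynomial vanishes on $C$. But $u_1u_2$ vanishes on $3/4$ of $\mathbb{F}_2^t$, so density alone fails already for $d=2$. For $d\ge t$ the target $2^t-1$ even requires $C$ to contain every nonzero point.

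Averaging over offsets and over $L$ does not rescue this. The rank must be witnessed by one fixed $L$. Moreover, a nonzero polynomial such as $u_1u_2u_3$ satisfies $p(w_0+u)=p(w_0)$ for $7/8$ of all $u$ at each of the $7/8$ of offsets where $p(w_0)=0$. So no contradiction with ``almost constant'' is available. For the same reason, your fallback claim that a density-$>2/3$ covered set contains an interpolating set is false in $\mathbb{F}_2^t$.

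The second-moment step also has a gap. Two edges sharing an endpoint give strongly correlated events, and there are about $\sum_v d(v)^2$ such pairs. This is of order $|S|\cdot 2^t$, not $O(2^n)$, when some vertex has degree near $2^t$. That is why the paper splits on the maximum degree (a star on $\ge 2^t$ vertices is handled directly) and uses a balanced orientation in its collision count.

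The missing idea is to look for a large connected component rather than a spanning star, in a projection with one extra dimension, and then apply a rank bound that depends only on cardinality. Edge functionals along a path telescope to $\ev_{z_0}+\ev_z$. So a connected projected graph on vertex set $A$ yields every $\ev_{z_0}+\ev_z$ with $z\in A$.

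The Hilbert-function bound of Golovnev--Guo--Hatami--Nagargoje--Yan then says that any $A\subseteq\mathbb{F}_2^m$ with $|A|\ge 2^t$ has degree-$\le d$ evaluation rank at least $\sum_{j=0}^{\min\{d,t\}}\binom tj$. After quotienting out constants, this is exactly the claimed bound. The theorem depends only on $|A|$, not on density inside a $t$-cube. It is a genuine external result that your greedy or inductive sketch does not supply.

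The paper projects to $m=t+1$ dimensions. In the low-degree case, the collision count plus Cauchy--Schwarz give more than $2^{2m}/4$ distinct projected edges, hence a component with more than $2^{m-1}=2^t$ vertices. Your instinct to spend the $-\ell$ slack this way is right. With that theorem in hand, a star-at-$0$ count in $t+1$ dimensions with the high-degree correlations properly controlled could plausibly also work, since there $|C\cup\{0\}|>2^t$. The constants tied to $\delta>3/2^{2\ell-1}$ would need care.
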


\begin{proof}
For every $a\in S$, represent each derivative test $D_aq(x)$ as an edge in the graph $G(V,E)$ where $V = \F_2^n$ and $E(G) = \set{\set{x_a, x_a + a} \mid a \in S}$. Note that $\abs{E(G)} = \abs{S}$. Additionally, the edges are distinct since the direction of an edge $\{u,v\}$ is uniquely determined as $u+v$. Let $m \coloneqq \lfloor\frac n2\rfloor - (\ell - 1)$ and $N\coloneqq 2^m$. Note that this implies $\frac N2 = 2^{m-1} = 2^t$. 

Let $D$ be the maximum degree in the graph. Assume $D\ge N/2$. Let $v$ be a vertex of degree $D$. Define  $S_v = \set{a\in S \mid v \in \set{x_a, x_a + a}}$. The corresponding edges form a connected star with $D + 1 \ge 2^t$ vertices. Applying \cref{prop:connected-edge-rank} to $S_v$ gives $\dim\Span \set{\Delta_{x_a,a} \mid a \in S} \ge \dim\Span \set{\Delta_{x_a,a} \mid a \in S_v}\ge \sum_{j=1}^{\min\{d,t\}}\binom tj$, proving the lemma. We may therefore assume that $D<N/2$.

First convert $G$ to a directed graph. Orient the edges of $G$ so that for every vertex $x \in G$, the indegree $d^-(x)$ and outdegree $d^+(x)$ differ by at most one. This is possible by \cref{prop:balanced-orientation}. In order to bound the rank of the linear functionals, we project the graph onto a smaller subspace. Choose a uniformly random linear map $L:\F_2^n\to\F_2^m$ such that $L(u_e, v_e) = (Lu_e, Lv_e)$ for $(u_e, v_e) \in E(G)$. For $y,z \in \F_2^m$, define
\[
    N_{y,z} \coloneqq \abs{\set{(u_e, v_e)\in E(G):(L(u_e),L(v_e))=(y,z)}}.
\]
Define $Z_L\coloneqq\sum_{y,z\in\F_2^m}N_{y,z}^2$ as the \emph{collision count}, or the number of ordered pairs of edges $e,f \in E(G)^2$ whose projections land on the same output edge. We bound the quantity $\E_L[Z_L]$ through casework:
\begin{itemize}
    \item[1.)] $e = f$. There are exactly $\abs{E(G)} = \abs{S}$ edge self-pairs of the form $(e,e)$ for all $e \in E(G)$. Each edge has the same projection as itself, so each self-pair results in a collision with probability 1.
    \item[2.)] $e \neq f$. Let $e = (u_e, v_e)$ and $f = (u_f, v_f)$ for distinct pairs $e,f \in E(G)$. Then there is a collision iff $L(u_e + u_f) = 0$ and $L(v_e + v_f) = 0$. 
    \begin{itemize}
        \item[2.1)] $(u_e + u_f)$ and $(v_e + v_f)$ are linearly dependent. Let $C$ denote the number of such edge pairs. Over $\F_2$, this implies that either $u_e = u_f$ or $v_e = v_f$. (It is impossible to have the third case $u_e + u_f = v_e + v_f \neq 0$, since this implies that the edges have the same direction, and uniqueness forces $e = f$.) Thus, the edges $e$ and $f$ must either have the same first or second endpoint. At a given vertex $x$, there are $d^+(x)(d^+(x)-1)$ ordered pairs of distinct edges sharing $x$ as the first endpoint, and $d^-(x)(d^-(x)-1)$ ordered pairs of distinct edges sharing $x$ as the second endpoint. Thus 
        \begin{align*}
            C &= \sum_{x \in G}d^+(x)(d^+(x)-1) + d^-(x)(d^-(x)-1)\\
            & \le \sum_{x \in G}\frac{d(x)(d(x)-1)}{2} \tag{\cref{prop:balanced-orientation}}\\
            &\le \frac{D-1}{2}\sum_{x \in G}d(x) \tag{$d(x)\le D$}\\
            &\le (D-1)\cdot \abs{S} \tag{Handshake lemma}
        \end{align*}
        Since $\dim \Span \set{(u_e + u_f), (v_e + v_f)} = 1$, there is effectively one nonzero vector that needs to be mapped to zero. The collision probability is thus $\frac 1N$.
        \item[2.2)] $(u_e + u_f)$ and $(v_e + v_f)$ are linearly independent. Since there are $\abs{S}^2$ ordered pairs of edges in total, this leaves $\abs{S}^2 - \abs{S} - C$ remaining pairs of edges which collide with $\Pr_L[L(u_e + u_f) = L(v_e + v_f) = 0] = \frac1{N^2}$.
    \end{itemize}
\end{itemize} 
Separating the self-pairs and distinct pairs gives
\[
    \E_L[Z_L] = \sum_{e\in E(G)}\Pr_L[Le=Le] + \sum_{\substack{e,f\in E(G)\\e\ne f}} \Pr_L[Le=Lf].
\]
As mentioned above, the self-pairs edge contribute exactly $\abs{S}$ terms. Hence
\begin{align*}
    \E_L[Z_L] &= |S| + \sum_{e,f \in E(G), e\neq f}\Pr_L[Le = Lf] \tag{Bullet 1.}\\
    &= |S| + \frac{C}{N} + \frac{\abs{S}(\abs{S}-1) - C}{N^2} \tag{Bullet 2.}\\
    &\le |S| + \frac{C}{N} + \frac{\abs{S}^2}{N^2} \\
    &\le |S| \left(\frac32 + \frac{\abs{S}}{N^2}\right) \tag{Bullet 2.1 and $D < \frac N2$} 
\end{align*}
Now choose a projection map $L$ for which $Z_L$ is no larger than this expectation. Define $P\coloneqq \abs{\set{(Lu_e, Lv_e) \mid (u_e, v_e) \in E(G)}}$ to be the number of projected edges. By the Cauchy--Schwarz inequality, 
\begin{equation}\label{eq:collision-cs}
    \abs{S}^2=\left(\sum_{y,z}N_{y,z}\right)^2\le PZ_L.
\end{equation}
Let $\rho \coloneqq \frac{\abs{S}}{N^2}$ and $g \coloneqq \delta 2^{2\ell - 2}$, and observe that 
\begin{align*}
    \rho = \frac{\abs{S}}{N^2} &\ge \delta\left(2^{n-2m} - \frac{1}{N^2}\right) \tag{$\abs{S} \ge \delta(2^n-1)$}\\
    &\ge \delta\left(2^{2\ell-2} - \frac{1}{N^2}\right) \\
    &=g - \frac{\delta}{N^2}. 
\end{align*}
Since $x \mapsto \frac{x}{3/2 + x}$ is increasing on $[0, \infty)$ and $g > \frac  32$ by the assumption that $\delta > \frac{3}{2^{2\ell - 1}}$, substituting the $\rho$-expression into \cref{eq:collision-cs} gives 
\begin{equation}\label{eq:large-projection}
    P \ge \frac{|S|^2}{Z_L} \ge \frac{\rho}{3/2 + \rho} N^2 \ge \frac{g - \delta N^{-2}}{3/2 + g - \delta N^{-2}}N^2 = \frac{g}{3/2 + g}N^2 - O(1) > \frac12 N^2 + N. 
\end{equation}
At most $N = 2^m$ of the elements in $P$ are self-pairs. The remaining projected edge pairs form a simple undirected graph $H$ on $\F_2^m$ with $\abs{E(H)} \ge \frac{P-N}{2}$ edges. Let $k$ be the size of the largest connected component of $H$. Since $H$ has $N$ vertices, $\abs{E(H)} \le \frac{kN}{2}$. Combining the upper and lower bounds on $\abs{E(H)}$ gives $k \ge \frac{2\abs{E(H)}}{N} \ge \frac{P}{N}-1 > \frac N2 = 2^t$, where the last inequality holds from \cref{eq:large-projection}. Hence, there exists a connected component $H_0$ of the projected graph with at least $2^t$ vertices. 

Choose an input edge $\set{x_a, x_a + a}$ for every projected edge $\set{Lx_a, Lx_a + L_a}$ of $H_0$, and let $S_{H_0} \subseteq S$ be the set of directions indexing these input edges. \cref{prop:connected-edge-rank} states that the edge parities of a connected graph with at least $2^t$ vertices has rank 
\begin{equation}\label{eq:projection-rank}
    \dim\Span \set{\Delta_{Lx_a, L_a}\mid a \in S_{H_0}} \ge \sum_{j=1}^{\min\set{d,t}} \binom tj.
\end{equation}
It remains to transfer this rank bound to a bound on the original derivative functionals. Define the pullback map 
\[
    T_L:\calP_{\le d}^0(\F_2^m)\to\calP_{\le d}^0(\F_2^n), \qquad T_L(q)\coloneqq q\circ L.
\]
Substituting linear forms into a polynomial does not increase its degree and $L(0)=0$, so $T_L$ is well-defined.  For every $a\in S_{H_0}$ and $q\in\calP_{\le d}^0(\F_2^m)$,
\begin{align*}
    \bigl(T_L^*\Delta_{x_a,a}\bigr)(q) &= \Delta_{x_a,a}(q\circ L)\\
    &= q(L(x_a+a))+q(Lx_a)\\
    &= \Delta_{Lx_a,La}(q).
\end{align*}
Thus every projected edge functional is the image under $T_L^*$ of one of the original derivative functionals. Since a linear map on a set of vectors cannot increase its rank, the original set of linear functionals has at least the projected rank:
\begin{align*}
    \sum_{j=1}^{\min\set{d,t}}\binom tj &\le \dim\Span\set{\Delta_{Lx_a, La}\mid a \in S_{H_0}}\tag{\cref{eq:projection-rank}} \\
    &\le\dim\Span\set{\Delta_{x_a, a}\mid a \in S} 
\end{align*}
This completes the proof.
\end{proof}

\begin{lemma}[Entropy of biased linear functionals]
\label{lem:biased-tests}
Let $V$ be an $N$-dimensional vector space over $\F_2$, let $\varnothing\ne A\subseteq V$, and let $\lambda_1,\ldots,\lambda_r\in V^*$ be linearly independent. If $\abs{\E_{x\sim A}(-1)^{\lambda_i(x)}} \ge\gamma$ for every $i\in [r]$, then
\[
    \log_2\frac{\abs V}{\abs A} \ge \left(1-H_2\left(\frac{1+\gamma}{2}\right)\right)r.
\]
\end{lemma}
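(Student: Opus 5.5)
The plan is to change coordinates so that the $\lambda_i$ become coordinate functions, and then bound the entropy of the uniform distribution on $A$ by subadditivity.

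First, extend the linearly independent functionals $\lambda_1,\dots,\lambda_r$ to a basis $\lambda_1,\dots,\lambda_N$ of $V^*$. The map $\Lambda:V\to\F_2^N$ given by $x\mapsto(\lambda_1(x),\dots,\lambda_N(x))$ is then a linear isomorphism, and it identifies $V$ with $\F_2^N$.

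Next, let $\bX$ be uniform on $A$, so $H(\bX)=\log_2\abs A$. Since $\Lambda$ is a bijection,
\[
H(\bX)=H(\lambda_1(\bX),\dots,\lambda_N(\bX)).
\]
By subadditivity of entropy this is at most
\[
\sum_{i=1}^r H(\lambda_i(\bX))+\sum_{i=r+1}^N H(\lambda_i(\bX)).
\]
Each of the last $N-r$ terms is at most $1$.

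For the first $r$ terms, $\lambda_i(\bX)$ is a bit with $\Pr[\lambda_i(\bX)=0]=p_i$. The bias condition $\abs{\E(-1)^{\lambda_i(\bX)}}=\abs{2p_i-1}\ge\gamma$ means $p_i\in[0,\tfrac{1-\gamma}{2}]\cup[\tfrac{1+\gamma}{2},1]$. The binary entropy $H_2$ is symmetric about $1/2$ and increasing on $[0,1/2]$, so $H(\lambda_i(\bX))=H_2(p_i)\le H_2\bigl(\tfrac{1+\gamma}{2}\bigr)$.

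Combining these bounds gives
\[
\log_2\abs A\le rH_2\bigl(\tfrac{1+\gamma}{2}\bigr)+N-r.
\]
Since $\log_2\abs V=N$, rearranging yields the claimed inequality.

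The only step needing any care is the basis extension, which lets subadditivity be applied to a full coordinate system; the rest is routine.
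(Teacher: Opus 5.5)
Your proof is correct and is essentially the paper's argument: take $X$ uniform on $A$, bound $H(X)=\log_2\abs{A}$ by subadditivity over the coordinates $\lambda_i(X)$, and use the symmetry and monotonicity of $H_2$ to bound each biased coordinate's entropy by $H_2\bigl(\frac{1+\gamma}{2}\bigr)$. The only difference is cosmetic: the paper gets the $N-r$ term as $H(X\mid Y)\le N-r$ by counting fibers of the rank-$r$ map $Y=(\lambda_1(X),\dots,\lambda_r(X))$. You instead complete $\lambda_1,\dots,\lambda_r$ to a basis of $V^*$ and bound each of the $N-r$ extra coordinates by one bit.
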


\begin{proof}
Let $X$ be uniformly distributed on $A$, and define the combined linear map
\[
\Lambda: V \to \F_2^r, \qquad \Lambda(x) \coloneqq (\lambda_1(x),\ldots,\lambda_r(x)).
\]
Set $Y \coloneqq \Lambda(X)$. $Y$ records all the outcomes of the $r$ linear tests applied to $X$. Since $\lambda_1,\ldots,\lambda_r\in V^*$ are linearly independent, $\operatorname{rank} \Lambda = r$. By the rank-nullity theorem, $\dim \ker \Lambda = N - r$. Every nonempty $\Lambda^{-1}(y) = \set{x\in V \mid \Lambda(x) = y}$ is a shift of $\ker(\Lambda)$ and thus contains $2^{N-r}$ points. Conditioned on $Y = y$, the random variable $X$ is supported on $A \cap \Lambda^{-1}(y)$, which contains at most $2^{N-r}$ points. Hence
$H(X\mid Y = y)\le N-r$ for every $y$, so 
\begin{equation}\label{eq:cond-entropy-bound}
    H(X\mid Y)\le N-r.
\end{equation}
Next, we bound the entropy of each coordinate $Y_i = \lambda_i(X)$. Let $\beta_i \coloneqq \abs{\E\left[(-1)^{Y_i}\right]}$. Since $Y_i \in \set{0,1}$,
$\E_{x \sim A}\left[(-1)^Y_i\right] = \Pr[Y_i = 0] - \Pr[Y_i = 1]= 2\Pr[Y_i = 0] - 1$. Thus $\beta_i = \abs{2\Pr[Y_i = 0] - 1}$. 
It follows that $\Pr[Y_i = 0] = \frac{1\pm \beta_i}{2}$. Since binary entropy is symmetric, i.e. $H_2(p) = H_2(1-p)$, either possibility gives 
\[
    H(Y_i) = H_2\left(\frac{1 +\beta_i}{2}\right). 
\]
Since $Y$ is determined by $X$, the chain rule gives 
\begin{align*}
    H(X) &= H(Y) + H(X\mid Y)\\
    &\le \sum_{i=1}^r H(Y_i) + N-r\\
    &= \sum_{i=1}^rH_2\left(\frac{1 +\beta_i}{2}\right) + N-r\\
    &= N - \sum_{i=1}^r\left(1 - H_2\left(\frac{1 +\beta_i}{2}\right)\right)
\end{align*}
where the first inequality holds using \cref{eq:cond-entropy-bound} and the subadditivity inequality $H(Y) \le \sum_{i=1}^r H(Y_i)$.
Since $X$ is uniform on $A$, $H(X) = \log_2\abs{A}$. Moreover, since $V$ is $N$-dimensional over $\F_2$, $N = \log_2\abs{V}.$ Rearranging therefore gives 
\[
    \log_2\frac{\abs{V}}{\abs{A}} \ge \sum_{i=1}^r \left(1 - H_2\left(\frac{1 + \beta_i}{2}\right)\right).
\]
Lastly, suppose $\beta_i \ge \gamma$ for every $i \in [r]$. The function $p \mapsto H_2\left(\frac{1+p}{2}\right)$ is decreasing on $[0,1]$, so $H_2\left(\frac{1+\beta_i}{2}\right)\le H_2\left(\frac{1+\gamma}{2}\right)$ for every $i \in [r]$. Hence 
\[
    \log_2\frac{\abs V}{\abs A} \ge \left(1-H_2\left(\frac{1+\gamma}{2}\right)\right)\cdot r.
\]

\end{proof}

\begin{proposition}[A connected graph supports many independent edge parities]
\label{prop:connected-edge-rank}
Let $H$ be a connected graph whose vertex set $A$ is contained in $\F_2^m$. If $\abs{A}\ge2^r$, then
\[
\dim\Span\set{\ev_u + \ev_v\mid\{u,v\}\in E(H)}
\ge
\sum_{j=1}^{\min\{d,r\}}\binom rj.
\]
\end{proposition}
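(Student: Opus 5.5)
The plan is to reduce the edge-parity rank to a statement about restricting low-degree polynomials to the vertex set $A$, and then lower bound that restriction dimension combinatorially.

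\textbf{Step 1 (connectivity lets us replace edges by a star).} Fix a root $u_0\in A$. For any $u\in A$, connectivity gives a path $u_0=w_0,w_1,\dots,w_s=u$ in $H$. Over $\F_2$ the sum of the edge functionals along this path telescopes to $\ev_{u_0}+\ev_u$. Conversely, every edge functional satisfies $\ev_u+\ev_v=(\ev_{u_0}+\ev_u)+(\ev_{u_0}+\ev_v)$. Hence
\[
\Span\set{\ev_u+\ev_v\mid\{u,v\}\in E(H)}=\Span\set{\ev_u+\ev_{u_0}\mid u\in A},
\]
and it suffices to lower bound the dimension of the right-hand side as functionals on $\calP^0_{\le d}(\F_2^m)$.

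\textbf{Step 2 (remove the zero-constant-term restriction).} Let $\calP_{\le d}$ denote all multilinear polynomials of degree at most $d$, constants included. Each difference $\ev_u+\ev_{u_0}$ vanishes on the constant polynomial $1$. It therefore factors through the quotient $\calP_{\le d}/\langle 1\rangle\cong\calP^0_{\le d}$, so its span has the same dimension whether viewed on $\calP^0_{\le d}$ or on $\calP_{\le d}$. On $\calP_{\le d}$, the functional $\ev_{u_0}$ takes the value $1$ on the constant $1$, so it is not in the span of the differences. Consequently
\[
\dim\Span\set{\ev_u+\ev_{u_0}}=\dim\Span\set{\ev_u\mid u\in A}-1 .
\]
The last dimension equals $\dim \calP_{\le d}|_A$, the dimension of the space of restrictions to $A$ of degree-$\le d$ polynomials. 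The proposition therefore reduces to the following claim, whose $j=0$ term $\binom{r}{0}=1$ absorbs the $-1$:
\[
\dim\calP_{\le d}|_A\ \ge\ \sum_{j=0}^{\min\{d,r\}}\binom rj\qquad\text{whenever }|A|\ge 2^r.
\]

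\textbf{Step 3 (the combinatorial core, the main obstacle).} First, note that when $d\ge r$ the claim is immediate, since the right-hand side is $2^r\le|A|$ and $\calP_{\le m}|_A$ is all functions on $A$. I still need the claim for all $d$.

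My first attempt is the standard-monomial argument. Fix a degree-compatible monomial order and consider the vanishing ideal of $A$, together with the multilinear relations $x_i^2=x_i$. The standard monomials $x_S$ (those not leading terms of the ideal) form a down-closed family $\mathcal{D}\subseteq 2^{[m]}$ with $|\mathcal{D}|=|A|\ge 2^r$. The standard monomials of degree $\le d$ are linearly independent on $A$, because a degree-compatible order ensures that a nonzero combination of them cannot reduce to zero. This gives $\dim\calP_{\le d}|_A\ge|\{S\in\mathcal{D}:|S|\le d\}|$.

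It then remains to show that a down-closed family of at least $2^r$ sets has at least $\sum_{j\le d}\binom rj$ members of size $\le d$. I would prove this by induction on the ground set, splitting $\mathcal{D}$ into the sets avoiding $m$ and the link of $m$. Both parts are down-closed, and the link is contained in the first part. The induction needs the inequality $B(r_1,d)+B(r_2,d-1)\ge B(r,d)$ whenever $2^{r_1}+2^{r_2}\ge 2^r$, where $B(r,d)=\sum_{j\le d}\binom rj$. I expect to handle this inequality by passing to real-valued interpolations of $B$ (or to a Kruskal--Katona-type comparison showing that the cube $2^{[r]}$ minimizes the lower layers), and this is where I expect the real work to lie.

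As a fallback, the same induction can be run directly on $A$. Split $A$ by its last coordinate into $A_0$ and $A_1$. A polynomial $q_0+x_m q_1$ restricted to $A$ yields $\dim\calP_{\le d}|_A\ge\dim\calP_{\le d}|_{A_0\cup A_1}+\dim\calP_{\le d-1}|_{A_0\cap A_1}$, with $A_0,A_1$ viewed in $\F_2^{m-1}$. This leads to the same numerical inequality, since $|A_0\cup A_1|+|A_0\cap A_1|=|A|$.
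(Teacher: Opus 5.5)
Your Steps 1 and 2 are exactly the paper's argument. You telescope along paths from a fixed root, pass to the full space $\mathcal{P}_{\le d}(\mathbb{F}_2^m)$ (legitimate because the edge functionals kill constants), and subtract one for $\ev_{u_0}$. The paper does not prove your displayed Step~3 claim. It imports it as the Hilbert-function bound of \cite[Corollary~4]{GGHNY24}, namely $\dim\mathcal{P}_{\le d}|_A\ge\sum_{j=0}^{\min\{d,R\}}\binom Rj$ with $R=\lfloor\log_2|A|\rfloor\ge r$. If you cite that result, your proof is complete and coincides with the paper's.

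As a self-contained argument, however, Step~3 has a genuine gap. Your standard-monomial reduction is fine: a nonzero combination of standard monomials never lies in the vanishing ideal, for any monomial order. The induction you outline, however, does not close. The integer inequality you state is trivially true: if $r_2\le r_1$ and $2^{r_1}+2^{r_2}\ge 2^r$, then either $r_1\ge r$ or $r_1=r_2=r-1$, and Pascal's rule finishes. That triviality is exactly why it is not what the induction needs. The sizes $|\mathcal{D}_0|$ and $|\mathcal{D}_1|$ are generally not powers of two, and rounding them down loses too much. For instance, take $\mathcal{D}_0=\{S\cup T: S\subseteq[r-2],\ T\in\{\varnothing,\{r-1\},\{r\}\}\}$ and $\mathcal{D}_1=2^{[r-2]}$. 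Together they give a down-set of size $2^r$, but the hypothesis applied to the two parts only yields $B(r-1,d)+B(r-2,d-1)=B(r,d)-B(r-2,d-2)<B(r,d)$ for $d\ge 2$. You would need a size-sensitive hypothesis that is valid for every $|\mathcal{D}|$ and survives the recursion, and you have not produced one. Your fallback induction on $A_0\cup A_1$ and $A_0\cap A_1$ hits the same obstruction. Also, your remark that the case $d\ge r$ is immediate is only justified when $d\ge m$, since $\mathcal{P}_{\le d}|_A$ need not be all functions on $A$ for $d<m$.

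The Kruskal--Katona route you mention does close the gap directly, with no induction. Let $f_j$ be the number of $j$-sets in the down-set $\mathcal{D}$.
\begin{itemize}
\item If $f_d=0$, every member of $\mathcal{D}$ has size $<d$, so $\sum_{j\le d}f_j=|\mathcal{D}|\ge 2^r$.
\item Otherwise, write $f_d=\binom xd$ with real $x\ge d$. Lov\'asz's form of Kruskal--Katona, together with down-closure, gives $f_j\ge\binom xj$ for all $j\le d$.
\item For $j>d$ with $f_j=\binom yj>0$ (where $y\ge j$), the same theorem gives $\binom xd=f_d\ge\binom yd$, hence $y\le x$.
\item If $x\ge r$, then $\sum_{j\le d}f_j\ge\sum_{j\le d}\binom xj\ge\sum_{j\le d}\binom rj$.
\item If $x<r$, then $f_j\le\binom rj$ for every $j>d$. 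Hence $\sum_{j\le d}f_j=|\mathcal{D}|-\sum_{j>d}f_j\ge 2^r-\sum_{j>d}\binom rj=\sum_{j\le d}\binom rj$.
\end{itemize}
Combined with your standard-monomial bound and Steps 1--2, this proves the proposition.
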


\begin{proof}
Let $\widetilde{V} = \calP_{\le d}(\F_2^m)$ be the full polynomial space, and let $\widetilde{\operatorname{ev}}_z(q) = q(z)$ for all $z \in A$. Define \[E_{\operatorname{full}} = \Span\set{\widetilde{\operatorname{ev}}_u + \widetilde{\operatorname{ev}}_v \mid \set{u,v} \in E(H)}.\] Since every functional in $E_{\operatorname{full}}$ vanishes on constant polynomials, $E_{\operatorname{full}}$ has the same dimension as the span of the corresponding edge functionals on $\calP_{\le d}^0$. 

Fix $z_0\in A$.  For each $z\in A$, choose a path in $H$ from $z_0$ to $z$.  Summing the edge functionals along this path gives
\[
\widetilde{\operatorname{ev}}_{z_0}+\widetilde{\operatorname{ev}}_z,
\]
since every intermediate vertex is added twice and cancels over $\F_2$.  It follows that $\widetilde{\operatorname{ev}}_{z_0} + \widetilde{\operatorname{ev}}_z$ belongs to $E_{\operatorname{full}}$. Therefore
\[
\dim E_{\operatorname{full}} = \dim \Span\set{\widetilde{\operatorname{ev}}_z\mid z\in A}-1\ge \sum_{j=1}^{\min(d,r)}\binom rj,
\]
where the last inequality follows from the Hilbert-function bound {\cite[Corollary~4]{GGHNY24}}. Specifically, the span of the evaluation map is the degree-$d$ Hilbert function of $A$. Golovnev, Guo, Hatami, Nagargoje, and Yan prove that, for every nonempty $A\subseteq\F_2^m$, this rank is at least
\[
\sum_{j=0}^{\min\{d,\lfloor\log_2\abs{A}\rfloor\}}
\binom{\lfloor\log_2\abs{A}\rfloor}{j}.
\]
Since $\abs{A}\ge2^r$, the result follows. 
\end{proof}

\begin{proposition}[Balanced orientation]
\label{prop:balanced-orientation}
Every finite graph admits an orientation such that, at every vertex $v$,
\[
\abs{d^+(v)-d^-(v)}\le1.
\]
\end{proposition}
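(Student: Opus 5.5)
The plan is the classical Eulerian argument. First, I add an auxiliary vertex $w$ to the finite graph $G$ and join it by a single new edge to every vertex of odd degree in $G$. By the handshake lemma, $G$ has an even number of odd-degree vertices, so $w$ has even degree. Each formerly odd vertex gains exactly one degree, and the other vertices are unchanged. Hence every vertex of the augmented graph $G'$ has even degree.

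Next, I orient $G'$ component by component. Each connected component of $G'$ with at least one edge has all degrees even, so it admits a closed Eulerian circuit. Isolated vertices need no treatment, and a multigraph formulation handles parallel edges uniformly, though none arise here since $w$ is new. I orient every edge in the direction it is traversed along the circuit. Each time the circuit passes through a vertex it uses one incoming and one outgoing edge, and the circuit closes up at its start, so every vertex of $G'$ satisfies $d^+ = d^-$ exactly.

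Finally, I delete $w$ and its incident edges, keeping the orientation on $E(G)$. A vertex that had even degree in $G$ loses no edges, so its in- and out-degrees remain equal. A vertex of odd degree in $G$ loses exactly one edge, which changes one of $d^+(v), d^-(v)$ by one, so $\abs{d^+(v)-d^-(v)} = 1$.

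There is no real obstacle. The only step needing care is citing Euler's theorem that a connected graph with all degrees even has a closed Eulerian circuit. If a self-contained argument is preferred, an alternative is to repeatedly peel off edge-disjoint cycles and maximal paths between odd vertices, orienting each consistently.
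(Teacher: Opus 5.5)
Your proof is correct and is essentially the paper's argument: make every degree even, orient the edges along Euler circuits, then delete the auxiliary edges. The only difference is cosmetic. The paper pairs up the odd-degree vertices within each component using auxiliary edges, which may create a multigraph, whereas you join all odd-degree vertices to a single new vertex $w$, which keeps the graph simple and makes the ``each vertex loses at most one edge'' step immediate.
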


\begin{proof}
By the Handshake Lemma, the number of odd-degree vertices in any graph is always even. In each connected component, group the vertices of odd degree into pairs of two and add one auxiliary edge for each pair. The resulting multigraph has only even degrees, and hence admits an Euler tour in every connected component. Orient each edge along the traversal of the Euler tour so that the indegree and outdegree agree at every vertex, then delete the auxiliary edges. Deleting at most one auxiliary edge incident to each vertex leaves at most a difference of one.
\end{proof}

\begin{fact} \label{fact:good-message-probability}
    \begin{equation}
    \Pr_M[M\text{ is good}]\ge\frac{\mu_\gamma-\delta}{1-\delta}.
\end{equation}
\end{fact}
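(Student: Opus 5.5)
Fact~\ref{fact:good-message-probability} follows from a Markov-type averaging argument that turns the protocol's success probability into a lower bound on the probability of a good message. Write $\sigma_m$ for Bob's success probability conditioned on $M=m$, averaged over the uniform direction $a\in\F_2^n\setminus\{0\}$. Let $u_m$ be the fraction of directions that are useful for $A_m$. The overall success probability is $\E_M[\sigma_M]\ge 2/3$, and the task is to show that $\sigma_m$ is small unless $u_m$ is large.

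First, bound the success probability for a single message and direction. Given $(m,a)$, Bob outputs the fixed pair $(x_{m,a},b_{m,a})$. Since $q\sim\mathrm{Unif}(A_m)$ conditioned on $M=m$, and $a$ is independent of $q$, the probability that $b_{m,a}=D_aq(x_{m,a})$ is at most the majority probability $(1+\beta_{m,a})/2$. For non-useful directions this is less than $(1+\gamma)/2$. For useful directions we use the trivial bound $1$. Averaging over $a$ gives
\[
\sigma_m\le u_m+(1-u_m)\frac{1+\gamma}{2}.
\]
It is also convenient to record the cruder consequence $\sigma_m\le u_m+(1-u_m)\gamma'$ for any $\gamma'\ge(1+\gamma)/2$.

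Second, average over $M$. Put $p=\Pr[M\text{ is good}]$. Good messages satisfy $\sigma_m\le1$, and bad messages have $u_m<\delta$, so
\[
\tfrac23\le p+(1-p)\Bigl(\delta+(1-\delta)\tfrac{1+\gamma}{2}\Bigr).
\]
Solving this gives a lower bound on $p$.

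The main obstacle is matching the exact constant $\mu_\gamma=\frac{1/3-\gamma}{1-\gamma}$. The bound above yields $p\ge\frac{2/3-\delta-(1-\delta)(1+\gamma)/2}{(1-\delta)(1-\gamma)/2}$, which has a different form. I would therefore recheck which normalization of bias the target constant reflects. The stated form $(\mu_\gamma-\delta)/(1-\delta)$ is exactly what one gets from
\[
\Pr[\text{success}\mid m\text{ bad}]\le\delta+(1-\delta)c \quad\text{with}\quad \tfrac23\le p+(1-p)\bigl(\delta+(1-\delta)c\bigr),
\]
since this gives $p\ge\frac{(2/3-c)/(1-c)-\delta}{1-\delta}$. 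Equality with $\mu_\gamma$ requires $c=(1+\gamma)/2$ together with the threshold $2/3$ replaced by a suitable constant, since $(2/3-c)/(1-c)$ equals $(1/3-\gamma)/(1-\gamma)$ exactly when $c=\gamma$ and the threshold is $1/3$.

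I would first try to reconcile this by a careful rewrite; the inequality with $c=(1+\gamma)/2$ is correct. If the paper's intended per-direction bound is $\gamma$, for example through a success-minus-failure advantage formulation, then substituting $c=\gamma$ gives exactly $p\ge\frac{\mu_\gamma-\delta}{1-\delta}$ via the rearrangement $1-p\le\frac{1/3}{1-\delta-(1-\delta)\gamma}$. Otherwise I would state the corrected constant and propagate it into the choice of $\gamma$ and $\delta$.
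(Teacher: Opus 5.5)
Your first two steps are correct and are the paper's argument up to normalization. The paper works with bias rather than success probability: it gets $\E_{m,a}[\beta_{m,a}]\ge 1/3$ and $\E_a[\beta_{m,a}]\le \nu_m+\gamma(1-\nu_m)$, hence $\E_m[\nu_m]\ge\mu_\gamma$, and then does the same good/bad split on $\nu_m$ (your $u_m$). You bound success probabilities directly, with per-direction bound $(1+\gamma)/2$ on non-useful directions.

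What goes wrong is your last two paragraphs. There is no discrepancy, and your claim that $(2/3-c)/(1-c)=(1/3-\gamma)/(1-\gamma)$ only when $c=\gamma$ is false. With $c=(1+\gamma)/2$ you have $2/3-c=\tfrac12(1/3-\gamma)$ and $1-c=\tfrac12(1-\gamma)$, so $(2/3-c)/(1-c)=\mu_\gamma$ for every $\gamma$. Your displayed bound is therefore already the statement:
\[
p\ge\frac{2/3-\delta-(1-\delta)\frac{1+\gamma}{2}}{(1-\delta)\frac{1-\gamma}{2}}=\frac{\tfrac12\bigl[(1/3-\gamma)-\delta(1-\gamma)\bigr]}{\tfrac12(1-\delta)(1-\gamma)}=\frac{\mu_\gamma-\delta}{1-\delta}.
\]
The reason is that $(\theta-c)/(1-c)$ is invariant under the affine map $s\mapsto 2s-1$, which turns success probability into bias and sends $\bigl(2/3,\tfrac{1+\gamma}{2},1\bigr)$ to $(1/3,\gamma,1)$. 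So your inequality and the paper's are the same inequality, and the ``reconcile or correct the constant'' discussion should be deleted; no change to $\gamma$ or $\delta$ is needed.

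The speculative rearrangement you offer there is also wrong. In the bias formulation (threshold $1/3$, per-direction bound $\gamma$) one gets $1-p\le\frac{2/3}{(1-\delta)(1-\gamma)}$, not $\frac{1/3}{(1-\delta)(1-\gamma)}$. Keeping threshold $2/3$ with per-direction bound $\gamma$ instead would rely on an invalid bound on the success probability of non-useful directions.
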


\begin{proof}
    Let $\nu_m$ be the fraction of useful directions in $A_m$.
    Observe that $\forall \: u,v\in\F_2$, $\mathbf{1}[u=v] = \frac{1+(-1)^{u+v}}2.$ Applying this fact to Bob's success probability gives
    \begin{align*}
        &\Pr_{q\sim A_m}
        \left[
        b_{m,a}=D_aq(x_{m,a})
        \right] \\
        &\qquad=
        \frac12
        +
        \frac12(-1)^{b_{m,a}}
        \E_{q\sim A_m}\left[
        (-1)^{D_aq(x_{m,a})}\right] \\
        &\qquad\le
        \frac{1+\beta_{m,a}}2.
    \end{align*}
    Since the protocol succeeds with probability at least $2/3$, averaging the preceding inequality over all messages $m$ and nonzero directions $a$ gives $\frac23 \le \frac12+\frac12\E_{m,a}[\beta_{m,a}]$. Therefore
    \begin{equation}
    \label{eq:average-bias-1}
        \E_{m,a}[\beta_{m,a}] \ge \frac13.
    \end{equation}
    Simultaneously, the average bias is upper bounded by
    \begin{equation}
    \label{eq:average-bias-2}
        \E_a[\beta_{m,a}]\le 1\cdot\nu_m + \gamma\cdot (1-\nu_m)
    \end{equation}
    Combining \cref{eq:average-bias-1} and \cref{eq:average-bias-2} lower bounds the fraction of useful directions by
    \begin{equation}
    \label{eq:average-useful-directions}
        \E_m[\nu_m] \ge \frac{1/3-\gamma}{1-\gamma} = \mu_\gamma.
    \end{equation}
    Combining this with an upper bound on the number of useful directions gives
    \[
        \mu_\gamma \le \E_m[\nu_m]\le 1\cdot \Pr_M[M\text{ is good}]+\delta\cdot (1-\Pr_M[M\text{ is good}]),
    \]
    which is a rearrangement of the desired inequality. 
\end{proof}

\subsection{Optimality in the Clifford hierarchy message model}
We now place the Directional Derivative Problem within a broader one-way communication model based on the Clifford hierarchy and show that its separation in this regime is also optimal. 

Let $\calP_m$ denote the $m$-qubit Pauli group. For $k \ge 1$, define the $k$th level of the Clifford hierarchy~\cite{gottesman1999demonstrating} as
\begin{align*}
    \calC_1(m) &= \set{e^{i\theta}P \mid P \in \calP_m, \theta \in R}\\
    \calC_k(m) &= \set{U \mid UPU^\dagger \in \calC_{k-1}(m) \text{ for all } P \in \calP_m}
\end{align*}
where $U$ ranges over all $m$-qubit unitaries. In particular, $\calC_2(m)$ is the Clifford group. Call a \emph{level-k message} a state $U\ket{0^m}$ where $U\in \calC_k(m)$ is some $m$-qubit unitary from the $k$-th level of the Clifford hierarchy. The \emph{level-$k$ message model} considers protocols in which Alice sends a level-$k$ message on every input she receives, while Bob’s measurement is allowed to be arbitrary.
We first verify that the phase state messages used in \cref{lemma:ddp-quantum} are level-$d$ messages, and hence $\DDP_{n,d}$ admits a protocol in the level-$d$ message model.
\begin{lemma}[Hierarchy level of degree-at-most-$d$ phase states]
    For $d \ge 2$, let $q: \F_2^n \to F_2$ be a polynomial of degree at most $d$. Then the $n$-qubit phase state $\ket{\psi_q} \coloneqq 2^{-n/2}\sum_{x\in \F_2^n} (-1)^{q(x)}\ket{x}$ is equal to $U_q\ket {0^n}$ for some $U_q \in \calC_d(n)$. 
\end{lemma}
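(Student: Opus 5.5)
The plan is to write $\ket{\psi_q} = D_q H^{\otimes n}\ket{0^n}$, where $D_q\ket{x} = (-1)^{q(x)}\ket{x}$ is the diagonal phase oracle of $q$, and then set $U_q \coloneqq D_q H^{\otimes n}$. We then show $U_q\in\calC_d(n)$. The proof has two steps.

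\textbf{Step 1: $D_q\in\calC_d(n)$.} We argue by induction on $d$, proving the stronger claim that $D_q\in\calC_{\max(d,1)}$ for every polynomial $q$ of degree at most $d$ (with $d\ge 0$). For the base case $d\le 1$, $D_q$ is $\pm\prod_{i} Z_i^{c_i}$, which is a Pauli up to phase and so lies in $\calC_1$.

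For the inductive step, take $d\ge 2$ and compute the conjugation $D_q P D_q^\dagger$ for each Pauli $P = i^{\lambda}X^{u}Z^{v}$.
\begin{itemize}
\item $D_q$ commutes with $Z^v$.
\item For the $X$ part, $D_q X^u D_q^\dagger \ket{x} = (-1)^{q(x+u)+q(x)}\ket{x+u}$, so $D_q X^u D_q^\dagger = X^u D_{D_u q}$.
\end{itemize}
Hence $D_q P D_q^\dagger = P\,D_{D_uq}$, up to a reordering that at most changes the Pauli by a phase. The derivative $D_uq$ has degree at most $d-1$, so by induction $D_{D_uq}\in\calC_{d-1}$.

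We then need $P\cdot W\in\calC_{d-1}$ whenever $P$ is Pauli and $W\in\calC_{d-1}$ is diagonal. This uses the standard fact that each level $\calC_k$ is closed under left multiplication by Paulis, which we check by its own induction on $k$:
\begin{itemize}
\item At $k=1$ the claim is immediate, since a product of Paulis is a Pauli up to phase.
\item For $k\ge 2$, we have $(PW)Q(PW)^\dagger = P(WQW^\dagger)P^\dagger$. Here $WQW^\dagger\in\calC_{k-1}$, and conjugation by a Pauli $P$ preserves $\calC_{k-1}$. That preservation holds because if $V\in\calC_{k-1}$, then $PVP^\dagger$ conjugates any Pauli $Q'$ to $P\,V(P^\dagger Q' P)V^\dagger P^\dagger$, and $P^\dagger Q' P = \pm Q'$. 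An inner induction then closes the argument.
\end{itemize}
Since $D_qPD_q^\dagger\in\calC_{d-1}$ for every Pauli $P$, we conclude $D_q\in\calC_d$.

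\textbf{Step 2: $U_q = D_q H^{\otimes n}\in\calC_d(n)$.} The levels $\calC_k$ are not groups for $k\ge 3$, so this does not follow from closure under products. Instead we use right-multiplication by a Clifford: for any Pauli $P$, $(D_qH^{\otimes n})P(D_qH^{\otimes n})^\dagger = D_q P' D_q^\dagger$, where $P' = H^{\otimes n}PH^{\otimes n}$ is again a Pauli. This lies in $\calC_{d-1}$ by Step 1. Hence $U_q\in\calC_d(n)$, and $U_q\ket{0^n} = D_q\ket{+}^{\otimes n} = \ket{\psi_q}$.

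The main obstacle is the bookkeeping in Step 1, namely verifying that Pauli multiplication and conjugation preserve each level $\calC_k$. This must be done with care because the levels are not closed under arbitrary products. I would handle it with a short auxiliary claim proved by induction on $k$ before the main argument.
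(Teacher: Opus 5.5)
Your proof is correct and follows essentially the same route as the paper. Both write $\ket{\psi_q}=W_qH^{\otimes n}\ket{0^n}$ and induct on the degree via the identity $W_qX^aZ^bW_q^\dagger = X^aZ^bW_{D_aq}$. The difference is in the closure facts. You prove closure under Pauli multiplication and conjugation explicitly, and you handle $H^{\otimes n}$ by a direct conjugation computation; the paper simply cites these closure properties of the hierarchy levels. You also absorb constant terms into a strengthened induction hypothesis, whereas the paper strips them off by hand.
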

\begin{proof}
    Let $W_q\ket x \coloneqq (-1)^{q(x)}\ket x$. We claim that for every $1\le d \le n$ and $q \in V_{n,d}$, $W_q \in \calC_d(n)$. Consequently, for every $d \ge 2$, $\ket{\psi_q} = W_qH^{\otimes n}\ket{0^n}$ is a level-$d$ message. We prove this via induction on $d$. For $d=1$, $W_q$ is a Pauli up to global phase and hence lies in the Clifford hierarchy. Note that $q \in V_{n,d}$ implies $\deg q \le d$ since Alice's input $q$ belongs to the space of polynomials of degree at most $d$. Since $D_aq$ is a derivative which eliminates the highest-degree term of $q$ in $\F_2$, $\deg(D_aq) \le d-1$. Moreover, its constant term can be removed since $W_{D_aq} = (-1)^{D_aq(0)}W_{D_aq - D_aq(0)}$. Assume by induction that $W_{D_aq} \in \calC_{d-1}(n)$. For all $a,b \in \F_2^n$, define $X^a \ket x = \ket{x + a}$ and $Z^b \ket x = (-1)^{b\cdot x} \ket x$. Then \begin{align*}
    W_q(X^aZ^b)W^\dagger_q\ket x  &= (-1)^{q(x)}W_q(X^aZ^b)\ket x\\
    &= (-1)^{q(x) + b\cdot x} W_q \ket {x + a}\\
    &= (-1)^{q(x) + b\cdot x + q(x + a)}\ket{x + a}\\
    &=(-1)^{D_aq(x) + b\cdot x}\ket {x+a}\\
    &= X^aZ^bW_{D_aq}\ket x
\end{align*}
Because every level $d\ge 2$ of the Clifford hierarchy is closed under Clifford conjugation and multiplication by a Clifford on either side, $W_q \in \calC_d(n)$.  Therefore  \[\ket{\psi_q} = W_qH^{\otimes n}\ket{0^n}, \:\:\: \text{ where }\:\:\:  W_q H^{\otimes n} \in \calC_d(n).\]Note that for $d=2$, $ W_qH^{\otimes n}$ is a Clifford gate so $\ket{\psi_q}$ is a stabilizer state. 
\end{proof}

A natural question is: among problems that admit an $m$-qubit level-$k$ protocol, how much classical communication can they require? The next lemma shows that the amount of classical communication needed to simulate one-way protocols where Alice sends a level-$k$ message, for every level $k$, is $O_k(m^k)$, where Bob's measurement can be arbitrary. 

\begin{lemma}[Classical simulation of a level-$k$ message] \label{lem:ch-simulation-bound}
For all $k \ge 1$, every one-way protocol in which Alice sends a level-$k$ message has a classical randomized one-way simulation using $O_k(m^k)$ bits. 
\end{lemma}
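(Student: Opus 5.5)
Alice's message is $U\ket{0^m}$ for some $U \in \calC_k(m)$. The plan is to show that the set of such states is determined by an $O_k(m^k)$-bit classical description. Alice then sends that description, and Bob, who has unbounded local computation, reconstructs the exact state and samples his arbitrary measurement on it. This gives a deterministic, error-free simulation whose cost is the logarithm of the number of distinct level-$k$ messages, so it suffices to bound $\log_2|\{U\ket{0^m} : U\in\calC_k(m)\}| = O_k(m^k)$.

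For $k=1$, $U$ is a Pauli up to phase, so the state is a computational basis state up to a global phase. It is specified by $m$ bits. For $k=2$, the state is a stabilizer state, described by its stabilizer tableau (the generators $U Z_i U^\dagger$ as elements of $\calP_m$ modulo phase, plus signs). That takes $O(m^2)$ bits.

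For general $k$, I would use the analogous tableau-type description by induction. A unitary $U$ in $\calC_k$ is determined up to global phase by the conjugates $U X_i U^\dagger$ and $U Z_i U^\dagger$ for $i \in [m]$, because conjugation of a generating set of the Pauli group fixes conjugation on the whole matrix algebra. Each of these $2m$ conjugates lies in $\calC_{k-1}$. The state $U\ket{0^m}$ is itself determined by the $m$ operators $UZ_iU^\dagger \in \calC_{k-1}$: it is their common $+1$ eigenvector, and this eigenspace is one-dimensional since it is the image of that of the $Z_i$. The difficulty is that the number of elements of $\calC_{k-1}$ modulo phase is not obviously finite or small. Letting $s_j$ be the description length of level-$j$ unitaries up to phase, the recursion $s_k \le 2m\, s_{k-1}$ would give only $m^{k}$ times constants if the recursion terminates properly. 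Indeed $s_1 = O(m)$, so $s_k = O_k(m^{k})$, and the state needs only $m\,s_{k-1} = O_k(m^k)$ bits. One subtlety is phases: for $k\ge 2$, the conjugates $UPU^\dagger$ must be specified exactly, including phase, since $UPU^\dagger$ is Hermitian-up-to-sign and its phase is constrained to a finite set. I would check that the phases appearing at each level lie in a finite set of size independent of $m$ (for instance eighth roots at low levels, and more generally $2^{O(k)}$-th roots). This requires each $\calC_j$ modulo global phase to be a finite set, which I expect follows by induction because the conjugation data lands in a finite set.

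The main obstacle is making the recursion rigorous: showing that the data $\{UPU^\dagger\}$ for $P$ ranging over generators, each with exact phase, takes finitely many values and admits an encoding of length $O(m)\cdot s_{k-1}$. If exact phases prove problematic, I would fall back on an approximate version. Alice sends descriptions of the $UZ_iU^\dagger$ rounded to precision $\varepsilon/\poly$. Bob reconstructs the approximate common eigenvector, and the error enters the success probability only additively as $\varepsilon/2$, exactly as in \eqref{eq:communication-simulation}.
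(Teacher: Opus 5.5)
Your skeleton matches the paper's. Alice sends the index of her preparation unitary modulo global phase. The number of such classes is bounded by induction on $k$, using the $2m$ conjugates $UX_jU^\dagger, UZ_jU^\dagger\in\calC_{k-1}(m)$, which determine $U$ up to phase. However, you leave open the step you yourself call the main obstacle: you never show that each conjugate can take only finitely many exact values. That is exactly what is needed for $\calC_{k-1}(m)$ modulo phase to be finite and for $s_{k-1}$ to be defined at all. Your plan for controlling phases, showing that they are $2^{O(k)}$-th roots of unity at each level, is unverified and aims at the wrong quantity. The phases of matrix entries are irrelevant here. The approximate fallback does not rescue the argument either. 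Rounding descriptions of the $UZ_iU^\dagger$ presupposes a short description at level $k-1$, which is the thing in question, and you give no control of how errors propagate through the levels.

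The missing observation is the one you half-state (``Hermitian-up-to-sign''). Since $P\in\{X_j,Z_j\}$ is a Hermitian involution, so is $UPU^\dagger$. If $V$ and $e^{i\theta}V$ are both involutions, then $e^{2i\theta}=1$, so $e^{i\theta}=\pm1$. Hence each phase class of $\calC_{k-1}(m)$ contains at most two admissible values of $UPU^\dagger$. Writing $a_k(m)$ for the number of elements of $\calC_k(m)$ modulo phase, with $a_1(m)=4^m$, this gives $a_k(m)\le(2a_{k-1}(m))^{2m}$. This single recursion proves finiteness at every level by induction. It also gives $\log_2 a_k(m)\le 2m(\log_2 a_{k-1}(m)+1)=O_k(m^k)$, which is your recursion $s_k\le 2m\,s_{k-1}$ with one extra sign bit per conjugate. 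With this, no statement about roots of unity is needed, and the simulation is exact.
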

\begin{proof}
Let $a_k(m)$ be the number of unitaries in $\calC_k(m)$ up to global phase. We use induction. For $k = 1$, $a_1(m) = 4^m$. For $k \ge 2$, every fixed unitary $U \in \calC_k(m)$ determines the ordered list
\begin{align*}
    UX_jU^\dagger, UZ_j U^\dagger \tag{$j = 1, \cdots, m$}.
\end{align*}
This ordered list determines $U$ up to global phase, since two unitaries with the same list must differ by an operator that commutes with every Pauli (i.e. a scalar). Therefore, counting the number of possible lists provides a valid upper bound on $a_k(m)$. We now count the number of possible lists. Each entry is obtained by conjugating a Pauli generator by $U$, so is a Hermitian involution. Two such entries that differ by a global phase must either have opposite signs or be identical, so there are two operators for each equivalence class up to global phase. Therefore each entry has at most $2a_{k-1}(m)$ values. Since there are $2m$ entries, 
\[
    a_k(m) \le (2a_{k-1}(m))^{2m}.
\]
Let $b_k(m) \coloneqq \log_2a_k(m)$. Then 
\begin{align*}
    b_1(m) &= 2m\\
    b_k(m) &\le 2m(b_{k-1}(m) + 1)
\end{align*}
implies $b_k(m) = O_k(m^k)$.

The protocol is as follows: For any fixed $k \ge 1$, Alice and Bob pre-agree on an enumeration of the $a_k(m)$ unitaries up to global phase. During communication, Alice sends the index of the class of her preparation unitary. Bob then identifies the corresponding state and samples a measurement outcome which is invariant under global phase. This occurs with $O_k(m^k)$ bits.
\end{proof}

\begin{corollary}[Fixed-degree classical complexity]\label{cor:fixed-degree}
For every fixed $d\ge1$, the randomized classical one-way communication complexity of $\DDP_{n,d}$ is $\Theta_d(n^d)$. More precisely, every protocol succeeding with probability $\ge 2/3$ on uniform inputs and using $c$ bits of communication satisfies 
\[
    c \ge \left(\kappa 2^{-d} + o_d(1)\right)\binom nd,
\]
where $\kappa$ is the absolute constant in \cref{thm:general-directional-lower-bound}.
\end{corollary}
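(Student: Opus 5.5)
Both bounds in Corollary~\ref{cor:fixed-degree} follow from results already proved; the work is asymptotic bookkeeping.

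\textbf{Upper bound.} Alice sends the coefficient vector of $q$ in the monomial basis of $V_{n,d}$. This uses
\[
\dim V_{n,d}=\sum_{j=1}^{d}\binom nj=O_d(n^d)
\]
bits. Bob then evaluates $D_aq(x)$ at any point, say $x=0$, and outputs $(0,D_aq(0))$, so the protocol is exact. Alternatively, Lemma~\ref{lem:ch-simulation-bound} applied to the level-$d$ message from Lemma~\ref{lemma:ddp-quantum} also gives $O_d(n^d)$. For $d=1$ the bound is $O(n)$.

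\textbf{Lower bound.} I will invoke Theorem~\ref{thm:general-directional-lower-bound} with $t=\lfloor n/2\rfloor-4$. For fixed $d$ and $n$ large, $t\ge d$, so $\min\{d,t\}=d$ and
\[
c\ge\kappa\sum_{j=1}^{d}\binom tj\ge\kappa\binom td.
\]
Since $d$ is fixed and $t=n/2-O(1)$, we have
\[
\binom td=\frac{t^d}{d!}\bigl(1+o_d(1)\bigr)=\frac{(n/2)^d}{d!}\bigl(1+o_d(1)\bigr)=2^{-d}\binom nd\bigl(1+o_d(1)\bigr),
\]
using $\binom nd=\frac{n^d}{d!}(1+o_d(1))$. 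This yields
\[
c\ge\bigl(\kappa2^{-d}+o_d(1)\bigr)\binom nd.
\]
The additive $o_d(1)$ absorbs the multiplicative error because $\kappa2^{-d}\cdot o_d(1)=o_d(1)$. In particular, $c=\Omega_d(n^d)$, which matches the upper bound.

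\textbf{Remaining checks.} The hypotheses of Theorem~\ref{thm:general-directional-lower-bound}, namely $n\ge 22$ and $1\le d\le n$, hold for all sufficiently large $n$. The only place requiring care is that the ratio $\binom td/\binom nd\to 2^{-d}$ holds uniformly only for fixed $d$, which is exactly why the error term is written as $o_d(1)$.
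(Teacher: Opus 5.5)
Your proposal is correct and matches the paper's proof. The lower bound comes from the top term $\binom{\lfloor n/2\rfloor-4}{d}=(2^{-d}+o_d(1))\binom nd$ of the sum in \cref{thm:general-directional-lower-bound}. The upper bound has Alice send the $\sum_{j=1}^d\binom nj$ coefficients while Bob outputs $(0,D_aq(0))=(0,q(a))$.

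One small caveat concerns your aside. \cref{lem:ch-simulation-bound} recovers $O_d(n^d)$ only for $d\ge2$. For $d=1$ the phase state is a level-$2$ message, not a level-$1$ message, so the simulation gives only $O(n^2)$; the direct coefficient protocol is what handles $d=1$.
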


\begin{proof}
For fixed $d$, the leading term in the sum from \cref{thm:general-directional-lower-bound} is
\[
\binom{\lfloor n/2\rfloor-4}{d}
=
\left(2^{-d}+o_d(1)\right)\binom nd.
\]
This gives the lower bound. For the upper bound, Alice sends the coefficients of all nonconstant monomials of degree at most $d$. There are $\sum_{j=1}^d \binom nj$ coefficients, and each coefficient requires one bit. Bob reconstructs $q$ and outputs $(0, q(a))$. This is valid because $q(0) = 0$ and $D_aq(0) = q(a) + q(0) = q(a)$. 
\end{proof}

\begin{corollary}[Optimality and hierarchy-level tradeoff]\label{cor:ddp-tradeoff}
    Fix $d \ge 2$ and $1 \le k \le d$. Any one-way protocol for $\DDP_{n,d}$ with $m$-qubit and level-$k$ messages, succeeding with probability $\ge 2/3$ on uniform inputs, must have \[m = \Omega_{d,k}(n^{d/k}).\]
\end{corollary}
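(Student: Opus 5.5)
The plan is to combine the classical simulation bound for level-$k$ messages (\cref{lem:ch-simulation-bound}) with the fixed-degree lower bound (\cref{cor:fixed-degree}). Suppose we have a one-way protocol for $\DDP_{n,d}$ in which Alice sends an $m$-qubit level-$k$ message and which succeeds with probability at least $2/3$ on uniform inputs. \Cref{lem:ch-simulation-bound} turns it into a classical one-way protocol of cost $b_k(m)+O(1)=O_k(m^k)$. In that protocol Alice sends the index of her preparation unitary (up to global phase) in a fixed enumeration of $\calC_k(m)$, and Bob samples from the exact outcome distribution of his arbitrary measurement on the corresponding state. Because the simulation is exact (not an $\eps$-net), Bob's output distribution on every input pair is identical to the quantum protocol's. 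So the classical protocol also succeeds with probability at least $2/3$ under the uniform input distribution of \cref{thm:general-directional-lower-bound}.

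I would state one point explicitly: Bob may use unbounded private randomness to sample the measurement outcome. This is allowed, since \cref{thm:general-directional-lower-bound} applies to randomized protocols and its proof fixes coins by averaging. The lemma's $O_k(m^k)$ hides a constant that I would make explicit from the recursion $b_1(m)=2m$, $b_k(m)\le 2m(b_{k-1}(m)+1)$. This gives $b_k(m)\le C_k m^k$ for all $m\ge1$ with, say, $C_k=4^k$. Writing it out avoids any issue about $m$ being small.

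Now apply \cref{cor:fixed-degree} (for fixed $d$ and all sufficiently large $n$): any such classical protocol uses $c\ge(\kappa2^{-d}+o_d(1))\binom nd=\Omega_d(n^d)$ bits. Chaining the two bounds gives $C_k m^k+O(1)\ge \Omega_d(n^d)$, hence $m=\Omega_{d,k}(n^{d/k})$.

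The only step needing care is the small-$n$ regime of \cref{thm:general-directional-lower-bound}, which requires $n\ge22$. Since the claim is asymptotic, for bounded $n$ the statement holds trivially by adjusting the implied constant, because $m\ge1$ for any nontrivial message. There is no real technical obstacle beyond confirming that the simulation is exact in distribution, so the $2/3$ success threshold transfers without loss.
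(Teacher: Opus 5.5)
Your proposal is correct and follows the same argument as the paper: combine the exact $O_k(m^k)$-bit classical simulation of level-$k$ messages from \cref{lem:ch-simulation-bound} with the $\Omega_d(n^d)$ lower bound of \cref{cor:fixed-degree} to get $m=\Omega_{d,k}(n^{d/k})$. Your added points are correct and make explicit what the paper's two-line proof leaves implicit: the simulation is exact in distribution, so the $2/3$ threshold transfers, and the recursion gives the explicit constant $b_k(m)\le 4^k m^k$.
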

\begin{proof}
Combining the lower bound from \cref{cor:fixed-degree} and the upper bound from \cref{lem:ch-simulation-bound}  gives 
\[
    \Omega_d(n^d) \le O_k(m^k)
\]
Hence $m = \Omega_{d,k}(n^{d/k})$. For $k=d$, this gives the lower bound $m = \Omega_d(n)$, matching the $n$-qubit protocol of \cref{lemma:ddp-quantum} asymptotically.  For $k < d$, no $n$-qubit level-$k$ protocol can achieve the stated success probability for sufficiently large $n$. Hence, an $n$-qubit level-$d$ protocol has optimal qubit communication up to a factor depending on $d$, proving that $\DDP_{n,d}$ asymptotically arrives at the largest possible quantum--classical separation within the Clifford hierarchy message model.
\end{proof}

\subsection{Optimality under state-preparation complexity}\label{sec:ddp-gate-bound}
In this section, we show that the worst-case state-preparation complexity of our protocol in \cref{lemma:ddp-quantum} is asymptotically optimal. Fix a finite gate set of one- and two-qubit gates. If every $n$-qubit message can be prepared by a circuit of at most $G$ gates, then Alice can send a description of her preparation circuit using $O(G \log n)$ bits. Bob can then simulate his measurement classically. For a protocol solving $\DDP_{n,d}$ with success probability at least 2/3 on uniform inputs, our $\Omega(n^d)$ classical lower bound therefore implies $G = \Omega_d(n^d/ \log n)$. We now give a matching upper bound.
\begin{corollary}[Synthesis of bounded-degree phase states]\label{cor:ddp-gate-bound}
    Fix $d \ge 2$, and let $q: \F_2^n \to \F_2$ be a polynomial with degree at most $d$. Every $n$-qubit phase state 
    \[
        \ket{\psi_q} = 2^{-n/2}\sum_{x \in \F_2^n}(-1)^{q(x)}\ket x
    \]
    can be prepared using $O_d(n^d/\log n)$ one- and two-qubit gates up to global phase from a finite gate set depending only on $d$.
\end{corollary}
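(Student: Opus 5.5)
We prepare $\ket{\psi_q}$ as $W_q H^{\otimes n}\ket{0^n}$, where $W_q\ket{x}=(-1)^{q(x)}\ket{x}$. The Hadamard layer costs $n$ gates. Write $q=\sum_{S\in\calS} x_S$ as a sum of monomials with $|S|\le d$, so that $W_q=\prod_{S\in\calS} W_{x_S}$ is a product of commuting multi-controlled-$Z$ gates. Implementing each monomial separately with $O_d(1)$ gates (using Toffoli-type decompositions and $O(d)$ ancillas, or ancilla-free over the gate set $\{H,T,T^\dagger,\mathrm{CNOT}\}$) gives $O_d(n^d)$ gates. The task is to save the extra $\log n$ factor. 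The plan mirrors the Patel--Markov--Hayes idea already cited for CNOT synthesis: group monomials into blocks and enumerate patterns within each block, so that one pays per pattern rather than per monomial.

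\textbf{Step 1: reduction to a lower-degree problem.} Write $q=\sum_{T}x_T\,\ell_T(x)$, where $T$ ranges over subsets of size $d-1$ (or below) and $\ell_T$ is a linear form. Evaluating $\ell_T$ is cheap with CNOTs, but summing over all $T$ naively costs $\Theta(n^{d-1}\cdot n)$. We therefore partition the variables into blocks $B_1,\dots,B_{n/s}$ of size $s=\lfloor \epsilon\log_2 n\rfloor$, with $\epsilon$ depending on $d$. Each monomial $x_S$ has its last variable in some block $B_j$. Group the monomials by the pair $(T, j)$, where $T$ is a prefix of size $\le d-1$. For each fixed $j$, the contribution is $\sum_T x_T\,\ell_{T,j}(x_{B_j})$, where $\ell_{T,j}$ is one of the $2^s\le n^{\epsilon}$ linear forms on $B_j$.

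\textbf{Step 2: precompute all parities of a block once.} For each block $B_j$, and for each nonzero pattern $p\in\F_2^{s}$, we need the phase $(-1)^{\left(\sum_{T:\,\ell_{T,j}=p} x_T\right)\cdot p(x_{B_j})}$. We compute $p(x_{B_j})$ into a single target qubit by walking through all patterns in Gray-code order, using one CNOT per pattern step. This costs $O(2^s)=O(n^\epsilon)$ per block and $O(n^{1+\epsilon}/\log n)$ overall. While the target holds $p(x_{B_j})$, we apply the degree-$\le d$ gates $x_T\cdot(\text{target})$ for all $T$ mapped to $p$. The total number of such gates is the number of distinct pairs $(T,p)$, which is at most the number of pairs $(T,j)$, namely $O_d(n^{d-1}\cdot n/\log n)=O_d(n^d/\log n)$. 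Each of these gates is a multi-controlled $Z$ on at most $d$ qubits and costs $O_d(1)$ gates, possibly using reusable ancillas. This requires extending Alice's message with $O(d)$ clean ancilla qubits that are returned to $\ket{0}$. Since this changes the message only by a tensor factor $\ket{0}$, it is harmless; alternatively, the ancilla-free decomposition of $C^{d-1}Z$ by Barenco et al.\ with $O_d(1)$ gates can be used, given one borrowed dirty qubit, which is available for $n>d$.

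\textbf{Main obstacle.} The main obstacle is the accounting in Step 2: making sure that the Gray-code sweep is shared across all prefixes $T$, rather than repeated per $T$. This works because all gates commute (they are diagonal, and the CNOT sweep only changes which parity the target holds). Within one sweep we can therefore interleave the gates for every $T$. The only nontrivial point is choosing $\epsilon<1$ so that the sweep overhead $n^{1+\epsilon}/\log n$ is $o(n^d/\log n)$, which holds for $d\ge2$. Finally, the gate set $\{H,\mathrm{CNOT},\text{gates needed for }C^{d-1}Z\}$ is finite and depends only on $d$, which proves \cref{cor:ddp-gate-bound}.
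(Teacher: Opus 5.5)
Your proposal is correct, but it takes a different route from the paper.

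\textbf{The paper's route.} The paper strips off all but the \emph{last two} variables of each monomial, writing $q=\sum_{|S|\le d-2}x_S\,q_S$ with each $q_S$ quadratic, so each $W_{q_S}$ is a diagonal Clifford. It then invokes the Aaronson--Gottesman $O(n^2/\log n)$ Clifford synthesis as a black box; that bound's logarithmic saving comes from Patel--Markov--Hayes-style CNOT synthesis. Next it fixes the scalar using $(HP)^3=e^{i\pi/4}I$, so that the circuit may legitimately be controlled. Finally it conditions every gate on $x_S$ and expands the resulting multi-controlled $H$, $P$, and CNOT gates via Barenco et al. The total is $O_d(n^{d-2})\cdot O(n^2/\log n)$.

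\textbf{Your route.} You strip off only the last variable, bucket it into blocks of size $\Theta(\log n)$, and carry out the pattern-enumeration saving by hand. One Gray-code sweep per block costs $O(n^{1+\epsilon}/\log n)$ CNOTs overall. It lets a single $C^{|T|}Z$ gate absorb all monomials $x_Tx_i$ with $i\in B_j$, giving $O_d(n^{d-1}\cdot n/\log n)$ multi-controlled gates.

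\textbf{What each buys.} Your argument is self-contained and uses only CNOTs and exactly diagonal multi-controlled $Z$ gates. It therefore needs neither controlled Hadamards nor any phase normalization before conditioning. The paper's argument is shorter given the black box and is ancilla-free. Yours uses an ancilla as the sweep target, though this is avoidable: sweep in place into the lowest-index qubit of each pattern, which lies above every control in $T$.

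\textbf{One wording fix.} Interleaving is not valid because the CNOTs commute with the diagonal gates; they do not. It is valid because every gate maps computational basis states to phased basis states. Hence each $C^{|T|}Z$ contributes exactly $(-1)^{x_T\,p(x_{B_j})}$, determined by the register contents when it is applied, and the target returns to $\ket{0}$ at the end.
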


\begin{proof}
    Construct the decomposition
    \[
        q(x) = \sum_{\substack{S\subseteq [n]\\\abs{S}\le d-2}}x_Sq_S(x_{[n]\setminus S}),
    \]
    where $x_S \coloneqq \prod_{i \in S}x_i$ with $x_{\varnothing} = 1$ and $q_S$ is the quadratic polynomial left after factoring out $x_S$ from all monomials whose first $|S|$ indices are $S$. In more detail, first place all degree-at-most-two monomials of $q(x)$ in $q_{\varnothing}$. The decomposition includes these terms since $x_{\varnothing}q_{\varnothing} = q_{\varnothing}$. Finally, for each monomial $x_{i_1}\cdots x_{i_r}$ of degree $r \ge 3$ with $i_1 < \cdots < i_r$, assign
    \[
        x_{i_1}\cdots x_{i_r} = \underbrace{x_{i_1}\cdots x_{i_{r-2}}}_{x_S}\underbrace{x_{i_{r-1}}x_{i_r}}_{\text{term added to }q_S}
    \]
    where $S = \set{i_1,\ldots,i_{r-2}}$. 
    Each quadratic phase 
    \[
        W_{q_S}\ket y \coloneqq (-1)^{q_S(y)}\ket y
    \]
    is Clifford. By~\cite{Aaronson_2004}, there is an $O(n^2/\log n)$-gate circuit $C_S$ over $\set{H,P,\mathrm{CNOT}}$ that implements $W_{q_S}$ up to a scalar. Namely, $C_S\ket y = \omega_S (-1)^{q_S(y)}\ket y$ where $\omega_S$ is an eighth-root of unity. Let $R \coloneqq (HP)^3 = e^{i\pi/4}I$. By appending a constant number of copies of $R, R^{-1}$ circuits to $C_S$, we obtain a circuit $\widetilde{C_S}$ that implements $W_{q_S}$ exactly. This phase normalization incurs only $O(1)$ gates. Conditioning $\widetilde{C_S}$ on the qubits indexed by $S$ and applying the exact decompositions of \cite{Barenco_1995} replaces each gate of $\widetilde{C_S}$ by $O_d(1)$ one-qubit and CNOT gates.

    There are $O_d(n^{d-2})$ sets $S$. Applying $H^{\otimes n}$ before these circuits therefore prepares $\ket{\psi_q}$ with 
    \[
        O(n) + O_d(n^{d-2})O_d(n^2/\log n) = O_d(n^d/\log n)
    \]
    gates. 
\end{proof}

\appendix

\printbibliography

\section{The Graph Hadamard Support Problem}\label{sec:ghs}

Recall that setting $d=2$ as per \cref{sec:DDP} gives an $n$ versus $\Theta(n^2)$ separation using stabilizer-state messages and Clifford measurements. In this appendix, we give an alternative construction achieving the same asymptotic separation. This problem has two additional features:
\begin{itemize}
    \item[1)] Bob's measurement is a \emph{product} measurement, implemented using only single-qubit Hadamard gates. This makes the protocol particularly relevant for the quantum information supremacy setting discussed in \cref{subsec:toward}.
    \item[2)] The classical lower bound employs a different proof than \cref{sec:ddp-classical-lower-bound}, which we include because it may be useful independently. To our knowledge, there is no reduction in either direction between $\DDP_{n,2}$ and $\GHS$.
\end{itemize}

We begin by defining the communication problem that exhibits the separation. For a symmetric zero-diagonal adjacency matrix $A\in\F_2^{n\times n}$, define the $n$-qubit graph state
\begin{equation}
    \ket{\psi_{A}}
    =
    2^{-n/2}\sum_{x\in\F_2^n}
    (-1)^{q(x)}\ket{x}
\end{equation}
where $q(x)=\sum_{1\le u<v\le n} A_{uv}x_ux_v$ is a quadratic polynomial in $x$ (we will later view this as the polynomial $q(A,x)$ where each monomial term has bidegree $(1,2)$ in the entries of $(A,x)$ respectively). For $S\subseteq[n]$, let $H_S$ denote the Clifford unitary that applies a Hadamard gate to every qubit in $S$ and the identity elsewhere. For an $n$-qubit state $\ket\phi$ and $n$-qubit quantum circuit $Q$, let $\supp(Q\ket\phi)$ denote the set of all bitstrings $x \in \F_2^n$ such that $|\bra{x}Q\ket\phi|^2 > 0$. We call $\supp(Q\ket\phi)$ the \emph{support} of $Q\ket\phi$. 

\begin{tcolorbox}[problembox]
\begin{problem}[Graph Hadamard Support Problem, $\GHS_n$]
    Alice receives a symmetric zero-diagonal matrix $A\in\F_2^{n\times n}$ specifying the graph state $\ket{\psi_A}$, and Bob receives a subset $S \subseteq [n]$ specifying the partial Hadamard operator $H_S$. After receiving a message from Alice, Bob must output a string $z \in \F_2^n$ such that $z \in \supp(H_S\ket{\psi_A})$.
\label{problem:GHS}
\end{problem}
\end{tcolorbox}
The main result of this section is that the Graph Hadamard Support Problem $\GHS_n$ exhibits an optimal $n$ versus $\Theta(n^2)$ quantum--classical separation in one-way communication. 

\begin{theorem}\label{thm:ghs-separation}
    There is an exact one-way communication protocol for $\GHS_n$ using $n$ qubits. Furthermore, every randomized one-way protocol for $\GHS_n$ that succeeds with probability at least $2/3$ uses $\Omega(n^2)$ bits of communication. Lastly, there exists an exact deterministic classical one-way protocol that uses $O(n^2)$ bits of communication.
\end{theorem}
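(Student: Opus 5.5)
The theorem has three parts: the quantum upper bound, the classical upper bound, and the classical lower bound. The lower bound is the only substantial one.

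\textbf{Quantum protocol.} Alice prepares $\ket{\psi_A}=W_qH^{\otimes n}\ket{0^n}$. This is a stabilizer state, by the hierarchy-level lemma above with $d=2$. Bob applies $H_S$ and measures in the computational basis. Every outcome lies in $\supp(H_S\ket{\psi_A})$ by definition, so the protocol is exact and uses $n$ qubits.

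\textbf{Classical upper bound.} Alice sends the $\binom n2$ bits of $A$, and Bob computes one support element. Write $T=[n]\setminus S$. For fixed $z$, the amplitude $\bra z H_S\ket{\psi_A}$ is proportional to
\[
\sum_{x_S}(-1)^{Q(x_S)+(z_S+A_{ST}z_T)\cdot x_S},
\]
where $Q(x_S)=\sum_{u<v\in S}A_{uv}x_ux_v$. The polar form of $Q$ is the alternating form given by $A_{SS}$. Hence $Q$ is linear on $R=\ker A_{SS}$, say $Q|_R(k)=\lambda\cdot k$, and the sum is nonzero iff $z_S+A_{ST}z_T+\lambda$ vanishes on $R$. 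This is the standard quadratic Gauss-sum criterion. Bob can therefore compute $R$ and $\lambda$ by Gaussian elimination and solve the resulting linear system for some $z$, using $O(n^2)$ bits of communication in total.

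\textbf{Classical lower bound.} I would follow the template of \cref{thm:general-directional-lower-bound}. Apply Yao's principle with $A$ uniform, independently of a well-chosen distribution on $S$. The characterization above shows that a valid output $z$ on input $S$ certifies, for every $k\in\ker A_{SS}$, the identity
\[
k\cdot z_S+k^{\top}A_{ST}z_T+Q(k)=0.
\]
Let $A_{SS}$ be nondegenerate except for a controlled kernel. The cleanest way to arrange this is a hard distribution in which Bob's $S$ is random and $A_{SS}$ is planted so that $\ker A_{SS}$ is spanned by a single known vector $k_S$. Each valid answer then predicts one parity of $A$, namely
\[
A\mapsto k_S^{\top}A_{ST}z_T+Q(k_S).
\]
I have not checked that this planting keeps the rest of $A$ uniform while leaving $k_S$ hidden from Alice. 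If it does not, I would fall back on a fully uniform $A$ and take $k_S$ to range over the kernel.

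The lower bound then splits into the familiar three steps.
\begin{enumerate}
\item By averaging, as in Fact~\ref{fact:good-message-probability}, a good message must make the certified parity $\gamma$-biased for a constant fraction of the $S$'s.
\item These biased functionals span a space of dimension $\Omega(n^2)$.
\item \cref{lem:biased-tests} converts that rank into $\Omega(n^2)$ bits of entropy deficit, so the message has entropy, and hence length, $\Omega(n^2)$.
\end{enumerate}

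\textbf{Main obstacle.} Step 2, the rank bound, is where I expect the difficulty. In the certified functional, both $z_T$ and the point of evaluation are chosen adaptively by Bob, just as $x_a$ was in \cref{lem:directional-rank}. In particular, $z=0$ is valid whenever $Q$ vanishes on $\ker A_{SS}$. The distribution on $S$ must therefore force $Q(k_S)$, which is a quadratic parity of the entries of $A_{SS}$, to carry information. I would first try to show that the quadratic parts $Q(k_S)$, for $S$ ranging over a constant fraction of subsets, already span $\Omega(n^2)$ dimensions modulo the linear terms, in analogy with \cref{prop:connected-edge-rank}. If that fails, I would restrict to $|S|=2$ or $|S|=3$ to get a concrete graph-theoretic rank argument.
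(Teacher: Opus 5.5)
Your quantum and classical upper bounds are fine and agree with the paper; your Gauss-sum criterion is exactly Fact~\ref{fact:support-criterion}. The lower bound, however, is not established. You leave open both the hard distribution and the rank step that you yourself flag as the main obstacle, and the template from \cref{thm:general-directional-lower-bound} does not fit this problem.

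With $A$ uniform, $\ker A_{SS}$ depends on $A$. For $|S|=3$ it is spanned by the adjugate vector $(A_{jk},A_{ik},A_{ij})$. A valid $z$ therefore forces the cubic equation $A_{ij}A_{ik}A_{jk}+(\text{$z$-dependent quadratic and linear terms})=0$, which is not a parity. So neither the bias/rank argument nor \cref{lem:biased-tests} applies.

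Your fallback to $|S|=3$ also fails as stated. The output $z=0$ is invalid only when the triangle $\{i,j,k\}$ is present, so it succeeds with probability $7/8>2/3$ with no communication.

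Planting a known kernel vector $k_S$ does linearize the constraint, but it has two problems. First, $A_{SS}k_S=0$ must hold simultaneously for every $S$ in Bob's support, which severely restricts $A$, and you have not checked this. Second, it puts the information-carrying term $Q(k_S)$ at the same degree as Bob's adaptively chosen cross terms $k_S^{\top}A_{ST}z_T$, which can involve up to $\Theta(n)$ other variables. That is exactly why the $\Omega(n^2)$ rank bound has no evident proof in your setup.

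The missing idea is to exploit the nonlinearity rather than remove it. The paper's construction is as follows.
\begin{itemize}
\item It takes $t=\Theta(n^2)$ blocks $B_1,\dots,B_t$ of size $3r$, with $r=6$ and $|B_i\cap B_j|\le 1$ (\cref{prop:block-packing}).
\item Each block is split into $r$ vertex-disjoint triangle gadgets.
\item $A$ is uniform on the $N=3rt$ gadget edges and zero elsewhere, and Bob gets $S=B_i$ for a uniformly random $i$.
\end{itemize}
Then $A_{B_i,B_i}$ is block diagonal, and each gadget's adjugate lies in its kernel. A valid output on $B_i$ forces $r$ cubic equations. Under any degree order, each one has leading monomial equal to its gadget's triangle $A_{uv}A_{uw}A_{vw}$, whatever $z$ Bob picks; his adaptivity only affects lower-degree terms.

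Because the gadget triangles are edge-disjoint, the footprint bound gives the following for a fixed message: simultaneous success on any set $J$ of blocks has probability at most $(7/8)^{r|J|}$ (\cref{prop:mult-success}). Amplifying over blocks is also what defeats the trivial $7/8$ success on a single triangle. A moment-generating-function argument then bounds the probability of success on at least $t/2$ blocks by $2^{-\Omega(t)}$. Hence every good message class has size at most $2^{N-\Omega(t)}$, and $c=\Omega(n^2)$.
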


The rest of this section focuses on proving \cref{thm:ghs-separation}. In \cref{sub:ghs-upper-bounds}, we prove the quantum and classical upper bounds. In \cref{sub:ghs-classical-lower-bound}, we prove the classical lower bound. Together, they imply \cref{thm:ghs-separation}. 

\subsection{Upper bounds}\label{sub:ghs-upper-bounds}
We show that there is an exact classical one-way protocol for $\GHS_n$ using $O(n^2)$-bits. Alice sends the $\binom n2$ upper-triangular entries of $A$. Bob then reconstructs $\ket{\psi_A}$ from its classical description and computes the amplitudes of $H_S\ket{\psi_A}$, outputting any bitstring with nonzero amplitude.

On the other hand, there is an exact quantum one-way protocol using only $n$ qubits: Alice sends $\ket{\psi_A}$, and Bob applies $H_S$ and measures in the computational basis, obtaining an outcome that lies in the support with probability 1.

\subsection{Randomized one-way lower bound}\label{sub:ghs-classical-lower-bound}

We now prove the technical part of \cref{thm:ghs-separation}, the classical lower bound. We will use the following definitions throughout the next two proofs.

\begin{definition}[Zero set of an ideal]\label{def:zero-set}
    For polynomials $p_1,\cdots,p_s \subseteq \F_2[x_1,\cdots,x_n]$, let $I = \langle p_1,\cdots p_s\rangle$.  Define the zero set of $I$ to be \[ \calV(I) = \{x \in \F_2^n \mid p_i(x) = 0 \: \forall \: i \in [s]\}.\] Equivalently, for a single polynomial $p$ let \[\calV(p) = \{x \in \F_2^n \mid p(x) = 0\}.\]
\end{definition}

\begin{definition}[Standard monomials]\label{def:standard-monomials}
    For a fixed monomial order $\prec$ and nonzero polynomial $f\in \F_2[x_1,\cdots,x_n]$, let $\operatorname{LM}_\prec (f)$ denote the leading monomial of $f$ with respect to $\prec$. For an ideal $I \subseteq \F_2[x_1,\cdots,x_n]$, define its initial ideal by $\operatorname{in}_\prec (I) \coloneqq \langle \operatorname{LM}_\prec(f) \mid f\neq 0 \in I\rangle$. Then the set \[\Delta_\prec(I)\coloneqq \{M \mid M \text{ is a monomial and } M\notin\operatorname{in}_\prec(I)\}\] consists of the \emph{standard monomials} of $I$.
\end{definition}

\begin{definition}[Binary relative entropy]\label{def:binentropy}
    For $p,q \in (0,1)$, define the binary relative entropy \[H_2(p||q) = p\log_2\frac pq + (1-p) \log_2 \frac{1-p}{1-q}\]
\end{definition}

\begin{fact}[Number of stabilizer states \cite{Aaronson_2004}]
The number of $n$-qubit stabilizer states is \begin{equation}\label{eq:num-stabilizer-states}
   |\Stab_n| = 2^n \prod_{j=1}^n (2^j + 1) = 2^{(\frac12 + o(1))n^2}.
\end{equation}
\end{fact}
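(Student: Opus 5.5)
The plan is to count stabilizer states through their stabilizer groups, reducing the count to a count of Lagrangian (maximal isotropic) subspaces of the symplectic space $\F_2^{2n}$. Every $n$-qubit stabilizer state $\ket\phi$ is the unique (up to phase) common $+1$ eigenvector of its stabilizer group $\mathcal S\subseteq\calP_n$. This is an abelian group of $2^n$ Hermitian Paulis not containing $-I$. Conversely, every such group stabilizes exactly one state, so stabilizer states are in bijection with such groups. Forgetting signs, the map $\pm X^aZ^b\mapsto(a,b)\in\F_2^{2n}$ sends $\mathcal S$ to an $n$-dimensional subspace $L$. Commutation of $X^aZ^b$ and $X^{a'}Z^{b'}$ is exactly vanishing of the symplectic form $a\cdot b'+a'\cdot b$, so $L$ is Lagrangian. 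Conversely, given a Lagrangian $L$ with basis $v_1,\dots,v_n$, each generator may independently be assigned either sign (with the phase fixed so the operator is Hermitian). The resulting groups never contain $-I$, by independence of the $v_i$, and the $2^n$ sign assignments give distinct groups. Hence $|\Stab_n|=2^n s_n$, where $s_n$ is the number of Lagrangian subspaces of $\F_2^{2n}$.

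To compute $s_n$, I would double count pairs $(L,v)$ with $L$ Lagrangian and $v\in L\setminus\{0\}$. Counting by $L$ gives $s_n(2^n-1)$. Counting by $v$ requires two observations. First, every nonzero $v$ is isotropic, since the form is alternating over $\F_2$. Second, the Lagrangians containing $v$ are contained in $v^\perp$, and they correspond bijectively (via $L\mapsto L/\langle v\rangle$) to Lagrangians of $v^\perp/\langle v\rangle$, a nondegenerate symplectic space of dimension $2n-2$. Counting by $v$ therefore gives $(4^n-1)s_{n-1}$. Dividing, $s_n=(2^n+1)s_{n-1}$ with $s_0=1$, so $s_n=\prod_{j=1}^n(2^j+1)$, which yields the exact formula.

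For the asymptotic form, write
\[
\log_2|\Stab_n|=n+\sum_{j=1}^n j+\sum_{j=1}^n\log_2(1+2^{-j}).
\]
The last sum is bounded by $\sum_j 2^{-j}/\ln 2=O(1)$. This gives $\log_2|\Stab_n|=n^2/2+O(n)=(\tfrac12+o(1))n^2$.

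The main point needing care is the bijection in the first step, namely that each group stabilizes exactly one state and that the sign choices are free and give distinct groups. I would cite or briefly verify this with the standard projector argument: $\prod_i (I+g_i)/2$ is a rank-one projector whenever the $g_i$ are independent commuting generators and $-I\notin\mathcal S$. The quotient correspondence in the recursion is routine linear algebra.
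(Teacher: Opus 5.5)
Your proof is correct. The paper gives no proof of its own; it simply cites Aaronson--Gottesman~\cite{Aaronson_2004}, and their argument takes a different route from yours.

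\textbf{The cited argument.} They count ordered generating sets $(M_1,\dots,M_n)$ directly. The generator $M_{k+1}$ can be any of the $4^n/2^k-2^k$ Paulis (modulo sign) that commute with $M_1,\dots,M_k$ but lie outside the group those generate, with two signs each. One then divides by the $\prod_{k=0}^{n-1}(2^n-2^k)$ ordered generating sets of a fixed group. The cancellation $(4^n-4^k)/(2^n-2^k)=2^n+2^k$ yields $2^n\prod_{k=0}^{n-1}(2^{n-k}+1)$.

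\textbf{Your argument.} You first factor out the $2^n$ sign choices, which reduces the problem to counting Lagrangian subspaces of $\F_2^{2n}$. You then get $s_n=(2^n+1)s_{n-1}$ by double counting pairs $(L,v)$ and reducing to $v^\perp/\langle v\rangle$.

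\textbf{What each route buys.} Yours is more structural: it shows that the factor $2^n$ is exactly the sign freedom and that $\prod_j(2^j+1)$ is the number of Lagrangians. The cost is that you need the reduced space to be nondegenerate and all nondegenerate alternating forms of a given dimension to be equivalent. The generating-set count is more elementary. It only needs that $k$ independent commutation constraints cut out a subspace of codimension $k$, together with the number of ordered bases of $\F_2^n$.

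Both routes rest on the same bijection between stabilizer states and stabilizer groups. Your projector remark covers this adequately. Writing the projector as $\ket\phi\bra\phi=2^{-n}\sum_{g\in\mathcal S}g$ shows it has rank one, and expanding it in the Pauli basis recovers $\mathcal S$ from the state, which gives injectivity.
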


\begin{fact}[Support criterion]\label{fact:support-criterion}
Let $\bar{S}\coloneqq[n]\setminus S$. A string $z\in\F_2^n$ lies in $\supp(H_S\ket{\psi_{A}})$ iff 
\begin{equation}
    q(A_{S,S},y) + y^\top\bigl(z_S+A_{S,\bar{S}}z_{\bar{S}}\bigr) = 0
    \qquad
    \text{for every }y\in\ker(A_{S,S}).
\end{equation}
A proof of Fact~\ref{fact:support-criterion} can be found at the end of the section.
\end{fact}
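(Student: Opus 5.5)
The plan is to compute the amplitude $\langle z|H_S|\psi_A\rangle$ explicitly, reduce it to a quadratic exponential sum over $\F_2^S$, and evaluate its squared magnitude using the associated bilinear form. Write $x=(x_S,x_{\bar S})$. Since $H_S$ acts as the identity on $\bar S$, only terms with $x_{\bar S}=z_{\bar S}$ contribute. This gives
\[
\langle z|H_S|\psi_A\rangle
=2^{-n/2}2^{-|S|/2}\sum_{y\in\F_2^S}(-1)^{q(y,z_{\bar S})+z_S\cdot y}.
\]
Next I split the quadratic form by blocks:
\[
q(y,w)=q(A_{S,S},y)+y^\top A_{S,\bar S}w+q(A_{\bar S,\bar S},w).
\]
The last term is a constant global sign once $w=z_{\bar S}$ is fixed. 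Hence, up to a nonzero factor, the amplitude equals
\[
\Sigma:=\sum_{y\in\F_2^S}(-1)^{L(y)},\qquad L(y):=q(A_{S,S},y)+c^\top y,\qquad c:=z_S+A_{S,\bar S}z_{\bar S}.
\]
So $z$ is in the support iff $\Sigma\neq 0$.

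To evaluate $\Sigma$, I would compute $|\Sigma|^2=\Sigma^2$ (it is real) by the substitution $y'=y+u$:
\[
\Sigma^2=\sum_{u}(-1)^{L(u)}\sum_{y}(-1)^{L(y+u)+L(y)+L(u)}.
\]
The key identity is
\[
L(y+u)+L(y)+L(u)=B(y,u):=y^\top A_{S,S}u,
\]
which holds because $A_{S,S}$ is symmetric with zero diagonal: the cross terms $y_iu_j+y_ju_i$ of each monomial appear, and the linear part cancels. The inner sum is a character sum of the linear functional $y\mapsto y^\top A_{S,S}u$. It equals $2^{|S|}$ if $A_{S,S}u=0$ and $0$ otherwise. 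Therefore
\[
\Sigma^2=2^{|S|}\sum_{u\in\ker(A_{S,S})}(-1)^{L(u)}.
\]

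It remains to show that $L$ restricted to $K:=\ker(A_{S,S})$ is linear. For $u,u'\in K$ we have $B(u,u')=0$, so the identity above gives $L(u+u')=L(u)+L(u')$. Hence $L|_K$ is a linear functional on $K$, and the character sum over $K$ is $|K|$ if $L|_K\equiv0$ and $0$ otherwise. Thus $\Sigma\neq0$ iff $q(A_{S,S},y)+y^\top c=0$ for every $y\in K$, which is exactly the stated criterion.

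The only delicate step is the polarization identity together with the resulting linearity of $L$ on the kernel. Both rely on the zero diagonal: the relation $q(y+u)+q(y)+q(u)=y^\top Au$ would fail with diagonal terms, since over $\F_2$ one has $x_i^2=x_i$. I would check this identity carefully monomial by monomial. The rest of the argument is routine bookkeeping of normalization constants and the global sign.
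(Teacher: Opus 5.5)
Your proposal is correct and follows essentially the same route as the paper. In both, the amplitude is reduced to the quadratic character sum $\sum_{y\in\F_2^S}(-1)^{L(y)}$, which is squared via the polarization identity $L(y+u)+L(y)+L(u)=y^\top A_{S,S}u$ and collapsed to a sum over $\ker(A_{S,S})$. The only difference is organizational. You obtain both directions at once from the closed form $\Sigma^2=2^{|S|}\sum_{u\in\ker(A_{S,S})}(-1)^{L(u)}$ together with linearity of $L$ on the kernel. The paper instead proves the forward direction separately by translation invariance (shifting by $y\in\ker(A_{S,S})$ multiplies the sum by $(-1)^{L(y)}$), and uses the squared sum only for the converse.
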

    
We give two lower bounds for $\GHS_n$: one for exact deterministic protocols in \cref{thm:ghs-exact-lower-bound}, and one for bounded-error randomized protocols in \cref{thm:ghs-approx-lower-bound}. 

\begin{theorem} \label{thm:ghs-exact-lower-bound}
    Any classical deterministic one-way protocol that solves $\GHS_n$ with success probability 1 requires $\Omega(n^2)$ bits of communication.
\end{theorem}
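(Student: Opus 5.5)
The plan is a counting (rectangle-size) argument. Fix an exact deterministic one-way protocol with $c$ bits. For each message $m$, let $\mathcal A_m$ be the set of adjacency matrices $A$ with $\msg(A)=m$. For each $S\subseteq[n]$, Bob outputs a fixed string $z_m(S)$, and exactness forces $z_m(S)\in\supp(H_S\ket{\psi_A})$ for every $A\in\mathcal A_m$ and every $S$. There are $2^{\binom n2}$ inputs, so some class has size at least $2^{\binom n2-c}$. It therefore suffices to show that every set of matrices $\mathcal A$ that admits such a common-output function $S\mapsto z(S)$ has $|\mathcal A|\le 2^{(1-\Omega(1))\binom n2}$; then $c=\Omega(n^2)$.

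To make the constraints tractable I would first work inside a structured subfamily, then lift back. Split $[n]=L\sqcup R$ with $|L|=|R|=n/2$ and look at the restriction of $\mathcal A_m$ to each fixing of the within-$L$ and within-$R$ blocks, so the remaining freedom is the bipartite block $B=A_{L,R}$. The count of $2^{n^2/4}$ such blocks already carries a constant fraction of the $\binom n2$ bits. Fact~\ref{fact:support-criterion} then says the following. For each $S$, Bob's single string $z=z(S)$ must satisfy
\[
    q(A_{S,S},y)+y^\top\bigl(z_S+A_{S,\bar S}z_{\bar S}\bigr)=0\qquad\text{for all } y\in\ker A_{S,S},
\]
and we read this as a system of polynomial equations in the entries of $A$ (bidegree $(1,2)$ in $(A,x)$ once $y,z$ are fixed). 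Ranging over a well-chosen family of sets $S$, for instance $S=L$, $S=L\cup\{r\}$ and $S=L\cup T$ for structured $T\subseteq R$, every $A\in\mathcal A_m$ lies in the common zero set $\calV(I_m)$ of the ideal $I_m$ generated by these equations. The equations are taken together with the field equations $a^2=a$, so $I_m$ is a zero-dimensional radical-type ideal over $\F_2$. Hence
\[
    |\mathcal A_m|\le|\calV(I_m)|\le|\Delta_\prec(I_m)|,
\]
by the standard-monomial bound (Definition~\ref{def:standard-monomials}). The goal is then to exhibit, for every possible choice of Bob's outputs, $\Omega(n^2)$ distinct leading monomials in $\operatorname{in}_\prec(I_m)$, each a distinct single variable $A_{uv}$ or a product forcing one. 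That gives $|\Delta_\prec(I_m)|\le 2^{\binom n2-\Omega(n^2)}$.

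The main obstacle is that Bob chooses $z(S)$ adversarially, and many choices give vacuous constraints. For example, with $S=L$ and $A_{L,L}=0$, the choice $z=0$ is always valid, since $q$ vanishes on the kernel. So the sets $S$ must be chosen so that $q(A_{S,S},\cdot)$ does not vanish identically on $\ker A_{S,S}$. Then the constraint is inhomogeneous, and no output can avoid pinning down a new linear combination of entries of $A$. My first attempt would be sets $S$ whose induced graph contains a controlled gadget, such as a single fixed edge or triangle inside $S$ that is known from the fixed diagonal blocks. With such a gadget the kernel condition yields an equation with a nonzero constant term, linear in a fresh row of $B$. The case analysis on the rank of $A_{S,S}$ (full rank gives no constraint, deficient rank gives a constraint) should be handled by averaging over the fixed blocks, so that most fixings make the relevant $A_{S,S}$ singular in a prescribed way.

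If direct monomial bookkeeping proves too messy, my fallback is to pass the exact lower bound through the stabilizer-state count \eqref{eq:num-stabilizer-states}: show that the family of support sets $\{\supp(H_S\ket{\psi_A})\}_S$ determines $\ket{\psi_A}$ up to $2^{O(n)}$ ambiguity, and that a common-output class cannot mix more than $2^{(1/2-\Omega(1))n^2}$ graph states.
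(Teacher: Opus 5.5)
Your framework is the paper's: message classes, the support criterion of Fact~\ref{fact:support-criterion} read as polynomial identities holding on each class, and the footprint bound $|\calV(I_m)|\le|\Delta_\prec(I_m)|$ with the field equations adjoined. The gap is the step that carries all the content, namely producing leading monomials that survive every choice of Bob's outputs. You never supply it, and the mechanism you sketch for it fails.

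\textbf{Nonvanishing on the kernel does not help.} Requiring $q(A_{S,S},\cdot)$ to be nonzero on $\ker A_{S,S}$ buys nothing. Since $q(A_{S,S},y+y')=q(A_{S,S},y)+q(A_{S,S},y')+y^\top A_{S,S}y'$, this form is \emph{linear} on the kernel, say $y\mapsto y^\top w$. So whenever $A_{S,S}$ is fixed (e.g.\ $S\subseteq L$ inside one of your slices), Bob outputs $z_S=w$, $z_{\bar S}=0$, and the constraint is void.

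\textbf{The slicing plan is refuted.} Take $A_{L,L}=A_{R,R}=0$. Then for every $S$ one has $q(A_{S,S},y)=y_{S\cap L}^\top B_{S\cap L,S\cap R}\,y_{S\cap R}=0$ on $\ker A_{S,S}$. Hence the output $z(S)=0$ is valid for all $2^{n^2/4}$ blocks $B$. So no bound of the form $2^{n^2/4-\Omega(n^2)}$ per slice can hold. Averaging over slices cannot help either, because an adversarial class may occupy a single slice.

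\textbf{Single-variable leading monomials are hopeless.} The same computation shows that the class of all bipartite graphs has the common output $z\equiv0$. This class contains the empty graph and every single edge. So no polynomial vanishing on it has a single variable $A_{uv}$ as leading monomial, since it would express $A_{uv}$ as a function of the other entries. The hunt for $\Omega(n^2)$ ``single-variable'' leading monomials therefore cannot succeed in general. Your fallback merely restates the claim to be proved.

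\textbf{The missing idea.} You need to exploit that the kernel moves with $A$ inside a class. The paper restricts Bob to triples $S=\{i,j,k\}$ and substitutes the kernel vector $y=(A_{jk},A_{ik},A_{ij})$, which is itself a polynomial in $A$. This gives
\[
p_S^m(A)=A_{ij}A_{ik}A_{jk}+(\text{terms of degree}\le2),
\]
and Bob's fixed $z$ enters only the lower-degree part. Under any degree-compatible order the leading monomial is therefore the triangle monomial, whatever $z$ is. So the squarefree standard monomials correspond to triangle-free graphs, and $|C_m|\le\operatorname{tf}(n)=2^{n^2/4+o(n^2)}$ by Erd\H{o}s--Kleitman--Rothschild, against $2^{\binom n2}$ inputs. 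If you prefer to avoid EKR, using only a packing of $\Theta(n^2)$ edge-disjoint triangles gives $|C_m|\le(7/8)^{\Theta(n^2)}2^{\binom n2}$.

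The bipartite example also shows two further things. This class-size bound is essentially tight. And the uncancellable leading monomials must have degree at least three, as the triangle monomials do.
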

\begin{proof}
    We restrict Bob's input to partial Hadamard circuits $H_S$ with $|S| = 3$, where $S$ ranges over all $\binom n3$ qubit triples. Since this is a restriction of the input domain of $\GHS_n$, a lower bound for these inputs is a lower bound for $\GHS_n$.
    
    Set $N \coloneqq \binom n2$ and identify Alice's input $x_A \in F_2^N$ with the upper-triangular entries of $A$. Define $\msg: \F_2^N \rightarrow \F_2^c$ via $\msg(\mathbf{x}) = m$, where $m$ is the message Alice sends on input $\mathbf{x}$ in the arbitrary deterministic protocol $\Pi$. For every message $m$, define the following ``message class''
    \[C_m = \{\mathbf{x} \mid \msg(\mathbf{x}) = m\}.\]
    Since $A$ has $\binom n2$ bits of freedom, the size of the set of all possible Alice inputs is
    \begin{equation}\label{eq:num-graph-states}
        \sum_m |C_m| = 2^{\binom n2} = 2^{(\frac12+o(1))n^2}.
    \end{equation}
    
    The key technical insight is that each message class $C_m$ cannot contain too many inputs. We show in \cref{lem:ghs-message-class-upper-bound} that for every message $m$,
        \begin{equation}
        |C_m| \le 2^{(\frac14 + o(1))n^2}.
        \end{equation}
    Combined with \cref{eq:num-graph-states}, there are at least $2^{(\frac14 - o(1))n^2}$ distinct messages, so Alice must send at least $(1/4 - o(1))n^2 = \Omega(n^2)$ bits, completing the proof of the lower bound.
\end{proof}

\begin{lemma}\label{lem:ghs-message-class-upper-bound}
    For every message m,
    \begin{equation}
    |C_m| \le 2^{(\frac14 + o(1))n^2}.
    \end{equation}
\end{lemma}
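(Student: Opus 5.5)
Fix a message $m$ and, for every triple $S\subseteq[n]$ with $|S|=3$, let $z^{(S)}\in\F_2^n$ be Bob's output on $(m,S)$. Because the protocol is exact, every $\mathbf{x}\in C_m$ gives a valid output $z^{(S)}$ for every triple. The plan is to turn this into a system of polynomial equations in the $N=\binom n2$ edge variables that vanishes on $C_m$, and then to bound the number of common zeros.

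\emph{Step 1: one constraint per triple.} A $3\times3$ symmetric zero-diagonal matrix over $\F_2$ has even rank at most $2$, so $\ker(A_{S,S})$ always contains a nonzero vector $y$. Fact~\ref{fact:support-criterion} therefore gives, for each triple $S$ and each nonzero $y\in\F_2^3$, the polynomial
\[
f_{S,y}(A)\coloneqq \Bigl(\prod_{s\in S}\bigl(1+(A_{S,S}y)_s\bigr)\Bigr)\Bigl(q(A_{S,S},y)+y^\top z^{(S)}_S+\sum_{s\in S}y_s\sum_{j\notin S}A_{sj}z^{(S)}_j\Bigr),
\]
which vanishes identically on $C_m$. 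The first factor is the polynomial indicator of $y\in\ker A_{S,S}$. Hence $C_m\subseteq\calV(I_m)$, where $I_m$ is generated by all the $f_{S,y}$ together with the field equations $A_{uv}^2+A_{uv}$. By the standard fact $|\calV(I)|\le|\Delta_\prec(I)|$ for ideals containing the field equations, it suffices to show that $I_m$ has at most $2^{(1/4+o(1))n^2}$ standard monomials for a well-chosen order $\prec$.

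\emph{Step 2: many leading monomials.} Order the vertices, and choose a lex order on edge variables so that ``cross'' variables $A_{sj}$ with $j$ large dominate variables inside a triple. For a triple $S$, the leading monomial of $f_{S,y}$ is (a monomial in $A_{S,S}$) times $A_{sj}$, where $j$ is the largest index outside $S$ with $z^{(S)}_j=1$. The goal is to show that, for roughly half of all pairs $\{s,j\}$, some generator has a leading monomial whose cross-edge part is exactly $A_{sj}$. Combining generators over the different $y$ (the indicator factors for the possible kernels sum to $1$ over the case split) should then let us place the bare variable $A_{sj}$, or a product with a low-degree cofactor, in $\operatorname{in}_\prec(I_m)$. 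This would kill about $N/2$ of the variables and leave at most $2^{N/2+o(n^2)}=2^{(1/4+o(1))n^2}$ standard monomials.

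\emph{Step 3: degenerate outputs, the expected main obstacle.} If $z^{(S)}_{\bar S}$ vanishes on the relevant positions, the constraint contains no cross edges. It then restricts only the $8$ possible triangles $A_{S,S}$, which gives a constraint purely inside the triple; many such triples would themselves force a large fraction of the edge configuration, giving an even smaller $C_m$. Likewise, when Bob's choices make the leading cross variables collide on few pairs $\{s,j\}$, I would use the freedom in choosing which vertex of $S$ serves as ``$s$'' (the three choices of kernel vector $y$) together with an averaging argument over triples. The aim is to show that some fixed ordering realizes distinct leading cross variables on $\Omega$-half of all pairs. Making this case analysis and the collision counting rigorous, so that the constant comes out to exactly $1/4$, is the part I expect to be delicate. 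My first attempt would be to order the triples by their largest vertex and apply a greedy argument: each new vertex $j$ contributes about $j/2$ new determined edges to earlier vertices.
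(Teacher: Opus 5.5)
There is a genuine gap in Steps~2--3, and it sits exactly at the extremal case. Within your framework, which uses only the triple constraints, suppose Bob outputs $z^{(S)}=0$ on every triple. For a rank-$2$ block $A_{S,S}$, the unique nonzero kernel vector is $(A_{jk},A_{ik},A_{ij})$, and $q(A_{S,S},y)=A_{ij}A_{ik}A_{jk}$. For $A_{S,S}=0$, the criterion is vacuous. So $z=0$ is a valid answer on $S$ precisely when $S$ does not span a triangle of $A$.

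Hence $\calV(I_m)$ is exactly the set of triangle-free graphs, which has size at least $2^{\lfloor n^2/4\rfloor}$. On this set no edge variable is a function of the other variables: the empty graph and any single-edge graph are both triangle-free. Since $I_m$ contains the field equations, $A_e\in\operatorname{in}_\prec(I_m)$ would force exactly such a functional dependence. So $\operatorname{in}_\prec(I_m)$ contains no bare variable under any order $\prec$. Three things follow:
\begin{itemize}
\item The mechanism of Step~2 (killing about $N/2$ variables) produces nothing here.
\item The greedy ``each new vertex determines about $j/2$ edges'' cannot hold.
\item Your claim in Step~3 that degenerate triples give ``an even smaller $C_m$'' is false. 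The degenerate case \emph{is} the extremal one.
\end{itemize}
Any argument based on triple constraints must therefore contain an upper bound of $2^{(1/4+o(1))n^2}$ on the number of triangle-free graphs. That is the Erd\H{o}s--Kleitman--Rothschild theorem, and it appears nowhere in your plan. Your lex order is also counterproductive. Under it, the leading monomial of a triple's combined constraint is typically a degree-$2$ product such as $A_{ib}A_{jk}$ (a cross edge times the opposite internal edge). Such a monomial only forbids a pair of edges; it does not eliminate a variable.

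The missing idea is already latent in your Step~1. Modulo the field equations, $\sum_{y\neq 0} f_{S,y}$ equals the polynomial obtained by substituting the \emph{polynomial} kernel vector $y=(A_{jk},A_{ik},A_{ij})$ into Fact~\ref{fact:support-criterion}; this vector lies in $\ker A_{S,S}$ for every $A$. To see the equality:
\begin{itemize}
\item For rank-$2$ blocks, only $y^*$ has indicator $1$, so the sum is the constraint at $y^*$.
\item For $A_{S,S}=0$, the seven terms cancel because $\sum_{y\neq0}y=0$ over $\F_2$. The substituted vector is also $0$ there, so both sides vanish.
\end{itemize}
The resulting polynomial is $A_{ij}A_{ik}A_{jk}$ (the coefficient $3\equiv1$) plus terms of degree at most $2$, whatever Bob's output is. Under any degree-compatible order, its leading monomial is therefore the triangle monomial. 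The standard monomials are then squarefree and have triangle-free edge sets, giving $|C_m|\le\operatorname{tf}(n)=2^{n^2/4+o(n^2)}$ by Erd\H{o}s--Kleitman--Rothschild. This is the paper's argument. It needs no case split on degenerate outputs and no collision counting.
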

\begin{proof}
    For a fixed $m$ and $S$, let $z = z_m(S)$ be Bob's output on message $m$ and triple $S = \{i,j,k\}$ for $1 \le i < j < k \le n$. For an arbitrary adjacency matrix $A \in \F_2^{n\times n}$, let $\mathbf{x}_A$ be the associated Alice input and define
    \begin{equation}
        A_{S,S}=
        \begin{pmatrix}
        0&A_{i,j}&A_{i,k}\\
        A_{i,j}&0&A_{j,k}\\
        A_{i,k}&A_{j,k}&0
        \end{pmatrix}
    \end{equation}
    where the off-diagonal matrix entries are Boolean variables. Set  
    \begin{equation}
        y \coloneqq (A_{j,k},A_{i,k},A_{i,j}) \in \ker{A_{S,S}}
    \end{equation}
    where one can check that indeed $A_{S,S}y^\top = 0 \pmod 2$. Substituting $y$ into Fact~\ref{fact:support-criterion} gives the following ``support polynomial'' over edge variables:
    \begin{equation}\label{eq:ghs-triple-exact-constraint}
        p_S^m(A)
        \coloneqq
        3A_{ij}A_{jk}A_{ik}
        +\left(\sum_{b \not\in S}(A_{ib}A_{jk} + A_{jb}A_{ik} + A_{kb}A_{ij})z_b\right)
        +(z_k A_{ij} + z_j A_{ik} + z_i A_{jk})
        =0 
    \end{equation}
    Recall that $q$ has bidegree $(1,2)$, and since the entries of $y$ are linear in $A$, $q(A_{S,S},y)$ is now a cubic polynomial in $A$. Since $A$ is arbitrary and the protocol is exact, we must have $z \in \bigcap_{\mathbf{x}_A \in C_m} \supp(H_S \ket{\psi_A}).$ This implies \[p_S^m(A) = 0 \qquad \forall \: \mathbf{x}_A \in C_m.\]
    
    Since $S$ is an arbitrary triple, using \cref{def:zero-set} gives $C_m \subseteq \cap_S \calV(p_S^m)$. This implies    \begin{equation}\label{eq:message-upper-bound}
        |C_m| \le |\bigcap_S \calV(p_S^m)|.
    \end{equation}

    This shows that the number of stabilizer inputs for each message class is upper bounded by the number of common zeros of its corresponding support polynomial. By \cite{geil2017boundingnumbercommonzeros}, the number of zeros of a system of polynomial equations can be upper bounded by the number of its corresponding standard monomials. Namely, let $I_m \coloneqq \langle p_S^m : |S| = 3\rangle$. Applying Corollary 1 of \cite{geil2017boundingnumbercommonzeros} gives
    \begin{equation}\label{eq:zeros-upper-bound}
        |\calV(I_m)| \le |\Delta (I_m + \langle A_{ij}^2-A_{ij}: 1 \le i < j \le n\rangle)|,
    \end{equation}
    where $A_{ij}^2-A_{ij}$ gives the idempotent Boolean relation. 
    Fix any degree-based monomial order. We make the following two observations: By a direct application of \cref{def:standard-monomials}, $\operatorname{LM}(A_{ij}^2-A_{ij}) = A_{ij}^2$ implies all standard monomials must be squarefree. Additionally, by \cref{eq:ghs-triple-exact-constraint},  \[\operatorname{LM}(p_S^m) = A_{ij}A_{ik}A_{jk}.\]

    Define a triangle to be a set of three vertices with all three edges present; a graph is triangle-free if it contains no triangles. Let $M = \prod_{1\leq i < j\leq n} A_{ij}^{a_{ij}}$ be any standard monomial given boolean values $a_{ij}$, and define the edge set $E(M) = \{(i,j) : a_{ij} = 1\}$. Since no standard monomial can be divisible by the product of any three edges of a triangle, this is equivalent to saying that the induced graph $G_M([n],E(M))$ must be triangle-free. 
    
    Let  $\operatorname{tf}(n)$ be the number of triangle-free graphs on $n$-vertices. This gives an upper bound on the total number of squarefree standard monomials,
    \begin{equation}\label{eq:monomials-upper-bound}
        |\Delta (I_m + \langle A_{ij}^2-A_{ij}: 1 \le i < j \le n\rangle)| \le \operatorname{tf}(n).
    \end{equation}
    Combining \cref{eq:message-upper-bound} - \cref{eq:monomials-upper-bound} gives $|C_m|\le \operatorname{tf}(n)$ for every $m$, and by the Erd\"{o}s--Kleitman--Rothschild theorem on the number of triangle-free graphs \cite{erdos1976asymptotic},
    \[
        \operatorname{tf}(n)=2^{\frac{n^2}{4}+o(n^2)}.
    \]
\end{proof}

\begin{theorem} \label{thm:ghs-approx-lower-bound}
    Any classical randomized one-way protocol that solves $\GHS_n$ with success probability at least $2/3$ requires $\Omega(n^2)$ bits of communication.
\end{theorem}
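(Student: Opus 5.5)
We plan to mirror the exact lower bound of \cref{thm:ghs-exact-lower-bound}, but replace the zero-set counting by an entropy argument in the style of \cref{thm:general-directional-lower-bound}. By Yao's principle, fix a deterministic $c$-bit protocol that succeeds with probability at least $2/3$ when $A$ is a uniformly random graph (uniform on $\F_2^{N}$, $N=\binom n2$) and $S$ is a uniformly random triple $\{i,j,k\}$. All other choices of $S$ are ignored, and a lower bound for this restricted distribution suffices. For each message $m$ let $C_m$ be its message class and $z=z_m(S)$ Bob's output. By Fact~\ref{fact:support-criterion} with the kernel vector $y=(A_{jk},A_{ik},A_{ij})$, success on $(A,S)$ implies $p_S^m(A)=0$, where $p_S^m$ is the cubic polynomial from \eqref{eq:ghs-triple-exact-constraint}. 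Hence
\[\E_m\Pr_{A\sim C_m,\,S}[p_S^m(A)=0]\ge 2/3 .\]

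The key structural step is to show that for $A$ uniform on a set $C\subseteq\F_2^N$, the polynomials $p_S$ can vanish with average probability $\ge 2/3-o(1)$ over triples only if $\log_2(2^N/|C|)=\Omega(n^2)$. We would attack this through the triangle term. Every $p_S$ equals $A_{ij}A_{ik}A_{jk}+\ell_S(A)$, where $\ell_S$ consists of terms that are at most quadratic and each contain an edge of the triangle $S$ times something else. The idea is to find a large family of edge-disjoint triangles, say a Steiner-like packing $\mathcal T$ of $\Theta(n^2)$ triangles. We then aim for the analogue of \cref{lem:biased-tests}: for a uniformly random $A$, each event $p_S(A)=0$ with $S\in\mathcal T$ has probability bounded away from $1$, and these events behave like roughly independent biased tests. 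Concretely, we plan to reveal edges sequentially, with the triangles' own edges revealed last. Conditioned on all edges outside triangle $S$, the polynomial $p_S$ becomes a function of the three edge bits of $S$: a cubic monomial plus an affine part. Such a function is never constant, and takes the value $0$ on at most $7$ of the $8$ assignments. Therefore $\Pr[p_S=0\mid\text{rest}]\le 7/8$ under uniform conditional edge bits.

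Next we convert this into an entropy deficit. For a good message (one with average success $\ge 1/2$ over triples), a constant fraction of the packed triangles must have conditional success $\ge$ some $\theta>7/8$; after averaging, we may pick a random packing so that this transfers from all triples to packed ones. Then, following the proof of \cref{lem:biased-tests}, we apply the chain rule $H(A)=\sum_{S\in\mathcal T}H(A_S\mid A_{<S})+H(\text{rest}\mid\cdots)$. Each triangle block $A_S$ has $3$ bits, and we claim that conditional success $\ge\theta$ forces
\[H(A_S\mid\text{earlier})\le 3-\Omega\!\left(H_2(\theta\,\|\,7/8)\right),\]
using binary relative entropy (\cref{def:binentropy}).

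The main obstacle is that $p_S$ depends on edges outside $S$, namely the $z_b$-weighted terms, and these edges may belong to other packed triangles. As a result the conditioning order matters, and the biased events are not functions of disjoint blocks. We would try to handle this as in \cref{lem:entropy-budget-conditioning}. Condition on all non-packed edges, and note that $\ell_S$ involves other packed edges only through products $A_{ib}A_{jk}$, where $A_{jk}$ is an edge of $S$. So given the non-$S$ packed edges, $p_S$ is still a non-constant function of $A_S$ with zero-probability at most $7/8$. A Fano/chain-rule bound over the $|\mathcal T|=\Theta(n^2)$ blocks then gives $c\ge H(M)\ge\Omega(|\mathcal T|)=\Omega(n^2)$, combined with the good-message averaging of Fact~\ref{fact:good-message-probability}.
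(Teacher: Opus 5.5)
Your hard distribution is not hard, so the argument cannot succeed as designed. With $A$ uniform on $\F_2^{N}$ and $S=\{i,j,k\}$ a uniform triple, consider the protocol in which Alice sends nothing and Bob always outputs $z=0$. It already succeeds with probability $7/8>2/3$.

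To see this, note that by Fact~\ref{fact:support-criterion}, $0\in\supp(H_S\ket{\psi_A})$ iff $q(A_{S,S},y)=0$ for every $y\in\ker(A_{S,S})$. Checking the four graphs on three vertices (empty, one edge, path, triangle) shows this fails only when $S$ spans a triangle. In your notation, $z=0$ kills the affine part and leaves $p_S=A_{ij}A_{ik}A_{jk}$.

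This breaks two steps of your plan:
\begin{itemize}
\item Your key structural claim is false already for $C=\F_2^N$: the $p_S$ vanish with average probability $7/8\ge 2/3$ while $\log_2(2^N/|C|)=0$.
\item The step ``a constant fraction of packed triangles must have conditional success $\ge\theta>7/8$'' does not follow from average success $\ge 2/3$, let alone $\ge 1/2$. The per-triangle ceiling of $7/8$ that you correctly derive lies above the required success level, so it can be met with no entropy deficit at all.
\end{itemize}
No chain-rule or relative-entropy bookkeeping can extract $\Omega(n^2)$ bits from constraints that a zero-bit protocol already satisfies often enough.

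The missing idea is amplification inside a single Bob input. The paper gives Bob $H_{B_i}$ for a block $B_i$ of $3r$ vertices split into $r=6$ vertex-disjoint triangle gadgets. A single correct output must then satisfy $r$ cubic constraints simultaneously, and $(7/8)^r<1/2$.

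For the gadget vectors $s^{i,l}(A)$ to actually lie in $\ker(A_{B_i,B_i})$, Alice's graph cannot be uniform on all edges. It is supported only on the gadget edges $E$, and the blocks come from a packing with $|B_i\cap B_j|\le 1$. This makes all gadgets edge-disjoint and $A_{B_i,B_i}$ block diagonal.

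The cross-gadget dependence you identify as the main obstacle is then handled by the footprint bound rather than a sequential entropy argument. The leading monomials of the $r|J|$ constraints are edge-disjoint triangle monomials, which gives $\Pr[\bigcap_{i\in J}E_i^m]\le(7/8)^{r|J|}$ for every fixed message. A moment-generating-function tail bound then shows every good message class has size at most $2^{N-\Omega(t)}$ with $t=\Theta(n^2)$.
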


\begin{proof}
    By Yao's Principle \cite{yao1983lower}, it suffices to construct a distribution $\calD$ over instances of $\GHS_n$ and prove that any arbitrary deterministic one-way protocol $\Pi$ with distributional error at most $1/3$ under $\calD$ uses $\Omega(n^2)$ bits of communication.
    
    \paragraph{The input distribution} We first define the hard input distribution $\calD$. Fix $r=6$, so that $\beta \coloneqq \left(\frac78\right)^r<\frac12$, and assume $n\geq 3r$. By \cref{prop:block-packing} below, there is a set $\calB=\{B_1,\cdots,B_t\} \subseteq \{B \subseteq [n] \mid |B| = 3r\}$ with $|B_i \cap B_j| \leq 1$ for all $i \neq j$ and $t = \Theta(n^2)$. Partition each $B_i$ into $r$ disjoint triples, $\delta_{i,1},\cdots,\delta_{i,r}$. We call each $\delta{i,j}$ a \emph{triangle gadget}. Let
    \[
        E \coloneqq \bigcup_{i\in[t],l\in[r]}\binom{\delta_{i,l}}2.
    \]
    A union bound gives $|E| \leq 3rt$. Note that distinct blocks share no common edge since they intersect in at most one vertex. Therefore the triangle gadgets inside each block are vertex-disjoint, so $N\coloneqq |E|=3rt=\Theta(n^2)$. Sample $A \in \F_2^{n\times n}$ by letting $A_{ij} \sim \operatorname{Unif}\{0,1\}$ independently for each $(i,j) \in E$, and $A_{ij} = 0$ for $(i,j) \notin E$ (with $A_{ij} = A_{ji}$ and $A_{ii} = 0$ for all $i,j$). Alice receives a classical description of the graph state $\mathbf{x} = \text{desc}(\ket{\psi_A})$. 
    Independently, sample $i\sim\operatorname{Unif}[t]$ and let Bob's input be $\mathbf{y} = \text{desc}(H_{B_i})$, where $H_{B_i}= \prod_{j \in B_i}H_j$. Together, these define $\calD$.
    
    \paragraph{Analysis} Fix an arbitrary deterministic $c$-bit one-way protocol $\Pi$. Under this protocol, Alice's message is a function of her input $\mathbf{x} = \operatorname{desc}(\ket{\psi_A})$ which we denote $\msg: \F_2^N \rightarrow \F_2^c$, and conditioned on receiving a message $m \in \F_2^c$, Bob's output is a function of his input $\mathbf{y}_i = \operatorname{desc}(H_{B_i})$ which we denote $y_m: \F_2^{O(n^2)} \rightarrow \F_2^n$. For each possible message $m$, let
    \begin{equation}
        C_m \coloneqq \{\mathbf{x} \in \F_2^N \mid \msg(\mathbf{x})=m\}.
    \end{equation}
    be the ``message class'' of Alice's inputs which are in the pre-image of $m$ under $\msg$. Define the error of the protocol on a fixed Alice input $\mathbf{x}$ as 
    \begin{equation}
        e_\mathbf{x} = \Pr_{i\sim\operatorname{Unif}[t]} \left[ \Pi(\mathbf{x},\mathbf{y}_i) \notin \GHS_n(\mathbf{x},\mathbf{y}_i) \right]
    \end{equation}
    which is the fraction of Bob inputs for which the protocol yields an incorrect output on Alice's input $\mathbf{x}$. Additionally, define $E_i^m(\mathbf{x}) = \{y_m(\mathbf{y_i}) \in \GHS_n(\mathbf{x},\mathbf{y}_i)\}$ and $X_m(\mathbf{x}) \coloneqq \sum_{i=1}^t \mathbf 1[E_i^m(\mathbf{x})]$. The overall distributional error is then $\operatorname{Err}_{\Pi}(\calD) = \E_{\mathbf{x}}[e_\mathbf{x}] \le 1/3$. For every message $m$, define a ``good'' message class as
    \begin{equation}
        C_m^* = \{\mathbf{x} \in \F_2^N \mid \msg(\mathbf{x})=m \text{ and } e_\mathbf{x} \le \frac65\cdot \frac13\}
    \end{equation} 
    By Markov's inequality, $\Pr[e_\mathbf{x} \ge \frac65\cdot \frac13] \le \frac56$. Therefore, 
    \begin{equation}\label{eq:total-good-messages}
        \sum_m |C_m^*| \ge 2^{N}/6.
    \end{equation}
    In  \cref{lem:largest-message-class} below, we prove that there exists a constant $d > 0$ such that for every message $m$,
    \begin{equation}
        |C_m^*| \le 2^{N-d t}.
    \end{equation}
    Combining \cref{lem:largest-message-class} and \cref{eq:total-good-messages} gives that the number of distinct good messages is at least $\frac{2^N}{6\cdot 2^{N-dt}} = 2^{dt}/6$. Since a $c$-bit protocol has at most $2^c$ messages, $2^c \ge 2^{dt}/6$ with $t=\Theta(n^2)$ implies $c \ge dt-\log6 = \Omega(n^2)$, which concludes the proof of \cref{thm:ghs-approx-lower-bound}.
\end{proof}

\begin{lemma}\label{lem:largest-message-class}
    There exists a constant $d > 0$ such that for every message $m$,
    \begin{equation}
        |C_m^*| \le 2^{N-d t}.
    \end{equation}
\end{lemma}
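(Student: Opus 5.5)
The plan is to fix a message $m$ and bound the number of inputs $\mathbf{x}$ on which $y_m$ succeeds on many blocks. For every $\mathbf{x}\in C_m^*$ we have $e_{\mathbf{x}}\le 2/5$, so $X_m(\mathbf{x})\ge \tfrac35 t$. It therefore suffices to show
\[
\Pr_{\mathbf{x}\sim\mathrm{Unif}(\F_2^N)}\bigl[X_m(\mathbf{x})\ge \tfrac35 t\bigr]\le 2^{-dt}.
\]
Bob's output $z^{(i)}\coloneqq y_m(\mathbf{y}_i)$ is a fixed string once $m$ and $i$ are fixed. This is what makes the counting possible.

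\emph{Step 1: local structure of block $i$.} Since distinct blocks meet in at most one vertex and $A$ is supported on $E$, the only edges inside $B_i$ are the gadget edges. Hence $A_{B_i,B_i}$ is block diagonal with $r$ $3\times3$ triangle blocks. For a gadget $\delta=\{u,v,w\}$, the vector $y_\delta=(A_{vw},A_{uw},A_{uv})$, extended by zero, lies in $\ker A_{B_i,B_i}$, as in the proof of \cref{lem:ghs-message-class-upper-bound}. Fact~\ref{fact:support-criterion} then gives a necessary condition for $E_i^m$: for every gadget $\delta\subseteq B_i$,
\[
p_\delta\coloneqq A_{uv}A_{vw}A_{uw}+\ell_\delta=0 .
\]
Here $\ell_\delta$ collects the terms $y_\delta^\top(z_S+A_{S,\bar S}z_{\bar S})$, evaluated at $z=z^{(i)}$. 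Each such term is linear in the edges of $\delta$, possibly multiplied by one edge leaving $B_i$. That outside edge belongs to a gadget of a \emph{different} block. Consequently, in a degree-compatible monomial order, $\operatorname{LM}(p_\delta)=A_{uv}A_{vw}A_{uw}$, because every other term has degree at most $2$.

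\emph{Step 2: joint bound for any set of blocks.} Let $T\subseteq[t]$. The event that all $E_i^m$, $i\in T$, hold is contained in the common zero set of $\{p_\delta:\delta\subseteq B_i,\ i\in T\}$ together with the Boolean relations. The leading monomials are triangles on pairwise edge-disjoint triples; they are disjoint because gadgets in one block are vertex-disjoint and distinct blocks share no edge. The footprint bound of \cite{geil2017boundingnumbercommonzeros}, used exactly as in \cref{lem:ghs-message-class-upper-bound}, bounds the common zero set by the number of squarefree monomials divisible by none of these $r|T|$ triangles. That number is $7^{r|T|}\,2^{N-3r|T|}$, which gives
\[
\Pr\Bigl[\textstyle\bigwedge_{i\in T}E_i^m\Bigr]\le (7/8)^{r|T|}=\beta^{|T|}.
\]
This holds for every $T$, and it is the place where the dependence between blocks through the cross edges is handled. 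A naive conditioning argument would only control each block separately, which is not enough.

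\emph{Step 3: concentration.} A product bound of the form $\Pr[\bigwedge_{i\in T}E_i]\le\beta^{|T|}$ for all $T$ implies a Chernoff tail. This is the threshold-direct-product / Panconesi--Srinivasan (Impagliazzo--Kabanets) form:
\[
\Pr\bigl[X_m\ge\gamma t\bigr]\le 2^{-t\,H_2(\gamma\|\beta)}\qquad\text{for }\gamma>\beta .
\]
Apply it with $\gamma=3/5>1/2>\beta$ and set $d\coloneqq H_2(3/5\|\beta)>0$. Then $|C_m^*|\le 2^N\Pr[X_m\ge\tfrac35t]\le 2^{N-dt}$.

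The main obstacle is this last step. A union bound over $\binom{t}{3t/5}$ sets is too lossy at $r=6$, so the proof genuinely needs the sharp product-to-Chernoff theorem, and it should be quoted or reproved carefully. A secondary point to check is that the footprint bound applies after fixing $z^{(i)}$ separately for each $i$, which is fine because the $z^{(i)}$ are constants. One must also verify that the bilinear cross terms never reach degree $3$.
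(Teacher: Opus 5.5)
Your proof is correct and follows the paper's argument. The paper likewise derives the product bound $\Pr[\bigcap_{i\in J}E_i^m]\le\beta^{|J|}$ from the edge-disjoint triangle leading monomials and the footprint bound (\cref{prop:mult-success}), then proves the tail directly by expanding $\E\prod_i(1+\alpha I_i)\le(1+\alpha\beta)^t$ and applying Markov, which is exactly the Impagliazzo--Kabanets argument you cite. The only difference is that the paper works at threshold $t/2$, giving $d=H_2(1/2\|\beta)$, and then uses $\Pr[X_m\ge\tfrac35t]\le\Pr[X_m\ge\tfrac t2]$, whereas you apply the bound at $\gamma=3/5$.
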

\begin{proof}
    Fix a message $m$. The main proposition is that the probability the protocol under fixed $m$ succeeds simultaneously on any collection of $J$ blocks is multiplicatively bounded in $|J|$. Namely, in \cref{prop:mult-success} below we show that for every $J\subseteq[t]$, 
    \begin{equation}\label{eq:block-intersection-bound}
        \Pr_{\mathbf{x}\sim \calD_\mathbf{x}}\left[\bigcap_{i\in J}E_i^m(\mathbf{x})\right]\le\beta^{|J|}
    \end{equation}
    \cref{prop:block-tail} uses the above result to bound the success probability of Bob's outputs on most blocks: we show that there exists a constant $d >0$ such that
    \begin{equation}\label{eq:Xm-tail}
        \Pr_{\mathbf{x}\sim \calD_\mathbf{x}}\left[X_m(\mathbf{x})\ge\frac t2 \right] \le 2^{-dt}.
    \end{equation}
    Since we only require $X_m(\mathbf{x}) \geq \frac35t$ and $1-e_\mathbf{x} \ge 1-\frac65\cdot\frac13 = \frac35 > \frac12$, we have
    \[
        \frac{|C_m^*|}{2^N} = \Pr_{\mathbf{x}\sim \calD_\mathbf{x}}\left[\mathbf{x} \in C_m^* \right] \le \Pr_{\mathbf{x}\sim \calD_\mathbf{x}}\left[\frac{X_m(\mathbf{x})}{t}\ge\frac35 \right] \leq \Pr_{\mathbf{x}\sim \calD_\mathbf{x}}\left[X_m(\mathbf{x})\ge\frac t2 \right] \le 2^{-d t}
    \]
    so
    \[
        |C_m^*| \le 2^{N-dt}.
    \]
\end{proof}

\begin{proposition} \label{prop:mult-success}
    For every message $m$ and every $J\subseteq[t]$,
    \[
        \Pr_{\mathbf{x}\sim \calD_\mathbf{x}}\left[\bigcap_{i\in J}E_i^m(\mathbf{x})\right]\le\left(\frac78\right)^{r|J|}.
    \]
\end{proposition}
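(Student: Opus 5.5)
The plan is to reduce the event $\bigcap_{i\in J}E_i^m(\mathbf{x})$ to a system of polynomial equations in Alice's edge variables, one equation per triangle gadget, and then count common zeros with the standard-monomial bound of \cite{geil2017boundingnumbercommonzeros}, as in the proof of \cref{lem:ghs-message-class-upper-bound}. Fix $m$ and $J$. For each $i\in J$, Bob's output $z^{(i)}\coloneqq y_m(\mathbf{y}_i)$ is a fixed string, so each event $E_i^m$ depends only on $A$. The variables are the $N=|E|$ edge bits $A_e$, $e\in E$, which are i.i.d.\ uniform under $\calD_{\mathbf{x}}$. Hence the probability to bound is $2^{-N}$ times a number of common zeros in $\F_2^N$.

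The first step is structural. Let $S=B_i$. No edge of $E$ other than the edges of the gadgets $\delta_{i,1},\dots,\delta_{i,r}$ has both endpoints in $B_i$. Indeed, such an edge would lie in some $B_j$ with $j\neq i$, forcing $|B_i\cap B_j|\ge2$. So $A_{S,S}$ is block diagonal, with one $3\times3$ block for each gadget. For a gadget $\delta=\{u,v,w\}\subseteq B_i$, the vector $y^\delta$ supported on $\delta$ with entries $(A_{vw},A_{uw},A_{uv})$ lies in $\ker A_{S,S}$, exactly as in the proof of \cref{lem:ghs-message-class-upper-bound}. Substituting $y^\delta$ into Fact~\ref{fact:support-criterion} gives a necessary condition $p_\delta(A)=0$ for $E_i^m$, where
\[
p_\delta(A)=A_{uv}A_{uw}A_{vw}+\sum_{b\notin B_i}\bigl(A_{ub}A_{vw}+A_{vb}A_{uw}+A_{wb}A_{uv}\bigr)z^{(i)}_b+\bigl(z^{(i)}_wA_{uv}+z^{(i)}_vA_{uw}+z^{(i)}_uA_{vw}\bigr).
\]
The cubic term comes from $q(A_{S,S},y^\delta)$, whose only surviving part on one triangle is the triangle monomial. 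The cross terms $A_{ub}$ are edge variables from other blocks' gadgets, or are identically zero when $\{u,b\}\notin E$. Therefore $\bigcap_{i\in J}E_i^m(\mathbf{x})$ implies that $A$ is a common zero of all $R\coloneqq r|J|$ polynomials $p_\delta$, where $\delta$ ranges over the gadgets of the blocks in $J$.

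The second step is the count. Work in the ideal $I=\langle p_\delta\rangle+\langle A_e^2-A_e : e\in E\rangle$ under any degree-compatible monomial order. Every non-cubic term of $p_\delta$ has degree at most $2$, so $\operatorname{LM}(p_\delta)=A_{uv}A_{uw}A_{vw}$. The edge sets of distinct gadgets are disjoint: gadgets in the same block are vertex-disjoint, and gadgets in different blocks share no edge. So these $R$ leading monomials are squarefree and pairwise coprime. A standard monomial of $I$ is squarefree, and it cannot be divisible by any triangle monomial. Within each gadget it therefore uses at most $7$ of the $8$ subsets of the three variables, while the remaining $N-3R$ variables are unconstrained. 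This gives $|\Delta(I)|\le 7^R2^{N-3R}$. Corollary~1 of \cite{geil2017boundingnumbercommonzeros} then bounds the number of common zeros by $2^N(7/8)^R$. Dividing by $2^N$ yields the claimed bound $(7/8)^{r|J|}$.

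The step I expect to need the most care is the first one. I must check that the restriction of $A$ to $B_i\times B_i$ really has the block-diagonal gadget form, so that $y^\delta$ lies in the kernel. I must also check that no other degree-$3$ term enters $p_\delta$ through the cross terms $A_{S,\bar S}z_{\bar S}$. Those terms are products of one gadget edge with one outside edge, so they have degree $2$. If the leading-monomial argument turns out to be delicate, my fallback is a direct argument. A degree-$3$ multilinear polynomial on $\F_2^3$ vanishes on at most $7$ of the $8$ points, since its top monomial forces an odd number of ones. I would combine this with the identity $\Pr[\text{all }p_\delta=0]=2^{-R}\sum_T\E(-1)^{\sum_{\delta\in T}p_\delta}$. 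The Gröbner-style count avoids this bookkeeping, so it is the route I would try first.
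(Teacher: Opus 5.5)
Your proposal is correct and follows the paper's proof essentially step for step. The paper also takes one kernel vector per triangle gadget and substitutes it into Fact~\ref{fact:support-criterion}, producing cubic polynomials whose leading monomials are the edge-disjoint triangle monomials; it then applies the footprint bound of \cite{geil2017boundingnumbercommonzeros} to get at most $7^{r|J|}2^{N-3r|J|}$ common zeros. Your check that no edge of $E$ joins two different gadgets of $B_i$ (so $A_{B_i,B_i}$ is block diagonal and $y^\delta$ really lies in its kernel) is needed for the argument, and the paper leaves it implicit.
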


\begin{proof}
    Fix a message $m$ and $J \subseteq [t]$. For each $i \in J$, let $z_i = y_m(y_i)$ be Bob's output given message $m$ and input block $B_i$. For every matrix $A \in \F_2^{n\times n}$ in the support of the hard distribution and triangle $\delta_{i,l}=\{u,v,w\}$ define 
    \begin{equation}
        s^{i,l}_j(A) = \begin{cases}
        A_{vw} & j = u\\
        A_{uw} & j = v\\
        A_{uv} & j = w\\
        0 & j \in B_i \setminus \delta_{i,l}\\
    \end{cases}.
    \end{equation} The vector $s^{i,l}(A)$ lies in $\ker(A_{B_i,B_i})$ for every such $A$:
    \[
        \begin{pmatrix}
            0&A_{uv}&A_{uw}\\
            A_{uv}&0&A_{vw}\\
            A_{uw}&A_{vw}&0
        \end{pmatrix}
        \begin{pmatrix}
            A_{vw}\\ A_{uw}\\ A_{uv}
        \end{pmatrix}
        =0 \pmod 2.
    \]
    Substituting $y = s^{i,l}(A)$ into Fact~\ref{fact:support-criterion} gives the cubic polynomial,
    \begin{equation} \label{eq:polygadget}
        p_{i,l}^m(A) = A_{uv}A_{uw}A_{vw} + \sum_{y\notin B_i} (A_{vw}A_{uy}+A_{uw}A_{vy}+A_{uv}A_{wy})z_y^i + A_{vw}z_u^i + A_{uw}z_v^i + A_{uv}z_w^i.
    \end{equation}
    By definition, $E_i^m(\mathbf{x}_A)$ implies $p_{i,l}^m(A) = 0$ for every $l \in [r]$. Hence 
    \begin{equation}
        \bigcap_{i\in J}E_i^m(\mathbf{x}_A) \Longrightarrow A \in \bigcap_{i\in J, l \in [r]} \calV (p_{i,l}^m).
    \end{equation}
    By \cref{def:zero-set}, letting $I_J^m = \left\langle p_{i,l}^m : i\in J,\ l\in[r] \right\rangle$ gives 
    \begin{equation}\label{eq:approx-message-upper-bound}
        \Pr_{\mathbf{x}_A\sim \calD_\mathbf{x}}\left[ \bigcap_{i\in J}E_i^m(\mathbf{x}_A) \right] \le \frac{ |\calV(I_J^m)| }{2^N}.
    \end{equation}
    \cref{eq:approx-message-upper-bound} shows that the success probability of all Bob inputs indexed by $J$ is upper bounded by the fraction of common zeros of the corresponding support polynomials. By \cite{geil2017boundingnumbercommonzeros}, the number of zeros of a system of polynomial equations can be upper bounded by the number of corresponding standard monomials. Namely,
    \begin{equation}\label{eq:approx-boolean-footprint}
        |\calV(I_J^m)| \le \left| \Delta\left( I_J^m + \langle A_e^2-A_e:e\in E \rangle \right)
        \right|.
    \end{equation}
    Under a degree-based monomial ordering, the leading monomial of $p_{i,l}^m$ is $A_{uv}A_{uw}A_{vw}$, the product of the three edges of $\delta_{i,l}$. Let $q\coloneqq r|J|.$
    By \cref{eq:approx-boolean-footprint}, the set of common zeros to $p_{i,l}^m = 0$ for all $i \in J, l \in [r]$ is at most the number of squarefree standard monomials not divisible by the leading monomial. As a consequence, each triangle gadget may not contain all three edges, so there are at most 7 standard monomials per local gadget. By construction of the hard distribution, the $q$ leading monomials correspond to $q$ edge-disjoint triangle gadgets which gives at most $7^q$ squarefree monomials. There are $2^{N-3q}$ ways to assign Boolean values to the remaining $N-3q$ edges. Thus
    \begin{equation}
        |\calV(I_J^m)| \le 7^q2^{N-3q}.
    \end{equation}
    Dividing by $2^N$,
    \begin{equation}
        \Pr_{\mathbf{x}_A\sim \calD_\mathbf{x}}\left[ \bigcap_{i\in J}E_i^m(\mathbf{x}_A) \right] \le \left(\frac78\right)^q = \left(\frac78\right)^{r|J|}.
    \end{equation}
\end{proof}

\begin{proposition} \label{prop:block-tail}
For every message $m$,
    \begin{equation}    
        \Pr_{\mathbf{x} \sim \calD_\mathbf{x}}\left[X_m(\mathbf{x})\ge\frac t2 \right] \le 2^{-tH_2(1/2\|\beta)}.
    \end{equation}
\end{proposition}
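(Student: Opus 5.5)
The plan is to convert the multiplicative intersection bound of \cref{prop:mult-success}, namely $\Pr[\bigcap_{i\in J}E_i^m]\le\beta^{|J|}$ for every $J\subseteq[t]$, into a Chernoff-type tail bound for $X_m=\sum_{i=1}^t\mathbf 1[E_i^m]$. The indicators are not independent, since blocks may share a vertex and the $p^m_{i,l}$ involve edges outside $B_i$. Still, a bound of this form on every joint success probability is exactly the hypothesis of the ``Chernoff bound under correlated events'' of Panconesi--Srinivasan / Impagliazzo--Kabanets. That result says that if $\Pr[\bigcap_{i\in J}E_i]\le\beta^{|J|}$ for all $J$, then $\Pr[X\ge\lambda t]\le 2^{-tH_2(\lambda\|\beta)}$ for $\lambda>\beta$. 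Applying it with $\lambda=1/2>\beta=(7/8)^6$ gives the stated bound directly.

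To keep the argument self-contained, I would reprove the needed case by the Impagliazzo--Kabanets sampling argument. Fix $\lambda=1/2$ and a parameter $p\in(0,1)$, and let $\mathbf J\subseteq[t]$ include each index independently with probability $p$. First, for every fixed $\mathbf x$,
\[
\E_{\mathbf J}\Bigl[\mathbf 1\bigl[\mathbf J\subseteq\{i:E_i^m(\mathbf x)\}\bigr]\Bigr]=(1-p)^{t-X_m(\mathbf x)}.
\]
When $X_m\ge\lambda t$, this is at least $(1-p)^{(1-\lambda)t}$. Second, taking expectation over $\mathbf x$ and swapping the order of expectation, then applying \cref{prop:mult-success} for each fixed $J$, gives
\[
\E_{\mathbf J}\,\beta^{|\mathbf J|}=(1-p+p\beta)^t .
\]
Combining the two steps yields
\[
\Pr[X_m\ge\lambda t]\le\left(\frac{1-p+p\beta}{(1-p)^{1-\lambda}}\right)^t .
\]

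The remaining step is optimizing over $p$; this is the only calculation requiring care. The optimum is $p=(\lambda-\beta)/(\lambda(1-\beta))$, which lies in $(0,1)$ because $\lambda>\beta$. Substituting it gives exactly the base $2^{-H_2(\lambda\|\beta)}$, by the standard identity from the Impagliazzo--Kabanets paper. With $\lambda=1/2$, this is $2^{-tH_2(1/2\|\beta)}$.

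Finally, $H_2(1/2\|\beta)>0$ since $\beta<1/2$. This supplies the constant $d=H_2(1/2\|\beta)$ used in \cref{lem:largest-message-class}.

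I expect no genuine obstacle. The one subtlety is that \cref{prop:mult-success} must hold for \emph{every} subset $J$, including those whose blocks share vertices, and it does as stated. So the argument never needs independence of the events $E_i^m$.
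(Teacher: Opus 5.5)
Your proof is correct and is essentially the paper's argument in different notation. The paper bounds $\E[e^{\lambda X_m}]=\E\prod_{i}(1+\alpha I_i)\le(1+\alpha\beta)^t$ with $I_i=\mathbf 1[E_i^m(\mathbf x)]$ and $\alpha=e^{\lambda}-1$, using \cref{prop:mult-success}, and then applies Markov with $\lambda=\ln\frac{1-\beta}{\beta}$. Your random-subset quantity satisfies $\E_{\mathbf J}\mathbf 1[\mathbf J\subseteq\{i:E_i^m(\mathbf x)\}]=(1-p)^{t}\prod_i\bigl(1+\tfrac{p}{1-p}I_i\bigr)$, so under the substitution $e^{\lambda}=1/(1-p)$ your bound $(1-p+p\beta)^t/(1-p)^{t/2}$ is exactly the paper's $e^{-\lambda t/2}(1+\alpha\beta)^t$, and your optimal $p=(1-2\beta)/(1-\beta)$ is exactly the paper's choice of $\lambda$.
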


\begin{proof}
    Let $I_i=\mathbf 1[E_i^m(\mathbf{x})]$ for all $i \in [t]$. For every constant $\lambda>0$,
    \begin{align}
        \E[\exp(\lambda X_m)] &= \E\left[\prod_{i=1}^t\exp(\lambda I_i)\right] \tag{$X_m(\mathbf{x}) = \sum_{i=1}^t \mathbf{1}[E_i^m(\mathbf{x})]$}\\
        &= \E\left[\prod_{i=1}^t\bigl(1+(\exp(\lambda)-1)I_i\bigr)\right]\\
        &= \E\left[\prod_{i=1}^t\bigl(1+\alpha I_i\bigr)\right] \tag{$\alpha\coloneqq\exp(\lambda)-1$}\\
        &= \sum_{K\subseteq[t]}\left(\alpha^{|K|} \cdot \E\left[\prod_{i\in K}I_i\right]\right)\\
        &\le \sum_{K\subseteq[t]} \left(\alpha^{|K|}\cdot \beta^{|K|}\right) \tag{\cref{prop:mult-success}}\\
        &= \bigl(1+\alpha\beta\bigr)^t \label{eq:bin-var}
    \end{align}
    By Markov's inequality,
    \begin{align}
        \Pr_{\mathbf{x} \sim \calD_\mathbf{x}}\left[X_m(\mathbf{x})\ge\frac t2 \right] & \le \exp(-\lambda t/2)\E\left[\exp(\lambda X_m)\right]\\
        &\le \exp(-\lambda t/2)(1 + \alpha \beta)^t \tag{\cref{eq:bin-var}}\\
        &= (2\sqrt{\beta(1-\beta)})^t\tag{$\lambda \coloneqq \ln\left(\frac{1-\beta}{\beta}\right)$}
    \end{align}
    Where in the last step $\lambda > 0$ since $\beta < \frac12$. By \cref{def:binentropy}, $H_2(1/2||\beta) = \frac12\log_2 \frac {1/2}{\beta} + \frac12 \log_2 \frac{1/2}{1-\beta} = -\log_2(2\sqrt{\beta(1-\beta)})$. Hence,
    \[
        \Pr\left[X_m\ge\frac t2\right] \le 2^{-tH_2(1/2\|\beta)}.
    \]
\end{proof}

\begin{proposition}[Block packing] \label{prop:block-packing}
    For every fixed $k\ge 2$ and for sufficiently large $n$, there exists a family of ``blocks''
    \[
        \calB\subseteq\{B \subseteq [n] \mid |B| = k\}
    \]
    with $|\calB|=\Theta_k(n^2)$ such that $|B\cap B'|\le 1 \text{ for all distinct } B,B'\in\calB.$
\end{proposition}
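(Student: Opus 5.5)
We want, for fixed $k\ge2$, a family of $\Theta_k(n^2)$ $k$-subsets of $[n]$ with pairwise intersections of size at most one. This is a partial Steiner system (a packing of pairs by $k$-sets). The upper bound is a counting argument. Each block covers $\binom k2$ pairs of $[n]$, and distinct blocks cover disjoint sets of pairs, because a pair covered by two blocks would give an intersection of size at least $2$. Hence $|\calB|\le\binom n2/\binom k2=O_k(n^2)$. The real content is the matching lower bound $\Omega_k(n^2)$. I will get it from an explicit algebraic construction rather than a greedy argument, so the constant is transparent.

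\textbf{Construction.} Let $p$ be a prime with $k\le p$ and $kp\le n$. By Bertrand's postulate we can take $p\ge n/(2k)$ once $n$ is large enough that $n/(2k)\ge k$.
\begin{itemize}
\item Embed $k$ disjoint copies of $\mathbb Z_p$ into $[n]$, labelled as points $(j,x)$ with $j\in\{0,\dots,k-1\}$ and $x\in\mathbb Z_p$.
\item For each pair $(\alpha,\beta)\in\mathbb Z_p^2$, define the block $B_{\alpha,\beta}=\{(j,\alpha+\beta j):0\le j<k\}$, i.e.\ the graph of a line restricted to the first $k$ residues.
\end{itemize}
Each block has exactly $k$ distinct elements, since they lie in different copies $j$. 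There are $p^2=\Theta_k(n^2)$ blocks.

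\textbf{Intersection property.} Suppose two distinct blocks share two points $(j,x)$ and $(j',x')$. These points lie in different copies, so $j\neq j'$. Then both lines pass through the two points with distinct first coordinates $j\neq j'$ in $\mathbb Z_p$; this uses $k\le p$, so that $j-j'$ is invertible mod $p$. Two points with distinct first coordinates determine the line, so $(\alpha,\beta)$ is the same for both blocks, a contradiction. Hence $|B\cap B'|\le1$.

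The only real check is this distinct-residue invertibility. It is also why I insist on $p\ge k$ instead of using a field of arbitrary size. Combining the two bounds gives $|\calB|=\Theta_k(n^2)$ for all sufficiently large $n$. If we want exactly the lower bound with a greedy proof instead, there is an alternative: repeatedly pick any $k$-set whose pairs are all uncovered. This fails only after at most $\binom n2$ pairs are covered, with each chosen block blocking at most $\binom k2\cdot O(n^{k-2})$ candidate sets. That route is messier, so the line construction is the plan.
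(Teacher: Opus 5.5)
Your proof is correct. The upper bound is the same pair-counting argument the paper uses. The greedy alternative you sketch at the end is essentially the paper's lower-bound proof: each chosen $k$-set excludes at most $\binom{k}{2}\binom{n-2}{k-2}$ candidates (those containing one of its pairs), so greedy selects at least $\binom{n}{k}/\bigl(\binom{k}{2}\binom{n-2}{k-2}\bigr)=2n(n-1)/\bigl(k^2(k-1)^2\bigr)$ blocks.

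Your main route is genuinely different: an explicit transversal-design construction from lines over $\mathbb{Z}_p$. All the checks go through:
\begin{itemize}
\item the blocks are pairwise distinct, because the points at $j=0$ and $j=1$ recover $(\alpha,\beta)$ (this uses $k\ge2$);
\item two common points of two blocks must lie in different copies;
\item $j-j'$ is a unit mod $p$, since $0<|j-j'|<k\le p$;
\item Bertrand's postulate applied to $\lfloor n/(2k)\rfloor$ gives a prime with $n/(2k)<p\le n/k$, and this prime exceeds $k$ once $n\ge 2k^2$.
\end{itemize}

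Your construction buys explicitness and a much better constant. The bound $p^2>n^2/(4k^2)$ is within a factor less than $4k/(k-1)\le 8$ of the upper bound $\binom{n}{2}/\binom{k}{2}$. The paper's greedy count, by contrast, sits a factor $\binom{k}{2}$ below that upper bound.

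The paper's route buys simplicity: it is a one-line union bound, with no number-theoretic input and no compatibility condition between the field size and $k$. In the Graph Hadamard Support lower bound, $k=3r$ is a fixed constant and only $t=\Theta(n^2)$ matters, so either argument suffices.
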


\begin{proof}
   We first prove the lower bound by a greedy construction. Let $C_0 = \{B \subseteq [n] \mid |B| = k\}$ denote the initial collection of  size $k$ subsets of $[n]$. Choose any block $B_1 \in C_0$. Let $Y_1 \subseteq C_0$ be the set of all blocks $B'$ such that $|B_1\cap B'|\ge 2$. Choose $B_2 \in C_1 = C_0\setminus Y_1$. Repeat this procedure iteratively until no candidates remain. By a union bound argument, each selected block eliminates at most
    \[
        \binom{k}{2}\binom{n-2}{k-2} = O_k(n^{k-2})
    \]
    blocks. Since there are initially 
    \[
    \binom nk = \Theta_k(n^k)
    \]
    possible blocks, the greedy procedure selects at least $\Omega_k(n^2)$ blocks:
    \begin{equation}
        \frac{\binom nk}{\binom k2 \binom{n-2}{k-2}} = \frac{\Theta_k(n^k)}{ O_k(n^{k-2})} = \Omega_k(n^2).
    \end{equation} Conversely, every block contains $\binom k2$ unordered pairs, and no pair can occur in two distinct blocks. Hence
    \[
        |\calB|\binom k2 \le \binom n2,
    \]
    so $|\calB|=O_k(n^2)$.
\end{proof}

\paragraph{Proof of Fact~\ref{fact:support-criterion}}
For symmetric zero-diagonal adjacency matrix $A \in F_2^{n\times n}$ and $z\in \F_2^n$, let
\[
\ket{\psi_A} = \frac{1}{\sqrt{2^n}}\sum_{x\in \F_2^n}(-1)^{\sum_{i < j}A_{ij}x_ix_j}\ket x, \qquad S \subseteq [n]
\]
and define $s \coloneqq |S|$, $\bar{S}\coloneqq [n]\backslash S$ and $\mathcal{K} \coloneqq \ker(A_{S,S})$. Relabel the graph state vertices and correspondingly permute the rows and columns of $A$ so that the Hadamards act on the leftmost (first) $s$ qubits.
\begin{align*}
    \bra zH_S\ket{\psi_A} &= \bra z H^{\otimes s}\frac{1}{\sqrt{2^n}}\sum_{x\in \F_2^n}(-1)^{q_A(x)}\ket x \tag{$q_A(x) \coloneqq \sum\limits_{i < j} A_{ij}x_ix_j$}\\
    &= \bra z \frac{1}{\sqrt{2^n}}\sum_{x\in \F_2^n}(-1)^{q_A(x)}H^{\otimes s}\ket x\\
    &= \bra z \frac{1}{\sqrt{2^{n+s}}}\sum_{x\in \F_2^n}(-1)^{q_A(x)}\sum_{y \in \F_2^s}(-1)^{x_S^\top y}\ket {y} \ket{x_{\bar S}}\\
    &= \frac{1}{\sqrt{2^{n+s}}}(-1)^{q_{A_{\bar{S},\bar{S}}}(z_{\bar{S}})}\sum_{x\in \F_2^s}(-1)^{q_{A_{S,S}}(x) + r(x)} \tag{$r(x) \coloneqq x^\top (z_S + A_{S,\bar{S}}z_{\bar{S}})$}
\end{align*}
Let $f(x) \coloneqq q_{A_{S,S}}(x) + r(x)$, and define $\calX$ to be the ``unfixed" part of the amplitude. Note that $f(x+y) = f(x)+f(y) + x^\top A_{S,S}y$ for $x,y \in \F_2^S$.
\begin{align*}
    \calX &\coloneqq \frac{1}{\sqrt{2^{n+s}}}\sum_{x \in \F_2^s}(-1)^{f(x)}\\
    &= \frac{1}{\sqrt{2^{n+s}}}\sum_{x \in \F_2^s}(-1)^{f(x+y)} \tag{$\forall y \in \F_2^s$}\\
    &= \frac{1}{\sqrt{2^{n+s}}}\sum_{x \in \F_2^s}(-1)^{f(x) + f(y)} \tag{$\forall y \in \mathcal{K}$}\\
    &= \calX \cdot (-1)^{f(y)} \tag{$\forall y \in \mathcal{K}$}
\end{align*}
Hence $\calX \neq 0 \Rightarrow f(y) = 0 \:\:\forall y\in \mathcal{K}$. Conversely, if $f(y) = 0 \:\:\forall y\in \mathcal{K}$, then $\calX \neq 0$:
\begin{align*}
    \calX^2 &= \frac{1}{2^{n+s}}\sum_{x,z \in \F_2^s}(-1)^{f(x) + f(z)} \\
    &= \frac{1}{2^{n+s}}\sum_{x,y \in \F_2^s}(-1)^{f(x) + f(x+y)} \\
    &= \frac{1}{2^{n+s}}\sum_{y \in \F_2^s}(-1)^{f(y)}\sum_{x\in \F_2^S}(-1)^{x^\top A_{S,S}y} \\
    &= \frac{1}{2^{n+s}}\sum_{y \in \mathcal{K}}(-1)^{f(y)}\cdot 2^s \\
    &= \frac{1}{2^{n}}\sum_{y \in \mathcal{K}}(-1)^{f(y)} \\
    &= \frac{|\mathcal{K}|}{2^{n}} > 0 \tag{$f(y) = 0 \:\: \forall \: y \in \mathcal{K}$}
\end{align*}
So $\bra zH_S\ket{\psi_A} \neq 0 \Longleftrightarrow \calX \neq 0 \Longleftrightarrow f(y) = 0 \:\:\forall y\in \ker(A_{S,S})$, as intended.

\end{document}